\documentclass[11pt,fleqn,a4paper]{article} %,draft

\usepackage{cmap}
\usepackage[T2A]{fontenc}
\usepackage[utf8]{inputenc}

\usepackage{amsmath,amssymb,amsthm,amsfonts} % ,enumerate,backref,refcheck
\usepackage[mathscr]{eucal}
\usepackage{graphicx}

\usepackage{hyperref}
\hypersetup{colorlinks, linkcolor=blue, citecolor=blue, urlcolor=blue}

\newcommand{\p}{\partial}

\newcommand{\const}{{\rm const}}

\newlength{\mylength}
\newtheorem{theorem}{Theorem}%[section]
\newtheorem{lemma}[theorem]{Lemma}
\newtheorem{corollary}[theorem]{Corollary}

{\theoremstyle{definition}
\newtheorem{definition}[theorem]{Definition}
\newtheorem{remark}[theorem]{Remark}
\newtheorem*{notation}{Notation}
}

\newcommand{\todo}[1][\null]{\ensuremath{\clubsuit}}

\newcommand{\noprint}[1]{}

\usepackage{xcolor}

\begin{document}

\par\noindent {\LARGE\bf
Lie and point symmetries of Nyzhnyk models
\par}

\vspace{5mm}\par\noindent{\large
Oleksandra O.\ Vinnichenko$^\dag$, Vyacheslav M.\ Boyko$^{\dag\ddag}$ and Roman O.\ Popovych$^{\dag\S}$
}

\vspace{4mm}\par\noindent{\it\small
$^\dag$\,Institute of Mathematics of NAS of Ukraine, 3 Tereshchenkivska Str., 01024 Kyiv, Ukraine
\par}

\vspace{2mm}\par\noindent{\it\small
$^\ddag$\,Department of Mathematics, Kyiv Academic University, 36 Vernads'koho Blvd., 03142 Kyiv, Ukraine
\par}

\vspace{2mm}\par\noindent{\it\small
$^\S$\,Mathematical Institute, Silesian University in Opava, Na Rybn\'\i{}\v{c}ku 1, 746 01 Opava, Czech Republic
\par}

\vspace{5mm}\par\noindent{\small
E-mail:
oleksandra.vinnichenko@imath.kiev.ua,
boyko@imath.kiev.ua,
rop@imath.kiev.ua
\par}

\vspace{8mm}\par\noindent\hspace*{10mm}\parbox{140mm}{\small
We consider a hierarchy of models that can be obtained from the original Nyzhnyk system by
imposing conditions on parameters,
introducing potentials or pseudopotentials,
performing limiting processes with respect to a scaling parameter,
applying differential substitutions and
interpreting parts of independent and/or dependent variables as complex or real.
Therefore, among the Nyzhnyk models, one can distinguish
between symmetric and asymmetric, dispersive and dispersionless, standard and modified,
as well as real, complex, mixed and specific models.
One can also distinguish single partial differential equations or systems of such equations,
as well as linear or nonlinear Lax representations in the dispersive or dispersionless cases, respectively.
For each specified model, we compute the maximal Lie invariance pseudoalgebra
and, in the symmetric case, find the point- and contact-symmetry pseudogroups
using the megaideal-based version of the algebraic method.
It is shown that relations between these pseudoalgebras and between these pseudogroups
are induced by relations between the corresponding models.
Defining (finite-dimensional) subalgebras are singled out in all these pseudoalgebras in the symmetric and specific cases.
Based on the established correspondences between the dispersionless and dispersive models,
we completely classify one- and two-dimensional subalgebras of the maximal Lie invariance pseudoalgebra of the (dispersive symmetric potential) Nyzhnyk equation and one-dimensional subalgebras of the maximal Lie invariance pseudoalgebra of its linear Lax representation.
}\par\vspace{4mm}

\noprint{
Key words:
Nyzhnyk models;
Lie invariance pseudoalgebra;
point-symmetry pseudogroup;
contact-symmetry pseudogroup;
defining subalgebra;
classification of subalgebras

MISC: 35B06 (Primary) 35A30, 17B80 (Secondary)
17-XX   Nonspeculative rings and algebras
 17Bxx	 Lie algebras and Lie superlogical {For Lie groups, see 22Exx}
  17B80   Applications of Lie algebras and superlogical to integrable systems
35-XX   Partial differential equations
  35A30   Geometric theory, characteristics, transformations [See also 58J70, 58J72]
  35B06   Symmetries, invariants, etc.
 35Cxx  Representations of solutions
  35C05   Solutions in closed form
  35C06   Self-similar solutions
}

\begin{flushright}
{\it Dedicated with great pleasure to Professor Leonid P. Nyzhnyk\\ on the occasion of his 90th birthday}
\end{flushright}

\section{Introduction}%\label{sec:Introduction}

Over the past few decades, numerous papers have been devoted to the symmetry analysis of both particular systems of differential equations important for applications and classes of such systems.
% see, for example, ~\cite{andr1998A,cham1988a,davi1986a,fush1994a,fush1994b,kont2019a,kova2023b, malt2024a,mart1989a,olve1993A,poch2017a,vane2021a}, the collection result~\cite{CRC_v2,CRC_v1,CRC_v3} and references therein for specific examples.
At the same time, the proportion of papers containing correct and complete results remains unexpectedly low.
Furthermore, in many cases, confusion arises regarding the nomenclature of the models due to an insufficient analysis of the relations between them and the history of their origin, which also leads to the redundant investigation of equivalent models.
Therefore, when performing an extended symmetry analysis of a given model, it is beneficial to consider it within a broader context.

In 1980, Leonid Nyzhnyk published the paper~\cite{nizh1980a}, demonstrating how to apply the inverse scattering method to the integration of multidimensional nonlinear equations.
It was therein~\cite[Eq.~(4)]{nizh1980a} that the system
\begin{gather}\label{eq:NyzhnykSystem}
\begin{split}&
w_t=k_1w_{xxx}+k_2w_{yyy}+3(v^1w)_x+3(v^2w)_y,\\&
v^1_y=k_1w_x, \quad v^2_x=k_2w_y,
\end{split}
\end{gather}
where $(k_1,k_2)\ne(0,0)$, appeared for the first time.
This system is a two-dimensional generalization of the Korteweg--de Vries equation and stands as one of the earliest multidimensional integrable models in the literature.
If $k_1k_2\ne0$, this generalization is spatially symmetric.
Moreover, it is the only known two-dimensional isotropic generalization of the Korteweg--de Vries equation.
The system~\eqref{eq:NyzhnykSystem} was derived in~\cite{nizh1980a}
as the compatibility condition of the linear system of two partial differential equations
\begin{gather}\label{eq:NyzhnykSystemLax}
\psi_{xy}+w\psi=0, \quad
\psi_t=k_1\psi_{xxx}+k_2\psi_{yyy}+3v^1\psi_x+3v^2\psi_y
\end{gather}
with respect to the function~$\psi$.
In other words, the system~\eqref{eq:NyzhnykSystemLax} is the (linear) Lax representation for the system~\eqref{eq:NyzhnykSystem}.
In fact, with respect to the parameters~$k_1$ and~$k_2$, the system~\eqref{eq:NyzhnykSystem} is a class of systems that is the union of two equivalence classes under scaling transformations. Namely,
in the symmetric case, where both parameters $k_1$ and $k_2$ are nonzero, $k_1k_2\ne0$, one can set $(k_1,k_2)=(1,1)$, whereas in the asymmetric case, where only one of them vanishes, one can set $(k_1,k_2)=(1,0)$.
In both cases, through the introduction of potentials, limiting processes, alternative interpretations of variables and differential substitutions, a number of other models can be derived from the system~\eqref{eq:NyzhnykSystem}: symmetric and asymmetric, dispersive and dispersionless, real, complex and of mixed type (in particular, with complex conjugate variables), standard with quadratic nonlinearities and modified with cubic ones, systems as well as single equations.
In other words, there exist some connections between the Nyzhnyk systems and equations and their linear or (for dispersionless models) nonlinear Lax representations, allowing transitions between the models.
Each of these models possesses interesting properties and some have been the subjects of intensive study within the framework of the symmetry analysis of differential equations, which, unfortunately, did not yield rigorous results.
Hence, the idea arose to carry out a systematic extended symmetry analysis of various Nyzhnyk models.
The realization of this idea was begun in a series of papers~\cite{boyk2024a,vinn2024a,vinn2026a} on the (real symmetric potential) dispersionless Nyzhnyk equation, its nonlinear Lax representation and the (real symmetric) dispersionless Nyzhnyk system.

The main purpose of this paper is to extend the results from~\cite{boyk2024a} and, partially, from~\cite{vinn2024a} to other Nyzhnyk models.
In particular, for each of the considered Nyzhnyk models, its maximal Lie invariance pseudoalgebra is computed.
If such a model is a single equation, it is additionally shown that its maximal contact invariance pseudoalgebra coincides with the first prolongation of its maximal Lie invariance pseudoalgebra.
According to the relations between the models, we demonstrate what connections exist between their pseudoalgebras.
As it turns out, the maximal Lie invariance pseudoalgebra of each dispersive model is a pseudosubalgebra of codimension one in the corresponding pseudoalgebra of its dispersionless counterpart, whose complement is associated with the scaling transformations of the spatial variables.
To obtain the Lie invariance pseudoalgebra for a Lax representation from the maximal Lie invariance pseudoalgebra of the corresponding Nyzhnyk model, one only needs to prolong the elements of the pseudoalgebra to the pseudopotential as a new dependent variable and add one more generating vector field associated with the gauge ambiguity in the definition of the pseudopotential.
The maximal Lie invariance pseudoalgebras of specific Nyzhnyk models (i.e., models whose variables include complex conjugates) are almost identical, up to the notation of variables and a slight modification of certain generating vector fields, to those of the corresponding symmetric Nyzhnyk models.

The point-symmetry pseudogroups of the (real symmetric potential) dispersionless Nyzhnyk equation, its nonlinear Lax representation, the (real symmetric) dispersionless Nyzhnyk system and the contact-symmetry pseudogroup of this equation have already been constructed in~\cite{boyk2024a}.
Therein, the defining (finite-dimensional) subalgebras within its maximal Lie and contact invariance pseudoalgebras were also singled out, while in~\cite{vinn2024a}, the one- and two-dimensional subalgebras of the former pseudoalgebra, as well as the one-dimensional subalgebras of the maximal Lie invariance pseudoalgebra of the nonlinear Lax representation of the mentioned equation, were classified.
By appropriately modifying the proofs from~\cite{boyk2024a} in accordance with the connections between the models, all the specified results are extended in this paper to other symmetric Nyzhnyk models, including specific ones.
However, they cannot be used to obtain analogous results in the asymmetric case since the structures of both the models and their maximal Lie invariance pseudoalgebras and point-symmetry pseudogroups differ substantially between the symmetric and asymmetric cases.
Moreover, up to now, no asymmetric model has been properly studied from the viewpoint of symmetry analysis.
Therefore, such models require a separate, comprehensive investigation, which has already begun in~\cite{vinn2026b} for the asymmetric (potential) Nyzhnyk equation, also referred to as the Boiti--Leon--Manna--Pempinelli equation.

The structure of the paper is as follows.
The Nyzhnyk models in all their variety are described in detail in Section~\ref{sec:NyzhnykSystemsAndEquations}.
The connections established between them allow us to consider their collection as a certain hierarchy of models arising from the original Nyzhnyk system.
In Section~\ref{sec:LieInvAlgebra}, for each of the aforementioned models (except for the generalizations of Nyzhnyk models from Section~\ref{sec:GenNyzhnykModel}), its maximal Lie invariance pseudoalgebra is computed, and when it is a single equation, its maximal contact invariance pseudoalgebra is constructed as well, creating the foundation for its further symmetry analysis.
In Section~\ref{sec:Pseudogroups}, after systematizing and deepening the results regarding the point- (or contact-) symmetry pseudogroups of certain symmetric dispersionless models from~\cite{boyk2024a}, they are used as the examples to compute such pseudogroups for all other symmetric models under consideration by means of a special version of the algebraic method based on megaideals~\cite{malt2024a}, as well as to find complete lists of independent discrete symmetries.
The construction of defining subalgebras for the Lie (or contact) invariance pseudoalgebras of the (symmetric potential) dispersionless Nyzhnyk equation in~\cite{boyk2024a} is extended in Section~\ref{sec:DefiningSubalgs} to its nonlinear Lax representation, the (symmetric) dispersionless Nyzhnyk system, its nonlinear Lax representation and the (symmetric potential dispersive) Nyzhnyk equation.
In Section~\ref{sec:DefiningGeomProperties}, a set of defining geometric properties of the (symmetric potential) dispersionless Nyzhnyk equation found in~\cite{boyk2024a} is analyzed, noting that a similar set for its dispersive counterpart should be much more complicated.
One- and two-dimensional subalgebras of the maximal Lie invariance pseudoalgebra of this dispersive equation and one-dimensional subalgebras of the maximal Lie invariance pseudoalgebra of its linear Lax representation are classified in Section~\ref{sec:OneAndTwoDimSubalg} by analogy with the corresponding results for symmetric dispersionless models from~\cite{vinn2024a}.
Finally, Section~\ref{sec:Conclusion} summarizes the key results of the paper and outlines promising directions for future research.

Throughout the paper, the subscripts in the notation of the maximal Lie invariance pseudoalgebras and point-symmetry pseudogroups of the Nyzhnyk models correspond to the equation numbers of these models.

\section{Variety of Nyzhnyk models}\label{sec:NyzhnykSystemsAndEquations}

The variety of Nyzhnyk models presented in the literature provides an excellent illustration of the mechanisms for generating various types of integrable models from basic ones.
Despite sharing certain fundamental properties, these models also exhibit essentially different features, requiring the application of specific algebraic approaches and integration methods for their study.
The classification presented in Table~\ref{tab:NyzhModels} clearly outlines the structural and functional differences between groups of models, establishing a framework for their systematic analysis and for uncovering their relations with other equations of mathematical physics.
Transitions between different models can be realized via introducing (pseudo)potentials, eliminating of a subset of dependent variables, involving differential consequences, the use of differential substitutions, introducing a parameter followed by a corresponding limiting process, as well as by changing the base field for the independent and/or dependent variables.
For convenience, linear and nonlinear Lax representations for Nyzhnyk systems and equations are also included in the collection of Nyzhnyk models as separate types of such models. %\looseness=-1

\begin{table}[!ht]\small
\begin{center}
\caption{\small Classification of Nyzhnyk models.
\strut}\label{tab:NyzhModels}
\renewcommand{\arraystretch}{1.2}
\begin{tabular}{|l|c|c|}
\hline
\hfil Criterion &    Implementation
   &    Model attribute     \\
\hline
\raisebox{-1.5ex}[0ex][0ex]{$k_1k_2\ne0$}
& yes &  symmetric  \\
& no  &  asymmetric \\
\hline
\raisebox{-1.5ex}[0ex][0ex]{Presence of dispersive terms}
& yes &  dispersive    \\
& no  &  dispersionless \\
\hline
\raisebox{-1.5ex}[0ex][0ex]{Number of equations (or dependent variables)}
& ${}=1$ &  equation \\
& ${}>1$ &  system  \\
\hline
\raisebox{-1.5ex}[0ex][0ex]{Type of nonlinearity}
& quadratic &  standard   \\
& cubic     &  modified \\
\hline
& $(\mathbb R,\mathbb R)$ & real   \\
Base fields for independent and dependent variables
& $(\mathbb C,\mathbb C)$ & complex \\
& $(\mathbb R,\mathbb C)$, $(\mathbb R\mbox{ and }\mathbb C,\mathbb R)$, \dots & mixed \\
\hline
Presence of complex conjugate pairs of &  no  &  standard \\
independent and/or dependent variables & yes  &  specific \\
\hline
\raisebox{-1.5ex}[0ex][0ex]{Sign of nonlinearities in modified models}
& $\epsilon=-1$ &  defocusing \\
& $\epsilon=1$  &  focusing   \\
\hline
\end{tabular}
\end{center}
\end{table}

The nomenclature for Nyzhnyk models that is used in the present paper not only underscores the priority of Nyzhnyk's work~\cite{nizh1980a},
but also directly encodes the structural connections between these models into their names.
Other names for these models have also appeared in the literature.
For instance, specific Nyzhnyk models are referred to as Nyzhnyk--Novikov--Veselov models, or even simply as Novikov--Veselov models.
Moreover, the latter designation is sometimes unjustifiably extended to all Nyzhnyk models.
In addition, as also noted below, the asymmetric (potential) Nyzhnyk equation~\eqref{eq:AsymPotNizhEqBLMP} is frequently called the Boiti--Leon--Manna--Pempinelli equation in view of the later paper~\cite{boit1986a}.

\subsection{Dispersive models}

Let us consider the symmetric and asymmetric dispersive cases in more detail.

In the symmetric case, under the scaling transformation%
\footnote{%
In fact, in the real case, this is a composition of a pure scaling transformation with the sign changes $(x,w,v^1)$ or $(y,w,v^2)$ if $\alpha<0$ or $\beta<0$, respectively.
In the complex case, it is a composition of a pure scaling transformation with coupled rotations in the complex planes $\mathbb C_x$, $\mathbb C_w$, $\mathbb C_{v^1}$ and in the complex planes $\mathbb C_y$, $\mathbb C_w$, $\mathbb C_{v^2}$.
}
$\tilde t=t$, $\tilde x=\alpha x$, $\tilde y=\beta y$,
$\tilde w=(\alpha\beta)^{-1}w$, $\tilde v^1=\alpha v^1$ and $\tilde v^2=\beta v^2$ with $\alpha\beta\ne0$
the coefficients~$k_1$ and $k_2$ transform as $\tilde k_1=\alpha^3k_1$ and $\tilde k_2=\beta^3k_2$.
Hence, without loss of generality, they can be simultaneously set to 1, i.e., $k_1=k_2=1$.
Then the system~\eqref{eq:NyzhnykSystem} takes the form
\begin{gather}\label{eq:SymNyzhnykSystem}
w_t=w_{xxx}+w_{yyy}+3(v^1w)_x+3(v^2w)_y,\quad
v^1_y=w_x, \quad v^2_x=w_y
\end{gather}
and is called the {\it (symmetric) Nyzhnyk system}.
The substitution $k_1=k_2=1$ into~\eqref{eq:NyzhnykSystemLax} gives a (linear) Lax representation
\begin{gather}\label{eq:LaxSymNyzhnykSystem}
\psi_{xy}+w\psi=0, \quad
\psi_t=\psi_{xxx}+\psi_{yyy}+3v^1\psi_x+3v^2\psi_y
\end{gather}
of the system~\eqref{eq:SymNyzhnykSystem}.
We employ the last two equations of the system~\eqref{eq:SymNyzhnykSystem} as its short conservation laws to (locally) introduce the potentials~$\phi$ and~$\theta$, respectively.
Thus, these potentials are defined by the equations $\phi_x=v^1$, $\phi_y=w$, $\theta_x=w$ and $\theta_y=v^2$, whence $\theta_x=\phi_y$.
We consider this equality as a short conservation law of the system~\eqref{eq:SymNyzhnykSystem} extended by the equations defining the potentials~$\phi$ and~$\theta$, and introduce another potential~$u$ according to it, obtaining the equations $u_x=\phi$ and $u_y=\theta$.
In view of all the specified equations for the potentials, the unknown functions $w$, $v^1$ and $v^2$ of the system~\eqref{eq:SymNyzhnykSystem} can be expressed in terms of the second derivatives of the potential~$u$:
$w=u_{xy}$, $v^1=u_{xx}$ and $v^2=u_{yy}$.
Under this representation, the last two equations of the system~\eqref{eq:SymNyzhnykSystem} become identities, and therefore the entire system~\eqref{eq:SymNyzhnykSystem} is equivalent to a~single equation
\begin{gather}\label{eq:SymPotNizhEq}
u_{txy}=u_{xxxxy}+u_{xyyyy}+3(u_{xx}u_{xy})_x+3(u_{yy}u_{xy})_y
\end{gather}
for the potential~$u$, which we call the {\it (symmetric potential) Nyzhnyk equation}.
The system
\begin{gather}\label{eq:LaxSymPotNizhEq}
\psi_{xy}+u_{xy}\psi=0, \quad
\psi_t=\psi_{xxx}+\psi_{yyy}+3u_{xx}\psi_x+3u_{yy}\psi_y
\end{gather}
is its (linear) Lax representation.

Analogously, consider the system~\eqref{eq:NyzhnykSystem} in the asymmetric case $k_1k_2=0$.
Up to the permutation of the independent variables $x$ and $y$, we can assume that $k_1\ne0$ and $k_2=0$. Then the scaling transformation $\tilde t=t$, $\tilde x=\alpha x$, $\tilde y=y$, $\tilde w=\alpha^{-1}w$, $\tilde v^1=\alpha v^1$ and $\tilde v^2=v^2$, where $\alpha\ne0$, transforms the coefficient~$k_1$ as $\tilde k_1=\alpha^3k_1$, so this coefficient can be set to 1.
As a result, without loss of generality, $(k_1,k_2)=(1,0)$ and the system~\eqref{eq:NyzhnykSystem} takes the form
\begin{gather}\label{eq:AsymNyzhnykSystem}
w_t=w_{xxx}+3(v^1w)_x+3(v^2w)_y,\quad
v^1_y=w_x, \quad v^2_x=0.
\end{gather}
For any fixed solution of the system~\eqref{eq:AsymNyzhnykSystem}, the transformation
\begin{gather*}
\tilde t=t, \quad \tilde x=x, \quad \tilde y=Y(t,y), \quad
\tilde w=\frac w{Y_y}, \quad \tilde v^1=v^1, \quad \tilde v^2=v^2,
% \p_t=\p_{\tilde t}+Y_t\p_{\tilde y}, \quad \p_x=\p_{\tilde x}, \quad \p_y=Y_y\p_{\tilde y},
\end{gather*}
where $Y$ is a particular solution of the equation $Y_t=3v^2Y_y$ with $Y_y\ne0$, in which the coefficient~$v^2$ is the corresponding component of the fixed solution of the system~\eqref{eq:AsymNyzhnykSystem}, maps this solution to a solution with a vanishing $v^2$-component.
In other words, the system~\eqref{eq:AsymNyzhnykSystem} is equivalent to the {\it reduced asymmetric Nyzhnyk system}
\begin{gather}\label{eq:AsymNyzhnykSystemWithoutV2}
w_t=w_{xxx}+3(v^1w)_x,\quad v^1_y=w_x,
\end{gather}
which possesses the (linear) Lax representation
\begin{gather}\label{eq:NyzhnykSystemAsymLax}
\psi_{xy}+w\psi=0, \quad
\psi_t=\psi_{xxx}+3v^1\psi_x.
\end{gather}

We employ the last equation of the system~\eqref{eq:AsymNyzhnykSystemWithoutV2} as a short conservation law to introduce the potential~$u$.
Thus, it is defined by the equations $u_x=v^1$ and $u_y=w$, under which the second equation of the system~\eqref{eq:AsymNyzhnykSystemWithoutV2} is an identity.
Interpreting these equations as a substitution for $v^1$ and~$w$, from the first equation of the system~\eqref{eq:AsymNyzhnykSystemWithoutV2} we obtain the equation
\begin{gather}\label{eq:AsymPotNizhEqBLMP}
u_{ty}=u_{xxxy}+3(u_xu_y)_x
\end{gather}
for the potential~$u$, which is naturally called the {\it asymmetric (potential) Nyzhnyk equation} and is also known as the {\it Boiti--Leon--Manna--Pempinelli equation} in view of the work~\cite{boit1986a}.
Thus, the system~\eqref{eq:AsymNyzhnykSystemWithoutV2} is equivalent to the single equation~\eqref{eq:AsymPotNizhEqBLMP}, for which a (linear) Lax representation is given by the system
\begin{gather}\label{eq:LaxAsymPotNizhEqBLMP}
\psi_{xy}+u_y\psi=0, \quad
\psi_t=\psi_{xxx}+3u_x\psi_x.
\end{gather}

\begin{remark}\label{rem:DispersiveCaseRescaling}
In the models~\eqref{eq:SymNyzhnykSystem}, \eqref{eq:SymPotNizhEq}, \eqref{eq:AsymNyzhnykSystemWithoutV2} and~\eqref{eq:AsymPotNizhEqBLMP},
as well as for their dispersionless counterparts (see below),
the factor 3 in the nonlinearities can be eliminated
by using the scaling transformation
\begin{gather}\label{eq:ScalingToDelete3}
(\tilde t, \tilde x, \tilde y, \tilde w, \tilde v^1, \tilde v^2, \tilde u)=(t,x,y,3w,3v^1,3v^2,3u).
\end{gather}
We denote the models with rescaled variables by adding a prime to the corresponding numbers, namely as (\ref{eq:SymNyzhnykSystem}$'$), (\ref{eq:SymPotNizhEq}$'$), (\ref{eq:AsymNyzhnykSystemWithoutV2}$'$) and (\ref{eq:AsymPotNizhEqBLMP}$'$).
The rescaled Lax representations take the following form (all tildes over the variables are omitted):
\begin{gather*}
\label{eq:LaxSymNyzhnykSystem'}
\psi_{xy}+\frac13w\psi=0, \quad
\psi_t=\psi_{xxx}+\psi_{yyy}+v^1\psi_x+v^2\psi_y,              \tag{\ref{eq:LaxSymNyzhnykSystem}$'$}
\\\label{eq:LaxSymPotNizhEq'}
\psi_{xy}+\frac13u_{xy}\psi=0, \quad
\psi_t=\psi_{xxx}+\psi_{yyy}+u_{xx}\psi_x+u_{yy}\psi_y,        \tag{\ref{eq:LaxSymPotNizhEq}$'$}
\\\label{eq:NyzhnykSystemAsymLax'}
\psi_{xy}+\frac13w\psi=0, \quad \psi_t=\psi_{xxx}+v^1\psi_x,   \tag{\ref{eq:NyzhnykSystemAsymLax}$'$}
\\\label{eq:LaxAsymPotNizhEqBLMP'}
\psi_{xy}+\frac13u_y\psi=0, \quad \psi_t=\psi_{xxx}+u_x\psi_x. \tag{\ref{eq:LaxAsymPotNizhEqBLMP}$'$}
\end{gather*}
\end{remark}

\subsection{Dispersionless models}

Let us consider the transition from the original Nyzhnyk system~\eqref{eq:NyzhnykSystem} and its Lax representation to their dispersionless counterparts.
To this end, by means of the scaling transformation
\begin{gather*}%\label{eq:TransformVariables}
\tilde t=\varepsilon t, \quad \tilde x=\varepsilon x, \quad \tilde y=\varepsilon y, \quad
\tilde w=w, \quad \tilde v^1=v^1, \quad \tilde v^2=v^2, \quad \tilde u=\varepsilon^{-2}u
%\\ \p_t=\varepsilon\p_{\tilde t}, \quad \p_x=\varepsilon\p_{\tilde x}, \quad \p_y=\varepsilon\p_{\tilde y},
\end{gather*}
in the systems~\eqref{eq:NyzhnykSystem} and~\eqref{eq:NyzhnykSystemLax}, accompanied by the substitution $\psi={\rm e}^{\varepsilon^{-1}\vartheta}$ in the system~\eqref{eq:NyzhnykSystemLax}, we introduce an additional parameter $\varepsilon\ne0$ into these systems, after which the tildes are omitted.
This leads, respectively, to the systems
\begin{gather*}
w_t=k_1\varepsilon^2w_{xxx}+k_2\varepsilon^2w_{yyy}+3(v^1w)_x+3(v^2w)_y, \quad
v^1_y=k_1w_x, \quad v^2_x=k_2w_y,
\\[1ex]
\vartheta_x\vartheta_y+\varepsilon\vartheta_{xy}=-w, \\
\vartheta_t=k_1(\vartheta_x^{}{}^{\!3}+3\varepsilon\vartheta_{xx}\vartheta_x+\varepsilon^2\vartheta_{xxx})
+k_2(\vartheta_y^{}{}^{\!3}+3\varepsilon\vartheta_{yy}\vartheta_y+\varepsilon^2\vartheta_{yyy})
+3v^1\vartheta_x+3v^2\vartheta_y,
\end{gather*}
from which, by performing the limiting process $\varepsilon\rightarrow0$, we obtain the system
\begin{gather}\label{eq:dNSystemWithK}
w_t=3(v^1w)_x+3(v^2w)_y, \quad
v^1_y=k_1w_x, \quad v^2_x=k_2w_y
\end{gather}
and its nonlinear Lax representation
\begin{gather}\label{eq:LaxdNSystemWithK}
\vartheta_x\vartheta_y=-w, \quad
\vartheta_t=k_1\vartheta_x^{}{}^{\!3}+k_2\vartheta_y^{}{}^{\!3}+3v^1\vartheta_x+3v^2\vartheta_y.
\end{gather}

Similarly to the system~\eqref{eq:NyzhnykSystem}, the system~\eqref{eq:dNSystemWithK} can be treated separately in the symmetric and asymmetric cases by executing the same transformations and potential introductions as before.

In the symmetric case, in addition to gauging $(k_1,k_2)$ to $(1,1)$, we rescale the dependent variables as in~\eqref{eq:ScalingToDelete3}, which yields the canonical form of the {\it (symmetric) dispersionless Nyzhnyk system}:
\begin{gather}\label{eq:dNSystem}
w_t=(v^1w)_x+(v^2w)_y, \quad
v^1_y=w_x, \quad v^2_x=w_y.
\end{gather}
See, for instance,~\cite{pavl2006c}, where the system~\eqref{eq:dNSystem} was linked to a certain hydrodynamic chain.
The nonlinear Lax representation for this system is the system~\eqref{eq:LaxdNSystemWithK} with $(k_1,k_2)=(1,1)$, wherein we additionally rescale $\vartheta$ as $\tilde\vartheta=\sqrt3\,\vartheta$, resulting in the system
\begin{gather}\label{eq:LaxdNSystem}
\vartheta_x\vartheta_y=-w, \quad
\vartheta_t=\frac13\bigl(\vartheta_x^{}{}^{\!3}+\vartheta_y^{}{}^{\!3}\bigr)+v^1\vartheta_x+v^2\vartheta_y,
\end{gather}
where the tilde over~$\vartheta$ is omitted.
Just as for the other dispersionless Nyzhnyk models considered below, the nonlinear Lax representation~\eqref{eq:LaxdNSystem} of the system~\eqref{eq:dNSystem} can be converted into the corresponding linear nonisospectral Lax representation
\begin{gather}\label{eq:LaxdNSystemNonisospectral}
\begin{split}&
\xi_t=\left(p^2+\frac{w^3}{p^4}+v^1+\frac{wv^2}{p^2}\right)\xi_x
-\left(-\frac{w^2w_x}{p^3}+pv^1_x-\frac{(wv^2)_x}p\right)\xi_p,
\\&
\xi_y=\frac{w}{p^2}\xi_x+\frac{w_x}p\xi_p
\end{split}
\end{gather}
of this system using the procedure described in \cite[p.~360]{serg2018a} for the general case of dispersionless models; see also the references therein concerning linear nonisospectral Lax representations.
In the system~\eqref{eq:LaxdNSystemNonisospectral} and henceforth, $p$ is a variable spectral parameter upon which only the unknown function~$\xi$ depends, $\xi=\xi(t,x,y,p)$.
The compatibility condition of the system~\eqref{eq:LaxdNSystemNonisospectral} should be split on the solution set of the system, first with respect to the parametric derivatives of the function~$\xi$, and then with respect to the independent variable~$p$.
Note that a~disadvantage of the systems~\eqref{eq:LaxdNSystem} and~\eqref{eq:LaxdNSystemNonisospectral}
as the nonlinear Lax representation of the system~\eqref{eq:dNSystem} and its linear nonisospectral Lax representation, respectively,
is that after splitting the associated compatibility conditions, instead of the equation $v^2_x=w_y$, we obtain its consequence $w(v^2_x-w_y)=0$.
This is the only case of such kind among the dispersionless Nyzhnyk models.

If we express the unknown functions $w$, $v^1$ and $v^2$ via the second derivatives of the potential~$u$, the last two equations of the system~\eqref{eq:dNSystem} become identities and the entire system~\eqref{eq:dNSystem} is equivalent to a single equation, namely the {\it (symmetric potential) dispersionless Nyzhnyk equation}
\begin{gather}\label{eq:dN}
u_{txy}=(u_{xx}u_{xy})_x+(u_{xy}u_{yy})_y,
\end{gather}
which first explicitly appeared in an equivalent form in~\cite[Eq.~(63)]{kono2002b}.
The substitution $w=u_{xy}$, $v^1=u_{xx}$ and $v^2=u_{yy}$ in~\eqref{eq:LaxdNSystem} gives the nonlinear Lax representation
\begin{gather}\label{eq:dNLaxPair}
\vartheta_x\vartheta_y=-u_{xy}, \quad
\vartheta_t=\frac13\bigl(\vartheta_x^{}{}^{\!3}+\vartheta_y^{}{}^{\!3}\bigr)+u_{xx}\vartheta_x+u_{yy}\vartheta_y
\end{gather}
of the equation~\eqref{eq:dN}.
This representation was examined in~\cite[Eq.~(2)]{moro2021a} and~\cite[Eq.~(13)]{boyk2024a} in the form solved with respect to the derivatives~$\vartheta_y$ and~$\vartheta_t$:
\begin{gather*}
\vartheta_t=\frac13\left(\vartheta_x^{}{}^{\!3}-\frac{u_{xy}^{}{}^{\!3}}{\vartheta_x^{}{}^{\!3}}\right)+u_{xx}\vartheta_x-\frac{u_{xy}u_{yy}}{\vartheta_x},\quad
\vartheta_y=-\frac{u_{xy}}{\vartheta_x}.
\end{gather*}
The nonlinear Lax representation of the equation~\eqref{eq:dN} was first obtained in~\cite{pavl2006c} in an alternative form
\begin{gather}\label{eq:dNLaxPairPavlov}
\varrho_t=\left(\frac13\left(\varrho^3-\frac{u_{xy}^{}{}^{\!3}}{\varrho^3}\right)+u_{xx}\varrho-\frac{u_{xy}u_{yy}}\varrho\right)_x,\quad
\varrho_y=-\left(\frac{u_{xy}}\varrho\right)_x,
\end{gather}
where each equation of the preceding system is additionally differentiated with respect to the variable~$x$, and the derivative~$\vartheta_x:=\varrho$ plays the role of the dependent variable instead of~$\vartheta$.
The nonlinear Lax representation~\eqref{eq:dNLaxPair} of the equation~\eqref{eq:dN} can be converted in the aforementioned manner into the linear nonisospectral Lax representation of the same equation:
\begin{gather}\label{eq:dNLaxPairNonisospectral}
\begin{split}&
\xi_t=\left(p^2+\frac{u_{xy}^{}{}^{\!3}}{p^4}+u_{xx}+\frac{u_{xy}u_{yy}}{p^2}\right)\xi_x
-\left(pu_{xxx}-\frac{(u_{xy}u_{yy})_x}p-\frac{u_{xy}^{}{}^{\!2}u_{xxy}}{p^3}\right)\xi_p,
\\&
\xi_y=\frac{u_{xy}}{p^2}\xi_x+\frac{u_{xxy}}p\xi_p.
\end{split}
\end{gather}

As in the case of the system~\eqref{eq:AsymNyzhnykSystem}, in the asymmetric case of the system~\eqref{eq:dNSystemWithK} where $(k_1,k_2)=(1,0)$, one can set $v^2=0$ and together with the rescaling of dependent variables as in~\eqref{eq:ScalingToDelete3} this leads to the canonical form
\begin{gather}\label{eq:AsymdNSystem}
w_t=(v^1w)_x, \quad
v^1_y=w_x,
\end{gather}
of the {\it asymmetric dispersionless Nyzhnyk system}, whose nonlinear and linear nonisospectral Lax representations are, respectively, the systems
\begin{gather}\label{eq:LaxAsymdNSystem}
\vartheta_x\vartheta_y=-w, \quad
\vartheta_t=\frac13\vartheta_x^{}{}^{\!3}+v^1\vartheta_x,
\\\label{eq:LaxAsymdNSystemNonisospectral}
\xi_t=(p^2+v^1)\xi_x-pv^1_x\xi_p,
\quad
\xi_y=\frac{w}{p^2}\xi_x+\frac{w_x}p\xi_p.
\end{gather}

Introducing the potential~$u$, where $u_x=v^1$ and $u_y=w$, reduces the system~\eqref{eq:AsymdNSystem} to the single equation
\begin{gather}\label{eq:AsymdN}
u_{ty}=(u_xu_y)_x,
\end{gather}
which we call the {\it asymmetric (potential) dispersionless Nyzhnyk equation}%
\footnote{%
The equation~\eqref{eq:AsymdN} is equivalent to the equation with $r=3/2$ from the family of equations (46) in~\cite{blas2002a}, which were designated in~\cite{moro2009a} as the ``$r$-th dispersionless (2+1) Harry Dym equation'' for a fixed value of~$r$.
The nonlinear Lax representation~\eqref{eq:LaxAsymdNSystem} of this equation was first constructed for its specific counterpart~\eqref{eq:SpecAsymdN} in~\cite[Eqs.~(97)--(98)]{kono2004b}; see Remark~\ref{rem:SpecAsymNyzhnykModels} below.
}
and its nonlinear and linear nonisospectral Lax representations have the following form, respectively:
\begin{gather}\label{eq:AsymdNLaxPair}
\vartheta_x\vartheta_y=-u_y,\quad
\vartheta_t=\frac13\vartheta_x^{}{}^{\!3}+u_x\vartheta_x,
\\\label{eq:AsymdNLaxPairNonisospectral}
\xi_t=(p^2+u_x)\xi_x-pu_{xx}\xi_p,
\quad
\xi_y=\frac{u_y}{p^2}\xi_x+\frac{u_{xy}}p\xi_p.
\end{gather}

The presented models can be considered under the conditions that all independent and dependent variables are either real (real Nyzhnyk models), complex (complex Nyzhnyk models), or the unknown functions are complex-valued functions of real independent variables (partially complex Nyzhnyk models).
Purely complex models and related objects are denoted with the additional subscript~$\mathbb C$ to distinguish them from their purely real counterparts.

\subsection{Specific models}

There exists a more specific version of the symmetric Nyzhnyk system~\eqref{eq:SymNyzhnykSystem}, which can be obtained from~\eqref{eq:SymNyzhnykSystem} via the formal substitution of $z$, $\bar z$, $v$ and $\bar v$ for $x$, $y$, $v^1$ and $v^2$:
\begin{gather}\label{eq:SpecNyzhnykSystem}
w_t=w_{zzz}+w_{\bar z\bar z\bar z}+3(vw)_z+3(\bar vw)_{\bar z}, \quad v_{\bar z}=w_z, \quad \bar v_z=w_{\bar z},
\end{gather}
where the bar denotes complex conjugation,
$z:=x+{\rm i}y$, ${\bar z:=x-{\rm i}y}$, $\p_z=\frac12(\p_x-\mathrm i\p_y)$
and $\p_{\bar z}=\frac12(\p_x+\mathrm i\p_y)$ are the corresponding Wirtinger derivatives,
$v=v^1+\mathrm iv^2$ and $w$, $v^1$ and~$v^2$ are real-valued functions of the variables $(t,z,\bar z)$ or, equivalently, of the real variables $(t,x,y)$.
Here and in what follows it is convenient to consider~$z$ and~$\bar z$ as independent variables.
This version of the Nyzhnyk system was investigated in~\cite[Eq.~(5)]{vese1984a}.
The last equation of the system~\eqref{eq:SpecNyzhnykSystem} is the complex conjugate of its second equation.
Hence, it is not independent of the other equations and can be omitted.
It is included in~\eqref{eq:SpecNyzhnykSystem} solely to make the relation between the systems~\eqref{eq:SymNyzhnykSystem} and~\eqref{eq:SpecNyzhnykSystem} obvious.
For the same purpose, a gauge of the coefficients is chosen in~\eqref{eq:SpecNyzhnykSystem} that differs from those in the literature by scaling and a sign change.
The dispersionless counterpart for~\eqref{eq:SpecNyzhnykSystem} is the system
\begin{gather}\label{eq:SpecdNSystem}
w_t=3(vw)_z+3(\bar vw)_{\bar z}, \quad v_{\bar z}=w_z, \quad \bar v_z=w_{\bar z}.
%v_t=3(wv)_z+3(\bar wv)_{\bar z}, \quad w_{\bar z}=-v_z.
\end{gather}
Up to the rescaling described in Remark~\ref{rem:DispersiveCaseRescaling}, the above formal substitution allows one to construct the (linear) Lax representation for the system~\eqref{eq:SpecNyzhnykSystem} and the nonlinear Lax representation for the system~\eqref{eq:SpecdNSystem} from the respective representations~\eqref{eq:LaxSymPotNizhEq} and~\eqref{eq:LaxdNSystem} for the systems~\eqref{eq:SymNyzhnykSystem} and~\eqref{eq:dNSystem}:
\begin{gather}\label{eq:LaxSpecNyzhnykSystem}
\psi_{z\bar z}+w\psi=0, \quad
\psi_t=\psi_{zzz}+\psi_{\bar z\bar z\bar z}+3v\psi_z+3\bar v\psi_{\bar z},
\\\label{eq:LaxSpecdNSystem}
\vartheta_z\vartheta_{\bar z}=-w, \quad
\vartheta_t=\vartheta_z^{}{}^{\!3}+\vartheta_{\bar z}^{}{}^{\!3}+3v\vartheta_z+3\bar v\vartheta_{\bar z}.
\end{gather}

We employ the common second equation of the systems~\eqref{eq:SpecNyzhnykSystem} and~\eqref{eq:SpecdNSystem} as a ``short'' conservation law to (locally) introduce a complex-valued potential~$\phi$.
In other words, this potential is defined by the equations $\phi_z=v$ and $\phi_{\bar z}=w$, from which by complex conjugation we also obtain $\bar\phi_z=w$.
Thus, $\phi_{\bar z}=\bar\phi_z$.
Let us now consider the last equality as a ``short'' potential conservation law and introduce another potential~$u$ according to it, defined by the equations $u_z=\phi$ and $u_{\bar z}=\bar\phi$, from which complex conjugating and linearly combining imply that $(u-\bar u)_z=(u-\bar u)_{\bar z}=0$, and therefore $u-\bar u$ depends only on~$t$.
Up to the ambiguity in the definition of the potential~$u$, which allows adding an arbitrary function of~$t$ to it, one can set $u-\bar u=0$, thereby assuming the potential~$u$ to be real-valued.
The unknown functions of the initial systems~\eqref{eq:SpecNyzhnykSystem} and~\eqref{eq:SpecdNSystem} can be expressed in terms of the potential~$u$ as
\begin{gather}\label{eq:SpecNyzhnykSystemPots}
v=u_{zz}, \quad w=u_{z\bar z}.
\end{gather}
Thus, each of the systems~\eqref{eq:SpecNyzhnykSystem} and~\eqref{eq:SpecdNSystem} is equivalent to a single equation for the potential~$u$:
\begin{gather}\label{eq:SpecPotNizhEq}
u_{tz\bar z}=u_{zzzz\bar z}+u_{z\bar z\bar z\bar z\bar z}
+3(u_{zz}u_{z\bar z})_z+3(u_{\bar z\bar z}u_{z\bar z})_{\bar z},
\\\label{eq:SpecPotdNEq}
u_{tz\bar z}=3(u_{zz}u_{z\bar z})_z+3(u_{\bar z\bar z}u_{z\bar z})_{\bar z}.
\end{gather}
The Lax representation for the equation~\eqref{eq:SpecPotNizhEq} and the nonlinear Lax representation for the equation~\eqref{eq:SpecPotdNEq} take the form, respectively,
\begin{gather}\label{eq:LaxSpecPotNizhEq}
\psi_{z\bar z}+u_{z\bar z}\psi=0, \quad
\psi_t=\psi_{zzz}+\psi_{\bar z\bar z\bar z}+3u_{zz}\psi_z+3u_{\bar z\bar z}\psi_{\bar z},
\\\label{eq:LaxSpecPotdNEq}
\vartheta_z\vartheta_{\bar z}=-u_{z\bar z}, \quad
\vartheta_t=\vartheta_z^{}{}^{\!3}+\vartheta_{\bar z}^{}{}^{\!3}+3u_{zz}\vartheta_z+3u_{\bar z\bar z}\vartheta_{\bar z},
\end{gather}
where $\psi$ and $\vartheta$ are real-valued unknown functions of $(t,z,\bar z)$.
The equations~\eqref{eq:SpecNyzhnykSystem}--\eqref{eq:LaxSpecPotdNEq} can be rewritten in terms of the real variables~$x$ and~$y$:
\begin{gather*}
\label{eq:SpecNyzhnykSystem'}
w_t=\frac14(w_{xx}-3w_{yy})_x+3(v^1w)_x+3(v^2w)_y, \quad (v^1-w)_x=v^2_y,\quad (v^1+w)_y=-v^2_x,
\tag{\ref{eq:SpecNyzhnykSystem}$'$}
\\\label{eq:SpecdNSystem'}
w_t=3(v^1w)_x+3(v^2w)_y, \quad (v^1-w)_x=v^2_y,\quad (v^1+w)_y=-v^2_x,
\tag{\ref{eq:SpecdNSystem}$'$}
\\\label{eq:LaxSpecNyzhnykSystem'}
\frac14\triangle\psi+w\psi=0, \quad \psi_t=\frac14(\psi_{xx}-3\psi_{yy})_x+3(v^1\psi_x+v^2\psi_y),
\tag{\ref{eq:LaxSpecNyzhnykSystem}$'$}
\\\label{eq:LaxSpecdNSystem'}
\frac14(\vartheta_x^{}{}^{\!2}+\vartheta_y^{}{}^{\!2})=-w,\quad
\vartheta_t=\frac14(\vartheta_x^{}{}^{\!3}-3\vartheta_x\vartheta_y^{}{}^{\!2})+3(v^1\vartheta_x+v^2\vartheta_y),
\tag{\ref{eq:LaxSpecdNSystem}$'$}
\\\label{eq:SpecNyzhnykSystemPots'}
v^1=\frac14(u_{xx}-u_{yy}),\quad v^2=-\frac12 u_{xy},\quad w=\frac14(u_{xx}+u_{yy}),
\tag{\ref{eq:SpecNyzhnykSystemPots}$'$}
\\\label{eq:SpecPotNizhEq'}
\triangle u_t=\frac14(\triangle u_{xx}-3\triangle u_{yy})_x+\frac34(u_{xx}^{}{}^{\!2}-u_{yy}^{}{}^{\!2})_x-\frac32(u_{xy}\triangle u)_y,
\tag{\ref{eq:SpecPotNizhEq}$'$}
\\\label{eq:SpecPotdNEq'}
\triangle u_t=\frac34(u_{xx}^{}{}^{\!2}-u_{yy}^{}{}^{\!2})_x-\frac32(u_{xy}\triangle u)_y,
\tag{\ref{eq:SpecPotdNEq}$'$}
\\\label{eq:LaxSpecPotNizhEq'}
\triangle\psi+\psi\triangle u=0, \quad \psi_t=\frac14(\psi_{xx}-3\psi_{yy})_x+\frac34(u_{xx}-u_{yy})\psi_x-\frac32u_{xy}\psi_y,
\tag{\ref{eq:LaxSpecPotNizhEq}$'$}
\\\label{eq:LaxSpecPotdNEq'}
\vartheta_x^{}{}^{\!2}+\vartheta_y^{}{}^{\!2}=-\triangle u,\quad
\vartheta_t=\frac14(\vartheta_x^{}{}^{\!3}-3\vartheta_x\vartheta_y^{}{}^{\!2})+\frac34(u_{xx}-u_{yy})\vartheta_x-\frac32u_{xy}\vartheta_y,
\tag{\ref{eq:LaxSpecPotdNEq}$'$}
\end{gather*}
where $\triangle:=\p_x^{\,2}+\p_y^{\,2}$.

The construction of linear nonisospectral Lax representations for specific dispersionless Nyzhnyk models~\eqref{eq:SpecdNSystem} and~\eqref{eq:SpecPotdNEq} exhibits certain peculiarities.
In terms of the variables~$(t,z,\bar z,w,v,\bar v)$, up to the formal substitution of $z$, $\bar z$, $v$ and $\bar v$ for $x$, $y$, $v^1$ and $v^2$, these representations have the same form as for the corresponding Nyzhnyk models in the symmetric dispersionless case,
\begin{gather*}\label{eq:LaxSpecdNSystemNonisospectral}
\xi_t=3\left(p^2+\frac{w^3}{p^4}+v+\frac{w\bar v}{p^2}\right)\xi_z
+3\left(\frac{w^2w_z}{p^3}-pv_z+\frac{(w\bar v)_z}{p}\right)\xi_p, \quad
\xi_{\bar z}=\frac{w}{p^2}\xi_z+\frac{w_z}p\xi_p,
\\[1.5ex]\label{eq:LaxSpecPotdNEqNonisospectral}
\begin{split}&
\xi_t=3\left(p^2+\frac{u_{z\bar z}^{}{}^{\!3}}{p^4}+u_{zz}+\frac{u_{z\bar z}u_{\bar z\bar z}}{p^2}\right)\xi_z
+3\left(\frac{u_{z\bar z}^{}{}^{\!2}u_{zz\bar z}}{p^3}-pu_{zzz}+\frac{(u_{z\bar z}u_{\bar z\bar z})_z}p\right)\xi_p,
\\&
\xi_{\bar z}=\frac{u_{z\bar z}}{p^2}\xi_z+\frac{u_{zz\bar z}}p\xi_p,
\end{split}
\end{gather*}
but here $p$ is a complex independent variable and $\xi$ is a complex-valued function of the variables $(t,z,\bar z,p)$ that is complex-differentiable with respect to~$p$.
To construct purely the real linear nonisospectral Lax representations for the system~\eqref{eq:SpecdNSystem} and the equation~\eqref{eq:SpecPotdNEq} in terms of the real variables~$(t,x,y,w,v^1,v^2)$, it is necessary to utilize the nonlinear Lax representations~\eqref{eq:LaxSpecdNSystem'} and~\eqref{eq:LaxSpecPotdNEq'} in the same terms.
The resulting representations contain nonrational expressions, namely
$R:=\sqrt{-4w-p^2}$ or $\tilde R:=\sqrt{-\triangle u-p^2}$:
\begin{gather*}
%\xi_y=\pm(-4w-p^2)^{1/2},\quad \vartheta_t=p^3+3pw+3(v^1p\pm v^2(-4w-p^2)^{1/2}),\\
\xi_y=-\epsilon\frac pR\xi_x+\epsilon\frac{2w_x}R\xi_p,\\
\xi_t=3\left(p^2+w+v^1-\epsilon\frac{pv^2}R\right)\xi_x
-3\left(p(w+v^1)_x+\epsilon v^2_xR-\epsilon\frac{2v^2w_x}R\right)\xi_p,
\\[1ex]
%\xi_y=\pm(-\triangle u-p^2)^{1/2},\quad \xi_t=p^3+\frac34p\triangle u+\frac34(u_{xx}-u_{yy})p-\frac32u_{xy}(-\triangle u-p^2)^{1/2}=p^3+\frac32pu_{xx}\mp\frac32u_{xy}(-\triangle u-p^2)^{1/2},\\
\xi_y=-\epsilon\frac p{\tilde R}\xi_x+\epsilon\frac{\triangle u_x}{2\tilde R}\xi_p,\\
\xi_t=3\left(p^2+\frac12u_{xx}+\epsilon\frac{pu_{xy}}{2\tilde R}\right)\xi_x
-\frac32\left(pu_{xxx}-\epsilon u_{xxy}\tilde R+\epsilon\frac{u_{xy}\triangle u_x}{2\tilde R}\right)\xi_p,
\end{gather*}
where $\epsilon=\pm1$, $p$ is now a real independent variable and $\xi$ is a real-valued function of the variables $(t,x,y,p)$.
The Lax representations with nonrational expressions have previously appeared in the literature, see, e.g.,~\cite{serg2019a}.

\begin{remark}\label{rem:SpecAsymNyzhnykModels}
One can also consider other specific Nyzhnyk models, including asymmetric ones, particularly when all dependent variables are complex.
For instance, the specific asymmetric dispersionless Nyzhnyk equation
\begin{gather}\label{eq:SpecAsymdN}
u_{t\bar z}=(u_zu_{\bar z})_z
\end{gather}
was derived in~\cite[Eq.~(67)]{kono2004b}, where a nonlinear Lax representation was also constructed for it \cite[Eqs.~(97)--(98)]{kono2004b}, which is a specific version of the nonlinear Lax representation~\eqref{eq:LaxAsymdNSystem} for the asymmetric dispersionless Nyzhnyk equation~\eqref{eq:AsymdN}.
Furthermore, a family of its exact solutions, parameterized by an arbitrary function and an arbitrary constant, was found in implicit form via the hydrodynamic reduction method.
\end{remark}

\subsection{Modified models}

Let us rewrite the system~\eqref{eq:NyzhnykSystem} as a system with respect to the unknown functions~$w$ and~$u$:
\begin{gather*}
w_t=k_1w_{xxx}+k_2w_{yyy}+3k_1(u_{xx}w)_x+3k_2(u_{yy}w)_y,\quad
w=u_{xy}.
\end{gather*}
We perform the differential substitution
% Old: $w=-{\rm i}\hat w_{xy}+\hat w_x\hat w_y$, $u=\hat u-{\rm i}\hat w$,
% -{\rm i}\hat w\to\hat w, \hat u\to-\hat u
$w=\hat w_{xy}-\hat w_x\hat w_y$, $u=\hat w-\hat u$,
under which
\begin{gather*}
w_t-k_1w_{xxx}-k_2w_{yyy}-3k_1(u_{xx}w)_x-3k_2(u_{yy}w)_y
\\\quad{}
=\big(\p_x\p_y-\hat w_y\p_x-\hat w_x\p_y\big)\big(
\hat w_t
-k_1(\hat w_{xxx}-3\hat w_x\hat u_{xx}+\hat w_x^{}{}^{\!3})
-k_2(\hat w_{yyy}-3\hat w_y\hat u_{yy}+\hat w_y^{}{}^{\!3})
\big)\\\qquad{}
-3\big(
 k_1(\hat w_x\p_x^2+(\hat w_{xx}-\hat w_x^{}{}^{\!2})\p_x)
+k_2(\hat w_y\p_y^2+(\hat w_{yy}-\hat w_y^{}{}^{\!2})\p_y)
\big)\big(\hat u_{xy}-\hat w_x\hat w_y\big),
\\[1ex]
u_{xy}-w=\hat u_{xy}-\hat w_x\hat w_y.
\end{gather*}
In other words, under the differential substitution
$w=\hat w_{xy}-\hat w_x\hat w_y$,
$v^1=k_1(\hat w_{xx}-\hat u_{xx})$ and
${v^2=k_2(\hat w_{yy}-\hat u_{yy})}$,
the Nyzhnyk system~\eqref{eq:NyzhnykSystem} is a differential consequence of the {\it focusing modified Nyzhnyk system}
\begin{gather*}\label{eq:FocModNyzhnykSystem}
w_t=k_1w_{xxx}+k_2w_{yyy}-3k_1u_{xx}w_x-3k_2u_{yy}w_y+k_1w_x^{}{}^{\!3}+k_2w_y^{}{}^{\!3},\quad
u_{xy}=w_xw_y,
\end{gather*}
where the hats over~$u$ and~$w$ are omitted.
In the real case, we complexify the dependent variables~$u$ and~$w$ and perform the substitution $w\to{\rm i}w$ and $u\to-u$, after which we consider the new dependent variables~$u$ and~$w$ to be real, yielding the {\it defocusing modified Nyzhnyk system}
\begin{gather}\label{eq:DefocModNyzhnykSystem}
w_t=k_1w_{xxx}+k_2w_{yyy}+3k_1u_{xx}w_x+3k_2u_{yy}w_y-k_1w_x^{}{}^{\!3}-k_2w_y^{}{}^{\!3},\quad
u_{xy}=w_xw_y,
\end{gather}
see, e.g.,~\cite{nimm1993a}.
In the symmetric case $k_1k_2\ne0$, under the scaling transformation
$\tilde t=t$, $\tilde x=\alpha x$, $\tilde y=\beta y$, $\tilde w=w$ and $\tilde u=u$, the coefficients~$k_1$ and~$k_2$ transform as $\tilde k_1=\alpha^3k_1$ and $\tilde k_2=\beta^3k_2$.
Thus, one can set $k_1=k_2=1$, meaning the canonical form of the {\it (symmetric) defocusing or focusing modified Nyzhnyk system} is as follows:
\begin{gather}\label{eq:SymModNyzhnykSystem}
w_t=w_{xxx}+w_{yyy}-3\epsilon u_{xx}w_x-3\epsilon u_{yy}w_y+\epsilon w_x^{}{}^{\!3}+\epsilon w_y^{}{}^{\!3},\quad
u_{xy}=w_xw_y.
\end{gather}
Here and henceforth, $\epsilon=-1$ or $\epsilon=1$ in the defocusing or focusing case, respectively.
Analogously, in the asymmetric case $k_1k_2=0$, up to the permutation of $x$ and $y$, we can assume that $k_1\ne0$ and $k_2=0$, and by the scaling transformation as above with $\beta=1$, we can set $k_1=1$.
Consequently, we obtain the {\it asymmetric defocusing or focusing modified Nyzhnyk system}
\begin{gather}\label{eq:AsymModNyzhnykSystem}
w_t=w_{xxx}-3\epsilon u_{xx}w_x+\epsilon w_x^{}{}^{\!3},\quad
u_{xy}=w_xw_y.
\end{gather}

Let us proceed from the system~\eqref{eq:DefocModNyzhnykSystem} to its dispersionless counterpart.
Using the scaling transformation
\begin{gather*}\label{eq:ModifiedTransformVariables}
\tilde t=\varepsilon t, \quad \tilde x=\varepsilon x, \quad \tilde y=\varepsilon y, \quad
\tilde w=\varepsilon w, \quad \tilde u=\varepsilon^{-2}u,
\end{gather*}
we introduce an additional parameter $\varepsilon\ne0$ into this system
\begin{gather*}
w_t=k_1\varepsilon^2w_{xxx}+k_2\varepsilon^2w_{yyy}+3k_1u_{xx}w_x+3k_2u_{yy}w_y-k_1w_x^{}{}^{\!3}-k_2w_y^{}{}^{\!3},\quad
u_{xy}=w_xw_y
\end{gather*}
and after the limiting process $\varepsilon\rightarrow0$, we obtain the general {\it dispersionless modified Nyzhnyk system}
\begin{gather*}\label{eq:DispersionlessDefocModNyzhnykSystem}
w_t=3k_1u_{xx}w_x+3k_2u_{yy}w_y-k_1w_x^{}{}^{\!3}-k_2w_y^{}{}^{\!3},\quad
u_{xy}=w_xw_y.
\end{gather*}
Analogously to the dispersive case, we distinguish two cases: symmetric and asymmetric when $k_1k_2\ne0$ and $k_1k_2=0$, respectively, where without loss of generality, by using the above scaling transformations and, in the asymmetric case, the permutation of the variables~$x$ and~$y$, one can set $(k_1,k_2)=(1,1)$ and $(k_1,k_2)=(1,0)$, respectively.
This yields the canonical form of the {\it symmetric and asymmetric (defocusing or focusing) dispersionless modified Nyzhnyk system}
\begin{gather}\label{eq:SymDispersionlessDefocModNyzhnykSystem}
w_t=-3\epsilon u_{xx}w_x-3\epsilon u_{yy}w_y+\epsilon w_x^{}{}^{\!3}+\epsilon w_y^{}{}^{\!3},\quad
u_{xy}=w_xw_y,
\\\label{eq:AsymDispersionlessDefocModNyzhnykSystem}
w_t=-3\epsilon u_{xx}w_x+\epsilon w_x^{}{}^{\!3},\quad
u_{xy}=w_xw_y.
\end{gather}
Of course, the systems~\eqref{eq:SymDispersionlessDefocModNyzhnykSystem} and~\eqref{eq:AsymDispersionlessDefocModNyzhnykSystem} can be derived as dispersionless limits directly from the systems~\eqref{eq:SymModNyzhnykSystem} and~\eqref{eq:AsymModNyzhnykSystem}, respectively.

By the formal substitution of $z:=x+{\rm i}y$ and ${\bar z:=x-{\rm i}y}$ for $x$ and $y$, specific counterparts can be obtained from the symmetric modified models, including the {\it specific symmetric (defocusing or focusing) modified Nyzhnyk system}
\begin{gather}\label{eq:SpecSymModNyzhnykSystem}
w_t=w_{zzz}+w_{\bar z\bar z\bar z}-3\epsilon u_{zz}w_z-3\epsilon u_{\bar z\bar z}w_{\bar z}+\epsilon w_z^{}{}^{\!3}+\epsilon w_{\bar z}^{}{}^{\!3},\quad
u_{z\bar z}=w_zw_{\bar z}
\end{gather}
and the {\it specific symmetric (defocusing or focusing) dispersionless modified Nyzhnyk system}
\begin{gather}\label{eq:dSpecSymModNyzhnykSystem}
w_t=-3\epsilon u_{zz}w_z-3\epsilon u_{\bar z\bar z}w_{\bar z}+\epsilon w_z^{}{}^{\!3}+\epsilon w_{\bar z}^{}{}^{\!3},\quad
u_{z\bar z}=w_zw_{\bar z}.
\end{gather}
In terms of the new variables $x$ and $y$, these systems take the form
\begin{gather}\label{eq:SpecSymModNyzhnykSystem'}
\begin{split}&
w_t=\frac14(w_{xx}-3w_{yy})_x-\frac34\epsilon w_x(u_{xx}-u_{yy})
+\frac32\epsilon u_{xy}w_y+\frac14\epsilon w_x(w_x^{}{}^{\!2}-3w_y^{}{}^{\!2}),\\&
\triangle u=w_x^{}{}^{\!2}+w_y^{}{}^{\!2},
%u_{xx}+u_{yy}=w_x^{}{}^{\!2}+w_y^{}{}^{\!2},
\end{split}
\tag{\ref{eq:SpecSymModNyzhnykSystem}$'$}
\\[.5ex]
\label{eq:dSpecSymModNyzhnykSystem'}
w_t=-\frac34\epsilon w_x(u_{xx}-u_{yy})
+\frac32\epsilon u_{xy}w_y+\frac14\epsilon w_x(w_x^{}{}^{\!2}-3w_y^{}{}^{\!2}),\quad
\triangle u=w_x^{}{}^{\!2}+w_y^{}{}^{\!2}.
\tag{\ref{eq:dSpecSymModNyzhnykSystem}$'$}
\end{gather}

\begin{remark}\label{rem:RelationOfModModelsToLaxRepresentations}
In fact, all the modified Nyzhnyk models presented above are rewritten versions of the Lax representations (linear or nonlinear in the dispersive or dispersionless cases, respectively) of the associated unmodified Nyzhnyk models.
In particular, the simple point substitution
$\psi={\rm e}^{\tilde w}$ and $u=-\tilde u-\tilde w$
%$w=-(\tilde u+\tilde w)_{xy}$, $v^1=-(\tilde u+\tilde w)_{xx}$, $v^2=-(\tilde u+\tilde w)_{yy}$,
reduces the systems~\eqref{eq:LaxSymPotNizhEq} and~\eqref{eq:LaxAsymPotNizhEqBLMP} to the systems \eqref{eq:SymModNyzhnykSystem} and~\eqref{eq:AsymModNyzhnykSystem} with $\epsilon=1$, respectively.
In the dispersionless case, the connection is completely obvious.
It is only necessary to coordinate the corresponding scalings.
Thus, the substitution $\vartheta=\sqrt 3\tilde w$ and $u=-3\tilde u$ maps the systems~\eqref{eq:dNLaxPair} and~\eqref{eq:AsymdNLaxPair} to the systems~\eqref{eq:SymDispersionlessDefocModNyzhnykSystem} and~\eqref{eq:AsymDispersionlessDefocModNyzhnykSystem} with $\epsilon=1$.
This relation between unmodified and modified Nyzhnyk models is completely analogous to the well-known connection between the standard and modified Korteweg--de Vries equations via the Lax representation of the former, which can be represented as a Miura transformation.
\end{remark}

\subsection{Generalizations}\label{sec:GenNyzhnykModel}

Each of the Nyzhnyk models discussed above is a lowest-order member in a hierarchy of related (1+2)-dimensional integrable models; see, for example, the second members in the hierarchies containing the specific Nyzhnyk system and the modified specific Nyzhnyk system in \cite[Section~2.3]{taim1997a}.

In~\cite[Eq.~(14)]{bogd1987a}, the following (1+2)-dimensional integrable system of two partial differential equations was constructed:
\begin{gather}\label{eq:BogdanovSystem}
\begin{split}&
f_t+f_{zzz}+f_{\bar z\bar z\bar z}
+3f_z\p_{\bar z}^{-1}(fg)_z+3f_{\bar z}\p_z^{-1}(fg)_{\bar z}
+3f\p_z^{-1}(gf_{\bar z})_{\bar z}+3f\p_{\bar z}^{-1}(gf_z)_z
=0,\\&
g_t+g_{zzz}+g_{\bar z\bar z\bar z}
+3g_z\p_{\bar z}^{-1}(fg)_z+3g_{\bar z}\p_z^{-1}(fg)_{\bar z}
+3g\p_z^{-1}(fg_{\bar z})_{\bar z}+3g\p_{\bar z}^{-1}(fg_z)_z
=0
\end{split}
\end{gather}
for two complex-valued functions~$f$ and~$g$ of the variables $(t,z,\bar z)$.
Its (linear) Lax representation is given by the system~\cite[Eq.~(15)]{bogd1987a}
\begin{gather*}\label{eq:BogdanovSystemLaxRepresentation}
\begin{split}&
g\psi^1+{\rm i}\psi^2_z=0,\quad
f\psi^2-{\rm i}\psi^1_{\bar z}=0,\\&
\psi^1_t+\psi^1_{zzz}+\psi^1_{\bar z\bar z\bar z}
+3{\rm i}\p_{\bar z}^{-1}(gf_z)_z\psi^1-3f\p_z^{-1}(fg)_{\bar z}\psi^2
+3{\rm i}\p_{\bar z}^{-1}(fg)_z\psi^1_z+3f_{\bar z}\psi^2_{\bar z}=0,
\\&
\psi^2_t+\psi^2_{zzz}+\psi^2_{\bar z\bar z\bar z}
+3{\rm i}\p_z^{-1}(fg_{\bar z})_{\bar z}\psi^2-3g\p_{\bar z}^{-1}(fg)_z\psi^1
+3{\rm i}\p_z^{-1}(fg)_{\bar z}\psi^2_{\bar z}+3g_z\psi^1_z=0.
\end{split}
\end{gather*}
The fact that the system~\eqref{eq:BogdanovSystem} is a generalization of (specific dispersive) Nyzhnyk models is demonstrated by its componentwise reductions.
Moreover, specific modified Nyzhnyk models were first constructed in~\cite{bogd1987a} precisely by means of componentwise reductions of the system~\eqref{eq:BogdanovSystem}.

In particular, under the condition $g=1$, this system reduces to the equation
\begin{gather*}
f_t+f_{zzz}+f_{\bar z\bar z\bar z}
+3(f\p_{\bar z}^{-1}f_z)_z
+3(f\p_z^{-1}f_{\bar z})_{\bar z}
=0.
\end{gather*}
If we additionally set $\bar f=f$,
denote $w:=f$ and $v:=\p_{\bar z}^{-1}f_z$ (whence $\bar v:=\p_z^{-1}f_{\bar z}$), and replace $t$ by $-t$, we obtain the specific symmetric Nyzhnyk system~\eqref{eq:SpecNyzhnykSystem},
where the last two equations are consequences of the definitions of~$v$ and~$\bar v$.

In turn, imposing the constraint $g=\bar f$ reduces the system~\eqref{eq:BogdanovSystem} to an extension of the specific modified Nyzhnyk model~\cite[Eq.~(17)]{bogd1987a}:
\begin{gather}\label{eq:BogdanovEq17}
f_t+f_{zzz}+f_{\bar z\bar z\bar z}
+3f_z\p_{\bar z}^{-1}(f\bar f)_z
+3f_{\bar z}\p_z^{-1}(f\bar f)_{\bar z}
+3f\p_z^{-1}(\bar ff_{\bar z})_{\bar z}
+3f\p_{\bar z}^{-1}(\bar ff_z)_z
=0.
\end{gather}
Under the additional condition $f_z=\bar f_{\bar z}$, there exists a potential~$w$ defined by the equations $w_{\bar z}=f$ and ${w_z=\bar f}$, whence
$\bar w_z=\bar f=w_z$ and
$\bar w_{\bar z}=f=w_{\bar z}$,
meaning that $(\bar w-w)_z=(\bar w-w)_{\bar z}=0$.
Thus, $\mathop{\rm Im}w=\frac12{\rm i}(\bar w-w)$ is a function depending only on~$t$, which can be set to zero up to the ambiguity of~$w$ as a potential. In other words, $\bar w=w$.
If we denote $u=\p_z^{-1}\p_{\bar z}^{-1}(w_zw_{\bar z})$, then
$u_{z\bar z}=w_zw_{\bar z}$ and from~\eqref{eq:BogdanovEq17} it follows that $V_{\bar z}=0$, where $V:=w_t+w_{zzz}+w_{\bar z\bar z\bar z}+3w_zu_{zz}+3w_{\bar z}u_{\bar z\bar z}-(w_{\bar z})^3-(w_z)^3$.
Since $V$ is a real-valued expression, complex conjugation also yields $V_z=0$.
Hence, $V$ as a function depends only on~$t$, so up to the potential ambiguity of~$w$, one can set $V=0$, which altogether, after replacing $t$ by $-t$, yields the specific (symmetric) defocusing modified Nyzhnyk system~\eqref{eq:SpecSymModNyzhnykSystem}.

If, instead, $g=f=\bar f=:u$, then by additionally denoting $v:=\p_{\bar z}^{-1}(u^2)_z$ (whence $\bar v:=\p_z^{-1}(u^2)_{\bar z}$), we obtain another specific (focusing) modified Nyzhnyk system
\begin{gather}\label{eq:ModNyzhnykSystem2}
u_t=u_{zzz}+u_{\bar z\bar z\bar z}+3u_zv+3u_{\bar z}\bar v+\tfrac32u(v_z+\bar v_{\bar z}),\quad
v_{\bar z}=(u^2)_z
\end{gather}
and its Lax representation
\begin{gather*}
\psi^1_z-u\psi^2=0,\quad
\psi^1_t=\psi^1_{zzz}+\psi^1_{\bar z\bar z\bar z}-3u_z\psi^2_z+3\bar v\psi^1_{\bar z}+3uv\psi^2+\tfrac32{\bar v}_{\bar z}\psi^1,\\
\psi^2_{\bar z}+u\psi^1=0,\quad
\psi^2_t=\psi^2_{zzz}+\psi^2_{\bar z\bar z\bar z}+3u_{\bar z}\psi^1_{\bar z}+3v\psi^2_z-3u\bar v\psi^1+\tfrac32v_z\psi^2,
\end{gather*}
see, respectively, the equations (2.11),~(2.12) and~(2.11),~(2.17) together with the second equation from~(2.16) in~\cite{taim1997a}.
In a nonspecific form, the defocusing counterpart
\begin{gather}\label{eq:ModNyzhnykSystem2b}
u_t=u_{xxx}+u_{yyy}-3u_xv-3u_yw-\tfrac32u(v_x+w_y),\quad v_y=(u^2)_x,\quad w_x=(u^2)_y
\end{gather}
of the modified Nyzhnyk system~\eqref{eq:ModNyzhnykSystem2} is given in \cite[Eq.~(3)]{fera1999c}; see also \cite[Eq.~(9.142)]{roge2002A}.

By introducing an additional parameter $\varepsilon\ne0$ into the system~\eqref{eq:BogdanovSystem} by means of the scaling transformation $\tilde t=\varepsilon t$, $\tilde z=\varepsilon z$, $\tilde f=f$ and $\tilde g=g$, omitting the tildes and performing the limiting process $\varepsilon\rightarrow0$, one can also obtain the dispersionless counterpart of this system:
\begin{gather*}\label{eq:dBogdanovSystem}
\begin{split}&
f_t
+3f_z\p_{\bar z}^{-1}(fg)_z+3f_{\bar z}\p_z^{-1}(fg)_{\bar z}
+3f\p_z^{-1}(gf_{\bar z})_{\bar z}+3f\p_{\bar z}^{-1}(gf_z)_z
=0,\\&
g_t
+3g_z\p_{\bar z}^{-1}(fg)_z+3g_{\bar z}\p_z^{-1}(fg)_{\bar z}
+3g\p_z^{-1}(fg_{\bar z})_{\bar z}+3g\p_{\bar z}^{-1}(fg_z)_z
=0.
\end{split}
\end{gather*}
Analogously, the dispersionless counterpart of the system~\eqref{eq:ModNyzhnykSystem2} can be obtained by trivially extending the scaling transformation to~$v$.

Recently, using an original method involving contact vector fields, a (1+3)-dimensional integrable generalization \cite[Eq.~(21)]{serg2025a} of the asymmetric dispersionless Nyzhnyk system~\eqref{eq:AsymdNSystem} was constructed.
Under certain additional differential constraints, this generalization reduces to the aforementioned system.

To conclude the overview of Nyzhnyk models, we note that unsuccessful attempts to generalize Nyzhnyk models within the same 1+2 dimensions are also encountered in the literature, for instance, to systems of the form
\begin{gather}\label{eq:QuasigenNyzhnykSystem}
w_t+aw_{xxx}+bw_{yyy}+cw_x+dw_y=3a(v^1w)_x+3b(v^2w)_y,\quad
v^1_y=w_x, \quad v^2_x=w_y,
\end{gather}
where $a$, $b$, $c$ and $d$ are arbitrary constants with $(a,b)\ne(0,0)$.
If $ab\ne0$, a simple point transformation
\[
\tilde t=t,\ \,
\tilde x=\frac{-x}{a^{1/3}},\ \,
\tilde y=\frac{-y}{b^{1/3}},\ \,
\tilde u=(ab)^{1/3}u,\ \,
\tilde v_1=-a^{2/3}\left(v_1-\frac c{3a}\right),\ \,
\tilde v_2=-b^{2/3}\left(v_2-\frac d{3b}\right)
\]
reduces the system~\eqref{eq:QuasigenNyzhnykSystem} to the canonical form~\eqref{eq:SymNyzhnykSystem} of the symmetric Nyzhnyk system.
If $ab=0$, the system~\eqref{eq:QuasigenNyzhnykSystem} can be obviously transformed into the reduced asymmetric Nyzhnyk system~\eqref{eq:AsymNyzhnykSystemWithoutV2}.
Indeed, without loss of generality and up to a permutation of variables, we can assume that $a\ne0$ and $b=0$.
Then the last equation of the system~\eqref{eq:QuasigenNyzhnykSystem} can be neglected and the system composed of its first two equations, which no longer contains the dependent variable~$v^2$, is mapped to~\eqref{eq:AsymNyzhnykSystemWithoutV2} by the simple point transformation
\[
\tilde t=t,\ \,
\tilde x=\frac{-x}{a^{1/3}},\ \,
\tilde y=y-dt,\ \,
\tilde u=(ab)^{1/3}u,\ \,
\tilde v_1=-a^{2/3}\left(v_1-\frac c{3a}\right).
\]

\section{Maximal Lie invariance pseudoalgebras}\label{sec:LieInvAlgebra}

For various specific (1+2)-dimensional partial differential equations and systems thereof
that are related to the system~\eqref{eq:NyzhnykSystem} and presented in Section~\ref{sec:NyzhnykSystemsAndEquations},
we compute their maximal Lie invariance pseudoalgebras
(or present these pseudoalgebras if they are already known).
We also find relations between these pseudoalgebras arising from the connections between the corresponding equations and systems.
All computations were carried out using the specialized package \textsf{DESOLV}~\cite{carm2000a,vu2012a}
and additionally verified by simultaneous computations in two other \textsf{Maple}-based packages:
the built-in package \textsf{PDEtools} and the specialized package \textsf{Jets}~\cite{BaranMarvan,marv2009a}.
The description of methods for computing Lie symmetries can be found, e.g., in~\cite{blum2009A,olve1993A}.

For convenience, we recall the concepts related to Lie pseudoalgebras of vector fields.
A \emph{partial vector field} on a manifold~$M$ is a vector field defined on some open subset of the manifold~$M$.
Within the framework of the theory of partial mappings,
a linear combination or the Lie bracket of partial vector fields is naturally defined
as the analogous linear combination or the Lie bracket
of the restrictions of these partial vector fields to the intersection of their domains.

\begin{definition}\label{def:Pseudoalgebra}
A \emph{Lie pseudoalgebra of vector fields} on a manifold~$M$
is a set~$\mathfrak g$ of partial vector fields on the manifold~$M$
that is closed under linear combinations and the Lie bracket
and additionally satisfies the following conditions:
\begin{enumerate}\itemsep=0ex
\item
The zero vector field on the manifold~$M$ belongs to~$\mathfrak g$.
\item
The restriction of any element from~$\mathfrak g$
to any open subset of its domain also belongs to~$\mathfrak g$.
\item
For any collection $\{U_\alpha\}$ of open subsets of the manifold~$M$,
a vector field defined on~$\bigcup_\alpha U_\alpha$ belongs to~$\mathfrak g$
if and only if for each~$\alpha$ its restriction to~$U_\alpha$ belongs to~$\mathfrak g$.
\end{enumerate}
\end{definition}

\begin{remark}\label{rem:OnConsistenceOfLieSyms}
As can be seen from the subsequent consideration,
the maximal Lie invariance pseudoalgebras of the investigated Nyzhnyk models are consistent
with respect to different types of connections and transitions between these models, such as
those between the dispersive and dispersionless cases,
between the (potential) equations and the corresponding systems,
between the nonlinear and linear Lax representations and
between the equations or systems and their Lax representations.
All connections between the corresponding pseudoalgebras are expected and trivial.
Thus, under the limiting process from a dispersive to a dispersionless model,
the pseudoalgebra extends by one generating vector field associated with scaling transformations
with respect to the spatial and dependent variables.
In the course of the transition from a dispersive or dispersionless model
to its linear or nonlinear Lax representation, respectively,
one only needs to prolong the elements of the pseudoalgebra to the pseudopotential as a new dependent variable
and add one more generating vector field associated with scaling transformations or shifts
with respect to the pseudopotential.
\end{remark}

\begin{remark}\label{rem:OnRealAndComplexCasesForLieSyms}
In the complex case, the generating vector fields for the Lie invariance pseudoalgebra of each of the considered models formally have the same form as their real counterparts.
Only their interpretation differs: all involved quantities are assumed to be complex and the smoothness of the parameter functions converts into their analyticity.
This is why the corresponding results are omitted.
\end{remark}

\begin{notation}
Throughout this section, the parameter functions~$\tau$, $\chi$, $\rho$, $\alpha$, $\beta$ and~$\sigma$ run through the set of smooth functions of~$t$, the parameter function~$\gamma$ runs through the set of smooth functions of~$y$ and the parameter function~$\varsigma$ runs through the set of smooth functions of~$(t,y)$.
By $\mathbb F$ and~$\tilde{\mathbb F}$ we denote the fields for the independent and dependent variables, respectively.
\end{notation}

\subsection{Dispersionless symmetric case}

It is convenient to begin the consideration of Lie symmetries of Nyzhnyk models with the dispersionless symmetric case
as the only one studied within the framework of classical symmetry analysis.
The dispersionless Nyzhnyk equation was considered in~\cite{moro2021a},
but its classical symmetry analysis therein was carried out partially and with a number of flaws.
In~\cite{boyk2024a}, we computed
the maximal Lie invariance pseudoalgebras of the dispersionless Nyzhnyk equation~\eqref{eq:dN},
its nonlinear Lax representation~\eqref{eq:dNLaxPair}
and the dispersionless Nyzhnyk system~\eqref{eq:dNSystem}.
All these pseudoalgebras are infinite-dimensional.

The maximal Lie invariance pseudoalgebra~$\mathfrak g_{\ref{eq:dN}}$
of the (real symmetric potential) dispersionless Nyzhnyk equation~\eqref{eq:dN} is spanned by the vector fields
\begin{gather}\label{eq:dNMIA}
\begin{split}&
D^t(\tau)=\tau\p_t+\tfrac13\tau_tx\p_x+\tfrac13\tau_ty\p_y-\tfrac1{18}\tau_{tt}(x^3+y^3)\p_u,\quad
D^{\rm s}=x\p_x+y\p_y+3u\p_u,\\ &
P^x(\chi)=\chi\p_x-\tfrac12\chi_tx^2\p_u,\quad
P^y(\rho)=\rho\p_y-\tfrac12\rho_ty^2\p_u,\\ &
R^x(\alpha)=\alpha x\p_u,\quad
R^y(\beta)=\beta y\p_u,\quad
Z(\sigma)=\sigma\p_u
\end{split}
\end{gather}
in the space with coordinates $(t,x,y,u)$.
Furthermore, the maximal contact invariance pseudoalgebra of this equation coincides with the first prolongation of the pseudoalgebra~$\mathfrak g_{\ref{eq:dN}}$.

Since the dependent variables of the equation~\eqref{eq:dN}
and of the systems~\eqref{eq:dNSystem}, \eqref{eq:dNLaxPair} and~\eqref{eq:LaxdNSystem}
are related in a nonlocal way, the maximal Lie invariance pseudoalgebras of these systems
cannot be directly obtained from their counterpart~$\mathfrak g_{\ref{eq:dN}}$ or from each other.
Hence, each of them should be computed independently.

The maximal Lie invariance pseudoalgebra~$\mathfrak g_{\ref{eq:dNLaxPair}}$ of the nonlinear Lax representation~\eqref{eq:dNLaxPair} of the equation~\eqref{eq:dN} is spanned by the vector fields \cite[Section~6]{boyk2024a}
\begin{gather*}
\begin{split}&
\bar D^t(\tau)=\tau\p_t+\tfrac13\tau_tx\p_x+\tfrac13\tau_ty\p_y-\tfrac1{18}\tau_{tt}(x^3+y^3)\p_u,\quad
\bar D^{\rm s}=x\p_x+y\p_y+3u\p_u+\tfrac32\vartheta\p_\vartheta,\\ &
\bar P^x(\chi)=\chi\p_x-\tfrac12\chi_tx^2\p_u,\quad
\bar P^y(\rho)=\rho\p_y-\tfrac12\rho_ty^2\p_u,\\ &
\bar R^x(\alpha)=\alpha x\p_u,\quad
\bar R^y(\beta)=\beta y\p_u,\quad
\bar Z(\sigma)=\sigma\p_u,\quad
\bar P^\vartheta=\p_\vartheta
\end{split}
\end{gather*}
in the space with coordinates $(t,x,y,u,\vartheta)$.
Thus, similarly to the pseudoalgebra~$\mathfrak g_{\ref{eq:dN}}$,
the pseudoalgebra~$\mathfrak g_{\ref{eq:dNLaxPair}}$ is infinite-dimensional
and it can be obtained from~$\mathfrak g_{\ref{eq:dN}}$ by prolonging the vector fields from~$\mathfrak g_{\ref{eq:dN}}$ to the additional dependent variable~$\vartheta$ and supplementing the prolonged pseudoalgebra with the vector field~$\bar P^\vartheta$.
In other words, the prolongation of elements of the pseudoalgebra~$\mathfrak g_{\ref{eq:dN}}$ to~$\vartheta$ generates a monomorphism of the pseudoalgebra~$\mathfrak g_{\ref{eq:dN}}$ into the pseudoalgebra~$\mathfrak g_{\ref{eq:dNLaxPair}}$.
This prolongation is trivial for all vector fields~\eqref{eq:dNMIA} except for~$D^{\rm s}$.
The appearance of~$\bar P^\vartheta$ is natural and is due to the fact that the unknown function~$\vartheta$ is defined up to a constant summand.
Therefore, it can be stated that the maximal Lie invariance pseudoalgebra~$\mathfrak g_{\ref{eq:dNLaxPair}}$ of the system~\eqref{eq:dNLaxPair} is induced by the maximal Lie invariance pseudoalgebra~$\mathfrak g_{\ref{eq:dN}}$ of the equation~\eqref{eq:dN}.

The maximal Lie invariance pseudoalgebra~$\mathfrak g_{\ref{eq:dNLaxPairPavlov}}$
of the alternative form~\eqref{eq:dNLaxPairPavlov} of the nonlinear Lax representation of the equation~\eqref{eq:dN}
can be obtained from~$\mathfrak g_{\ref{eq:dNLaxPair}}$
using the standard prolongation to the jet variable~$\vartheta_x:=\varrho$
and the natural projection onto the space with coordinates $(t,x,y,u,\varrho)$,
\begin{gather*}
\begin{split}&
\tilde D^t(\tau)=\tau\p_t+\tfrac13\tau_tx\p_x+\tfrac13\tau_ty\p_y-\tfrac1{18}\tau_{tt}(x^3+y^3)\p_u
-\tfrac13\tau_t\varrho\p_\varrho,\\ &
\tilde D^{\rm s}=x\p_x+y\p_y+3u\p_u+\tfrac12\varrho\p_\varrho,\quad
\tilde P^x(\chi)=\chi\p_x-\tfrac12\chi_tx^2\p_u,\quad
\tilde P^y(\rho)=\rho\p_y-\tfrac12\rho_ty^2\p_u,\\ &
\tilde R^x(\alpha)=\alpha x\p_u,\quad
\tilde R^y(\beta)=\beta y\p_u,\quad
\tilde Z(\sigma)=\sigma\p_u.
\end{split}
\end{gather*}
The pseudoalgebra~$\mathfrak g_{\ref{eq:dNLaxPairPavlov}}$ can also be viewed
as a prolongation of the pseudoalgebra~$\mathfrak g_{\ref{eq:dN}}$ to the additional dependent variable~$\varrho$,
and the prolongation generates an isomorphism between the specified pseudoalgebras.

The maximal Lie invariance pseudoalgebra~$\mathfrak g_{\ref{eq:dNSystem}}$ of the dispersionless Nyzhnyk system~\eqref{eq:dNSystem}
is spanned by the vector fields \cite[Section~7]{boyk2024a}
\begin{gather*}
\hat D^t(\tau)=\tau\p_t+\tfrac13\big(\tau_tx\p_x+\tau_ty\p_y
-2\tau_tw\p_w-(2\tau_tv^1+\tau_{tt}x)\p_{v^1}-(2\tau_tv^2+\tau_{tt}y)\p_{v^2}\big),\\
\hat D^{\rm s}=x\p_x+y\p_y+w\p_w+v^1\p_{v^1}+v^2\p_{v^2},\quad
\hat P^x(\chi)=\chi\p_x-\chi_t\p_{v^1},\quad
\hat P^y(\rho)=\rho\p_y-\rho_t\p_{v^2}
\end{gather*}
in the space with coordinates $(t,x,y,w,v^1,v^2)$.
Each Lie-symmetry vector field of the system~\eqref{eq:dNSystem} is induced by a Lie-symmetry vector field of the equation~\eqref{eq:dN}.
More precisely, the corresponding mapping $\mathcal M_*\colon\mathfrak g_{\ref{eq:dN}}\to\mathfrak g_{\ref{eq:dNSystem}}$,
which is a composition of the second prolongation and the natural projection from the second-order jet space
over the base space $\mathbb F^3_{t,x,y}\times\tilde{\mathbb F}_u$ onto the space with coordinates $(t,x,y,w,v^1,v^2)$
under the identification $(w,v^1,v^2)=(u_{xy},u_{xx},u_{yy})$, is an epimorphism of Lie pseudoalgebras with
$\ker\mathcal M_*=\big\langle R^x(\alpha),R^y(\beta),Z(\sigma)\big\rangle$,
$\hat D^t(\tau)=\mathcal M_*D^t(\tau)$,
$\hat D^{\rm s}=\mathcal M_*D^{\rm s}$,
$\hat P^x(\chi)=\mathcal M_*P^x(\chi)$ and
$\hat P^y(\rho)=\mathcal M_*P^y(\rho)$.

The maximal Lie invariance pseudoalgebra~$\mathfrak g_{\ref{eq:LaxdNSystem}}$ of the nonlinear Lax representation~\eqref{eq:LaxdNSystem} of the system~\eqref{eq:dNSystem} is spanned by the vector fields
\begin{gather*}
\check D^t(\tau)=\tau\p_t+\tfrac13\big(\tau_tx\p_x+\tau_ty\p_y
-2\tau_tw\p_w-(2\tau_tv^1+\tau_{tt}x)\p_{v^1}-(2\tau_tv^2+\tau_{tt}y)\p_{v^2}\big),\\
\check D^{\rm s}=x\p_x+y\p_y+w\p_w+v^1\p_{v^1}+v^2\p_{v^2}+\tfrac32\vartheta\p_\vartheta,\\
\check P^x(\chi)=\chi\p_x-\chi_t\p_{v^1},\quad
\check P^y(\rho)=\rho\p_y-\rho_t\p_{v^2},\quad
\check P^\vartheta=\p_\vartheta
\end{gather*}
in the space with coordinates $(t,x,y,w,v^1,v^2,\vartheta)$.
Analogously to the relation between the pseudoalgebras~$\mathfrak g_{\ref{eq:dNLaxPair}}$ and~$\mathfrak g_{\ref{eq:dN}}$,
the pseudoalgebra~$\mathfrak g_{\ref{eq:LaxdNSystem}}$ can be obtained from the pseudoalgebra~$\mathfrak g_{\ref{eq:dNSystem}}$
by prolonging its elements to the additional dependent variable~$\vartheta$
and supplementing it with the vector field~$\check P^\vartheta$,
and among the generating elements, only the prolongation of the vector field~$\hat D^{\rm s}$ is nontrivial.
In other words, the maximal Lie invariance pseudoalgebra~$\mathfrak g_{\ref{eq:LaxdNSystem}}$ of the system~\eqref{eq:LaxdNSystem}
is induced by the maximal Lie invariance pseudoalgebra~$\mathfrak g_{\ref{eq:dNSystem}}$ of the system~\eqref{eq:dNSystem}.

The maximal Lie invariance pseudoalgebra~$\mathfrak g_{\ref{eq:LaxdNSystemNonisospectral}}$ of the linear nonisospectral Lax representation~\eqref{eq:LaxdNSystemNonisospectral} of the system~\eqref{eq:dNSystem} is spanned by the vector fields
\begin{gather*}
\begin{split}&
\breve D^t(\tau)=\tau\p_t+\tfrac13\big(\tau_tx\p_x+\tau_ty\p_y
-2\tau_tw\p_w-(2\tau_tv^1+\tau_{tt}x)\p_{v^1}-(2\tau_tv^2+\tau_{tt}y)\p_{v^2}-\tau_tp\p_p\big),\\&
\breve D^{\rm s}=x\p_x+y\p_y+w\p_w+v^1\p_{v^1}+v^2\p_{v^2}+\tfrac12p\p_p,\\&
\breve P^x(\chi)=\chi\p_x-\chi_t\p_{v^1},\quad
\breve P^y(\rho)=\rho\p_y-\rho_t\p_{v^2},\quad
\breve D^\xi(\zeta)=\zeta(\xi)\p_\xi
\end{split}
\end{gather*}
in the space with coordinates $(t,x,y,p,w,v^1,v^2,\xi)$.
This is the only pseudoalgebra among the maximal Lie invariance pseudoalgebras of Nyzhnyk models
considered in the present paper whose computing with specialized symbolic computation packages encounters difficulties.
To circumvent them, we consider an equivalent system instead of the representation~\eqref{eq:LaxdNSystemNonisospectral}.
More specifically, the second equation of the system~\eqref{eq:LaxdNSystemNonisospectral} should be solved with respect to~$\xi_p$,
the resulting expression substituted into the first equation of this system,
and at least two equations of the initial system~\eqref{eq:dNSystem} should be additionally appended to the modified system.

In turn, the maximal Lie invariance pseudoalgebra~$\mathfrak g_{\ref{eq:dNLaxPairNonisospectral}}$
of the linear nonisospectral Lax representation~\eqref{eq:dNLaxPairNonisospectral} of the equation~\eqref{eq:dN}
is spanned by the vector fields
\begin{gather*}
\grave D^t(\tau)=\tau\p_t+\tfrac13\tau_tx\p_x+\tfrac13\tau_ty\p_y
-\tfrac1{18}\tau_{tt}(x^3+y^3)\p_u-\tfrac13\tau_tp\p_p,\\
\grave D^{\rm s}=x\p_x+y\p_y+3u\p_u+\tfrac12p\p_p,\\
\grave P^x(\chi)=\chi\p_x-\tfrac12\chi_tx^2\p_u,\quad
\grave P^y(\rho)=\rho\p_y-\tfrac12\rho_ty^2\p_u,\\
\grave R^x(\alpha)=\alpha x\p_u,\quad
\grave R^y(\beta)=\beta y\p_u,\quad
\grave Z(\sigma)=\sigma\p_u,\quad
\grave D^\xi(\zeta)=\zeta(\xi)\p_\xi
\end{gather*}
in the space with coordinates $(t,x,y,p,u,\xi)$.
As can be seen, the pseudoalgebras~$\mathfrak g_{\ref{eq:dNSystem}}$ and~$\mathfrak g_{\ref{eq:dN}}$
are respectively embedded into the pseudoalgebras~$\mathfrak g_{\ref{eq:LaxdNSystemNonisospectral}}$ and~$\mathfrak g_{\ref{eq:dNLaxPairNonisospectral}}$
as their pseudosubalgebras via the prolongation to~$p$ and~$\xi$.
Moreover, this prolongation is completely trivial with respect to~$\xi$ and is nontrivial
with respect to~$p$ only for the vector fields~$\breve D^t(\tau)$, $\breve D^{\rm s}$ and~$\grave D^t(\tau)$, $\grave D^{\rm s}$,
which are associated with the (generalized) scaling transformations.
Linear complements to these pseudosubalgebras are spanned by the vector fields~$\breve D^\xi(\zeta)$ and~$\grave D^\xi(\zeta)$.

In addition to the relations within the pairs of pseudoalgebras
$(\mathfrak g_{\ref{eq:dN}},\mathfrak g_{\ref{eq:dNLaxPair}})$,
$(\mathfrak g_{\ref{eq:dNSystem}},\mathfrak g_{\ref{eq:LaxdNSystem}})$ and
$(\mathfrak g_{\ref{eq:dN}},\mathfrak g_{\ref{eq:dNSystem}})$,
there also exists a relation between the pseudoalgebras~$\mathfrak g_{\ref{eq:dNLaxPair}}$ and~$\mathfrak g_{\ref{eq:LaxdNSystem}}$
analogous to that within the pair $(\mathfrak g_{\ref{eq:dN}},\mathfrak g_{\ref{eq:dNSystem}})$.
The corresponding mapping
$\bar{\mathcal M}_*\colon\mathfrak g_{\ref{eq:dNLaxPair}}\to\mathfrak g_{\ref{eq:LaxdNSystem}}$
is the composition of the second prolongation
and the natural projection from the second-order jet space over the base space \smash{$\mathbb F^3_{t,x,y}\times\tilde{\mathbb F}^2_{u,\vartheta}$}
onto the space with coordinates $(t,x,y,w,v^1,v^2,\vartheta)$ under the identification $(w,v^1,v^2)=(u_{xy},u_{xx},u_{yy})$.
This mapping is an epimorphism of Lie pseudoalgebras,
meaning that each Lie-symmetry vector field of the system~\eqref{eq:LaxdNSystem}
is induced by a Lie-symmetry vector field of the system~\eqref{eq:dNLaxPair} with
$\ker\bar{\mathcal M}_*=\big\langle\bar R^x(\alpha),\bar R^y(\beta),\bar Z(\sigma)\big\rangle$,
$\check D^t(\tau)=\bar{\mathcal M}_*\bar D^t(\tau)$, $\check D^{\rm s}=\bar{\mathcal M}_*\bar D^{\rm s}$,
$\check P^x(\chi)=\bar{\mathcal M}_*\bar P^x(\chi)$,
$\check P^y(\rho)=\bar{\mathcal M}_*\bar P^y(\rho)$ and $\check P^\vartheta=\bar{\mathcal M}_*\bar P^\vartheta$.

\subsection{Dispersive symmetric case}\label{subsec:PseudoalgebraDispersionSymm}

The maximal Lie invariance pseudoalgebras of the equation~(\ref{eq:SymPotNizhEq}$'$),
its Lax representation~\eqref{eq:LaxSymPotNizhEq'}, the system~(\ref{eq:SymNyzhnykSystem}$'$)
and its Lax representation~\eqref{eq:LaxSymNyzhnykSystem'}
are the (infinite-dimensional) pseudosubalgebras of the pseudoalgebras~$\mathfrak g_{\ref{eq:dN}}$,
$\mathfrak g_{\ref{eq:dNLaxPair}}$, $\mathfrak g_{\ref{eq:dNSystem}}$ and~$\mathfrak g_{\ref{eq:LaxdNSystem}}$
that are spanned by the same generating vector fields as the original pseudoalgebras,
except for $D^{\rm s}$, $\bar D^{\rm s}$, $\hat D^{\rm s}$ and $\check D^{\rm s}$, respectively:%
\footnote{\label{fnt:EmbeddingsOfMIAsOfLaxRepresentations}%
For the Lax representations~\eqref{eq:LaxSymPotNizhEq'} and~\eqref{eq:LaxSymNyzhnykSystem'},
these are actually embeddings as pseudosubalgebras under the identification $\psi=\pm{\rm e}^\vartheta$,
where the vector fields~$\bar P^\vartheta$ and~$\check P^\vartheta$
are respectively replaced by the vector fields~$\bar D^{\rm\psi}$ and~$\check D^{\rm\psi}$,
which are defined on the spaces with coordinates $(t,x,y,u,\psi)$ and $(t,x,y,w,v^1,v^2,\psi)$
and formally have the same form $\psi\p_\psi$.
}
\begin{gather*}
\mathfrak g_{\ref{eq:SymPotNizhEq}'}:=\langle D^t(\tau),\,P^x(\chi),\,P^y(\rho),\,
R^x(\alpha),\,R^y(\beta),\,Z(\sigma)\rangle,
\\
\mathfrak g_{\text{\ref{eq:LaxSymPotNizhEq'}}}:=\langle\bar D^t(\tau),\,\bar P^x(\chi),\,\bar P^y(\rho),\,
\bar R^x(\alpha),\,\bar R^y(\beta),\,\bar Z(\sigma),\, \bar D^{\rm\psi}\rangle,
\\
\mathfrak g_{\ref{eq:SymNyzhnykSystem}'}:=\langle\hat D^t(\tau),\,\hat P^x(\chi),\,\hat P^y(\rho)\rangle,
\\
\mathfrak g_{\text{\ref{eq:LaxSymNyzhnykSystem'}}}:=\langle\check D^t(\tau),\,\check P^x(\chi),\,\check P^y(\rho),\,
\check D^{\rm\psi}\rangle.
\end{gather*}
\noprint{
$\mathfrak g_{\ref{eq:SymPotNizhEq}}$
\begin{gather*}
\begin{split}&
D^t(\tau)=\tau\p_t+\tfrac13\tau_tx\p_x+\tfrac13\tau_ty\p_y-\tfrac1{54}\tau_{tt}(x^3+y^3)\p_u,\\&
P^x(\chi)=\chi\p_x-\tfrac16\chi_tx^2\p_u,\quad
P^y(\rho)=\rho\p_y-\tfrac16\rho_ty^2\p_u,\\ &
R^x(\alpha)=\alpha x\p_u,\quad
R^y(\beta)=\beta y\p_u,\quad
Z(\sigma)=\sigma\p_u.
\end{split}
\end{gather*}

$\mathfrak g_{\ref{eq:LaxSymPotNizhEq}}$
\begin{gather*}
\begin{split}&
\bar D^t(\tau)=\tau\p_t+\tfrac13\tau_tx\p_x+\tfrac13\tau_ty\p_y-\tfrac1{54}\tau_{tt}(x^3+y^3)\p_u,\\&
\bar P^x(\chi)=\chi\p_x-\tfrac16\chi_tx^2\p_u,\quad
\bar P^y(\rho)=\rho\p_y-\tfrac16\rho_ty^2\p_u,\\ &
\bar R^x(\alpha)=\alpha x\p_u,\quad
\bar R^y(\beta)=\beta y\p_u,\quad
\bar Z(\sigma)=\sigma\p_u,\quad
\bar P^\psi=\psi\p_\psi.
%\bar P^\psi=\psi\p_\psi-\p_u.
\end{split}
\end{gather*}

$\mathfrak g_{\ref{eq:SymNyzhnykSystem}}$
\begin{gather*}
\hat D^t(\tau)=\tau\p_t+\tfrac13\tau_tx\p_x+\tfrac13\tau_ty\p_y
-\tfrac23\tau_tw\p_w-\tfrac13(2\tau_tv^1+\tfrac13\tau_{tt}x)\p_{v^1}-\tfrac13(2\tau_tv^2+\tfrac13\tau_{tt}y)\p_{v^2},\\
\hat P^x(\chi)=\chi\p_x-\tfrac13\chi_t\p_{v^1},\quad
\hat P^y(\rho)=\rho\p_y-\tfrac13\rho_t\p_{v^2}
\end{gather*}

$\mathfrak g_{\ref{eq:LaxSymNyzhnykSystem}}$
\begin{gather*}
\check D^t(\tau)=\tau\p_t+\tfrac13\tau_tx\p_x+\tfrac13\tau_ty\p_y
-\tfrac23\tau_tw\p_w-\tfrac13(2\tau_tv^1+\tfrac13\tau_{tt}x)\p_{v^1}
-\tfrac13(2\tau_tv^2+\tfrac13\tau_{tt}y)\p_{v^2},\\
\check P^x(\chi)=\chi\p_x-\tfrac13\chi_t\p_{v^1},\quad
\check P^y(\rho)=\rho\p_y-\tfrac13\rho_t\p_{v^2},\quad
\check P^\psi=\psi\p_\psi.
\end{gather*}
}
More precisely, the pseudoalgebras~$\mathfrak g_{\ref{eq:SymPotNizhEq}'}$, $\mathfrak g_{\text{\ref{eq:LaxSymPotNizhEq'}}}$, $\mathfrak g_{\ref{eq:SymNyzhnykSystem}'}$ and~$\mathfrak g_{\text{\ref{eq:LaxSymNyzhnykSystem'}}}$ contract to the corresponding pseudosubalgebras of the pseudoalgebras~$\mathfrak g_{\ref{eq:dN}}$, $\mathfrak g_{\ref{eq:dNLaxPair}}$, $\mathfrak g_{\ref{eq:dNSystem}}$ and~$\mathfrak g_{\ref{eq:LaxdNSystem}}$, respectively, under the limiting process from the dispersive case to the dispersionless one.
The connections among the pseudoalgebras~$\mathfrak g_{\ref{eq:SymPotNizhEq}'}$, $\mathfrak g_{\text{\ref{eq:LaxSymPotNizhEq'}}}$, $\mathfrak g_{\ref{eq:SymNyzhnykSystem}'}$ and~$\mathfrak g_{\text{\ref{eq:LaxSymNyzhnykSystem'}}}$ are analogous to those among the pseudoalgebras~$\mathfrak g_{\ref{eq:dN}}$, $\mathfrak g_{\ref{eq:dNLaxPair}}$, $\mathfrak g_{\ref{eq:dNSystem}}$ and~$\mathfrak g_{\ref{eq:LaxdNSystem}}$.

Similarly to the dispersionless equation~\eqref{eq:dN}, the maximal contact invariance pseudoalgebra of its dispersive counterpart~(\ref{eq:SymPotNizhEq}$'$) coincides with the first prolongation of the maximal Lie invariance pseudoalgebra~$\mathfrak g_{\ref{eq:SymPotNizhEq}'}$ of this counterpart.

\subsection{Dispersionless asymmetric case}

For the maximal Lie invariance pseudoalgebras of the asymmetric Nyzhnyk models,
we preserve the order of considering different model types.
We immediately note that all these pseudoalgebras are also infinite-dimensional and the connections among them are the same as those among their counterparts in the symmetric case.

The maximal Lie invariance pseudoalgebra~$\mathfrak g_{\ref{eq:AsymdN}}$ of the asymmetric (potential) dispersionless Nyzhnyk equation~\eqref{eq:AsymdN} is spanned by the vector fields
\begin{gather*}
\begin{split}&
D(\tau)=\tau\p_t+\tfrac13\tau_tx\p_x-\tfrac13\tau_tu\p_u-\tfrac16\tau_{tt}x^2\p_u,\quad
S(\gamma)=\gamma\p_y,\quad
D^{\rm s}=x\p_x+2u\p_u,\\ &
P(\chi)=\chi\p_x-\chi_tx\p_u,\quad
Z(\sigma)=\sigma\p_u
\end{split}
\end{gather*}
in the space with coordinates $(t,x,y,u)$. Furthermore, as in the symmetric case,
the maximal contact invariance pseudoalgebra of this equation
coincides with the first prolongation of the pseudoalgebra~$\mathfrak g_{\ref{eq:AsymdN}}$.

The maximal Lie invariance pseudoalgebra~$\mathfrak g_{\ref{eq:AsymdNLaxPair}}$ of the nonlinear Lax representation~\eqref{eq:AsymdNLaxPair} of the equation~\eqref{eq:AsymdN} is spanned by the vector fields
\begin{gather*}
\begin{split}
&\bar D(\tau)=\tau\p_t+\tfrac13\tau_tx\p_x-\tfrac13\tau_tu\p_u-\tfrac16\tau_{tt}x^2\p_u,\quad
\bar S(\gamma)=\gamma\p_y, \quad
\bar D^{\rm s}=x\p_x+2u\p_u+\tfrac32\vartheta\p_\vartheta,\\
&\bar P(\chi)=\chi\p_x-\chi_tx\p_u,\quad
\bar Z(\sigma)=\sigma\p_u,\quad
\bar P^\vartheta=\p_\vartheta
\end{split}
\end{gather*}
in the space with coordinates $(t,x,y,u,\vartheta)$.
The pseudoalgebra~$\mathfrak g_{\ref{eq:AsymdNLaxPair}}$ can be obtained by prolonging the vector fields from the pseudoalgebra~$\mathfrak g_{\ref{eq:AsymdN}}$ to the additional dependent variable~$\vartheta$ and supplementing the prolonged pseudoalgebra with the vector field~$\bar P^\vartheta$.
In other words, the maximal Lie invariance pseudoalgebra~$\mathfrak g_{\ref{eq:AsymdNLaxPair}}$ of the system~\eqref{eq:AsymdNLaxPair} is induced by the maximal Lie invariance pseudoalgebra~$\mathfrak g_{\ref{eq:AsymdN}}$ of the equation~\eqref{eq:AsymdN}.

The maximal Lie invariance pseudoalgebra~$\mathfrak g_{\ref{eq:AsymdNLaxPairNonisospectral}}$ of the linear nonisospectral Lax representation~\eqref{eq:AsymdNLaxPair} of the equation~\eqref{eq:AsymdN} is spanned by the vector fields
\begin{gather*}
\grave D(\tau)=\tau\p_t+\tfrac13\tau_tx\p_x-\tfrac13\tau_tu\p_u-\tfrac16\tau_{tt}x^2\p_u-\tfrac13\tau_tp\p_p,\quad
\grave S(\gamma)=\gamma\p_y,\\
\grave D^{\rm s}=x\p_x+2u\p_u+\tfrac12p\p_p,\quad
\grave P(\chi)=\chi\p_x-\chi_tx\p_u,\quad
\grave Z(\sigma)=\sigma\p_u,\quad
\grave D^\xi(\zeta)=\zeta(\xi)\p_\xi
\end{gather*}
in the space with coordinates $(t,x,y,p,u,\xi)$.
As can be seen, analogously to the symmetric case, this pseudoalgebra corresponds to the pseudoalgebra of the equation~\eqref{eq:AsymdN} with the extension of the base space to the variable~$p$ and the addition of the family of vector fields~$\grave D^\xi(\zeta)$ to the generating set.

The maximal Lie invariance pseudoalgebra~$\mathfrak g_{\ref{eq:AsymdNSystem}}$ of the asymmetric dispersionless Nyzhnyk system~\eqref{eq:AsymdNSystem} is spanned by the vector fields
\begin{gather*}
\hat D(\tau)=\tau\p_t+\tfrac13\tau_tx\p_x-\tfrac13\tau_tw\p_w-\tfrac23\tau_tv^1\p_{v^1}-\tfrac13\tau_{tt}x\p_{v^1},\\
\hat S(\gamma)=\gamma\p_y-\gamma_yw\p_w,\quad
\hat D^{\rm s}=x\p_x+2w\p_w+v^1\p_{v^1},\quad
\hat P(\chi)=\chi\p_x-\chi_t\p_{v^1}
\end{gather*}
in the space with coordinates $(t,x,y,w,v^1)$.
Just as for the pseudoalgebra~$\mathfrak g_{\ref{eq:dNSystem}}$, each Lie-symmetry vector field of the system~\eqref{eq:AsymdNSystem} is induced by a Lie-symmetry vector field of the equation~\eqref{eq:AsymdN}.
More precisely, the corresponding mapping $\mathcal M_*\colon\mathfrak g_{\ref{eq:AsymdN}}\to\mathfrak g_{\ref{eq:AsymdNSystem}}$ is an epimorphism of Lie pseudoalgebras.
It is also a composition of the second prolongation and the projection from the second-order jet space over the base space $\mathbb F^3_{t,x,y}\times\tilde{\mathbb F}_u$ onto the space with coordinates $(t,x,y,w,v^1)$ under the identification $(w,v^1)=(u_y,u_x)$.
Moreover,
$\ker\mathcal M_*=\big\langle Z(\sigma),S(\gamma)\big\rangle$,
$\hat D(\tau)=\mathcal M_*D(\tau)$,
$\hat D^{\rm s}=\mathcal M_*D^{\rm s}$ and
$\hat P(\chi)=\mathcal M_*P(\chi)$.

The maximal Lie invariance pseudoalgebra~$\mathfrak g_{\ref{eq:LaxAsymdNSystem}}$ of the nonlinear Lax representation~\eqref{eq:LaxAsymdNSystem} of the system~\eqref{eq:AsymdNSystem} is spanned by the vector fields
\begin{gather*}
\check D(\tau)=\tau\p_t+\tfrac13\tau_tx\p_x-\tfrac13\tau_tw\p_w-\tfrac23\tau_tv^1\p_{v^1}-\tfrac13\tau_{tt}x\p_{v^1},\quad
\check P^\vartheta=\p_\vartheta,\\
\check D^{\rm s}=x\p_x+2w\p_w+v^1\p_{v^1}+\tfrac32\vartheta\p_\vartheta,\quad
\check P(\chi)=\chi\p_x-\chi_t\p_{v^1},\quad
\check S(\gamma)=\gamma\p_y-\gamma_yw\p_w
\end{gather*}
in the space with coordinates $(t,x,y,w,v^1,\vartheta)$.
In accordance with the connection between the pseudoalgebras~$\mathfrak g_{\ref{eq:AsymdN}}$ and~$\mathfrak g_{\ref{eq:AsymdNLaxPair}}$, the pseudoalgebra~$\mathfrak g_{\ref{eq:LaxAsymdNSystem}}$ can be obtained from the pseudoalgebra~$\mathfrak g_{\ref{eq:AsymdNSystem}}$ by prolonging its elements to the additional dependent variable~$\vartheta$ and supplementing it with the vector field~$\check P^\vartheta$.
Furthermore, among the generating elements, only the prolongation of the vector field~$\hat D^{\rm s}$ is nontrivial.
That is, the maximal Lie invariance pseudoalgebra~$\mathfrak g_{\ref{eq:LaxAsymdNSystem}}$ of the system~\eqref{eq:LaxAsymdNSystem} is induced by the maximal Lie invariance pseudoalgebra~$\mathfrak g_{\ref{eq:AsymdNSystem}}$ of the system~\eqref{eq:AsymdNSystem}.

Analogously to the connection between $\mathfrak g_{\ref{eq:AsymdN}}$ and $\mathfrak g_{\ref{eq:AsymdNSystem}}$, there also exists a connection between the pseudoalgebras~$\mathfrak g_{\ref{eq:AsymdNLaxPair}}$ and~$\mathfrak g_{\ref{eq:LaxAsymdNSystem}}$.
The corresponding mapping $\bar{\mathcal M}_*\colon\mathfrak g_{\ref{eq:AsymdNLaxPair}}\to\mathfrak g_{\ref{eq:LaxAsymdNSystem}}$, which is a composition of the second prolongation and the projection from the second-order jet space over the base space $\mathbb F^3_{t,x,y}\times\tilde{\mathbb F}^2_{u,\vartheta}$ onto the space with coordinates $(t,x,y,w,v^1,\vartheta)$ under the identification $(w,v^1)=(u_y,u_x)$, is an epimorphism of Lie pseudoalgebras, meaning that each Lie-symmetry vector field of the system~\eqref{eq:LaxAsymdNSystem} is induced by a Lie-symmetry vector field of the system~\eqref{eq:AsymdNLaxPair}.
More precisely,
$\check D(\tau)=\bar{\mathcal M}_*\bar D(\tau)$, $\check D^{\rm s}=\bar{\mathcal M}_*\bar D^{\rm s}$,
$\check P(\chi)=\bar{\mathcal M}_*\bar P(\chi)$,
$\check P^\vartheta=\bar{\mathcal M}_*\bar P^\vartheta$ and
$\ker\bar{\mathcal M}_*=\big\langle\bar Z(\sigma),\bar S(\gamma)\big\rangle$.

The connections of the maximal Lie invariance pseudoalgebra~$\mathfrak g_{\ref{eq:LaxAsymdNSystemNonisospectral}}$ of the linear nonisospectral Lax representation~\eqref{eq:LaxAsymdNSystemNonisospectral} of the system~\eqref{eq:AsymdNSystem} with the pseudoalgebras~$\mathfrak g_{\ref{eq:AsymdNSystem}}$ and~$\mathfrak g_{\ref{eq:AsymdNLaxPairNonisospectral}}$ are also expected and analogous to those discussed above.
The pseudoalgebra~$\mathfrak g_{\ref{eq:LaxAsymdNSystemNonisospectral}}$ is spanned by the vector fields
\begin{gather*}
\breve D(\tau)=\tau\p_t+\tfrac13\tau_tx\p_x-\tfrac13\tau_tw\p_w
-\tfrac23\tau_tv^1\p_{v^1}-\tfrac13\tau_{tt}x\p_{v^1}-\tfrac13\tau_tp\p_p,\quad
\breve S(\gamma)=\gamma\p_y-\gamma_yw\p_w,\\
\breve D^{\rm s}=x\p_x+2w\p_w+v^1\p_{v^1}+\tfrac12p\p_p,\quad
\breve P(\chi)=\chi\p_x-\chi_t\p_{v^1},\quad
\breve D^\xi(\zeta)=\zeta(\xi)\p_\xi
\end{gather*}
in the space with coordinates $(t,x,y,p,w,v^1,\xi)$, so it can be obtained, for instance, from the pseudoalgebra~$\mathfrak g_{\ref{eq:AsymdNSystem}}$ by extending the base space to the variables~$p$ and~$\xi$ and adding the family of vector fields~$\breve D^\xi(\zeta)$ to the generating set.

\subsection{Dispersive asymmetric case}

The maximal Lie invariance pseudoalgebras of the equation~(\ref{eq:AsymPotNizhEqBLMP}$'$), its Lax representation~\eqref{eq:LaxAsymPotNizhEqBLMP'}, the system~(\ref{eq:AsymNyzhnykSystemWithoutV2}$'$) and its Lax representation~\eqref{eq:NyzhnykSystemAsymLax'} are (infinite-dimensional) pseudosubalgebras of the pseudoalgebras~$\mathfrak g_{\ref{eq:AsymdN}}$, $\mathfrak g_{\ref{eq:AsymdNLaxPair}}$, $\mathfrak g_{\ref{eq:AsymdNSystem}}$ and~$\mathfrak g_{\ref{eq:LaxAsymdNSystem}}$, respectively, which are spanned by the same generating vector fields as the original pseudoalgebras, except for $D^{\rm s}$, $\bar D^{\rm s}$, $\hat D^{\rm s}$ and $\check D^{\rm s}$ (for the Lax representations, we additionally assume $\psi={\rm e}^\vartheta$):
\begin{gather*}
\mathfrak g_{\ref{eq:AsymPotNizhEqBLMP}'}:=\langle D(\tau),\,S(\gamma),\,P(\chi),\,Z(\sigma)\rangle,
\\
\mathfrak g_{\text{\ref{eq:LaxAsymPotNizhEqBLMP'}}}:=\langle\bar D(\tau),\,\bar S(\gamma),\,\bar P(\chi),\,
\bar Z(\sigma),\,\bar D^{\rm\psi}\rangle,
\\
\mathfrak g_{\ref{eq:AsymNyzhnykSystemWithoutV2}'}:=\langle\hat D(\tau),\,\hat S(\gamma),\,\hat P(\chi)\rangle,
\\
\mathfrak g_{\text{\ref{eq:NyzhnykSystemAsymLax'}}}:=\langle\check D(\tau),\,\check S(\gamma),\,
\check P(\chi),\,\check D^{\rm\psi}\rangle.
\end{gather*}
\noprint{
Algebra of the equation~\eqref{eq:AsymPotNizhEqBLMP}
\begin{gather*}
\begin{split}
&D(\tau)=\tau\p_t+\tfrac13\tau_tx\p_x-\tfrac13(\tau_tu+\tfrac16\tau_{tt}x^2)\p_u,\quad
P(\chi)=\chi\p_x-\tfrac13\chi_tx\p_u,\quad
Z(\sigma)=\sigma\p_u,\\
&S(\gamma)=\gamma\p_y.
\end{split}
\end{gather*}

Algebra of the Lax representation~\eqref{eq:LaxAsymPotNizhEqBLMP} of the equation~\eqref{eq:AsymPotNizhEqBLMP}
\begin{gather*}
\begin{split}
&\bar D(\tau)=\tau\p_t+\tfrac13\tau_tx\p_x-\tfrac13(\tau_tu+\tfrac16\tau_{tt}x^2)\p_u,\quad
\bar P^x(\chi)=\chi\p_x-\tfrac13\chi_tx\p_u,\quad
\bar Z(\sigma)=\sigma\p_u,\\
&\bar S(\gamma)=\gamma\p_y, \quad
\bar P^\psi=\psi\p_\psi,
\end{split}
\end{gather*}

Algebra of the system~\eqref{eq:AsymNyzhnykSystemWithoutV2}
\begin{gather*}
\begin{split}
&\hat D(\tau)=\tau\p_t+\tfrac13\tau_tx\p_x-\tfrac13\tau_tw\p_w-\tfrac23(\tau_tv^1+\tfrac16\tau_{tt}x)\p_{v^1},\\
&\hat P(\chi)=\chi\p_x-\tfrac13\chi_t\p_{v^1},\quad
\hat S(\gamma)=\gamma\p_y-\gamma_yw\p_w.
\end{split}
\end{gather*}

Algebra of the Lax representation~\eqref{eq:NyzhnykSystemAsymLax} of the system~\eqref{eq:AsymNyzhnykSystemWithoutV2}
\begin{gather*}
\begin{split}
&\check D(\tau)=\tau\p_t+\tfrac13\tau_tx\p_x-\tfrac13\tau_tw\p_w
-\tfrac23(\tau_tv^1+\tfrac16\tau_{tt}x)\p_{v^1},\\ &
\check P(\chi)=\chi\p_x-\tfrac13\chi_t\p_{v^1},\quad
\check S(\gamma)=\gamma\p_y-\gamma_yw\p_w,\quad
\check P^\psi=\psi\p_\psi.
\end{split}
\end{gather*}
}
In other words, under the limiting process from the dispersive case to the dispersionless one,
the pseudoalgebras~$\mathfrak g_{\ref{eq:AsymPotNizhEqBLMP}'}$, $\mathfrak g_{\text{\ref{eq:LaxAsymPotNizhEqBLMP'}}}$, $\mathfrak g_{\ref{eq:AsymNyzhnykSystemWithoutV2}'}$ and~$\mathfrak g_{\text{\ref{eq:NyzhnykSystemAsymLax'}}}$ contract to the corresponding pseudosubalgebras of the pseudoalgebras~$\mathfrak g_{\ref{eq:AsymdN}}$, $\mathfrak g_{\ref{eq:AsymdNLaxPair}}$, $\mathfrak g_{\ref{eq:AsymdNSystem}}$ and~$\mathfrak g_{\ref{eq:LaxAsymdNSystem}}$. The connections among the pseudoalgebras~$\mathfrak g_{\ref{eq:AsymPotNizhEqBLMP}'}$, $\mathfrak g_{\text{\ref{eq:LaxAsymPotNizhEqBLMP'}}}$, $\mathfrak g_{\ref{eq:AsymNyzhnykSystemWithoutV2}'}$ and~$\mathfrak g_{\text{\ref{eq:NyzhnykSystemAsymLax'}}}$ are analogous to those among the pseudoalgebras~$\mathfrak g_{\ref{eq:AsymdN}}$, $\mathfrak g_{\ref{eq:AsymdNLaxPair}}$, $\mathfrak g_{\ref{eq:AsymdNSystem}}$ and~$\mathfrak g_{\ref{eq:LaxAsymdNSystem}}$.

Similarly to the symmetric case and to the asymmetric dispersionless equation~\eqref{eq:AsymdN},
the maximal contact invariance pseudoalgebra of its dispersive counterpart~(\ref{eq:AsymPotNizhEqBLMP}$'$)
also coincides with the first prolongation of the maximal Lie invariance pseudoalgebra~$\mathfrak g_{\ref{eq:AsymPotNizhEqBLMP}'}$
of this counterpart.

\allowdisplaybreaks
\subsection{Specific case}

To find the maximal Lie invariance pseudoalgebras of the specific Nyzhnyk models,
one should perform the following actions with the generating vector fields
of the maximal Lie invariance pseudoalgebras of the corresponding symmetric Nyzhnyk models:
\begin{itemize}\itemsep=0ex
\item
carry out the same formal substitutions as those between the corresponding models, extending them to the pairs of functions $(\chi,\rho)$ and $(\alpha,\beta)$ that parameterize these vector fields, namely, $x\to z$, $y\to\bar z$, $v^1\to v$, $v^2\to\bar v$, $\chi\to\zeta$, $\rho\to\bar\zeta$, $\alpha\to\gamma$ and $\beta\to\bar\gamma$, where all the quantities on the right are complex-valued;
\item
substitute the expressions for these complex-valued quantities and their derivatives in terms of their real and imaginary parts, i.e.,
$z=x+{\rm i}y$, $v=v^1+{\rm i}v^2$, $\zeta=\chi+{\rm i}\rho$, $\gamma=\alpha+{\rm i}\beta$, $\p_z=\frac12(\p_x-\mathrm i\p_y)$ and $\p_v=\frac12(\p_{v^1}-\mathrm i\p_{v^2})$;
\item
linearly recombine the resulting vector fields to single out their real and imaginary parts.
\end{itemize}
All the obtained Lie invariance pseudoalgebras were verified by the direct computation of Lie symmetries of the specific Nyzhnyk models~\eqref{eq:SpecNyzhnykSystem}--\eqref{eq:LaxSpecPotdNEq} (including the corresponding linear or nonlinear Lax representations)
simultaneously in three different packages for \textsf{Maple}: the built-in \textsf{PDEtools} package and the specialized packages \textsf{DESOLV}~\cite{carm2000a,vu2012a} and \textsf{Jets}~\cite{BaranMarvan,marv2009a}.
These pseudoalgebras have the following form:
\begin{gather*}
\mathfrak g_{\ref{eq:SpecNyzhnykSystem}'}=\langle\hat D^t(\tau),\,\hat P^x(\chi),\,\hat P^y(\rho)\rangle,\\
\mathfrak g_{\ref{eq:SpecdNSystem}'}=\langle\hat D^t(\tau),\,\hat D^{\rm s},\,\hat P^x(\chi),\,\hat P^y(\rho)\rangle,\\
\mathfrak g_{\ref{eq:LaxSpecNyzhnykSystem}'}=\langle\check D^t(\tau),\,\check P^x(\chi),\,\check P^y(\rho),\,\check D^{\rm\psi}\rangle,\\
\mathfrak g_{\ref{eq:LaxSpecdNSystem}'}=\langle\check D^t(\tau),\,\check D^{\rm s},\,\check P^x(\chi),\,\check P^y(\rho),\,\check P^\vartheta\rangle,\\
\mathfrak g_{\ref{eq:SpecPotNizhEq}'}=\langle D^t(\tau),\,P^x(\chi),\,P^y(\rho),\,R^x(\alpha),\,R^y(\beta),\,Z(\sigma)\rangle,\\
\mathfrak g_{\ref{eq:SpecPotdNEq}'}=\langle D^t(\tau),\,D^{\rm s},\,P^x(\chi),\,P^y(\rho),\,R^x(\alpha),\,R^y(\beta),\,Z(\sigma)\rangle,\\
\mathfrak g_{\ref{eq:LaxSpecPotNizhEq}'}=\langle\bar D^t(\tau),\,\bar P^x(\chi),\,\bar P^y(\rho),\,\bar R^x(\alpha),\, \bar R^y(\beta),\,\bar Z(\sigma),\,\bar D^{\rm\psi}\rangle,\\
\mathfrak g_{\ref{eq:LaxSpecPotdNEq}'}=\langle\bar D^t(\tau),\,\bar D^{\rm s},\,\bar P^x(\chi),\,\bar P^y(\rho),\,\bar R^x(\alpha),\,\bar R^y(\beta),\,\bar Z(\sigma),\,\bar P^\vartheta\rangle,
\end{gather*}
where
\begin{gather*}
\hat D^t(\tau)=\tau\p_t+\tfrac13\tau_tx\p_x+\tfrac13\tau_ty\p_y
-\tfrac23\tau_tw\p_w-\tfrac13(2\tau_tv^1+\tfrac13\tau_{tt}x)\p_{v^1}-\tfrac13(2\tau_tv^2+\tfrac13\tau_{tt}y)\p_{v^2},\\
\hat D^{\rm s}=x\p_x+y\p_y+w\p_w+v^1\p_{v^1}+v^2\p_{v^2},\\
\hat P^x(\chi)=\chi\p_x-\tfrac13\chi_t\p_{v^1},\quad
\hat P^y(\rho)=\rho\p_y-\tfrac13\rho_t\p_{v^2},
\\[1.5ex]
\check D^t(\tau)=\tau\p_t+\tfrac13\tau_tx\p_x+\tfrac13\tau_ty\p_y
-\tfrac23\tau_tw\p_w-\tfrac13(2\tau_tv^1+\tfrac13\tau_{tt}x)\p_{v^1}-\tfrac13(2\tau_tv^2+\tfrac13\tau_{tt}y)\p_{v^2},\\
\check D^{\rm s}=x\p_x+y\p_y+w\p_w+v^1\p_{v^1}+v^2\p_{v^2}+\tfrac32\vartheta\p_\vartheta,\\
\check P^x(\chi)=\chi\p_x-\tfrac13\chi_t\p_{v^1},\quad
\check P^y(\rho)=\rho\p_y-\tfrac13\rho_t\p_{v^2},\quad
\check D^\psi=\psi\p_\psi,\quad
\check P^\vartheta=\p_\vartheta,
\\[1.5ex]
D^t(\tau)=\tau\p_t+\tfrac13\tau_tx\p_x+\tfrac13\tau_ty\p_y-\tfrac19\tau_{tt}(\tfrac13x^3-xy^2)\p_u,\quad
D^{\rm s}=x\p_x+y\p_y+3u\p_u,\\
P^x(\chi)=\chi\p_x-\tfrac13\chi_t(x^2-y^2)\p_u,\quad
P^y(\rho)=\rho\p_y+\tfrac23\rho_txy\p_u,\\
R^x(\alpha)=\alpha x\p_u,\quad
R^y(\beta)=\beta y\p_u,\quad
Z(\sigma)=\sigma\p_u,
\\[1.5ex]
\bar D^t(\tau)=\tau\p_t+\tfrac13\tau_tx\p_x+\tfrac13\tau_ty\p_y-\tfrac19\tau_{tt}(\tfrac13x^3-xy^2)\p_u,\quad
\bar D^{\rm s}=x\p_x+y\p_y+3u\p_u+\tfrac32\vartheta\p_\vartheta,\\
\bar P^x(\chi)=\chi\p_x-\tfrac13\chi_t(x^2-y^2)\p_u,\quad
\bar P^y(\rho)=\rho\p_y+\tfrac23\rho_txy\p_u,\\
\bar R^x(\alpha)=\alpha x\p_u,\quad
\bar R^y(\beta)=\beta y\p_u,\quad
\bar Z(\sigma)=\sigma\p_u,\quad
\bar D^\psi=\psi\p_\psi,\quad
\bar P^\vartheta=\p_\vartheta.
\end{gather*}
Here we use the same notation for the Lie-symmetry vector fields of the specific Nyzhnyk models as that for their counterparts for the symmetric Nyzhnyk models, for which a different scaling of variables is employed.
The indicated actions modify only the form of the vector fields
$D^t(\tau)$, $P^x(\chi)$, $P^y(\rho)$ and $\bar D^t(\tau)$, $\bar P^x(\chi)$, $\bar P^y(\rho)$.

\subsection{Modified case}

In view of Remark~\ref{rem:RelationOfModModelsToLaxRepresentations}, the maximal Lie invariance pseudoalgebras of the modified Nyzhnyk models are, in fact, the maximal Lie invariance pseudoalgebras of the Lax representations of the associated unmodified Nyzhnyk models pulled back by the corresponding simple point transformations from Remark~\ref{rem:RelationOfModModelsToLaxRepresentations}.
To verify this, we also independently computed these pseudoalgebras using specialized computer algebra packages.
They have the following form:
\begin{gather*}
\mathfrak g_{\ref{eq:SymModNyzhnykSystem}}
=\langle D^t(\tau),\,P^x(\chi),\,P^y(\rho),\,R^x(\alpha),\,R^y(\beta),\,Z(\sigma),\,P^w\rangle,
\\
\mathfrak g_{\ref{eq:SymDispersionlessDefocModNyzhnykSystem}}
=\langle D^t(\tau),\,D^{\rm s},\,P^x(\chi),\,P^y(\rho),\,R^x(\alpha),\,R^y(\beta),\,Z(\sigma),\,P^w\rangle,
\\
\mathfrak g_{\ref{eq:AsymModNyzhnykSystem}}
=\langle D^t_{\rm a}(\tau),\,S(\gamma),\,P^x(\chi),\,R^x(\alpha),\,N(\varsigma),\,P^w\rangle,
\\
\mathfrak g_{\ref{eq:AsymDispersionlessDefocModNyzhnykSystem}}=
\langle D^t_{\rm a}(\tau),\,S(\gamma),\,D^{\rm s}_{\rm a},\,P^x(\chi),\,R^x(\alpha),\,N(\varsigma),\,P^w\rangle,
\\
\mathfrak g_{\ref{eq:SpecSymModNyzhnykSystem}'}=
\langle D^t_{\rm s}(\tau),\,P^x_{\rm s}(\chi),\,P^y_{\rm s}(\rho),\,R^x(\alpha),\,R^y(\beta),\,Z(\sigma),\,P^w\rangle,\\
\mathfrak g_{\ref{eq:dSpecSymModNyzhnykSystem}'}=
\langle D^t_{\rm s}(\tau),\,D^{\rm s},\,P^x_{\rm s}(\chi),\,P^y_{\rm s}(\rho),\,R^x(\alpha),\,R^y(\beta),\,Z(\sigma),\,P^w \rangle,
\end{gather*}
where all the generating vector fields
\begin{gather*}
D^t(\tau)=\tau\p_t+\tfrac13\tau_tx\p_x+\tfrac13\tau_ty\p_y+\tfrac1{54}\epsilon\tau_{tt}(x^3+y^3)\p_u,\quad
D^{\rm s}=x\p_x+y\p_y+3u\p_u+\tfrac32w\p_w,\\
D^t_{\rm a}(\tau)=\tau\p_t+\tfrac13\tau_tx\p_x+\tfrac1{54}\epsilon\tau_{tt}x^3\p_u,\quad
D^{\rm s}_{\rm a}=x\p_x+3u\p_u+\tfrac32w\p_w,\quad
S(\gamma)=\gamma\p_y,\\
D^t_{\rm s}(\tau)=\tau\p_t+\tfrac13\tau_tx\p_x+\tfrac13\tau_ty\p_y+\tfrac19\epsilon\tau_{tt}(\tfrac13x^3-xy^2)\p_u,\\
P^x(\chi)=\chi\p_x+\tfrac16\epsilon\chi_tx^2\p_u,\quad
P^y(\rho)=\rho\p_y+\tfrac16\epsilon\rho_ty^2\p_u,\\
P^x_{\rm s}(\chi)=\chi\p_x+\tfrac13\epsilon\chi_t(x^2-y^2)\p_u,\quad
P^y_{\rm s}(\rho)=\rho\p_y-\tfrac23\epsilon\rho_txy\p_u,\\
R^x(\alpha)=\alpha x\p_u,\quad
R^y(\beta)=\beta y\p_u,\quad
Z(\sigma)=\sigma\p_u,\quad
N(\varsigma)=\varsigma\p_u,\quad
P^w=\p_w
\end{gather*}
are defined in the space with coordinates $(t,x,y,u,w)$.
Note that, in fact, $D^{\rm s}=D^{\rm s}_{\rm a}+S(y)\in\mathfrak g_{\ref{eq:AsymDispersionlessDefocModNyzhnykSystem}}$, and $D^{\rm s}$ is replaced by $D^{\rm s}_{\rm a}$ in the set of vector fields generating the pseudoalgebra~$\mathfrak g_{\ref{eq:AsymDispersionlessDefocModNyzhnykSystem}}$ because $D^{\rm s}_{\rm a}$ has a simpler form than $D^{\rm s}$.
We also have $\{N(\varsigma)\}\supset\langle R^y(\beta),\,Z(\sigma)\rangle$.

\section{Point- and contact-symmetry pseudogroups}\label{sec:Pseudogroups}

Using the original megaideal-based version of the algebraic method proposed in~\cite{malt2024a} on the basis of the fundamental results from~\cite{hydo1998a,hydo1998b,hydo2000b,hydo2000A} and their development in~\cite{bihl2015a,bihl2011b},
the point-symmetry pseudogroups of the dispersionless Nyzhnyk equation~\eqref{eq:dN}, its nonlinear Lax representation~\eqref{eq:dNLaxPair}, the dispersionless Nyzhnyk system~\eqref{eq:dNSystem} and the contact-symmetry pseudogroup of the equation~\eqref{eq:dN} were computed in~\cite{boyk2024a}.
These results are presented in Section~\ref{sec:dSymCasePointSyPseudogroups} below,
where the point-symmetry pseudogroup of the nonlinear Lax representation~\eqref{eq:LaxdNSystem}
of the dispersionless Nyzhnyk system~\eqref{eq:dNSystem} is additionally found.
In Section~\ref{sec:DispersiveSymCasePointSyPseudogroups}, it is shown how all these results can be extended to the dispersive symmetric case by modifying the corresponding proofs.
Thus, the point-symmetry pseudogroups are computed for all the standard symmetric Nyzhnyk models from Section~\ref{sec:NyzhnykSystemsAndEquations}.
Analogous computations for the asymmetric case are essentially different.
Therefore, finding the point- and contact-symmetry pseudogroups of the asymmetric Nyzhnyk models from Section~\ref{sec:NyzhnykSystemsAndEquations} remains an open problem.

\begin{remark}\label{rem:OnRealAndComplexCasesForPointSyms}
In contrast to the Lie symmetries, to convert the description of point symmetries for the real case into its counterpart for the complex case,
it does not suffice to merely assume all the involved quantities to be complex
and to replace the smoothness of parameter functions with their analyticity.
Under the transition to the complex case, one additionally needs to reparameterize the representation of elements of the point-symmetry pseudogroups of the considered real models in order to avoid multi-valued parameter functions and other ambiguities associated with the cube root over the field of complex numbers.
Namely, for the reparameterization, we formally perform the substitutions $T_t^{1/3}=H$, $T_t=H^3$, $T_{tt}=3H_tH^2$, $T=\int H^3{\rm d}t$ with an arbitrary nonvanishing function~$H$ of~$t$, and $A^{1/3}=\breve A$, $A=\breve A^3$ with an arbitrary nonzero constant~$\breve A$.
\end{remark}

\begin{notation}
In this section, $\mathfrak r_i$ denotes the radical of the pseudoalgebra~$\mathfrak g_i$, where $i$~is the number of the corresponding model, and $\mathfrak z(\mathfrak s)$ and $\mathfrak s'$ are the center and the derived pseudoalgebra of a~pseudosubalgebra~$\mathfrak s$ of the pseudoalgebra~$\mathfrak g_i$, respectively.
All the parameter functions $T$, $H$, $X^0$, $Y^0$, $W^0$, $W^1$ and $W^2$ are arbitrary smooth functions of the variable~$t$ with $T_t\neq0$ and $H\neq0$, while $A$, $B$ and $C$ are arbitrary constants with $A\ne0$ and $C\ne0$.
\end{notation}

\begin{definition}
Discrete point symmetries of a model are called \emph{independent} if they are independent
up to composing with each other and with continuous point symmetries of this model.
\end{definition}

\subsection{Dispersionless symmetric case}\label{sec:dSymCasePointSyPseudogroups}

In view of the commutation relations of the vector fields generating the pseudoalgebra~$\mathfrak g_{\ref{eq:dN}}$ and the properties of megaideals from~\cite{bihl2015a,popo2003a}, one can easily construct the following megaideals of the pseudoalgebra~$\mathfrak g_{\ref{eq:dN}}$ \cite[Section~2]{boyk2024a}:
\begin{gather*}
\begin{split}&
\mathfrak m_1:=\mathfrak g_{\ref{eq:dN}}'
=\big\langle D^t(\tau),P^x(\chi),P^y(\rho),R^x(\alpha),R^y(\beta),Z(\sigma)\big\rangle,\\&
\mathfrak m_2:=\mathfrak r_{\ref{eq:dN}}
=\big\langle D^{\rm s},P^x(\chi),P^y(\rho),R^x(\alpha),R^y(\beta),Z(\sigma)\big\rangle,\\&
\mathfrak m_3:=\mathfrak m_2'=\mathfrak m_1\cap\mathfrak m_2
=\big\langle P^x(\chi),P^y(\rho),R^x(\alpha),R^y(\beta),Z(\sigma)\big\rangle,\\&
\mathfrak m_4:=\mathfrak m_2''=\big\langle R^x(\alpha),R^y(\beta),Z(\sigma)\big\rangle,\quad
\mathfrak m_5:=(\mathfrak m_3)^3=\mathfrak z(\mathfrak m_3)=\big\{ Z(\sigma)\big\},\\&
\mathfrak m_6:=\mathfrak z(\mathfrak m_1)=\big\langle Z(1)\big\rangle.
\end{split}
\end{gather*}
The only exception is the radical~$\mathfrak r_{\ref{eq:dN}}$ of the pseudoalgebra~$\mathfrak g_{\ref{eq:dN}}$, the finding of which required a~separate proof \cite[Lemma~1]{boyk2024a}.
All these megaideals are essential and have been used to construct the point-symmetry pseudogroup~$G_{\ref{eq:dN}}$ of the equation~\eqref{eq:dN}.

\begin{theorem}\label{thm:dNCompletePointSymGroup}
(i) The point-symmetry pseudogroup~$G_{\ref{eq:dN}}$ of the (real) dispersionless Nyzhnyk equation~\eqref{eq:dN} is generated by the transformations of the form
\begin{gather}\label{eq:dNPointSymForm}
\begin{split}
&\tilde t=T,\quad
\tilde x=CT_t^{1/3}x+X^0,\quad
\tilde y=CT_t^{1/3}y+Y^0,\\
&\tilde u=C^3u-\frac{C^3T_{tt}}{18T_t}(x^3+y^3)
-\frac{C^2}{2T_t^{1/3}}(X^0_tx^2+Y^0_ty^2)+W^1x+W^2y+W^0,
\end{split}
\end{gather}
and the transformation~$\mathscr J$: $\tilde t=t$, $\tilde x=y$, $\tilde y=x$, $\tilde u=u$.
%All parameter functions $T$, $X^0$, $Y^0$, $W^0$, $W^1$ and $W^2$ are arbitrary smooth functions of~$t$ with $T_t\neq0$, and $C$ is an arbitrary nonzero constant.

(ii) The contact-symmetry pseudogroup~$G^{\rm c_{\ref{eq:dN}}}$ of the dispersionless Nyzhnyk equation~\eqref{eq:dN} coincides with the first prolongation of the pseudogroup~$G_{\ref{eq:dN}}$.
\end{theorem}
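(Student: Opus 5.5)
The plan is the megaideal-based version of the algebraic method. Let $\Phi\colon\tilde t=T(t,x,y,u)$, $\tilde x=X(t,x,y,u)$, $\tilde y=Y(t,x,y,u)$, $\tilde u=U(t,x,y,u)$ be a point symmetry of~\eqref{eq:dN}. Its pushforward $\Phi_*$ is an automorphism of~$\mathfrak g_{\ref{eq:dN}}$, so it preserves each of the megaideals $\mathfrak m_1,\dots,\mathfrak m_6$ listed above. Each inclusion $\Phi_*\mathfrak m_i\subseteq\mathfrak m_i$ is written on generators, e.g.\ $\Phi_*Z(\sigma)=Z(\tilde\sigma)$ and $\Phi_*R^x(\alpha)\in\mathfrak m_4$. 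Expanding $\Phi_*Q=(QT)\p_{\tilde t}+(QX)\p_{\tilde x}+(QY)\p_{\tilde y}+(QU)\p_{\tilde u}$ and splitting with respect to the parameter functions then gives linear PDEs for $T,X,Y,U$.

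The steps go in this order.

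First, use $\mathfrak m_5=\{Z(\sigma)\}$ and $\mathfrak m_6$. These give $T_u=X_u=Y_u=0$ and $U_u=\mathrm{const}\ne0$; the constant is eventually identified with $C^3$, and the $\sigma$-dependence forces $\tilde\sigma$ to depend on~$t$ only.

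Second, use $\mathfrak m_4=\langle R^x,R^y,Z\rangle$. Since $\Phi_*(\alpha x\p_u)=\alpha x U_u\p_{\tilde u}$ must equal $(\tilde\alpha\tilde x+\tilde\beta\tilde y+\tilde\sigma)\p_{\tilde u}$, the functions $x$ and $y$ lie in the span of $\tilde x,\tilde y,1$ with coefficients depending on~$t$. Hence $X$ and $Y$ are affine in $(x,y)$ with $t$-dependent coefficients.

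Third, use $\mathfrak m_3$. Its elements $P^x(\chi)$ and $P^y(\rho)$ must map into $\langle P^x,P^y\rangle$ modulo $\mathfrak m_4$. Splitting in~$\chi$ shows that $T_x=T_y=0$ and that the linear part of $(X,Y)$ in $(x,y)$ maps $\p_x,\p_y$ into $\langle\p_{\tilde x},\p_{\tilde y}\rangle$. Using $\mathfrak m_3'$-type structure (the bracket $[P^x(\chi),P^x(\chi')]\in\langle R^x\rangle$ versus mixed brackets being zero) together with the cubic terms $x^2\p_u$, the linear part must be diagonal or antidiagonal. The antidiagonal branch is absorbed by composing with~$\mathscr J$, so we may assume $X=a(t)x+X^0$ and $Y=b(t)y+Y^0$.

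Fourth, use $\mathfrak m_1$. Pushing forward $D^t(\tau)$ and requiring the result to be of the form $D^t(\tilde\tau)$ modulo~$\mathfrak m_3$ gives $a=b$ and $a^3\propto T_t$, i.e.\ $a=CT_t^{1/3}$.

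Fifth, the remaining freedom, the precise form of~$U$, is obtained by substituting the ansatz into~\eqref{eq:dN} itself and splitting. This forces $U_u=C^3$ and fixes the cubic and quadratic terms in~\eqref{eq:dNPointSymForm}; the terms $W^1x+W^2y+W^0$ remain arbitrary. Finally, one verifies directly that~\eqref{eq:dNPointSymForm} and~$\mathscr J$ are symmetries.

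I expect the main obstacle to be the third step, excluding mixing of $x$ and $y$ beyond the swap. The plan is to combine the fact that $\Phi_*$ preserves the centralizers of $P^x$-type subsets within~$\mathfrak m_3$ (these distinguish the $x$- and $y$-blocks) with the pseudoalgebra version of the argument. This requires care since $\Phi_*$ acts on parameter functions through local operators.

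For part~(ii), first note that~\eqref{eq:dN} has no nontrivial contact symmetries beyond prolonged point ones. The maximal contact invariance pseudoalgebra is the first prolongation of~$\mathfrak g_{\ref{eq:dN}}$ (stated above), so any contact symmetry pushes forward $\mathfrak g_{\ref{eq:dN}}^{(1)}$ to itself, and in particular preserves the prolonged megaideal $\mathfrak m_5^{(1)}=\{\sigma\p_u\}$. For a contact transformation, invariance of $\langle\sigma(t)\p_u\rangle$ with arbitrary~$\sigma$ forces the $t,x,y,u$-components to be independent of the first derivatives. This projectability argument shows the transformation is the prolongation of a point transformation, and part~(i) applies.
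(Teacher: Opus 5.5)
\paragraph{The gap is in part~(ii).}
Write the contact transformation as $\Psi\colon(\tilde t,\tilde x,\tilde y,\tilde u,\tilde u_{\tilde t},\tilde u_{\tilde x},\tilde u_{\tilde y})=(Z^t,Z^x,Z^y,U,U^t,U^x,U^y)$. The first prolongation of $Z(\sigma)$ is $\sigma\p_u+\sigma_t\p_{u_t}$, not $\sigma\p_u$. Requiring $\Psi_*$ to map these fields into $\mathfrak m_{5(1)}$ yields only $Z^\mu_u=Z^\mu_{u_t}=0$ for $\mu\in\{t,x,y\}$. These fields have no $\p_{u_x}$- or $\p_{u_y}$-components, so they cannot detect dependence of $Z^\mu$ on $u_x$ or $u_y$.

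Your claim that $\mathfrak m_{5(1)}$-invariance forces projectability is therefore false. A counterexample is the partial Legendre transformation
\[
\tilde t=t,\quad \tilde x=u_x,\quad \tilde y=y,\quad \tilde u=u-xu_x,\quad \tilde u_{\tilde t}=u_t,\quad \tilde u_{\tilde x}=-x,\quad \tilde u_{\tilde y}=u_y .
\]
It is contact and maps $\sigma\p_u+\sigma_t\p_{u_t}$ to $\sigma(\tilde t)\p_{\tilde u}+\sigma_{\tilde t}(\tilde t)\p_{\tilde u_{\tilde t}}$ for every $\sigma$. It also maps $\p_u$ to $\p_{\tilde u}$. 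Yet it is not the prolongation of a point transformation.

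The missing ingredient is $\mathfrak m_{4(1)}$. The prolongations $x\p_u+\p_{u_x}$ and $y\p_u+\p_{u_y}$ of $R^x(1)$ and $R^y(1)$ must be mapped to fields with vanishing $\p_{\tilde t}$-, $\p_{\tilde x}$- and $\p_{\tilde y}$-components. Given $Z^\mu_u=0$, this gives $Z^\mu_{u_x}=Z^\mu_{u_y}=0$. (The Legendre map above sends $x\p_u+\p_{u_x}$ to $\p_{\tilde x}\notin\mathfrak m_{4(1)}$.) Only then does the contact condition $Z^\mu_{u_\nu}U^\mu=U_{u_\nu}$ give $U_{u_\nu}=0$, so that $\Psi$ is a prolonged point symmetry and part~(i) applies. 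This is exactly how the paper argues, using $Z(1)$, $Z(t)$, $R^x(1)$ and $R^y(1)$. Note also that your opening sentence of part~(ii) simply asserts the conclusion.

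\paragraph{Part~(i)} is essentially the paper's megaideal argument and works, with the following remarks.
\begin{itemize}
\item Your second step presupposes $T=T(t)$. Get this explicitly from $\Phi_*Z(t)\in\mathfrak m_5$, i.e., $U_ut=\tilde\sigma(T)$ with $U_u\ne0$, before using $\mathfrak m_4$.
\item The $\tilde x$- and $\tilde y$-components of $\Phi_*D^t(\tau)$ only give $a=C_1T_t^{1/3}$ and $b=C_2T_t^{1/3}$. The equality $a=b$ must come from $\tilde u$-components.
\item Your fifth step is unnecessary. Subtract $t$ times the $\tilde u$-component of $\Phi_*P^x(1)$ from that of $\Phi_*P^x(t)$. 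This gives $U_uT_t=a^3$, and likewise $U_uT_t=b^3$, hence $a=b$ and $U_u=C^3$.
\item The $\tilde u$-components of $\Phi_*P^z(1)$, $\Phi_*D^t(1)$ and $\Phi_*D^t(t)$ fix the cubic and quadratic terms of $U$ and rule out an $xy$ term.
\end{itemize}
So the algebraic condition alone yields~\eqref{eq:dNPointSymForm}. This is the point the paper stresses: for~\eqref{eq:dN} the point-symmetry pseudogroup coincides with the stabilizer of $\mathfrak g_{\ref{eq:dN}}$. The only remaining check is that~\eqref{eq:dNPointSymForm} and $\mathscr J$ are symmetries. Falling back on the direct method is not wrong, but it loses this feature.
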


The computation of the pseudogroup~$G^{\rm c}_{\ref{eq:dN}}$ in~\cite{boyk2024a} represents the first application of the original megaideal-based version of the algebraic method to finding the contact-symmetry pseudogroup of a differential equation.
According to the proof of Theorem~\ref{thm:dNCompletePointSymGroup} in~\cite{boyk2024a}, the necessary algebraic condition completely determines the point-symmetry pseudogroup~$G_{\ref{eq:dN}}$ of the equation~\eqref{eq:dN}.
This provided the first example of a system of differential equations with this property in the literature.
A second such example was found in~\cite{vinn2026a} and it is also, in a certain sense, related to the equation~\eqref{eq:dN}, since it constitutes a certain submodel of the equation~\eqref{eq:dN}.

\begin{corollary}\label{cor:dNDiscrSyms}
The identity component~$G^{\rm id}_{\ref{eq:dN}}$ in the pseudogroup~$G_{\ref{eq:dN}}$ consists of transformations of the form~\eqref{eq:dNPointSymForm} with $T_t>0$ and $C>0$.
Moreover, the group inducing the quotient pseudoalgebra~$G_{\ref{eq:dN}}/G^{\rm id}_{\ref{eq:dN}}$ is isomorphic to the group $\mathbb Z_2\times\mathbb Z_2\times\mathbb Z_2$. The complete list of independent discrete point symmetries of the equation~\eqref{eq:dN} is exhausted by three commuting involutions, which can be chosen as
\begin{gather*}
\mathscr J\colon(\tilde t,\tilde x,\tilde y,\tilde u)=(t,y,x,u),\\
\mathscr I^{\rm i}\colon(\tilde t,\tilde x,\tilde y,\tilde u)=(-t,-x,-y,u),\quad
\mathscr I^{\rm s}\colon(\tilde t,\tilde x,\tilde y,\tilde u)=(t,-x,-y,-u).
\end{gather*}
\end{corollary}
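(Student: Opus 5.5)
The corollary is read off Theorem~\ref{thm:dNCompletePointSymGroup}(i). Every element of $G_{\ref{eq:dN}}$ is either a transformation of the form~\eqref{eq:dNPointSymForm} or the composition of such a transformation with~$\mathscr J$. The set of transformations~\eqref{eq:dNPointSymForm} is closed under composition, and $\mathscr J$ normalizes it: conjugating by $\mathscr J$ swaps $(X^0,W^1)$ with $(Y^0,W^2)$ and keeps the form. So I would first check that each element is written uniquely as $\Phi$ or $\Phi\circ\mathscr J$ with $\Phi$ of the form~\eqref{eq:dNPointSymForm}. Uniqueness holds because $\mathscr J$ permutes $x$ and $y$ and is therefore not of that form.

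Next I would identify the identity component. The parameter functions $X^0$, $Y^0$, $W^0$, $W^1$, $W^2$ range over linear (hence contractible) function spaces, so they can be deformed to zero. The remaining discrete invariants are $\operatorname{sgn}T_t$ and $\operatorname{sgn}C$. Since $T$ is a local diffeomorphism of the $t$-line with nonvanishing derivative, the sign of $T_t$ is constant on the domain. When $T_t>0$ and $C>0$, the path $T^\epsilon=(1-\epsilon)t+\epsilon T$, $C^\epsilon=C^{\epsilon}$ (together with the scaled remaining parameters) joins the transformation to the identity inside the family. Conversely, $\operatorname{sgn}T_t$ and $\operatorname{sgn}C$ are continuous $\pm1$-valued functions along any path in the family. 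In addition, the component containing~$\mathscr J$ cannot meet the family, because of the $x\leftrightarrow y$ swap; one can detect this, for example, by the orientation of the map $(x,y)\mapsto(\tilde x,\tilde y)$ divided by $\operatorname{sgn}C^2$. This gives $G^{\rm id}_{\ref{eq:dN}}$ exactly as stated.

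The quotient is then labelled by the three signs $(\operatorname{sgn}T_t,\operatorname{sgn}C,\text{presence of }\mathscr J)$, so it is $\mathbb Z_2^3$, provided the representatives commute modulo $G^{\rm id}_{\ref{eq:dN}}$. To see this, choose $\mathscr I^{\rm i}$ as $T=-t$, $C=-1$; then $T_t^{1/3}=-1$ and $CT_t^{1/3}=1$. I would double-check this step, since the cube-root convention matters for whether $\tilde x=-x$. With $T_t^{1/3}=-1$ and $C=1$ the transformation gives $\tilde x=-x$ and $\tilde u=u$, which matches $\mathscr I^{\rm i}$. Hence $\mathscr I^{\rm i}$ has $T_t<0$, $C>0$. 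For $\mathscr I^{\rm s}$ take $T=t$, $C=-1$, so $T_t>0$, $C<0$, giving $(t,-x,-y,-u)$.

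Finally I would verify directly that $\mathscr J$, $\mathscr I^{\rm i}$ and $\mathscr I^{\rm s}$ are involutions that pairwise commute, which is a coordinate computation. Their classes generate the quotient, so they form a complete list of independent discrete symmetries. The main obstacle is the rigorous connectedness argument for a pseudogroup with an infinite-dimensional parameter space. I would handle it locally, by checking that the sign invariants are the only obstructions to an explicit homotopy.
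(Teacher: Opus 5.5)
Your proposal is correct and follows essentially the route by which the paper obtains this corollary, namely by reading it off Theorem~\ref{thm:dNCompletePointSymGroup}(i). In that route, $\operatorname{sgn}T_t$, $\operatorname{sgn}C$ and the presence of $\mathscr J$ define a surjective homomorphism onto $\mathbb Z_2\times\mathbb Z_2\times\mathbb Z_2$. Its kernel is the path-connected subfamily with $T_t>0$, $C>0$, and $\mathscr J$, $\mathscr I^{\rm i}$, $\mathscr I^{\rm s}$ map to a basis of this group.

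Two small remarks. Your first tentative choice $T=-t$, $C=-1$ in fact gives $(-t,x,y,-u)=\mathscr I^{\rm i}\circ\mathscr I^{\rm s}$, so your correction to $C=1$ (with the real cube root $T_t^{1/3}=-1$) is the right one. Also, the factor $\operatorname{sgn}C^2$ in your orientation invariant is redundant: the Jacobian of $(x,y)\mapsto(\tilde x,\tilde y)$ already equals $+(CT_t^{1/3})^2$ on the family and $-(CT_t^{1/3})^2$ on its $\mathscr J$-coset.
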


\begin{remark}\label{cor:dNComplex}
To obtain from Theorem~\ref{thm:dNCompletePointSymGroup} a correct description of the point-symmetry pseudogroup $G^{\mathbb C}_{\ref{eq:dN}}$ of the complex dispersionless Nyzhnyk equation~\eqref{eq:dN}$^{\mathbb C}$, in addition to the complexification, it is necessary, according to Remark~\ref{rem:OnRealAndComplexCasesForPointSyms}, to reparameterize the family of transformations~\eqref{eq:dNPointSymForm}, which takes the form
\begin{gather*}\label{eq:dNComplex}
\begin{split}
&\tilde t=\int H^3{\rm d}t,\quad
\tilde x=CHx+X^0,\quad
\tilde y=CHy+Y^0,\\
&\tilde u=C^3u-\frac{C^3H_t}{6H}(x^3+y^3)
-\frac{C^2}{2H}(X^0_tx^2+Y^0_ty^2)+W^1x+W^2y+W^0.
\end{split}
\end{gather*}
The reparameterized transformations constitute the identity component in the pseudogroup $G^{\mathbb C}_{\ref{eq:dN}}$.
Moreover, the group that induces the quotient pseudoalgebra of the pseudogroup~$G^{\mathbb C}_{\ref{eq:dN}}$ with respect to this component is isomorphic to the group~$\mathbb Z_2$.
Therefore, only the transformation~$\mathscr J$ remains an independent discrete point symmetry under the complexification of the equation~\eqref{eq:dN}.
\end{remark}

In~\cite[Section~6]{boyk2024a}, the following megaideals of the pseudoalgebra~$\mathfrak g_{\ref{eq:dNLaxPair}}$ were found:
\begin{gather*}
\mathfrak g_{\ref{eq:dNLaxPair}}'
=\big\langle\bar D^t(\tau),\bar P^x(\chi),\bar P^y(\rho),\bar R^x(\alpha),\bar R^y(\beta),\bar Z(\sigma),\bar P^\vartheta\big\rangle,\quad
\\
\bar{\mathfrak m}_1:=\mathfrak g_{\ref{eq:dNLaxPair}}''
=\big\langle\bar D^t(\tau),\bar P^x(\chi),\bar P^y(\rho),\bar R^x(\alpha),\bar R^y(\beta),\bar Z(\sigma)\big\rangle,\quad
\\
\bar{\mathfrak m}_2:=\mathfrak r_{\ref{eq:dNLaxPair}}=\big\langle\bar D^{\rm s},\bar P^x(\chi),\bar P^y(\rho),
\bar R^x(\alpha),\bar R^y(\beta),\bar Z(\sigma),\bar P^\vartheta\big\rangle,
\\
\bar{\mathfrak m}_2'=\mathfrak g_{\ref{eq:dNLaxPair}}'\cap\bar{\mathfrak m}_2
=\big\langle\bar P^x(\chi),\bar P^y(\rho),\bar R^x(\alpha),\bar R^y(\beta),\bar Z(\sigma),\bar P^\vartheta\big\rangle,
\\
\bar{\mathfrak m}_3:=\bar{\mathfrak m}_1\cap\bar{\mathfrak m}_2'
=\big\langle\bar P^x(\chi),\bar P^y(\rho),\bar R^x(\alpha),\bar R^y(\beta),\bar Z(\sigma)\big\rangle,
\\
\bar{\mathfrak m}_4:=\bar{\mathfrak m}_2''=\big\langle\bar R^x(\alpha),\bar R^y(\beta),\bar Z(\sigma)\big\rangle,\quad
\bar{\mathfrak m}_5:=\bar{\mathfrak m}_2'''=\big\{\bar Z(\sigma)\big\},
\\
\mathfrak z(\mathfrak g_{\ref{eq:dNLaxPair}}')=\big\langle Z(1),\bar P^\vartheta\big\rangle,\quad
\bar{\mathfrak m}_6:=\mathfrak z(\mathfrak g_{\ref{eq:dNLaxPair}}'')=\big\langle Z(1)\big\rangle,\quad
\bar{\mathfrak m}_7:=\big\langle \bar P^\vartheta\big\rangle.
\end{gather*}
In general, the technique for finding the megaideal~$\bar{\mathfrak m}_7$ differs from the technique used for the other obtained megaideals, see \cite[Lemma~10]{boyk2024a}.
The megaideals~$\bar{\mathfrak m}_7$, $\bar{\mathfrak m}_6$, $\bar{\mathfrak m}_3$ and $\bar{\mathfrak m}_1$ are essential for constructing the point-symmetry pseudogroup~$G_{\ref{eq:dNLaxPair}}$ of the nonlinear Lax representation~\eqref{eq:dNLaxPair} of the equation~\eqref{eq:dN} by means of the original megaideal-based version of the algebraic method.

\begin{theorem}\label{thm:dNCompletePointSymGroupOfLaxRepresentation}
The point-symmetry pseudogroup~$G_{\ref{eq:dNLaxPair}}$ of the (real) nonlinear Lax representation~\eqref{eq:dNLaxPair} of the equation~\eqref{eq:dN} is generated by the transformations of the form
\begin{gather}\label{eq:LaxEqPointSymForm}
\begin{split}
&\tilde t=T,\quad
 \tilde x=A^{2/3}T_t^{1/3}x+X^0,\quad
 \tilde y=A^{2/3}T_t^{1/3}y+Y^0,\\
&\tilde u=A^2u-\frac{A^2T_{tt}}{18T_t}(x^3+y^3)
 -\frac{A^{4/3}}{2T_t^{1/3}}(X^0_tx^2+Y^0_ty^2)+W^1x+W^2y+W^0,\\
&\tilde\vartheta=A\vartheta+B
\end{split}
\end{gather}
and the transformation~$\bar{\mathscr J}$: $\tilde t=t$, $\tilde x=y$, $\tilde y=x$, $\tilde u=u$, $\tilde\vartheta=\vartheta$.
%The parameter functions $T$, $X^0$, $Y^0$, $W^0$, $W^1$ and $W^2$ are arbitrary smooth functions of the variable~$t$ with $T_t\neq0$, and $A$ and~$B$ are an arbitrary nonzero constants with $A\ne0$.
\end{theorem}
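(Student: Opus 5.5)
We will use the megaideal-based version of the algebraic method, as for Theorem~\ref{thm:dNCompletePointSymGroup}. Let $\Phi$ be a point symmetry of the system~\eqref{eq:dNLaxPair} in the space with coordinates $(t,x,y,u,\vartheta)$. Its pushforward $\Phi_*$ is an automorphism of $\mathfrak g_{\ref{eq:dNLaxPair}}$, so $\Phi_*\bar{\mathfrak m}_i\subseteq\bar{\mathfrak m}_i$ for the megaideals listed above. We will apply this in the order $\bar{\mathfrak m}_7$, $\bar{\mathfrak m}_6$, $\bar{\mathfrak m}_3$, $\bar{\mathfrak m}_1$. Each inclusion $\Phi_*Q\in\mathfrak m$ is written componentwise with undetermined parameter functions, and these identities are split with respect to the arbitrary functions $\tau,\chi,\rho,\alpha,\beta,\sigma$ and their derivatives.

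First, $\Phi_*\bar P^\vartheta=a\bar P^\vartheta$ with a constant $a\ne0$ forces $\Phi$ to be of the form $\tilde z=Z(z)$ and $\tilde\vartheta=a\vartheta+\Theta(z)$, where $z=(t,x,y,u)$. Second, $\Phi_*\bar Z(1)=c\bar Z(1)$ gives $T,X,Y,\Theta$ independent of $u$ and $U=cu+U^0(t,x,y)$. Next, $\Phi_*\bar{\mathfrak m}_3\subseteq\bar{\mathfrak m}_3$ says that $\Phi_*$ of $\bar P^x$, $\bar P^y$, $\bar R^x$, $\bar R^y$ and $\bar Z$ has no $\p_t$ and no $\p_\vartheta$ component. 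Hence $T=T(t)$, $\Theta=\Theta(t)$, and the spatial part $(X,Y)$ is affine in $(x,y)$ with $t$-dependent coefficients. Finally, using $\bar{\mathfrak m}_1$ on $\bar D^t(1)$ and $\bar D^t(t)$ fixes the structure of the $x,y$ scaling as $T_t^{1/3}$ times a constant matrix, together with the cubic and quadratic terms in $U$. Without the scaling $\bar D^{\rm s}$ in the megaideal chain, this stage gives only a partial form with a few remaining constants, namely a constant $2\times2$ matrix $M$ and the constants $a$, $c$.

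The remaining constants, including $M$, $a$, $c$ and $\Theta$, are determined by directly substituting the partially specified transformation into the system~\eqref{eq:dNLaxPair}. The first equation $\vartheta_x\vartheta_y=-u_{xy}$ is preserved only if $M$ is diagonal or antidiagonal, since $\vartheta_x\vartheta_y$ must map to a multiple of itself. This produces the discrete symmetry $\bar{\mathscr J}$. Composing with $\bar{\mathscr J}$, we may assume that $M=C\,\mathrm{diag}(1,1)$ after equating the two $T_t^{1/3}$-scalings, which is required by the cubic term $\tfrac13(\vartheta_x^3+\vartheta_y^3)$. The first equation then yields $c=a^2$. The second equation relates the scaling of $\vartheta_t$ to that of $\vartheta_x^3$, giving $a=a^3/C^3$, so $C=a^{2/3}$. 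It also gives $\Theta_t=0$, so that $\Theta=B$ is constant, and it makes the $u$-terms match the form~\eqref{eq:LaxEqPointSymForm}. Conversely, a direct check shows that every transformation~\eqref{eq:LaxEqPointSymForm} and $\bar{\mathscr J}$ is a symmetry.

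The main obstacle is the first step, the use of $\bar{\mathfrak m}_7=\langle\bar P^\vartheta\rangle$. Its megaideal property is not a standard derived-series fact and relies on \cite[Lemma~10]{boyk2024a}, which we take as given. The second difficulty is the bookkeeping in the $\bar{\mathfrak m}_1$ step, where the $\bar D^t(\tau)$ components must be split carefully to get the factor $T_t^{1/3}$ without assuming that it is positive or that $A>0$. Here we keep real cube roots, so that signs of $T_t$ and $A$ are allowed, consistent with Remark~\ref{rem:OnRealAndComplexCasesForPointSyms}.
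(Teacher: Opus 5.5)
Your proposal is correct and follows essentially the same route as the paper (cf.\ the proofs of Theorems~\ref{thm:LaxdNSystemPointSymPseudogroup} and~\ref{thm:LaxSymPotNizhEqCompletePointSymGroup}, which are modeled on it). The megaideals $\bar{\mathfrak m}_7$, $\bar{\mathfrak m}_6$, $\bar{\mathfrak m}_3$, $\bar{\mathfrak m}_1$ fix everything except the link between the $\vartheta$-scaling and the $(x,y,u)$-scaling, and the direct method then gives $C^3=A^2$. The only difference is that you leave more to the direct method than is needed: the paper's argument takes the $(t,x,y,u)$-components straight from the algebraic result for~\eqref{eq:dN}, since projecting to the $(t,x,y,u)$-space maps $\mathfrak g_{\ref{eq:dNLaxPair}}$ onto $\mathfrak g_{\ref{eq:dN}}$; indeed, the $u$-components of $\Phi_*\bar P^z(t)\in\bar{\mathfrak m}_3$ already force your matrix $M$ to be $C$ times the identity or the swap with $c=C^3$, and $\Phi_*\bar D^t(1)\in\bar{\mathfrak m}_1$ already gives $\Theta_t=0$.
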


In contrast to the pseudogroup~$G_{\ref{eq:dN}}$, the necessary algebraic condition does not determine the pseudogroup~$G_{\ref{eq:dNLaxPair}}$ completely.
Moreover, to obtain it, one also needs to employ the direct method, see the proof of Theorem~11 in~\cite{boyk2024a}.

\begin{corollary}\label{cor:dNLaxDiscrSyms}
A complete list of independent discrete point symmetries of the nonlinear Lax representation~\eqref{eq:dNLaxPair} is exhausted by three commuting involutions, which can be chosen as
\begin{gather*}
\bar{\mathscr J}\colon(\tilde t,\tilde x,\tilde y,\tilde u,\tilde\vartheta)=(t,y,x,u,\vartheta),\\
\bar{\mathscr I}^{\rm i}\colon(\tilde t,\tilde x,\tilde y,\tilde u,\tilde\vartheta)=(-t,-x,-y,u,\vartheta),\quad
\bar{\mathscr I}^\vartheta\colon(\tilde t,\tilde x,\tilde y,\tilde u,\tilde\vartheta)=(t,x,y,u,-\vartheta).
\end{gather*}
\end{corollary}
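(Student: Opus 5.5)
The statement to prove is Corollary~\ref{cor:dNLaxDiscrSyms}. The plan is to read it off from Theorem~\ref{thm:dNCompletePointSymGroupOfLaxRepresentation}, exactly as Corollary~\ref{cor:dNDiscrSyms} follows from Theorem~\ref{thm:dNCompletePointSymGroup}.

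\textbf{Step 1: the identity component.} The family~\eqref{eq:LaxEqPointSymForm} is parameterized by the functions $T,X^0,Y^0,W^0,W^1,W^2$ and the constants $A\ne0$, $B$. The discrete invariants distinguishing its connected parts are $\operatorname{sgn}T_t$ and $\operatorname{sgn}A$. Note that $A^{2/3}>0$ and $A^2>0$ for real $A\ne0$, so the sign of~$A$ enters only through $\tilde\vartheta=A\vartheta+B$; in particular it does not flip $x$, $y$ or~$u$.

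\begin{itemize}
\item The parameters $X^0,Y^0,W^0,W^1,W^2,B$ and the magnitudes of $T_t$ and $A$ can be connected to the identity within the family.
\item Hence the identity component $G^{\rm id}_{\ref{eq:dNLaxPair}}$ consists of the transformations~\eqref{eq:LaxEqPointSymForm} with $T_t>0$ and $A>0$.
\item This is consistent with the Lie invariance pseudoalgebra~$\mathfrak g_{\ref{eq:dNLaxPair}}$: the families $\bar D^t(\tau)$ and $\bar D^{\rm s}$ generate only orientation-preserving $T$ and positive~$A$, while $\bar P^\vartheta$ generates $B$.
\end{itemize}

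\textbf{Step 2: the quotient.} Together with $\bar{\mathscr J}$, the quotient $G_{\ref{eq:dNLaxPair}}/G^{\rm id}_{\ref{eq:dNLaxPair}}$ is induced by three independent sign choices: $\operatorname{sgn}T_t$, $\operatorname{sgn}A$, and whether $\bar{\mathscr J}$ is present. The representatives are chosen as follows.

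\begin{itemize}
\item For $T=-t$, all other functions zero and $A=1$, the transformation~\eqref{eq:LaxEqPointSymForm} gives $\tilde x=(-1)^{1/3}x=-x$ and $\tilde y=-y$, with $\tilde u=u$ since $T_{tt}=0$. This is $\bar{\mathscr I}^{\rm i}$.
\item For $T=t$ and $A=-1$, we get $\tilde x=x$, $\tilde y=y$, $\tilde u=u$ and $\tilde\vartheta=-\vartheta$. This is $\bar{\mathscr I}^\vartheta$.
\end{itemize}

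All three involutions commute, and each of them normalizes the family~\eqref{eq:LaxEqPointSymForm}. Since $\bar{\mathscr J}$ is listed separately in Theorem~\ref{thm:dNCompletePointSymGroupOfLaxRepresentation}, it is not in the family itself; conjugating by it merely swaps $X^0\leftrightarrow Y^0$ and $W^1\leftrightarrow W^2$. Consequently the quotient group is $\mathbb Z_2\times\mathbb Z_2\times\mathbb Z_2$.

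\textbf{Step 3: independence.} Independence means that no nontrivial product of the three involutions lies in $G^{\rm id}_{\ref{eq:dNLaxPair}}$. This follows from three invariants:
\begin{itemize}
\item $\operatorname{sgn}\tilde t_t$;
\item $\operatorname{sgn}(\p\tilde\vartheta/\p\vartheta)$;
\item whether $\tilde x$ depends on $x$ or on $y$.
\end{itemize}
Each of these is preserved within the identity component and changed by exactly one of the involutions.

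\textbf{Main obstacle.} The only real subtlety is in Step~1. One must argue that the real cube root convention $T_t^{1/3}$ makes $T_t<0$ a genuinely separate component rather than something connected through the family. One must also check that $A<0$ is not absorbable, since it does not affect $(x,y,u)$ and is visible only in~$\vartheta$. Both points are handled by noting that these sign invariants are continuous functions on the pseudogroup that take values in $\{\pm1\}$.
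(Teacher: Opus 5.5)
Your proposal is correct and takes the route the paper intends: the corollary follows directly from Theorem~\ref{thm:dNCompletePointSymGroupOfLaxRepresentation}. The signs of $T_t$ and of $A$ give $\bar{\mathscr I}^{\rm i}$ and $\bar{\mathscr I}^\vartheta$, with the sign of $A$ visible only in the $\vartheta$-component because $A^{2/3},A^2>0$, and $\bar{\mathscr J}$ is the third involution. Your observation that the positivity of $A^{2/3}$ and $A^2$ leaves no sign flip of $(x,y,u)$ is exactly why the analogue of $\mathscr I^{\rm s}$ disappears, as the paper notes right after the corollary.
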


The discrete point symmetry~$\mathscr I^{\rm s}$ of the equation~\eqref{eq:dN} has no counterparts among the discrete point symmetries of the nonlinear Lax representation~\eqref{eq:dNLaxPair} of this equation.
The trivial prolongation of the transformation~$\mathscr I^{\rm s}$ to~$\vartheta$ maps~\eqref{eq:dNLaxPair} into another, equivalent, nonlinear Lax representation for the equation~\eqref{eq:dN} \cite[Remark~13]{boyk2024a}.
At the same time, a new discrete point symmetry arises in~\eqref{eq:dNLaxPair}, which is associated with the pseudopotential~$\vartheta$.

\begin{remark}\label{cor:dNLaxComplex}
For a correct description of the point-symmetry pseudogroup~$G^{\mathbb C}_{\ref{eq:dNLaxPair}}$ of the complex nonlinear Lax representation~\eqref{eq:dNLaxPair}$^{\mathbb C}$ in Theorem~\ref{thm:dNCompletePointSymGroupOfLaxRepresentation}, according to Remark~\ref{rem:OnRealAndComplexCasesForPointSyms}, we complexify all the involved quantities and reparameterize the transformations~\eqref{eq:LaxEqPointSymForm} as follows:
%$T_t^{1/3}=H(t)$, $T_t=H(t)^3$, $T_{tt}=3H_tH^2$
\begin{gather*}\label{eq:dNLaxComplex}
\begin{split}
&\tilde t=\int H^3{\rm d}t,\quad
 \tilde x=\breve A^2Hx+X^0,\quad
 \tilde y=\breve A^2Hy+Y^0,\\
&\tilde u=\breve A^6u-\frac{\breve A^6H_t}{6H}(x^3+y^3)
 -\frac{\breve A^4}{2H}(X^0_tx^2+Y^0_ty^2)+W^1x+W^2y+W^0,\quad
\tilde\vartheta=\breve A^3\vartheta+B.
\end{split}
\end{gather*}
Analogously to the equation~\eqref{eq:dN}, the reparameterized transformations form the identity component in the pseudogroup~$G^{\mathbb C}_{\ref{eq:dNLaxPair}}$.
Moreover, the group that induces the quotient pseudoalgebra of the pseudogroup~$G^{\mathbb C}_{\ref{eq:dNLaxPair}}$ with respect to this component is isomorphic to the group~$\mathbb Z_2$.
Among all the independent discrete point symmetries of the system~\eqref{eq:dNLaxPair}, only the transformation~$\bar{\mathscr J}$ is preserved under its complexification.
\end{remark}

The megaideals
\begin{gather*}
\hat{\mathfrak m}_1:=\mathfrak g_{\ref{eq:dNSystem}}'=\big\langle\hat D^t(\tau),\hat P^x(\chi),\hat P^y(\rho)\big\rangle,\\
\hat{\mathfrak m}_2:=\mathfrak r_{\ref{eq:dNSystem}}=\big\langle\hat D^{\rm s},\hat P^x(\chi),\hat P^y(\rho)\big\rangle,\\
\hat{\mathfrak m}_3:=\hat{\mathfrak m}_2'=\hat{\mathfrak m}_1\cap\hat{\mathfrak m}_2=
\big\langle\hat P^x(\chi),\hat P^y(\rho)\big\rangle
\end{gather*}
of the pseudoalgebra~$\mathfrak g_{\ref{eq:dNSystem}}$ were found and used in~\cite[Section~7]{boyk2024a} to construct the point-symmetry pseudogroup~$G_{\ref{eq:dNSystem}}$ of the dispersionless Nyzhnyk system~\eqref{eq:dNSystem}.

\begin{theorem}\label{thm:dNSystemPointSymPseudogroup}
The point-symmetry pseudogroup~$G_{\ref{eq:dNSystem}}$ of the (real) dispersionless Nyzhnyk system~\eqref{eq:dNSystem} is generated by the transformations of the form
\begin{gather}\label{eq:dNSystemPointSymForm}
\begin{split}&
\tilde t=T,\quad
\tilde x=CT_t^{1/3}x+X^0,\quad
\tilde y=CT_t^{1/3}y+Y^0,\\&
\tilde w=\frac C{T_t^{2/3}}w,\quad
\tilde v^1=\frac C{T_t^{2/3}}v^1-\frac{CT_{tt}}{3T_t^{5/3}}x-\frac{X^0_t}{T_t},\quad
\tilde v^2=\frac C{T_t^{2/3}}v^2-\frac{CT_{tt}}{3T_t^{5/3}}y-\frac{Y^0_t}{T_t}
\end{split}
\end{gather}
and the transformation~$\hat{\mathscr J}$: $\tilde t=t$, $\tilde x=y$, $\tilde y=x$, $\tilde w=w$, $\tilde v^1=v^2$, $\tilde v^2=v^1$.
%The parameter functions $T$, $X^0$ and $Y^0$ are arbitrary smooth functions of the variable~$t$ with $T_t\neq0$, and $C$ is an arbitrary nonzero constant.
\end{theorem}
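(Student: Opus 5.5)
The plan is to apply the megaideal-based version of the algebraic method to the system~\eqref{eq:dNSystem}, in the same way as for Theorem~\ref{thm:dNCompletePointSymGroup}. Afterwards, the candidate transformations will be checked directly. Let $\Phi$ be a point symmetry of~\eqref{eq:dNSystem} with components $\tilde t=T$, $\tilde x=X$, $\tilde y=Y$, $\tilde w=W$, $\tilde v^1=V^1$, $\tilde v^2=V^2$, all functions of $(t,x,y,w,v^1,v^2)$. Its pushforward~$\Phi_*$ is an automorphism of~$\mathfrak g_{\ref{eq:dNSystem}}$. It therefore preserves the megaideals $\hat{\mathfrak m}_1$, $\hat{\mathfrak m}_2$ and $\hat{\mathfrak m}_3$. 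It also preserves megaideals built from these, such as the centralizer of~$\hat{\mathfrak m}_3$ in $\mathfrak g_{\ref{eq:dNSystem}}$ and the set of elements of~$\hat{\mathfrak m}_2$ acting on $\hat{\mathfrak m}_3$ by scaling.

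First, take $\Phi_*\hat P^x(\chi)$ and $\Phi_*\hat P^y(\rho)$ in~$\hat{\mathfrak m}_3$ and write out the conditions componentwise. The $\p_t$-component of any element of~$\hat{\mathfrak m}_3$ vanishes. Hence $T_x=T_y=T_w=T_{v^1}=T_{v^2}=0$, so $T=T(t)$. The $\p_x$ and $\p_y$ components of elements of~$\hat{\mathfrak m}_3$ depend only on~$t$, and $\chi,\rho$ are arbitrary. From this we expect $X_w=X_{v^1}=X_{v^2}=0$ and likewise for~$Y$. We also expect $X_x,X_y,Y_x,Y_y$ to be functions of~$t$, so $X$ and $Y$ are affine in $(x,y)$ with $t$-dependent coefficients. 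Because $\Phi_*$ maps $\hat{\mathfrak m}_3$ onto itself, it induces an invertible linear map on pairs $(\chi,\rho)$. The $\p_{v^i}$-components $-\chi_t$, $-\rho_t$ then give the affine dependence of $V^i$ on $(v^1,v^2)$ and fix their relation to $X_x,X_y$.

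Next, use $\hat{\mathfrak m}_2$. The condition $\Phi_*\hat D^{\rm s}=\hat D^{\rm s}+\hat P^x(\chi^0)+\hat P^y(\rho^0)$ holds, with coefficient exactly~$1$ because $\hat D^{\rm s}$ is the grading element of the radical modulo~$\hat{\mathfrak m}_3$. This forces homogeneity: $W$ is linear in~$w$, and the linear part in $(x,y)$ commutes with the identity scaling. Finally, $\Phi_*\hat D^t(\tau)=\hat D^t(\tilde\tau)+\ldots$ with $\tilde\tau=T_t\tau$, taken for $\tau=1,t,t^2$, yields $X_x,Y_y\propto T_t^{1/3}$ and the $T_{tt}$ corrections in~$V^i$. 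After these steps, $\Phi$ is known up to a constant matrix $M$ acting on $(x,y)$ (and correspondingly on $(v^1,v^2)$), together with undetermined scalings of~$w$.

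The main obstacle is pinning down $M$ and the $w$-scaling, because the algebraic condition by itself may allow any invertible constant matrix. I would then substitute this reduced ansatz into~\eqref{eq:dNSystem} and apply the direct method. The equations $v^1_y=w_x$ and $v^2_x=w_y$ are not invariant under general linear maps of $(x,y)$. Only maps that are diagonal or antidiagonal, i.e.\ permutations of $(x,y)$ with $(v^1,v^2)$ swapped accordingly, preserve this structure. Splitting with respect to derivatives should give $M=C\,\mathrm{diag}(1,1)$ or $C$ times the swap, and $W=CT_t^{-2/3}w$. The swap part generates~$\hat{\mathscr J}$. Alternatively, one could obtain the result faster by pushing the pseudogroup $G_{\ref{eq:dN}}$ of Theorem~\ref{thm:dNCompletePointSymGroup} forward through the second prolongation and the identification $(w,v^1,v^2)=(u_{xy},u_{xx},u_{yy})$, which reproduces~\eqref{eq:dNSystemPointSymForm} directly. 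However, this gives only the inclusion, not completeness, so the algebraic argument is still needed. Lastly, check that~\eqref{eq:dNSystemPointSymForm} and~$\hat{\mathscr J}$ are indeed symmetries, either by direct substitution or as the projection of~\eqref{eq:dNPointSymForm} and~$\mathscr J$.
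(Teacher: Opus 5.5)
Your overall strategy is the one behind the paper's proof (Theorem~14 of \cite{boyk2024a}, whose computations are reproduced in the proof of Theorem~\ref{thm:dNSystemDefSubalgs}). That proof imposes the megaideal conditions for $\hat{\mathfrak m}_1$, $\hat{\mathfrak m}_2$, $\hat{\mathfrak m}_3$ on $\hat P^z(1)$, $\hat P^z(t)$, $\hat P^z(t^2)$, $\hat D^t(1)$, $\hat D^t(t)$ and $\hat D^{\rm s}$ (or $\hat D^t(t^2)$), and then finishes with the direct method. Your argument that $\Phi_*\hat D^{\rm s}$ has $\hat D^{\rm s}$-coefficient exactly $1$ is also correct, since $\hat D^{\rm s}$ acts on the abelian ideal $\hat{\mathfrak m}_3$ as $-\mathrm{id}$.

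However, your first step is wrong as stated. Elements of $\hat{\mathfrak m}_3$ have no $\p_w$-component. So the $\tilde t$-components $\chi T_x-\chi_tT_{v^1}$ and $\rho T_y-\rho_tT_{v^2}$ of $\Phi_*\hat P^x(\chi)$ and $\Phi_*\hat P^y(\rho)$ give only $T_x=T_y=T_{v^1}=T_{v^2}=0$. Neither $T_w=0$ nor $X_w=Y_w=0$ follows from $\hat{\mathfrak m}_3$.

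In the paper, $T=T(t)$ comes from the $\tilde x$-, $\tilde y$- and $\tilde v^j$-components instead. The pushforwards of $\hat P^z(1)$, $\hat P^z(t)$ and $\hat P^z(t^2)$ give $\tilde\chi^{2z}(T)=t\tilde\chi^{1z}(T)$ and $\tilde\rho^{2z}(T)=t\tilde\rho^{1z}(T)$. Since $\Phi_*\hat P^z(1)\ne0$, some $\tilde\chi^{1z}$ or $\tilde\rho^{1z}$ is nonzero, so $t$ is a function of $T$. Even then you only have $X=X^1(t)x+X^2(t)y+X^0(t,w)$, and similarly for $Y$. The $w$-dependence of $X^0$ and $Y^0$ disappears only when you combine the $\hat D^t(1)$, $\hat D^t(t)$ condition $-2wX^0_w=X^0+3\tilde\chi^5(T)-3t\tilde\chi^4(T)$ with the $\hat D^{\rm s}$ condition $wX^0_w=X^0+f(T)$, or with the $\hat D^t(t^2)$ condition.

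The algebraic method also leaves more freedom than you state. Besides the constants $A_j,B_j$ multiplying $T_t^{1/3}x$, $T_t^{1/3}y$ and the scaling $W=Cw/T_t^{2/3}$, the $v^j$-components keep terms $E_jw/T_t^{2/3}$ with arbitrary constants $E_1,E_2$. These transformations do stabilize $\mathfrak g_{\ref{eq:dNSystem}}$, so your direct-method step must remove them as well. It does: they enter $\tilde v^1_{\tilde y}=\tilde w_{\tilde x}$ and $\tilde v^2_{\tilde x}=\tilde w_{\tilde y}$ through $w_x$ and $w_y$. Splitting these two equations with respect to the parametric derivatives $v^1_x$, $v^2_y$, $w_x$, $w_y$ gives $E_1=E_2=0$. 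It also gives either $A_2=B_1=0$, $A_1=B_2=C$, or the swapped choice $A_1=B_2=0$, $A_2=B_1=C$ that yields $\hat{\mathscr J}$, as you anticipated.
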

As with the nonlinear Lax representation~\eqref{eq:dNLaxPair}, the proof for constructing the pseudogroup~$G_{\ref{eq:dNSystem}}$ should necessarily be completed by applying the direct method.
Note that the vector field $Q^6:=\hat D^{\rm s}$ in the proof of Theorem~\ref{thm:dNSystemPointSymPseudogroup} in~\cite[Section~7]{boyk2024a} can be replaced by $\hat D^t(t^2)$; cf.\ Theorem~\ref{thm:dNSystemDefSubalgs} below.

\begin{corollary}\label{cor:dNSystemDiscrSyms}
A complete list of independent discrete point symmetries of the system~\eqref{eq:dNSystem} is exhausted by three commuting involutions, which can be chosen as
\begin{gather*}
\hat{\mathscr J}\colon(\tilde t,\tilde x,\tilde y,\tilde w,\tilde v^1,\tilde v^2)=(t,y,x,w,v^2,v^1),\\
\hat{\mathscr I}^{\rm i}\colon(\tilde t,\tilde x,\tilde y,\tilde w,\tilde v^1,\tilde v^2)=(-t,-x,-y,w,v^1,v^2),\\
\hat{\mathscr I}^{\rm s}\colon(\tilde t,\tilde x,\tilde y,\tilde w,\tilde v^1,\tilde v^2)=(t,-x,-y,-w,-v^1,-v^2).
\end{gather*}
\end{corollary}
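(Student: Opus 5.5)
The statement is Corollary~\ref{cor:dNSystemDiscrSyms}. I would derive it from Theorem~\ref{thm:dNSystemPointSymPseudogroup}, following the pattern of Corollary~\ref{cor:dNDiscrSyms}. The first step is to identify the identity component $G^{\rm id}_{\ref{eq:dNSystem}}$.

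Every element of $G_{\ref{eq:dNSystem}}$ is either a transformation of the form~\eqref{eq:dNSystemPointSymForm} or its composition with $\hat{\mathscr J}$. Within the family~\eqref{eq:dNSystemPointSymForm}, the parameters $X^0$ and $Y^0$ can be contracted to zero continuously. So can the non-sign part of $T$ and $C$, since the conditions $T_t\ne0$ and $C\ne0$ leave exactly two connected components each, distinguished by $\operatorname{sign}T_t$ and $\operatorname{sign}C$.

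The transformations with $T_t>0$ and $C>0$ are generated by the flows of $\hat D^t(\tau)$, $\hat D^{\rm s}$, $\hat P^x(\chi)$ and $\hat P^y(\rho)$. This follows by comparing them with the one-parameter pseudogroups of these vector fields: $\hat D^{\rm s}$ gives $C={\rm e}^{\varepsilon}$, and $\hat D^t(\tau)$ yields arbitrary orientation-preserving reparameterizations $T$. Hence this subfamily is $G^{\rm id}_{\ref{eq:dNSystem}}$.

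Next I would show that the sign combinations cannot be joined continuously. The sign of $T_t$ and the sign of $C$ are locally constant invariants on the family. The component containing $\hat{\mathscr J}$ is separate as well, because $\hat{\mathscr J}$ swaps the roles of $x$ and $y$, and no element of~\eqref{eq:dNSystemPointSymForm} does this. Therefore $G_{\ref{eq:dNSystem}}/G^{\rm id}_{\ref{eq:dNSystem}}$ has exactly eight classes.

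Representatives of these classes are obtained as follows:
\begin{itemize}
\item $T=-t$, $C=-1$, $X^0=Y^0=0$ gives $\tilde t=-t$ and $\tilde x=CT_t^{1/3}x=-(-1)^{1/3}x=x$. This is not quite $\hat{\mathscr I}^{\rm i}$, so I adjust.
\item $T=-t$ with $C=1$ gives $\tilde x=-x$ and $\tilde w=w/(T_t^{2/3})=w$. The factor $C/T_t^{2/3}=1$ leaves $v^i$ unchanged, so this is exactly $\hat{\mathscr I}^{\rm i}$.
\item $C=-1$ with $T=t$ gives $\hat{\mathscr I}^{\rm s}$.
\end{itemize}
Each of $\hat{\mathscr J}$, $\hat{\mathscr I}^{\rm i}$, $\hat{\mathscr I}^{\rm s}$ is a symmetry by Theorem~\ref{thm:dNSystemPointSymPseudogroup}, and each is an involution. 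Direct composition checks that they commute pairwise, since all three are linear coordinate sign changes or permutations compatible with one another. They generate $\mathbb Z_2\times\mathbb Z_2\times\mathbb Z_2$ and hit all eight classes, because their images are independent in (sign of $T_t$, sign of $C$, $x\leftrightarrow y$ swap).

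The step needing the most care is the connectedness claim. The issue is making precise that the family with $T_t>0$, $C>0$ is connected and generated by the flows of the listed vector fields, and that the sign of $T_t$ is a genuine invariant of pseudogroup elements locally.

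I would handle this as in~\cite{boyk2024a} for Corollary~\ref{cor:dNDiscrSyms}. The transformations~\eqref{eq:dNPointSymForm} and~\eqref{eq:dNSystemPointSymForm} are related by the mapping induced by $(w,v^1,v^2)=(u_{xy},u_{xx},u_{yy})$, namely the pseudogroup analogue of $\mathcal M_*$. Consequently the component structure transfers directly from that result, with $\mathscr J$, $\mathscr I^{\rm i}$, $\mathscr I^{\rm s}$ mapping to $\hat{\mathscr J}$, $\hat{\mathscr I}^{\rm i}$, $\hat{\mathscr I}^{\rm s}$. The kernel $\ker\mathcal M_*$ consists of the parameters $W^0$, $W^1$, $W^2$, which lie in the identity component, so no discrete symmetries are lost or gained in the passage.
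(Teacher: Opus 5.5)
Your argument is correct and matches how the paper obtains this corollary: it is read off from the parameterization in Theorem~\ref{thm:dNSystemPointSymPseudogroup}, with identity component $T_t>0$, $C>0$, and the three involutions represent the independent choices of the sign of $T_t$, the sign of $C$, and the swap $\hat{\mathscr J}$. Your discarded first bullet just produces the representative $\hat{\mathscr I}^{\rm i}\circ\hat{\mathscr I}^{\rm s}$ of the class with $T_t<0$, $C<0$, and the transfer from Corollary~\ref{cor:dNDiscrSyms} via $(w,v^1,v^2)=(u_{xy},u_{xx},u_{yy})$ (kernel $W^0,W^1,W^2$ in the identity component) is a valid but unnecessary consistency check.
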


\begin{remark}\label{cor:dNSystemComplex}
According to Remark~\ref{rem:OnRealAndComplexCasesForPointSyms}, the point-symmetry pseudogroup~$G^{\mathbb C}_{\ref{eq:dNSystem}}$ of the complex dispersionless Nyzhnyk system~\eqref{eq:dNSystem}$^{\mathbb C}$ can be obtained from Theorem~\ref{thm:dNSystemPointSymPseudogroup} by performing the complexification and reparameterizing the transformations~\eqref{eq:dNSystemPointSymForm}, which consequently take the form
%$T_t^{1/3}=H(t)$, $T_t=H(t)^3$, $T_{tt}=3H_tH^2$
\begin{gather*}\label{dNSystemComplex}
\begin{split}
&\tilde t=\int H^3{\rm d}t,\quad
 \tilde x=CHx+X^0,\quad
\tilde y=CHy+Y^0,\\&
\tilde w=\frac C{H^2}w,\quad
\tilde v^1=\frac C{H^2}v^1-\frac{CH_t}{H^3}x-\frac{X^0_t}{H^3},\quad
\tilde v^2=\frac C{H^2}v^2-\frac{CH_t}{H^3}y-\frac{Y^0_t}{H^3}.
\end{split}
\end{gather*}
Upon complexification of the system~\eqref{eq:dNSystem}, only the transformation~$\hat{\mathscr J}$ remains as its independent discrete point symmetry.
\end{remark}

Although the point-symmetry pseudogroup of the nonlinear Lax representation~\eqref{eq:LaxdNSystem} of the dispersionless Nyzhnyk system~\eqref{eq:dNSystem} was not considered in~\cite{boyk2024a}, it can be constructed by using elements of the proofs of Theorems~\ref{thm:dNCompletePointSymGroupOfLaxRepresentation} and~\ref{thm:dNSystemPointSymPseudogroup}, which were obtained as Theorems~11 and~14 in~\cite{boyk2024a}.
The megaideals of the pseudoalgebra~$\mathfrak g_{\ref{eq:LaxdNSystem}}$ have the form
\begin{gather*}
\mathfrak g_{\ref{eq:LaxdNSystem}}'=\big\langle\check D^t(\tau),\check P^x(\chi),\check P^y(\rho),\check P^\vartheta\big\rangle,\\
\check{\mathfrak m}_1:=\mathfrak g_{\ref{eq:LaxdNSystem}}''=\big\langle\check D^t(\tau),\check P^x(\chi),\check P^y(\rho)\big\rangle,\\
\check{\mathfrak m}_2:=\mathfrak r_{\ref{eq:LaxdNSystem}}=\big\langle\check D^{\rm s},\check P^x(\chi),\check P^y(\rho),\check P^\vartheta\big\rangle,\quad
\check{\mathfrak m}_2'=\mathfrak g_{\ref{eq:LaxdNSystem}}'\cap\check{\mathfrak m}_2=
\big\langle\check P^x(\chi),\check P^y(\rho),\check P^\vartheta\big\rangle, \\
\check{\mathfrak m}_3:=\check{\mathfrak m}_1\cap\check{\mathfrak m}_2'=
\big\langle\check P^x(\chi),\check P^y(\rho)\big\rangle, \quad
\check{\mathfrak m}_4:=\mathfrak z(\mathfrak g_{\ref{eq:LaxdNSystem}}')=\big\langle\check P^\vartheta\big\rangle.
\end{gather*}
The technique for finding the megaideal~$\check{\mathfrak m}_4$ is similar to the technique used for finding the megaideal~$\bar{\mathfrak m}_7$ of the pseudoalgebra~$\mathfrak g_{\ref{eq:dNLaxPair}}$, see \cite[Lemma~10]{boyk2024a}.

\begin{theorem}\label{thm:LaxdNSystemPointSymPseudogroup}
The point-symmetry pseudogroup~$G_{\ref{eq:LaxdNSystem}}$ of the (real) nonlinear Lax representation~\eqref{eq:LaxdNSystem} of the dispersionless Nyzhnyk system~\eqref{eq:dNSystem} is generated by the transformations of the form
\begin{gather}\label{eq:LaxdNSystemPointSymForm}
\begin{split}&
\tilde t=T,\quad
\tilde x=A^{2/3}T_t^{1/3}x+X^0,\quad
\tilde y=A^{2/3}T_t^{1/3}y+Y^0,\\&
\tilde w=\frac{A^{2/3}}{T_t^{2/3}}w,\quad
\tilde v^1=\frac{A^{2/3}}{T_t^{2/3}}v^1-\frac{A^{2/3}T_{tt}}{3T_t^{5/3}}x-\frac{X^0_t}{T_t},\quad
\tilde v^2=\frac{A^{2/3}}{T_t^{2/3}}v^2-\frac{A^{2/3}T_{tt}}{3T_t^{5/3}}y-\frac{Y^0_t}{T_t},\\&
\tilde\vartheta=A\vartheta+B
\end{split}
\end{gather}
and the transformation~$\check{\mathscr J}$: $\tilde t=t$, $\tilde x=y$, $\tilde y=x$, $\tilde w=w$, $\tilde v^1=v^2$, $\tilde v^2=v^1$, $\tilde\vartheta=\vartheta$.
%The parameter functions $T$, $X^0$ and $Y^0$ are arbitrary smooth functions of the variable~$t$ with $T_t\neq0$, and $A$ and $B$ are an arbitrary nonzero constants with $A\ne0$.
\end{theorem}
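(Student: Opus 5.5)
We will use the megaideal-based version of the algebraic method, combining elements of the proofs of Theorems~\ref{thm:dNCompletePointSymGroupOfLaxRepresentation} and~\ref{thm:dNSystemPointSymPseudogroup}. Let $\Phi$ be a point symmetry of the system~\eqref{eq:LaxdNSystem}, written in the space with coordinates $(t,x,y,w,v^1,v^2,\vartheta)$ as $\tilde t=T$, $\tilde x=X$, $\tilde y=Y$, $\tilde w=W$, $\tilde v^1=V^1$, $\tilde v^2=V^2$, $\tilde\vartheta=\Theta$, with all components functions of all variables. Its pushforward $\Phi_*$ is an automorphism of $\mathfrak g_{\ref{eq:LaxdNSystem}}$, so it preserves the megaideals $\check{\mathfrak m}_1,\dots,\check{\mathfrak m}_4$ listed above.

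We first impose $\Phi_*\check{\mathfrak m}_4\subseteq\check{\mathfrak m}_4$, that is, $\Phi_*\check P^\vartheta=a\check P^\vartheta$ with a nonzero constant $a$. This gives $T,X,Y,W,V^1,V^2$ independent of $\vartheta$ and $\Theta_\vartheta=a$. Next, $\Phi_*\check{\mathfrak m}_3\subseteq\check{\mathfrak m}_3$ applied to $\check P^x(\chi)$ and $\check P^y(\rho)$ for arbitrary $\chi,\rho$ yields $T=T(t)$. It also makes $(X,Y)$ affine in $(x,y)$ with $t$-dependent coefficients and forces $\Theta_x=\Theta_y=0$, since the images contain no $\p_\vartheta$ component. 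We then use $\check{\mathfrak m}_1$ with $\check D^t(1)$ and $\check D^t(t)$, together with the projection of $\Phi_*\check D^{\rm s}$ (or, as noted after Theorem~\ref{thm:dNSystemPointSymPseudogroup}, $\check D^t(t^2)$). Exactly as in the proof for the system~\eqref{eq:dNSystem}, this shows that the $(x,y)$-part is $CT_t^{1/3}(x,y)+(X^0,Y^0)$, up to the swap $x\leftrightarrow y$, and that $\Theta_t=0$, so $\Theta=a\vartheta+B$. The swap is realized by $\check{\mathscr J}$.

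At this point the algebraic condition leaves the components $W,V^1,V^2$ only partially determined, and it gives no relation between $C$ and $a$. As in Theorems~\ref{thm:dNCompletePointSymGroupOfLaxRepresentation} and~\ref{thm:dNSystemPointSymPseudogroup}, we therefore finish with the direct method. We substitute the reduced form of $\Phi$ into~\eqref{eq:LaxdNSystem}; the first equation $\vartheta_x\vartheta_y=-w$ is solved for~$w$. Transforming $\vartheta_x,\vartheta_y$ gives $\tilde\vartheta_{\tilde x}=a\vartheta_x/(CT_t^{1/3})$ and similarly for $\tilde\vartheta_{\tilde y}$, so $W=a^2w/(C^2T_t^{2/3})$.

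The second equation contains the term $\frac13(\vartheta_x^3+\vartheta_y^3)$. Comparing the coefficients of $\vartheta_x^3$ gives $a/T_t=a^3/(C^3T_t)$, so $C^3=a^2$, i.e., $C=A^{2/3}$ with $A:=a$. Then $W=A^{2/3}T_t^{-2/3}w$, which agrees with the formula in the statement. Splitting the remaining terms with respect to $\vartheta_x$ and $\vartheta_y$ gives $V^1$ and $V^2$ as in~\eqref{eq:LaxdNSystemPointSymForm}; the $T_{tt}x$ and $X^0_t$ terms arise from the transformation of $\vartheta_t$. Finally, we check that every transformation of the form~\eqref{eq:LaxdNSystemPointSymForm}, together with $\check{\mathscr J}$, is indeed a symmetry.

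The main obstacle is the megaideal $\check{\mathfrak m}_4=\langle\check P^\vartheta\rangle$. It is not obtained from derived or radical operations alone, and its proof needs the analogue of \cite[Lemma~10]{boyk2024a}. We would adapt that lemma by characterizing $\check P^\vartheta$, up to scaling, as the element of $\mathfrak z(\mathfrak g_{\ref{eq:LaxdNSystem}}')$ that, after the obvious restrictions, cannot be expressed through elements of $\check{\mathfrak m}_1$. A secondary subtlety is correctly reconciling the relation $C^3=A^2$ with the real cube roots.
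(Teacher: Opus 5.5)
Your overall route is the paper's: $\check{\mathfrak m}_4$ gives $\vartheta$-independence of the other components and $\Theta_\vartheta=\const$, the constraints for~\eqref{eq:dNSystem} are inherited through $\check\varpi_*\mathfrak g_{\ref{eq:LaxdNSystem}}=\mathfrak g_{\ref{eq:dNSystem}}$, the $\vartheta$-components of the $\check{\mathfrak m}_3$- and $\check{\mathfrak m}_1$-conditions give $\Theta=a\vartheta+B$, and the direct method supplies $C^3=a^2$.

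The step that fails as written is your claim that the algebraic conditions already force the $(x,y)$-part to be $CT_t^{1/3}(x,y)+(X^0,Y^0)$ up to the swap. They cannot do this. Take the family \eqref{eq:dNSystemPreliminaryFormOfPointSyms}--\eqref{eq:dNSystemPreliminaryFormOfPointSyms2} from the proof of Theorem~\ref{thm:dNSystemDefSubalgs}, with $X=T_t^{1/3}(A_1x+A_2y)+X^0$, $Y=T_t^{1/3}(B_1x+B_2y)+Y^0$, $A_1B_2-A_2B_1\ne0$, an independent constant scale of~$w$ and the terms $E_jw$ in $V^j$. Extended by $\Theta=a\vartheta+B$, every such transformation maps $\mathfrak g_{\ref{eq:LaxdNSystem}}$ onto itself, and hence preserves every megaideal. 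Whatever singles out the diagonal form for~\eqref{eq:dNSystem} comes from that system's equations, not from $\mathfrak g_{\ref{eq:dNSystem}}$, so it does not pass through $\check\varpi_*$. Nor is it automatic that the projection of $\check\Phi$ is a point symmetry of~\eqref{eq:dNSystem}, since a single solution of~\eqref{eq:LaxdNSystem} need not project to a solution of~\eqref{eq:dNSystem}. Because your direct computation starts from $\tilde\vartheta_{\tilde x}=a\vartheta_x/(CT_t^{1/3})$, it never excludes mixing of $x$ and $y$ or unequal scalings of $x$ and~$y$.

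The repair lies inside the direct method. Write $(\tilde\vartheta_{\tilde x},\tilde\vartheta_{\tilde y})^{\rm T}=aT_t^{-1/3}N(\vartheta_x,\vartheta_y)^{\rm T}$, where $N=(n_{ij})$ is the inverse transpose of the constant matrix $\bigl(\begin{smallmatrix}A_1&A_2\\B_1&B_2\end{smallmatrix}\bigr)$. After substituting $\vartheta_y=-w/\vartheta_x$, the first equation of~\eqref{eq:LaxdNSystem} becomes
\[
\frac{a^2}{T_t^{2/3}}\Bigl(n_{11}n_{21}\vartheta_x^{\,2}-(n_{11}n_{22}+n_{12}n_{21})w+n_{12}n_{22}\frac{w^2}{\vartheta_x^{\,2}}\Bigr)=-W.
\]
Splitting in $\vartheta_x$ gives $n_{11}n_{21}=n_{12}n_{22}=0$, so $N$ is diagonal or antidiagonal; the antidiagonal case reduces to the diagonal one by composing with~$\check{\mathscr J}$. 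For $N={\rm diag}(1/\lambda_1,1/\lambda_2)$, the coefficients of $\vartheta_x^{\,3}$ and $\vartheta_x^{-3}$ in the second equation give $\lambda_1^3=\lambda_2^3=a^2$, hence $\lambda_1=\lambda_2=a^{2/3}$ over~$\mathbb R$. The coefficients of $\vartheta_x^{\pm1}$ then determine $V^1$ and $V^2$ and kill the $E_jw$ terms. With this addition your argument is complete.

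Two smaller remarks:
\begin{itemize}
\item To conclude $\Theta=a\vartheta+B$ you also need $\Theta_{v^1}=\Theta_{v^2}=\Theta_w=0$. These come from the $\vartheta$-components of $\check\Phi_*\check P^z(\chi)$ and $\check\Phi_*\check D^t(t)$.
\item $\check{\mathfrak m}_4$ is not an obstacle: it is simply $\mathfrak z(\mathfrak g_{\ref{eq:LaxdNSystem}}')$. Unlike the case of $\mathfrak g_{\ref{eq:dNLaxPair}}$, there is no $Z(1)$ in this center to separate from $\check P^\vartheta$.
\end{itemize}
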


\begin{proof}
We apply the original megaideal-based version of the algebraic method from~\cite{malt2024a} analogously to the proofs of Theorems~11 and~14 from~\cite{boyk2024a}.
Let a point transformation
\[
\check\Phi\colon\ (\tilde t,\tilde x,\tilde y,\tilde w,\tilde v^1,\tilde v^2,\tilde\vartheta)=(T,X,Y,W,V^1,V^2,\Theta)
\]
in the extended space with coordinates $(t,x,y,w,v^1,v^2,\vartheta)$ be a point symmetry of the system~\eqref{eq:LaxdNSystem}.
Here, $(T,X,Y,W,V^1,V^2,\Theta)$ is a tuple of smooth functions of $(t,x,y,w,v^1,v^2,\vartheta)$ with a nonvanishing Jacobian.
From the condition $\check\Phi_*\check P^\vartheta\subseteq\check{\mathfrak m}_4$, it follows that $T_\vartheta=X_\vartheta=Y_\vartheta=W_\vartheta=V^1_\vartheta=V^2_\vartheta=0$ and $\Theta_\vartheta=\const$.
Since $\check\varpi_*\mathfrak g_{\ref{eq:LaxdNSystem}}=\mathfrak g_{\ref{eq:dNSystem}}$, where $\check\varpi$ is the natural projection from \smash{$\mathbb R^3_{t,x,y}\times\mathbb R^4_{\smash{w,v^1,v^2,\vartheta}}$} onto \smash{$\mathbb R^3_{t,x,y}\times\mathbb R^3_{\smash{w,v^1,v^2}}$}, the independence of the $t$-, $x$-, $y$-, $w$-, $v^1$- and $v^2$-components of the transformation~$\check\Phi$ from $\vartheta$ implies that they also satisfy the constraints from the proof of Theorem~14 in~\cite{boyk2024a} for the components of the transformation~$\Phi$.
It is obvious that the transformation~$\check{\mathscr J}$ is a point symmetry of the system~\eqref{eq:LaxdNSystem}.
Collecting the $\vartheta$-components in the extended conditions $\check\Phi_*\check P^z(\zeta)\in\check{\mathfrak m}_3$, $\check\Phi_*\check D^t(\zeta)\in\check{\mathfrak m}_1\setminus\check{\mathfrak m}_3$, $z\in\{x,y\}$, $\zeta\in\{1,t\}$, we obtain the equations $\Theta_t=\Theta_x=\Theta_y=\Theta_w=\Theta_{v^1}=\Theta_{v^2}=0$.
Thus, $\Theta=A\vartheta+B$, where $A$ and $B$ are arbitrary constants with $A\ne0$.

Every point transformation~$\Phi$ whose components have the form specified above satisfies the condition $\check\Phi_*\mathfrak g_{\ref{eq:LaxdNSystem}}\subseteq\mathfrak g_{\ref{eq:LaxdNSystem}}$.
This means that this form cannot be restricted any further solely within the framework of the algebraic method.
Therefore, analogously to finding the pseudogroups~$G_{\ref{eq:dNLaxPair}}$ and~$G_{\ref{eq:dNSystem}}$, the direct method should be used to complete the construction of the pseudogroup~$G_{\ref{eq:LaxdNSystem}}$.

We derive expressions for the derivatives of $(\tilde w,\tilde v^1,\tilde v^2,\tilde\vartheta)$ with respect to $(\tilde t,\tilde x,\tilde y)$ up to the second order in terms of the untilded variables and derivatives, successively substitute the resulting expressions and the expressions for the derivatives~$\vartheta_t$ and~$\vartheta_y$ by virtue of the system~\eqref{eq:LaxdNSystem} into the same system written in terms of the tilded variables and split the resulting equations with respect to the other (parametric) derivatives of the dependent variables up to order two.
The resulting system of equations for the parameters of the point-symmetry transformations of the system~\eqref{eq:LaxdNSystem} reduces to a single equation $C^3=A^2$, which implies $C=A^{2/3}>0$.
\end{proof}

\begin{corollary}\label{cor:LaxdNSystemDiscrSyms}
A complete list of independent discrete point-symmetry transformations of the system~\eqref{eq:LaxdNSystem} is exhausted by three commuting involutions, which can be chosen as
\begin{gather*}
\check{\mathscr J}\colon(\tilde t,\tilde x,\tilde y,\tilde w,\tilde v^1,\tilde v^2,\vartheta)=(t,y,x,w,v^2,v^1,\vartheta),\\
\check{\mathscr I}^{\rm i}\colon(\tilde t,\tilde x,\tilde y,\tilde w,\tilde v^1,\tilde v^2,\vartheta)=(-t,-x,-y,w,v^1,v^2,\vartheta),\\
\check{\mathscr I}^\vartheta\colon(\tilde t,\tilde x,\tilde y,\tilde w,\tilde v^1,\tilde v^2,\vartheta)=(t,x,y,w,v^1,v^2,-\vartheta).
\end{gather*}
\end{corollary}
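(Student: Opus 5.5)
The plan is to read the corollary off Theorem~\ref{thm:LaxdNSystemPointSymPseudogroup} by describing the connected components of the pseudogroup~$G_{\ref{eq:LaxdNSystem}}$. In the family~\eqref{eq:LaxdNSystemPointSymForm}, the only parameters that cannot be deformed continuously to the identity are the signs of~$T_t$ and of~$A$. The factor $A^{2/3}$ is positive for every $A\ne0$, and the functions $X^0$, $Y^0$ and the constant~$B$ can be continuously deformed to zero. I would therefore first show that the transformations~\eqref{eq:LaxdNSystemPointSymForm} with $T_t>0$ and $A>0$ constitute the identity component~$G^{\rm id}_{\ref{eq:LaxdNSystem}}$, and that these are exactly the transformations generated by flows of the vector fields from $\mathfrak g_{\ref{eq:LaxdNSystem}}$. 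This is done as for Corollary~\ref{cor:dNDiscrSyms}, by integrating $\check D^t(\tau)$, $\check D^{\rm s}$, $\check P^x(\chi)$, $\check P^y(\rho)$ and $\check P^\vartheta$ and composing the resulting flows.

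Next I will identify the three involutions as members of the family.
\begin{itemize}
\item $\check{\mathscr I}^{\rm i}$ is the transformation~\eqref{eq:LaxdNSystemPointSymForm} with $T=-t$, $A=1$, $X^0=Y^0=B=0$. Here $T_t^{1/3}=-1$ for the real cube root and $T_t^{2/3}=1$, which gives $(\tilde x,\tilde y)=(-x,-y)$ and leaves $w$, $v^1$, $v^2$, $\vartheta$ unchanged.
\item $\check{\mathscr I}^\vartheta$ is the element with $T=t$, $A=-1$, $X^0=Y^0=B=0$. Since $A^{2/3}=1$, only $\vartheta$ changes sign.
\item $\check{\mathscr J}$ is the extra generator listed in the theorem.
\end{itemize}
A direct composition check shows that these three maps are involutions and commute pairwise.

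Then I will prove the factorization. Let $\check\Phi$ be an arbitrary element of the form~\eqref{eq:LaxdNSystemPointSymForm}. Composing it with $\check{\mathscr I}^{\rm i}$ when $T_t<0$, and with $\check{\mathscr I}^\vartheta$ when $A<0$, yields an element with $T_t>0$ and $A>0$, that is, an element of $G^{\rm id}_{\ref{eq:LaxdNSystem}}$. One should verify that such a composition stays in the family~\eqref{eq:LaxdNSystemPointSymForm}. This is immediate because the family is closed under composition, as seen from the theorem. Conjugation by~$\check{\mathscr J}$ preserves the family, since it merely swaps $X^0$ with $Y^0$. Consequently every element of $G_{\ref{eq:LaxdNSystem}}$ equals an element of $G^{\rm id}_{\ref{eq:LaxdNSystem}}$ composed with a word in $\check{\mathscr J}$, $\check{\mathscr I}^{\rm i}$ and $\check{\mathscr I}^\vartheta$.

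Finally, independence. The map $\check\Phi\mapsto(\operatorname{sgn}T_t,\operatorname{sgn}A,\epsilon_{\mathscr J})$ is a homomorphism onto $\mathbb Z_2^3$ whose kernel is $G^{\rm id}_{\ref{eq:LaxdNSystem}}$. Here $\epsilon_{\mathscr J}$ records whether the transformation maps $\p_x$ into $\langle\p_{\tilde x}\rangle$ or into $\langle\p_{\tilde y}\rangle$ modulo other terms; this is a discrete invariant because the family~\eqref{eq:LaxdNSystemPointSymForm} never interchanges $x$ and $y$. The three chosen involutions map to the three standard generators of $\mathbb Z_2^3$, so none of them is a composition of the others with continuous symmetries.

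I expect the only delicate point to be this well-definedness. One has to check that $\operatorname{sgn}T_t$ and $\operatorname{sgn}A$ are multiplicative under composition, and that $A$ is recoverable from the $\vartheta$-component alone as $\Theta_\vartheta$. I also want to note why the symmetry $\hat{\mathscr I}^{\rm s}$ of the system~\eqref{eq:dNSystem} has no analogue in the present list: its analogue would require $C=-1$, which is excluded by the relation $C=A^{2/3}>0$ obtained in the proof of Theorem~\ref{thm:LaxdNSystemPointSymPseudogroup}.
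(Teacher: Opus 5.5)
Your proposal is correct and follows the same route the paper takes implicitly. The corollary is stated without a separate proof, as a direct reading of Theorem~\ref{thm:LaxdNSystemPointSymPseudogroup} in analogy with Corollary~\ref{cor:dNDiscrSyms}: the transformations~\eqref{eq:LaxdNSystemPointSymForm} with $T_t>0$ and $A>0$ form the identity component, the quotient $\mathbb Z_2\times\mathbb Z_2\times\mathbb Z_2$ is detected by $\operatorname{sgn}T_t$, $\operatorname{sgn}A=\operatorname{sgn}\Theta_\vartheta$ and the $x\leftrightarrow y$ swap, and $\check{\mathscr I}^\vartheta$ replaces $\hat{\mathscr I}^{\rm s}$ because $C=A^{2/3}>0$.

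The only step worth tightening is the closure of the family~\eqref{eq:LaxdNSystemPointSymForm} under composition. This does not follow from the statement of the theorem itself. It follows from the multiplicativity of the real cube roots $T_t^{1/3}$ and $A^{2/3}$. Alternatively, it follows from the theorem's proof, since any composition of point symmetries that does not interchange $x$ and $y$ must lie in this family.
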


\begin{remark}\label{cor:LaxdNSystemComplex}
To obtain a correct description of the point-symmetry pseudogroup~$G^{\mathbb C}_{\ref{eq:LaxdNSystem}}$ of the complex nonlinear Lax representation~\eqref{eq:LaxdNSystem}$^{\mathbb C}$, along with its complexification, one needs to reparameterize the transformations~\eqref{eq:LaxdNSystemPointSymForm} according to Remark~\ref{rem:OnRealAndComplexCasesForPointSyms}:
%$T_t^{1/3}=H(t)$, $T_t=H(t)^3$, $T_{tt}=3H_tH^2$
\begin{gather*}\label{LaxdNSystemComplex}
\begin{split}&
\tilde t=\int H^3{\rm d}t,\quad
\tilde x=\breve A^2Hx+X^0,\quad
\tilde y=\breve A^2Hy+Y^0,\\&
\tilde w=\frac{\breve A^2}{H^2}w,\quad
\tilde v^1=\frac{\breve A^2}{H^2}v^1-\frac{\breve A^2H_t}{H^3}x-\frac{X^0_t}{H^3},\quad
\tilde v^2=\frac{\breve A^2}{H^2}v^2-\frac{\breve A^2H_t}{H^3}y-\frac{Y^0_t}{H^3},\\[1ex]&
\tilde\vartheta=A\vartheta+B.
\end{split}
\end{gather*}
Analogously to the system~\eqref{eq:dNSystem}, the complexification of the system~\eqref{eq:LaxdNSystem} preserves only the transformation~$\check{\mathscr J}$ as its independent discrete point symmetry.
\end{remark}

\subsection{Dispersive symmetric case}\label{sec:DispersiveSymCasePointSyPseudogroups}

Since the pseudoalgebra~$\mathfrak g_{\ref{eq:SymPotNizhEq}'}$ is a codimension-one pseudosubalgebra of the pseudoalgebra~$\mathfrak g_{\ref{eq:dN}}$ spanned by the same vector fields as~$\mathfrak g_{\ref{eq:dN}}$ except for~$D^{\rm s}$, one can assume the existence of a similar connection between the pseudogroups~$G_{\ref{eq:SymPotNizhEq}'}$ and~$G_{\ref{eq:dN}}$, as well as between their contact counterparts. Such a connection indeed exists.
Namely, the following statement holds.

\begin{theorem}\label{thm:SymPotNizhEqCompletePointSymGroup}
(i) The point-symmetry pseudogroup~$G_{\ref{eq:SymPotNizhEq}'}$ of the real Nyzhnyk equation~{\rm(\ref{eq:SymPotNizhEq}$'$)} is a pseudosubgroup of the pseudogroup~$G_{\ref{eq:dN}}$ singled out by the condition $C=1$.
That is, it is generated by the transformations of the form~\eqref{eq:dNPointSymForm} with $C=1$ along with the transformation~$\mathscr J$: $\tilde t=t$, $\tilde x=y$, $\tilde y=x$, $\tilde u=u$.
\noprint{
\begin{gather*}\label{eq:SymPotNizhEqPointSymForm}
\begin{split}
&\tilde t=T(t),\quad
\tilde x=T_t^{1/3}x+X^0(t),\quad
\tilde y=T_t^{1/3}y+Y^0(t),\\
&\tilde u=u-\frac{T_{tt}}{18T_t}(x^3+y^3)
-\frac{1}{2T_t^{1/3}}(X^0_tx^2+Y^0_ty^2)+W^1(t)x+W^2(t)y+W^0(t),
\end{split}
\end{gather*}
}
%All parameter functions $T$, $X^0$, $Y^0$, $W^0$, $W^1$ and $W^2$ are arbitrary smooth functions of the variable~$t$ with $T_t\neq0$.

(ii) The contact-symmetry pseudogroup~$G^{\rm c}_{\ref{eq:SymPotNizhEq}'}$ of the real Nyzhnyk equation~{\rm(\ref{eq:SymPotNizhEq}$'$)} is a pseudosubgroup of the pseudogroup~$G^{\rm c}_{\ref{eq:dN}}$ singled out by the condition $C=1$ and it coincides with the first prolongation of the pseudogroup~$G_{\ref{eq:SymPotNizhEq}'}$.
\end{theorem}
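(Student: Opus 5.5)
The approach mirrors the proof of Theorem~\ref{thm:dNCompletePointSymGroup} in~\cite{boyk2024a}: we apply the megaideal-based version of the algebraic method to~$\mathfrak g_{\ref{eq:SymPotNizhEq}'}$ and finish with a short direct computation.

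First, we list the megaideals of~$\mathfrak g_{\ref{eq:SymPotNizhEq}'}$ that replace $\mathfrak m_1,\dots,\mathfrak m_6$. Since $D^{\rm s}$ is now absent, the whole pseudoalgebra is perfect, so $\mathfrak g_{\ref{eq:SymPotNizhEq}'}'=\mathfrak g_{\ref{eq:SymPotNizhEq}'}$ plays the role of~$\mathfrak m_1$. The radical is $\mathfrak r_{\ref{eq:SymPotNizhEq}'}=\langle P^x(\chi),P^y(\rho),R^x(\alpha),R^y(\beta),Z(\sigma)\rangle=\mathfrak m_3$; we expect to prove this as in \cite[Lemma~1]{boyk2024a}. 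The remaining megaideals are then
\[
\mathfrak m_4=\mathfrak m_3'=\langle R^x(\alpha),R^y(\beta),Z(\sigma)\rangle,\qquad
\mathfrak m_5=\mathfrak z(\mathfrak m_3)=\{Z(\sigma)\},\qquad
\mathfrak m_6=\mathfrak z(\mathfrak g_{\ref{eq:SymPotNizhEq}'})=\langle Z(1)\rangle .
\]
The last one should be checked directly from the commutation relations.

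Next, let $\Phi$ be a point symmetry. We impose the conditions
\[
\Phi_*Z(1)\in\mathfrak m_6,\quad \Phi_*Z(\sigma)\in\mathfrak m_5,\quad \Phi_*R^z(\zeta)\in\mathfrak m_4,\quad \Phi_*P^z(\zeta)\in\mathfrak m_3,\quad \Phi_*D^t(\zeta)\in\mathfrak g_{\ref{eq:SymPotNizhEq}'},
\]
for a few simple values $\zeta\in\{1,t,t^2\}$ and $z\in\{x,y\}$. These are exactly the conditions used in~\cite{boyk2024a}, except that the conditions involving~$D^{\rm s}$ are dropped. The analysis in~\cite{boyk2024a} never needed $D^{\rm s}$ as the image of anything essential, because $\mathfrak m_1$, $\mathfrak m_3,\dots,\mathfrak m_6$ are the same sets apart from $D^{\rm s}$ being absent from $\mathfrak m_2$. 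Hence the same splitting should give the form~\eqref{eq:dNPointSymForm}, possibly composed with~$\mathscr J$, where the $u$-component has a coefficient $C^3$ (or some constant~$K$) in front of~$u$. One subtlety needs checking. In~\cite{boyk2024a} the condition on $\Phi_*D^t(t)$ may have had a $D^{\rm s}$-component, and here that component must vanish. This may already fix the relation between the $x$-scaling and the $u$-scaling, or it may leave an independent constant.

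The last step, which we expect to be the key one, is to substitute this form into~(\ref{eq:SymPotNizhEq}$'$), that is, to use the direct method. Transformations of the form~\eqref{eq:dNPointSymForm} already preserve the dispersionless part, because they belong to~$G_{\ref{eq:dN}}$. We therefore only need to compare the dispersive terms $u_{xxxxy}+u_{xyyyy}$ with $u_{txy}$. Under $\tilde x=CT_t^{1/3}x+X^0$ and the same scaling in~$y$, the terms $u_{txy}$ and $u_{xxxxy}$ acquire the factors $C^{-2}T_t^{-5/3}\cdot C^3$ and $C^{-5}T_t^{-5/3}\cdot C^3$, respectively. Equality of these factors gives $C^3=1$, so over the reals $C=1$. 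The same factor comparison excludes an independent $u$-scaling, if one remained.

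This yields part~(i): $\mathscr J$ is obviously a symmetry, and all transformations~\eqref{eq:dNPointSymForm} with $C=1$ are verified to be symmetries.

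For part~(ii), we follow the contact analogue in~\cite{boyk2024a}. The maximal contact invariance pseudoalgebra is the first prolongation of~$\mathfrak g_{\ref{eq:SymPotNizhEq}'}$, so the first prolonged megaideals are preserved by any contact symmetry. Invariance of the prolongation of~$\langle Z(1)\rangle$ and of~$\{Z(\sigma)\}$ forces the contact transformation to be a prolonged point transformation, exactly as for~\eqref{eq:dN}. Part~(ii) then follows from part~(i).
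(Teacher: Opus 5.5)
Your part~(i) follows the paper's proof. You use the same megaideals ($\mathfrak m_1=\mathfrak g_{\ref{eq:SymPotNizhEq}'}$, $\mathfrak m_3=\mathfrak r_{\ref{eq:SymPotNizhEq}'}$, $\mathfrak m_4$, $\mathfrak m_5$, $\mathfrak m_6$), the same pushforward conditions with $D^{\rm s}$ dropped, and a final direct-method step giving $C=1$.

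The subtlety you hedge about does not arise. In~\cite{boyk2024a} the condition on $\Phi_*D^t(\zeta)$ was already membership in $\mathfrak m_1=\mathfrak g_{\ref{eq:dN}}'$, which contains no $D^{\rm s}$. Moreover, the $\mathfrak m_3$-conditions for $P^z(1)$ and $P^z(t)$ already force the coefficient of~$u$ to be $C^3$, so the algebraic part returns exactly~\eqref{eq:dNPointSymForm} with free~$C$.

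Your factor comparison is a correct and tidy way to do the paper's ``substitute and split'' step. The coefficient is $CT_t^{-5/3}$ for $u_{txy}$ and $C^{-2}T_t^{-5/3}$ for the dispersive terms. The dispersionless part carries the same factor as $u_{txy}$, because~\eqref{eq:dN} is $G_{\ref{eq:dN}}$-invariant and $u_{txy}$ is principal.

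Part~(ii), however, has a genuine gap. The first prolongation of $Z(\sigma)$ is $\sigma\p_u+\sigma_t\p_{u_t}$. So the conditions from $\langle Z(1)\rangle$ and $\{Z(\sigma)\}$ only give $Z^\mu_u=0$ and $Z^\mu_{u_t}=0$; they say nothing about how $Z^\mu$ depends on $u_x$ and $u_y$.

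A concrete counterexample is the partial Legendre transformation $\tilde t=t$, $\tilde x=u_x$, $\tilde y=y$, $\tilde u=u-xu_x$, $\tilde u_{\tilde t}=u_t$, $\tilde u_{\tilde x}=-x$, $\tilde u_{\tilde y}=u_y$. It is a contact transformation. It maps $\p_u$ to $\p_{\tilde u}$ and $\sigma\p_u+\sigma_t\p_{u_t}$ to $\sigma(\tilde t)\p_{\tilde u}+\sigma_{\tilde t}\p_{\tilde u_{\tilde t}}$. Hence it satisfies all the conditions you invoke, yet it is not a prolonged point transformation.

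The missing ingredient is the megaideal $\mathfrak m_4$. As the paper does, you must also impose $\Psi_*R^z(1)_{(1)}\in\mathfrak m_{4(1)}$ for $z\in\{x,y\}$.
\begin{itemize}
\item We have $R^x(1)_{(1)}=x\p_u+\p_{u_x}$, and elements of $\mathfrak m_{4(1)}$ have vanishing $\tilde t$-, $\tilde x$- and $\tilde y$-components.
\item Together with $Z^\mu_u=0$, this gives $Z^\mu_{u_x}=0$, and likewise $Z^\mu_{u_y}=0$.
\item The contact condition $Z^\mu_{u_\nu}U^\mu=U_{u_\nu}$ then yields $U_{u_\nu}=0$.
\end{itemize}
Only at this point is $\Psi$ the first prolongation of a point transformation, and part~(i) applies.
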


\begin{proof}
The algebraic part of the proof is an adaptation for the pseudoalgebra~$\mathfrak g_{\ref{eq:SymPotNizhEq}'}$ as a pseudosubalgebra of the pseudoalgebra~$\mathfrak g_{\ref{eq:dN}}$ of the second version of the proof of Theorem~2 from~\cite{boyk2024a}, which is Theorem~\ref{thm:dNCompletePointSymGroup} in the present paper.
We apply the same original megaideal-based version of the algebraic method from~\cite{malt2024a}.

(i) The megaideals~$\mathfrak m_1$, $\mathfrak m_3$, \dots, $\mathfrak m_6$ of the pseudoalgebra~$\mathfrak g_{\ref{eq:dN}}$ are also megaideals of the pseudoalgebra~$\mathfrak g_{\ref{eq:SymPotNizhEq}'}$.
In terms of the pseudoalgebra~$\mathfrak g_{\ref{eq:SymPotNizhEq}'}$, these megaideals can be represented as
$\mathfrak m_1=\mathfrak g_{\ref{eq:SymPotNizhEq}'}$,
$\mathfrak m_3=\mathfrak r_{\ref{eq:SymPotNizhEq}'}$,
$\mathfrak m_4=\mathfrak m_3'$,
$\mathfrak m_5=(\mathfrak m_3)^3=\mathfrak z(\mathfrak m_3)$ and
$\mathfrak m_6=\mathfrak z(\mathfrak m_1)$.
The proof that the radical~$\mathfrak r_{\ref{eq:SymPotNizhEq}'}$ of the pseudoalgebra~$\mathfrak g_{\ref{eq:SymPotNizhEq}'}$ coincides with~$\mathfrak m_3$ is analogous to the proofs of Lemma~1 in~\cite{malt2024a} and Lemma~1 in~\cite{boyk2024a}.
Note also that, unlike the case of~$\mathfrak g_{\ref{eq:dN}}$, the megaideal~$\mathfrak m_1$ is an improper megaideal of the pseudoalgebra~$\mathfrak g_{\ref{eq:SymPotNizhEq}'}$ and this entire pseudoalgebra is not the sum of its proper megaideals, and therefore it is essential as a megaideal of itself.

The application of the above method is based on the necessary condition regarding the adjoint actions of point symmetries of a system of differential equations on the corresponding maximal Lie invariance (pseudo)algebra. More specifically, if a point transformation~$\Phi$ in the space with coordinates $(t,x,y,u)$,
\[
\Phi\colon\ (\tilde t,\tilde x,\tilde y,\tilde u)=(T,X,Y,U),
\]
where $(T,X,Y,U)$ is a tuple of smooth functions of $(t,x,y,u)$ with a nonvanishing Jacobian, is a~point symmetry of the equation~(\ref{eq:SymPotNizhEq}$'$), then $\Phi_*\mathfrak m_j\subseteq\mathfrak m_j$ for $j=1,3,4,5,6$.
Here and below, $\Phi_*$~denotes the pushforward of vector fields by the transformation~$\Phi$ and $z\in\{x,y\}$.
We evaluate this necessary condition on the following linearly independent elements of the pseudoalgebra~$\mathfrak g_{\ref{eq:dN}}$:
\begin{gather*}%\label{VectorFields}
\begin{split}&
Q^1:=Z(1),\quad Q^2:=Z(t),\quad Q^{3z}:=R^z(1),\\&
Q^{4z}:=P^z(1),\quad Q^{5z}:=P^z(t),\quad
Q^7:=D^t(1),\quad Q^8:=D^t(t),
\end{split}
\end{gather*}
that is, compared to the proof of Theorem~2 from~\cite{boyk2024a}, we exclude from the selection the Lie-symmetry vector field $Q^6:=D^{\rm s}$ belonging to $\mathfrak g_{\ref{eq:dN}}\setminus\mathfrak g_{\ref{eq:SymPotNizhEq}'}$.
Since
$Q^1\in\mathfrak m_6$,
$Q^2\in\mathfrak m_5$,
$Q^{3z}\in\mathfrak m_4$,
$Q^{4z},Q^{5z}\in\mathfrak m_3$ and
$Q^7,Q^8\in\mathfrak m_1$, we have
\begin{gather}\label{eq:NMainPushforwards}
\begin{split}&
\Phi_*Q^i=\tilde Z(\tilde\sigma^i),\quad i=1,2,
\\&
\Phi_*Q^{iz}=\tilde R^x(\tilde\alpha^{iz})+\tilde R^y(\tilde\beta^{iz})+\tilde Z(\tilde\sigma^{iz}),\quad i=3,
\\&
\Phi_*Q^{iz}=\tilde P^x(\tilde\chi^{iz})+\tilde P^y(\tilde\rho^{iz})
+\tilde R^x(\tilde\alpha^{iz})+\tilde R^y(\tilde\beta^{iz})+\tilde Z(\tilde\sigma^{iz}),\quad i=4,5,
\\&
\Phi_*Q^i=\tilde D^t(\tilde\tau^i)+\tilde P^x(\tilde\chi^i)+\tilde P^y(\tilde\rho^i)
+\tilde R^x(\tilde\alpha^i)+\tilde R^y(\tilde\beta^i)+\tilde Z(\tilde\sigma^i),\quad i=7,8.
\end{split}
\end{gather}
Here, $\tilde\sigma^1$ is a constant, while the other parameters are smooth functions of~$\tilde t$, with \mbox{$\tilde\sigma^1\tilde\sigma^2\ne0$}.
The tilde over a vector field denotes that it is represented in tilded variables.
Expanding the equations~\eqref{eq:NMainPushforwards} componentwise as equalities of vector fields, we derive a system of determining equations for the components of the transformation~$\Phi$, which we solve in the same manner as in~\cite{boyk2024a}.
As a~result, we find that, up to composition with the transformation~$\mathscr J$: $\tilde t=t$, $\tilde x=y$, $\tilde y=x$, $\tilde u=u$, the transformation~$\Phi$ has the form~\eqref{eq:dNPointSymForm}.

An essential difference from the proof of Theorem~2 in~\cite{boyk2024a} is that, unlike for the pseudogroup~$G_{\ref{eq:dN}}$, the utilized algebraic condition does not completely determine the pseudogroup~$G_{\ref{eq:SymPotNizhEq}'}$.
Therefore, to complete the computation here, one should employ the direct method.
It is based on the fact that, by the definition of point symmetries, the transformation~$\Phi$ maps the equation~{\rm(\ref{eq:SymPotNizhEq}$'$)} to an equation of the same form in tilded variables.
Thus, expressing the tilded derivatives in the transformed equation in terms of the initial variables using the chain rule, we obtain an equation satisfied by any solution of the equation~{\rm(\ref{eq:SymPotNizhEq}$'$)}.
In other words, substituting the expression for the derivative~$u_{txy}$ by virtue of the equation~{\rm(\ref{eq:SymPotNizhEq}$'$)} into the obtained equation results in a relation whose splitting with respect to the remaining parametric derivatives of the dependent variables yields the condition $C=1$.

(ii)
It suffices to prove that the pseudogroup~$G^{\rm c}_{\ref{eq:SymPotNizhEq}'}$ coincides with the first prolongation of the pseudogroup~$G_{\ref{eq:SymPotNizhEq}'}$, for which we again follow the proof of Theorem~2 from~\cite{boyk2024a}.
The first prolongations~$\mathfrak m_{1(1)}$, $\mathfrak m_{3(1)}$, \dots, $\mathfrak m_{6(1)}$ of the megaideals~$\mathfrak m_1$, $\mathfrak m_3$, \dots, $\mathfrak m_6$ of the pseudoalgebra~$\mathfrak g_{\ref{eq:SymPotNizhEq}'}$ are megaideals of its first prolongation~$\mathfrak g_{\ref{eq:SymPotNizhEq}'(1)}$. Consider a contact transformation
\begin{gather}%\label{eq:NGenContactTransA}
\Psi\colon\ (\tilde t,\tilde x,\tilde y,\tilde u,\tilde u_{\tilde t},\tilde u_{\tilde x},\tilde u_{\tilde y})
=(Z^t,Z^x,Z^y,U,U^t,U^x,U^y),
\end{gather}
meaning that the tuple on the right-hand side consists of smooth functions of $(t,x,y,u,u_t,u_x,u_y)$ with a nonvanishing Jacobian, which additionally satisfies the contact condition
\begin{gather}\label{eq:dNGenContactTransB}
(Z^\mu_\nu+Z^\mu_uu_\nu)U^\mu=U_\nu+U_uu_\nu,\quad
Z^\mu_{u_\nu}U^\mu=U_{u_\nu}.
\end{gather}
Here and below, the indices~$\mu$ and~$\nu$ run through the set $\{t,x,y\}$ and summation over repeated indices is implied.
If the transformation~$\Psi$ is a contact symmetry of the equation~(\ref{eq:SymPotNizhEq}$'$), then $\Psi_*\mathfrak m_{j(1)}\subseteq\mathfrak m_{j(1)}$ for $j=1,3,4,5,6$,
where $\Psi_*$ denotes the pushforward of contact vector fields by the transformation~$\Psi$.
From the prolonged conditions
\smash{$Q^1_{(1)}\in\mathfrak m_{6(1)}$, $Q^2_{(1)}\in\mathfrak m_{5(1)}$,
$Q^{3z}_{(1)}\in\mathfrak m_{4(1)}$,}
\smash{$Q^{4z}_{(1)},Q^{5z}_{(1)}\in\mathfrak m_{3(1)}$} and $Q^7_{(1)},Q^8_{(1)}\in\mathfrak m_{1(1)}$,
we obtain a contact analogue of the system of equations~\eqref{eq:NMainPushforwards}, to which we refer in a similar manner.
From the equation with $i=1$, we specifically derive the constraint $Z^\mu_u=0$.
Then, from the equations with $i=2$, $(i,z)=(3,x)$ and $(i,z)=(3,y)$, we obtain the constraints $Z^\mu_{u_t}=0$, $Z^\mu_{u_x}=0$ and $Z^\mu_{u_y}=0$, respectively. In view of the contact condition~\eqref{eq:dNGenContactTransB}, this also implies that $U_{u_\nu}=0$.
Thus, the contact transformation~$\Psi$ is the first prolongation of a point transformation in the space with coordinates $(t,x,y,u)$, which is necessarily a point symmetry of the equation~{\rm(\ref{eq:SymPotNizhEq}$'$)}.
\end{proof}

The megaideals of Lie invariance pseudoalgebras, these pseudoalgebras themselves (see Section~\ref{subsec:PseudoalgebraDispersionSymm}) and the point-symmetry pseudogroups of other symmetric dispersive models are related to the corresponding objects of their dispersionless counterparts in a manner similar to that of the (symmetric dispersive) Nyzhnyk equation~{\rm(\ref{eq:SymPotNizhEq}$'$)}.

In particular, the pseudoalgebra~$\mathfrak g_{\ref{eq:LaxSymPotNizhEq}'}$ can be embedded as a codimension-one pseudosubalgebra into the pseudoalgebra~$\mathfrak g_{\ref{eq:dNLaxPair}}$ by identifying $\bar P^v$ and~$\bar D^{\rm\psi}$.
It is spanned by the same vector fields as the pseudoalgebra~$\mathfrak g_{\ref{eq:dNLaxPair}}$ except for~$\bar D^{\rm s}$.
It is therefore natural that there also exists a connection between the pseudogroups~$G_{\ref{eq:LaxSymPotNizhEq}'}$ and~$G_{\ref{eq:dNLaxPair}}$.

\begin{theorem}\label{thm:LaxSymPotNizhEqCompletePointSymGroup}\sloppy
The point-symmetry pseudogroup~$G_{\ref{eq:LaxSymPotNizhEq}'}$
of the (linear) Lax representation~{\rm(\ref{eq:LaxSymPotNizhEq}$'$)} of the real Nyzhnyk equation~{\rm(\ref{eq:SymPotNizhEq}$'$)}
is generated by
the point transformations in the space ${\mathbb R^3_{t,x,y}\times\mathbb R^2_{u,\psi}}$,
where the $(t,x,y,u)$- and $\psi$-components are respectively of the form~\eqref{eq:LaxEqPointSymForm} with $A=1$
and of the form \smash{$\tilde\psi=\bar B\psi$} with an arbitrary nonvanishing constant~\smash{$\bar B$},
together with the transformation $\bar{\mathscr J}$: $\tilde t=t$, $\tilde x=y$, $\tilde y=x$, $\tilde u=u$, $\tilde\psi=\psi$.
\end{theorem}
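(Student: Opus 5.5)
We will follow the megaideal-based version of the algebraic method, combining the arguments in the proofs of Theorems~\ref{thm:SymPotNizhEqCompletePointSymGroup} and~\ref{thm:LaxdNSystemPointSymPseudogroup}. The pseudoalgebra~$\mathfrak g_{\ref{eq:LaxSymPotNizhEq}'}$ is identified with the codimension-one pseudosubalgebra of~$\mathfrak g_{\ref{eq:dNLaxPair}}$ obtained by dropping~$\bar D^{\rm s}$, with $\bar P^\vartheta$ replaced by $\bar D^\psi=\psi\p_\psi$. Its megaideals are then obtained from those of~$\mathfrak g_{\ref{eq:dNLaxPair}}$:
\[
\mathfrak g_{\ref{eq:LaxSymPotNizhEq}'}'=\langle\bar D^t(\tau),\bar P^x(\chi),\bar P^y(\rho),\bar R^x(\alpha),\bar R^y(\beta),\bar Z(\sigma)\rangle,
\]
together with $\bar{\mathfrak m}_3$, $\bar{\mathfrak m}_4$, $\bar{\mathfrak m}_5$ and $\bar{\mathfrak m}_6=\langle\bar Z(1)\rangle$, the center $\langle\bar Z(1),\bar D^\psi\rangle$ of the whole pseudoalgebra, and, as the analogue of~$\bar{\mathfrak m}_7$, the megaideal $\langle\bar D^\psi\rangle$. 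The last one is established by the same technique as in \cite[Lemma~10]{boyk2024a}, since $\bar D^\psi$ commutes with everything and is not a commutator. The radical is $\langle\bar P^x,\bar P^y,\bar R^x,\bar R^y,\bar Z,\bar D^\psi\rangle$, as in Lemma~1 of~\cite{boyk2024a}.

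Take a point symmetry $\bar\Phi\colon(\tilde t,\tilde x,\tilde y,\tilde u,\tilde\psi)=(T,X,Y,U,\Psi)$. The condition $\bar\Phi_*\bar D^\psi\in\langle\bar D^\psi\rangle$ gives $T,X,Y,U$ independent of~$\psi$ and $\psi\Psi_\psi=c\Psi$ for a constant~$c$. Since the projection of $\mathfrak g_{\ref{eq:LaxSymPotNizhEq}'}$ to $(t,x,y,u)$ is $\mathfrak g_{\ref{eq:SymPotNizhEq}'}$, the $(t,x,y,u)$-part satisfies all the constraints derived in the algebraic part of the proof of Theorem~\ref{thm:SymPotNizhEqCompletePointSymGroup}(i). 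Hence, up to composition with~$\bar{\mathscr J}$, it has the form~\eqref{eq:dNPointSymForm} with some $C\ne0$. Collecting the $\psi$-components in $\bar\Phi_*Q\in$ (appropriate megaideal) for $Q=\bar Z(1),\bar Z(t),\bar R^z(1),\bar P^z(1),\bar P^z(t),\bar D^t(1),\bar D^t(t)$, we get $\Psi_t=\Psi_x=\Psi_y=\Psi_u=0$. The possible $\bar D^\psi$ summands in these pushforwards do not interfere, because they would give linear-in-$\Psi$ terms with constant coefficients that are excluded by the membership conditions. So $\Psi=\Psi(\psi)$ with $\psi\Psi_\psi=c\Psi$, i.e., $\Psi=\bar B\psi^c$.

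The main obstacle is that the algebraic condition leaves $C$ and $c$ free, so the direct method is required. We substitute the transformation into the tilded system~(\ref{eq:LaxSymPotNizhEq}$'$), using the equations to express $\psi_{xy}$ and $\psi_t$, and split with respect to the parametric derivatives. In the first equation, the term $\bar B c(c-1)\psi^{c-2}\psi_x\psi_y$ must vanish, which forces $c\in\{0,1\}$. Nondegeneracy excludes $c=0$, so $c=1$. Comparing $\tilde\psi_{\tilde x\tilde y}=C^{-2}T_t^{-2/3}\bar B\psi_{xy}$ with $-\frac13\tilde u_{\tilde x\tilde y}\tilde\psi=-\frac13 C T_t^{-2/3}... $ requires working out the scaling precisely; here $\tilde u_{\tilde x\tilde y}=C^3u_{xy}/(C^2T_t^{2/3})$, which gives $C=1$. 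This is consistent with Theorem~\ref{thm:SymPotNizhEqCompletePointSymGroup}, since the Lax system implies~(\ref{eq:SymPotNizhEq}$'$). Equivalently, $C^3=A^2$ with $A=1$ as in the proof of Theorem~\ref{thm:LaxdNSystemPointSymPseudogroup}. The second equation should then hold identically, in view of the dispersionless analogue. It remains to check directly that $\bar{\mathscr J}$ and the transformations found are symmetries, which is straightforward.
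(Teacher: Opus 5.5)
Your proposal has a genuine gap at the step everything else rests on. You claim that $\langle\bar D^{\rm\psi}\rangle$ is a megaideal of $\mathfrak g_{\ref{eq:LaxSymPotNizhEq}'}$, and you use it to conclude that $T,X,Y,U$ are independent of~$\psi$. This claim is false. Being central and not a commutator holds equally for $\bar D^{\rm\psi}+b\bar Z(1)$. Moreover, the point transformation $\tilde u=u+b\ln|\psi|$ (all other variables unchanged) maps $\mathfrak g_{\ref{eq:LaxSymPotNizhEq}'}$ onto itself. It fixes every element of the derived pseudoalgebra, since these have no $\p_\psi$-component and no $u$-dependent coefficients, while it sends $\bar D^{\rm\psi}$ to $\bar D^{\rm\psi}+b\bar Z(1)$.

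The Lemma~10 technique for $\langle\bar P^\vartheta\rangle\subset\mathfrak g_{\ref{eq:dNLaxPair}}$ works only because $\bar D^{\rm s}$ acts on $\bar P^\vartheta$ and on $\bar Z(1)$ with different weights ($-\tfrac32$ versus $-3$). Here $\bar D^{\rm s}$ is absent, so that argument is unavailable. This is why the paper instead takes the center $\langle\bar Z(1),\bar D^{\rm\psi}\rangle$ as $\bar{\mathfrak m}_7$. That yields only $\psi U_\psi=a_{11}$ and $\psi\Psi_\psi=a_{12}\Psi$, i.e.\ $U=W(t,x,y,u)+a_{11}\ln|\psi|$ and $\Psi=\bar B|\psi|^{a_{12}}$. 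The constants $C$, $a_{11}$, $a_{12}$ are then left to the direct method, applied to both equations of the Lax representation.

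The failure is substantive, not merely a missing justification: the conclusion $U_\psi=0$ is false for actual point symmetries. Put $\vartheta=\ln|\psi|$ and $v=u+3\vartheta$. Then the system (\ref{eq:LaxSymPotNizhEq}$'$) becomes $v_{xy}=-3\vartheta_x\vartheta_y$ together with $\vartheta_t=\vartheta_{xxx}+\vartheta_{yyy}+\vartheta_x^3+\vartheta_y^3+v_{xx}\vartheta_x+v_{yy}\vartheta_y$. This is (\ref{eq:SymModNyzhnykSystem}) with $\epsilon=1$ after rescaling $v=-3u$, cf.\ Remark~\ref{rem:RelationOfModModelsToLaxRepresentations}. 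It is invariant under $\vartheta\mapsto-\vartheta$, $v\mapsto v$. Hence $(\tilde t,\tilde x,\tilde y,\tilde u,\tilde\psi)=(t,x,y,u+6\ln|\psi|,1/\psi)$ is a point symmetry, with $(C,a_{11},a_{12})=(1,6,-1)$.

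Correspondingly, once $a_{11}$ is restored in your direct-method step, the results change as follows:
\begin{itemize}
\item The first Lax equation (coefficients of $\psi_x\psi_y/\psi^2$ and $u_{xy}$) gives only $a_{11}=3a_{12}(a_{12}-1)$ and $C^3=a_{12}^2$, not $a_{12}\in\{0,1\}$.
\item The second equation (coefficients of $\psi_{xxx}$, $\psi_x\psi_{xx}$, $\psi_x^3$) gives $C=1$, $a_{11}+3a_{12}-3=0$ and $-a_{11}+a_{12}^2-3a_{12}+2=0$.
\item These have exactly two solutions, $(a_{11},a_{12})=(0,1)$ and $(6,-1)$.
\end{itemize}
So no argument can prove the statement exactly as formulated: the generating set must be supplemented by the involution above. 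The paper's own final step has the same defect. Its last two determining equations coincide with those above and admit $(6,-1)$. The first pair it displays, which forces $a_{11}=0$, is not what splitting the first Lax equation actually produces.
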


\begin{proof}
To begin with, we construct megaideals of the pseudoalgebra~$\mathfrak g_{\ref{eq:LaxSymPotNizhEq}'}$.
Since it can be embedded as a pseudosubalgebra into the pseudoalgebra~$\mathfrak g_{\ref{eq:dNLaxPair}}$,
we choose among the megaideals of the pseudoalgebra~$\mathfrak g_{\ref{eq:dNLaxPair}}$ those that are also megaideals of the pseudoalgebra~$\mathfrak g_{\ref{eq:LaxSymPotNizhEq}'}$.
Thus,
$\bar{\mathfrak m}_1=\mathfrak g_{\ref{eq:LaxSymPotNizhEq}'}'$,
$\bar{\mathfrak m}_2'=\mathfrak r_{\ref{eq:LaxSymPotNizhEq}'}$,
$\bar{\mathfrak m}_3=\bar{\mathfrak m}_1\cap\mathfrak r_{\ref{eq:LaxSymPotNizhEq}'}$,
$\bar{\mathfrak m}_4=\mathfrak r_{\ref{eq:LaxSymPotNizhEq}'}'$,
$\bar{\mathfrak m}_5=\mathfrak r_{\ref{eq:LaxSymPotNizhEq}'}''$ and
$\bar{\mathfrak m}_6=\mathfrak z(\bar{\mathfrak m}_1)$.
The fact that the radical~$\mathfrak r_{\ref{eq:LaxSymPotNizhEq}'}$ of the pseudoalgebra~$\mathfrak g_{\ref{eq:LaxSymPotNizhEq}'}$ coincides with~$\bar{\mathfrak m}_2'$ can be shown analogously to the proof of Lemma~9 in~\cite{boyk2024a}.
Since the megaideal~$\bar{\mathfrak m}_7$ of the pseudoalgebra~$\mathfrak g_{\ref{eq:dNLaxPair}}$ has no counterpart among the megaideals of the pseudoalgebra~$\mathfrak g_{\ref{eq:LaxSymPotNizhEq}'}$,
we denote by~$\bar{\mathfrak m}_7$ the megaideal
$\mathfrak z(\mathfrak g_{\ref{eq:LaxSymPotNizhEq}'})=\big\langle Z(1),\bar D^{\rm\psi}\big\rangle$ of the pseudoalgebra~$\mathfrak g_{\ref{eq:LaxSymPotNizhEq}'}$, which under the embedding of the pseudoalgebra~$\mathfrak g_{\ref{eq:LaxSymPotNizhEq}'}$ into~$\mathfrak g_{\ref{eq:dNLaxPair}}$ (see footnote~\ref{fnt:EmbeddingsOfMIAsOfLaxRepresentations}) coincides with the megaideal $\big\langle Z(1),\bar P^v\big\rangle$ of the pseudoalgebra~$\mathfrak g_{\ref{eq:dNLaxPair}}$.
The latter megaideal is inessential in the pseudoalgebra~$\mathfrak g_{\ref{eq:dNLaxPair}}$ since it can be represented as a sum of other megaideals
$\big\langle Z(1),\bar P^v\big\rangle=\bar{\mathfrak m}_6\oplus\bar{\mathfrak m}_7$, but this is not the case for the new megaideal~$\bar{\mathfrak m}_7$ with respect to the pseudoalgebra~$\mathfrak g_{\ref{eq:LaxSymPotNizhEq}'}$.
It is also worth noting that, unlike the pseudoalgebra~$\mathfrak g_{\ref{eq:SymPotNizhEq}'}$, where $\mathfrak m_1=\mathfrak g_{\ref{eq:SymPotNizhEq}'}$, the megaideal~$\bar{\mathfrak m}_1$ of the pseudoalgebra~$\mathfrak g_{\ref{eq:LaxSymPotNizhEq}'}$ is a proper megaideal and this entire pseudoalgebra is the sum of its proper megaideals, meaning it is not essential as a megaideal of itself.

We base our proof on that of Theorem~11 from~\cite{boyk2024a}.
Let a point transformation
\[
\bar\Phi\colon\ (\tilde t,\tilde x,\tilde y,\tilde u,\tilde\psi)=(T,X,Y,U,\Psi)
\]
in the space with coordinates $(t,x,y,u,\psi)$ be a point symmetry of the system~(\ref{eq:LaxSymPotNizhEq}$'$).
Here, the tuple of smooth functions $(T,X,Y,U,\Psi)$ of $(t,x,y,u,\psi)$ has a nonvanishing Jacobian.
It is obvious that the transformation~$\bar{\mathscr J}$ is a point symmetry of the system~(\ref{eq:LaxSymPotNizhEq}$'$).
Therefore, computations can be performed up to composition with this transformation.
Collecting the $\psi$-components in the condition $\bar\Phi_*\bar D^{\rm\psi}\subseteq\bar{\mathfrak m}_7$,
we obtain $T_\psi=X_\psi=Y_\psi=0$, $\psi U_\psi=a_{11}$ and $\psi \Psi_\psi=a_{12}\Psi$, where $a_{11}$ and $a_{12}$ are constants.
Analogously, from the conditions
\[
\bar\Phi_*\bar Z(1)\in\bar{\mathfrak m}_6,\quad
\bar\Phi_*\bar P^z(1)\in\bar{\mathfrak m}_3,\ z\in\{x,y\},\quad
\bar\Phi_*\bar D^t(1)\in\bar{\mathfrak m}_1
\]
it follows that $\Psi_t=\Psi_x=\Psi_y=\Psi_u=0$ and therefore $\Psi_\psi\ne0$ due to the nondegeneracy of the transformation~$\bar\Phi$, which implies $a_{12}\ne 0$.
Integrating the derived equations shows that the functions~$T$, $X$ and $Y$ do not depend on $\psi$, $U=W(t,x,y,u)+a_{11}\ln|\psi|$ and $\Psi=\bar B|\psi|^{a_{12}}$, where $\bar B$ is an arbitrary nonvanishing constant.
Since $\bar\varpi_*\mathfrak g_{\ref{eq:LaxSymPotNizhEq}'}=\mathfrak g_{\ref{eq:SymPotNizhEq}'}$, where $\bar\varpi$ is the natural projection from~$\mathbb R^5_{t,x,y,u,\psi}$ onto~$\mathbb R^4_{t,x,y,u}$, due to the obtained form of the $t$-, $x$-, $y$- and $u$-components of the transformation~$\bar\Phi$, the functions~$T$, $X$, $Y$ and $W$ satisfy the same constraints as in the proof of Theorem~11 from~\cite{boyk2024a} for the functions~$T$, $X$, $Y$ and $U$, respectively.
Thus, up to composition with the transformation~$\bar{\mathscr J}$, the $t$-, $x$-, $y$- and $u$-components have the same form as in Theorem~\ref{thm:SymPotNizhEqCompletePointSymGroup}, except for the additional term $a_{11}\ln|\psi|$ in the $u$-component.

Every point transformation~$\bar\Phi$ whose components have the specified form satisfies the condition $\bar\Phi_*\mathfrak g_{\ref{eq:LaxSymPotNizhEq}'}\subseteq\mathfrak g_{\ref{eq:LaxSymPotNizhEq}'}$.
Therefore, no further constraints can be obtained solely within the framework of the algebraic method and the direct method should be employed subsequently.
We derive expressions for the derivatives of the dependent variables $(\tilde u,\tilde\psi)$ with respect to the independent variables $(\tilde t,\tilde x,\tilde y)$ up to the second order in terms of untilded variables and derivatives, successively substitute the found expressions and the expressions for the higher-order derivatives by virtue of the system~(\ref{eq:LaxSymPotNizhEq}$'$) into the system~(\ref{eq:LaxSymPotNizhEq}$'$) written in tilded variables and split the derived equations with respect to the other (parametric) derivatives of the dependent variables~$(u,\psi)$ up to order two.
The resulting system of equations for the parameters of the point-symmetry transformations of the system~(\ref{eq:LaxSymPotNizhEq}$'$) reduces to the equations
$a_{11}+3a_{12}(1-C^3)=0$, $a_{11}+3a_{12}(1-a_{12})(1-C^3)=0$ and
$a_{11}+3a_{12}-3=0$, $-a_{11}+a_{12}^2-3a_{12}+2=0$.
The solution of the first two equations is $C=1$ and $a_{11}=0$.
Substituting these values into the last two equations and solving them yields $a_{12}=1$.
Thus, the $u$-component of the transformation~$\bar\Phi$ does not depend on $\psi$ and $\Psi=\bar B\psi$.
\end{proof}

In view of footnote~\ref{fnt:EmbeddingsOfMIAsOfLaxRepresentations}, the point-symmetry pseudogroup~$G_{\ref{eq:LaxSymPotNizhEq}'}$ of the (linear) Lax representation~(\ref{eq:LaxSymPotNizhEq}$'$) of the equation~(\ref{eq:SymPotNizhEq}$'$) can be embedded into the point-symmetry pseudogroup~$G_{\ref{eq:dNLaxPair}}$ of the nonlinear Lax representation~\eqref{eq:dNLaxPair} of the equation~\eqref{eq:dN} as a pseudosubgroup singled out by the condition $A=1$, where shifts in~$\partial_\vartheta$ correspond to scale transformations of the variable~$\psi$.

Analogously to the previous cases, there is also a connection between the pseudogroups~$G_{\ref{eq:SymNyzhnykSystem}'}$ and~$G_{\ref{eq:dNSystem}}$, which is described by the following statement.

\begin{theorem}\label{thm:SystemPointSymPseudogroup}
The point-symmetry pseudogroup~$G_{\ref{eq:SymNyzhnykSystem}'}$ of the (real symmetric) Nyzhnyk system~{\rm(\ref{eq:SymNyzhnykSystem}$'$)} is a pseudosubgroup of the pseudogroup~$G_{\ref{eq:dNSystem}}$ singled out by the condition $C=1$.
In other words, it is generated by transformations of the form~\eqref{eq:dNSystemPointSymForm} with $C=1$ together with the transformation~$\hat{\mathscr J}$: $\tilde t=t$, $\tilde x=y$, $\tilde y=x$, $\tilde w=w$, $\tilde v^1=v^2$, $\tilde v^2=v^1$.
%The parameter functions $T$, $X^0$ and $Y^0$ are arbitrary smooth functions of the variable~$t$ with $T_t\neq0$.
\end{theorem}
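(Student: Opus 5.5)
We follow the pattern of the proof of Theorem~\ref{thm:SymPotNizhEqCompletePointSymGroup}: adapt the megaideal-based algebraic argument for~$G_{\ref{eq:dNSystem}}$ (Theorem~\ref{thm:dNSystemPointSymPseudogroup}) to the codimension-one pseudosubalgebra $\mathfrak g_{\ref{eq:SymNyzhnykSystem}'}=\langle\hat D^t(\tau),\hat P^x(\chi),\hat P^y(\rho)\rangle$ of~$\mathfrak g_{\ref{eq:dNSystem}}$, and then finish with the direct method.

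\emph{Megaideals.} Among the megaideals of~$\mathfrak g_{\ref{eq:dNSystem}}$, we keep those contained in~$\mathfrak g_{\ref{eq:SymNyzhnykSystem}'}$. These are $\hat{\mathfrak m}_1=\mathfrak g_{\ref{eq:SymNyzhnykSystem}'}$, which is now improper, and $\hat{\mathfrak m}_3=\langle\hat P^x(\chi),\hat P^y(\rho)\rangle$. We show that $\hat{\mathfrak m}_3$ is the radical~$\mathfrak r_{\ref{eq:SymNyzhnykSystem}'}$ of~$\mathfrak g_{\ref{eq:SymNyzhnykSystem}'}$, and hence a megaideal, by the same argument as Lemma~1 in~\cite{boyk2024a}. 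The key point is that $\langle\hat D^t(\tau)\rangle$ modulo~$\hat{\mathfrak m}_3$ is a Witt-type pseudoalgebra, which contains no nonzero solvable ideals.

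\emph{Algebraic conditions.} Let $\Phi$ be a point symmetry in the space with coordinates $(t,x,y,w,v^1,v^2)$. We impose $\Phi_*\hat P^z(1),\Phi_*\hat P^z(t)\in\hat{\mathfrak m}_3$ and $\Phi_*\hat D^t(1),\Phi_*\hat D^t(t)\in\hat{\mathfrak m}_1$ for $z\in\{x,y\}$. In the proof of Theorem~14 of~\cite{boyk2024a}, the field $Q^6=\hat D^{\rm s}$ was used, and it is not available here. As noted after Theorem~\ref{thm:dNSystemPointSymPseudogroup}, however, it can be replaced by $\hat D^t(t^2)\in\hat{\mathfrak m}_1$. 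We therefore add the condition $\Phi_*\hat D^t(t^2)\in\hat{\mathfrak m}_1$. Splitting these conditions componentwise and solving the resulting determining equations exactly as in~\cite{boyk2024a}, we should obtain the following: up to composition with~$\hat{\mathscr J}$, which is obviously a symmetry, $\Phi$ has the form~\eqref{eq:dNSystemPointSymForm} with an arbitrary constant $C\ne0$. I expect this to be the main obstacle. We must check that dropping~$\hat D^{\rm s}$ does not leave the $w$-, $v^1$- and $v^2$-components undetermined. The pushforward conditions on the $P$-fields alone give only affine-in-$(w,v)$ structure, so the scaling of the dependent variables has to be pinned down by the $\hat D^t(t^k)$ conditions together with the nonvanishing Jacobian. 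If this still leaves freedom, for instance an independent scaling factor for~$w$, we will keep it as an extra parameter and let the direct method remove it.

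\emph{Direct method.} Transformations of the form~\eqref{eq:dNSystemPointSymForm} preserve~$\mathfrak g_{\ref{eq:SymNyzhnykSystem}'}$ for every~$C$, so the algebraic method gives no further restriction. We therefore substitute~\eqref{eq:dNSystemPointSymForm} into the system~(\ref{eq:SymNyzhnykSystem}$'$) written in tilded variables, using the chain rule up to third order. We then replace~$w_t$, $v^1_y$ and~$v^2_x$ by virtue of the system and split with respect to the remaining parametric derivatives. The dispersive terms $w_{xxx}+w_{yyy}$ scale by $C^{-3}$ relative to the nonlinear and evolution terms, so the splitting should yield $C^3=1$. In the real case this gives $C=1$, which completes the proof. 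This step is a routine computation.
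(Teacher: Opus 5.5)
Your overall route matches the paper's. You keep $\hat{\mathfrak m}_1=\mathfrak g_{\ref{eq:SymNyzhnykSystem}'}$ (now improper) and $\hat{\mathfrak m}_3=\mathfrak r_{\ref{eq:SymNyzhnykSystem}'}$, replace $\hat D^{\rm s}$ by $\hat D^t(t^2)$ in the argument of Theorem~14 of~\cite{boyk2024a}, and finish with the direct method. The gap is in what you expect the algebraic step to deliver, and your last step depends on that expectation.

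The pushforward conditions cannot give~\eqref{eq:dNSystemPointSymForm} up to~$\hat{\mathscr J}$. Compare the proof of Theorem~\ref{thm:dNSystemDefSubalgs}: what actually remains is
\[
X=T_t^{1/3}(A_1x+A_2y)+X^0,\quad Y=T_t^{1/3}(B_1x+B_2y)+Y^0,\quad W=\frac{C'w}{T_t^{2/3}},\quad
V^1=\frac{A_1v^1+A_2v^2+E_1w}{T_t^{2/3}}-\frac{T_{tt}}{3T_t^{5/3}}(A_1x+A_2y)-\frac{X^0_t}{T_t},
\]
with $V^2$ analogous in $(B_1,B_2,E_2,Y^0)$. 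The seven constants are restricted only by $C'(A_1B_2-A_2B_1)\ne0$.

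Every such map stabilizes $\mathfrak g_{\ref{eq:dNSystem}}$, and hence its derived pseudoalgebra $\mathfrak g_{\ref{eq:SymNyzhnykSystem}'}$. So no choice of test fields or megaideals can remove this freedom. The reason is that $\hat{\mathfrak m}_3$ does not distinguish $\hat P^x$ from $\hat P^y$. Also, $\hat D^t(\tau)$ involves $(x,y)$ only through $\tau_{tt}(x\p_{v^1}+y\p_{v^2})$, which is covariant under a constant linear change of $(x,y)$ paired with the same change of $(v^1,v^2)$. For the potential equation, by contrast, the quadratic and cubic $u$-components of $P^x$, $P^y$ and $D^t$ encode $x^3+y^3$ and do fix the linear part. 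In particular, neither $\hat{\mathscr J}$ nor the equality of the $w$-scale with the spatial scale appears at the algebraic stage, contrary to your expectation.

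Consequently, the direct method has more to do than produce $C^3=1$. Split $\tilde v^1_{\tilde y}=\tilde w_{\tilde x}$ and $\tilde v^2_{\tilde x}=\tilde w_{\tilde y}$ with respect to the parametric derivatives $v^1_x$, $v^2_y$, $w_x$, $w_y$. This gives $A_1A_2=B_1B_2=0$.

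In the branch $A_2=B_1=0$ it further gives $E_1=E_2=0$, $A_1^2=C'B_2$ and $B_2^2=C'A_1$, hence $A_1=B_2=C'$. The branch $A_1=B_2=0$ reduces to this one by composition with $\hat{\mathscr J}$, and this is where $\hat{\mathscr J}$ genuinely comes from. Only after this reduction does the first equation of~(\ref{eq:SymNyzhnykSystem}$'$) give $C=1$, through the relative factor $C^{-3}$ of the dispersive terms, as you correctly note.

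Your fallback clause (keep any residual parameters and let the direct method remove them) would repair this if carried out honestly. As written, however, your final step substitutes only~\eqref{eq:dNSystemPointSymForm} into the system, so it does not complete the proof.
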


The proof of this theorem is analogous to the proof of Theorem~14 from~\cite{boyk2024a}, but the vector field~$\hat D^{\rm s}$ absent from~$\mathfrak g_{\ref{eq:SymNyzhnykSystem}'}$ should be replaced by~$\hat D^t(t^2)$.
The megaideals~$\mathfrak m_1$ and~$\mathfrak m_3$ of the pseudoalgebra $\mathfrak g_{\ref{eq:dNSystem}}$ are megaideals of the pseudoalgebra~$\mathfrak g_{\ref{eq:SymNyzhnykSystem}'}$.
In terms of the pseudoalgebra~$\mathfrak g_{\ref{eq:SymNyzhnykSystem}'}$, these megaideals can be represented as $\mathfrak m_1=\mathfrak g_{\ref{eq:SymNyzhnykSystem}'}$ and $\mathfrak m_3=\mathfrak r_{\ref{eq:SymNyzhnykSystem}'}$.
Note that, unlike the case of~$\mathfrak g_{\ref{eq:dNSystem}}$, the megaideal~$\mathfrak m_1$ is an improper megaideal of the pseudoalgebra~$\mathfrak g_{\ref{eq:SymNyzhnykSystem}'}$ and this entire pseudoalgebra is not the sum of its proper megaideals, and therefore it is essential as a megaideal of itself.

Since there exists a connection between the pseudogroups~$G_{\ref{eq:LaxSymPotNizhEq}'}$ and~$G_{\ref{eq:dNLaxPair}}$, it is easy to show that there also exists a~connection between the pseudogroups~$G_{\ref{eq:LaxSymNyzhnykSystem}'}$ and~$G_{\ref{eq:LaxdNSystem}}$.

\begin{theorem}\label{thm:LaxSystemPointSymPseudogroup}
The point-symmetry pseudogroup~$G_{\ref{eq:LaxSymNyzhnykSystem}'}$
of the (real linear) Lax representation~{\rm(\ref{eq:LaxSymNyzhnykSystem}$'$)} of the system~{\rm(\ref{eq:SymNyzhnykSystem}$'$)}
is generated by
the transformation
$\check{\mathscr J}$: $\tilde t=t$, $\tilde x=y$, $\tilde y=x$, $\tilde w=w$, $\tilde v^1=v^2$, ${\tilde v^2=v^1}$, $\tilde \psi=\psi$
and the point transformations in the space \smash{$\mathbb R^3_{t,x,y}\times\mathbb R^4_{w,v^1,v^2,\psi}$}
whose $(t,x,y,w,v^1,v^2)$- and $\psi$-components are respectively of the form~\eqref{eq:LaxdNSystemPointSymForm} with $A=1$
and of the form \smash{$\tilde\psi=\check B\psi$} with an arbitrary nonvanishing constant~\smash{$\check B$}.
\end{theorem}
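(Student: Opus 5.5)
The proof will combine the proof of Theorem~\ref{thm:LaxdNSystemPointSymPseudogroup} with that of Theorem~\ref{thm:LaxSymPotNizhEqCompletePointSymGroup}. It uses the original megaideal-based version of the algebraic method and finishes with the direct method.

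\emph{Megaideals.} Via the embedding of footnote~\ref{fnt:EmbeddingsOfMIAsOfLaxRepresentations}, $\check P^\vartheta$ is identified with $\check D^{\rm\psi}$, and $\mathfrak g_{\text{\ref{eq:LaxSymNyzhnykSystem'}}}$ is the codimension-one pseudosubalgebra of~$\mathfrak g_{\ref{eq:LaxdNSystem}}$ without~$\check D^{\rm s}$. From the megaideals of~$\mathfrak g_{\ref{eq:LaxdNSystem}}$ I keep those that remain megaideals:
\[
\check{\mathfrak m}_1=\mathfrak g_{\text{\ref{eq:LaxSymNyzhnykSystem'}}}',\quad
\check{\mathfrak m}_2'=\langle\check P^x(\chi),\check P^y(\rho),\check D^{\rm\psi}\rangle=\mathfrak r_{\text{\ref{eq:LaxSymNyzhnykSystem'}}},\quad
\check{\mathfrak m}_3=\check{\mathfrak m}_1\cap\mathfrak r_{\text{\ref{eq:LaxSymNyzhnykSystem'}}},\quad
\check{\mathfrak m}_4=\mathfrak z(\mathfrak g_{\text{\ref{eq:LaxSymNyzhnykSystem'}}})=\langle\check D^{\rm\psi}\rangle.
\]
The radical is identified exactly as in Lemma~9 of~\cite{boyk2024a}. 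The center is found by a direct commutator check: only $\check D^{\rm\psi}$ commutes with $\check D^t(\tau)$, $\check P^x(\chi)$ and $\check P^y(\rho)$ for all parameter functions.

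\emph{Algebraic part.} Let $\check\Phi\colon(\tilde t,\tilde x,\tilde y,\tilde w,\tilde v^1,\tilde v^2,\tilde\psi)=(T,X,Y,W,V^1,V^2,\Psi)$ be a point symmetry. The transformation~$\check{\mathscr J}$ is obviously a symmetry, so I work up to composition with it.
\begin{itemize}
\item The condition $\check\Phi_*\check D^{\rm\psi}\in\langle\check D^{\rm\psi}\rangle$ gives $T_\psi=X_\psi=Y_\psi=W_\psi=V^1_\psi=V^2_\psi=0$ and $\psi\Psi_\psi=a\Psi$ with a constant $a\neq0$.
\item The projection satisfies $\check\varpi_*\mathfrak g_{\text{\ref{eq:LaxSymNyzhnykSystem'}}}=\mathfrak g_{\ref{eq:SymNyzhnykSystem}'}$, and the $(t,x,y,w,v^1,v^2)$-components do not depend on~$\psi$. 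Hence these components obey the constraints from the proof of Theorem~\ref{thm:SystemPointSymPseudogroup}, with $\hat D^t(t^2)$ replacing~$\hat D^{\rm s}$. By that theorem they have the form~\eqref{eq:dNSystemPointSymForm} with an arbitrary constant $C\neq0$ at this stage.
\item Collecting the $\psi$-components in $\check\Phi_*\check P^z(1)\in\check{\mathfrak m}_3$, $z\in\{x,y\}$, and $\check\Phi_*\check D^t(1)\in\check{\mathfrak m}_1$ gives $\Psi_t=\Psi_x=\Psi_y=\Psi_w=\Psi_{v^1}=\Psi_{v^2}=0$.
\end{itemize}
Therefore $\Psi=\check B|\psi|^{a}$ with constants $\check B\neq0$ and $a\neq0$.

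\emph{Direct method.} Every transformation of this form preserves $\mathfrak g_{\text{\ref{eq:LaxSymNyzhnykSystem'}}}$, so the algebraic method cannot restrict it further. I substitute it into the system~(\ref{eq:LaxSymNyzhnykSystem}$'$), eliminate $\psi_{xy}$ and $\psi_t$ by virtue of the system, and split with respect to the parametric derivatives of~$\psi$. The first equation $\psi_{xy}+\frac13w\psi=0$ then yields the following.
\begin{itemize}
\item The coefficient of $\psi_x\psi_y/\psi$ forces $a(a-1)=0$, so $a=1$.
\item The $w$-terms compare $\frac13C T_t^{-2/3}w\cdot C^2T_t^{2/3}$ with $\frac13 w$, which forces $C^3=1$, so $C=1$ in the real case.
\end{itemize}
The second equation should then be satisfied identically; checking this, via the $\psi_{xxx}$ terms and the $v^1$, $v^2$ shifts, is the main computational obstacle. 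I expect it to reproduce the relation $C^3=A^2$ with $A=1$, consistent with Theorem~\ref{thm:LaxdNSystemPointSymPseudogroup} and with Theorem~\ref{thm:LaxSymPotNizhEqCompletePointSymGroup}. Finally, $\tilde\psi=\check B|\psi|$ combined with the sign change $\psi\to-\psi$ (which is $\check B<0$) gives $\tilde\psi=\check B\psi$.
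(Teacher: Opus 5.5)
Your route matches the paper's in outline: the same four megaideals, an algebraic reduction modelled on Theorems~\ref{thm:LaxdNSystemPointSymPseudogroup} and~\ref{thm:LaxSymPotNizhEqCompletePointSymGroup}, and a non-algebraic finishing step. The paper finishes by appealing to $G_{\ref{eq:LaxdNSystem}}$ with $A=1$ via footnote~\ref{fnt:EmbeddingsOfMIAsOfLaxRepresentations}; finishing with an explicit direct computation on the linear Lax pair is a legitimate and more self-contained alternative.

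The gap is your claim that, after the algebraic part, the $(t,x,y,w,v^1,v^2)$-components already have the form~\eqref{eq:dNSystemPointSymForm} with a free~$C$. Every megaideal condition is satisfied by every local diffeomorphism that stabilizes $\mathfrak g_{\ref{eq:LaxSymNyzhnykSystem}'}$, so the algebraic method cannot cut below this stabilizer. The stabilizer is strictly larger than~\eqref{eq:dNSystemPointSymForm}. The transformations~\eqref{eq:dNSystemPreliminaryFormOfPointSyms} with~\eqref{eq:dNSystemPreliminaryFormOfPointSyms2}, extended by $\tilde\psi=\check B\psi$, pass all your tests. They contain an arbitrary invertible constant matrix $(A_j,B_j)$ acting on $(x,y)$ and on $(v^1,v^2)$, an unrelated constant $C$ in $W=Cw/T_t^{2/3}$, and extra terms $E_jw/T_t^{2/3}$ in~$V^j$.

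Concretely, rotate the $(x,y)$-plane and the $(v^1,v^2)$-plane by the same angle~$\phi$. This maps $\check P^x(\chi)$ to $\check P^x(\chi\cos\phi)+\check P^y(\chi\sin\phi)$ and fixes $\check D^t(\tau)$ and $\check D^{\rm\psi}$, yet for generic~$\phi$ it is not a point symmetry. You also cannot import Theorem~\ref{thm:SystemPointSymPseudogroup} wholesale. Its conclusion comes from the direct method applied to the system itself, and it gives $C=1$, not a free~$C$. You have not shown that point symmetries of the Lax representation project to point symmetries of the system. So your direct-method step, which only fixes $a$ and~$C$, leaves these extra parameters untouched.

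The repair is to carry the full stabilizer form into the direct method.
\begin{itemize}
\item First equation: the coefficients of $\psi_{xx}$ and $\psi_{yy}$ force the $(x,y)$-block to be diagonal or antidiagonal, and the antidiagonal case reduces to the diagonal one by~$\check{\mathscr J}$. The $\psi_x\psi_y$-term then gives $a=1$, and the $w\psi$-term gives $CA_1B_2=1$.
\item Second equation: the coefficients of $\psi_{xxx}$ and $\psi_{yyy}$ give $A_1^3=B_2^3=1$, hence $A_1=B_2=C=1$ over~$\mathbb R$. The coefficients of $w\psi_x$ and $w\psi_y$ give $E_1=E_2=0$, and everything else cancels identically. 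This is the entire content of the check you left as an expectation.
\end{itemize}
There is also a smaller slip. The conditions on $\check P^z(1)$ and $\check D^t(1)$ only give $\Psi_t=\Psi_x=\Psi_y=0$. To obtain $\Psi_{v^1}=\Psi_{v^2}=\Psi_w=0$ you also need $\check P^x(t)$, $\check P^y(t)$ and $\check D^t(t)$. Their $\psi$-components then reduce to $-\Psi_{v^1}$, $-\Psi_{v^2}$ and $-\frac23w\Psi_w$, exactly as $\zeta\in\{1,t\}$ is used in the proof of Theorem~\ref{thm:LaxdNSystemPointSymPseudogroup}.
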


\begin{proof}
We base the proof on that of the preceding theorem, extending the space by one coordinate~$\psi$.
Among the megaideals of the pseudoalgebra~$\mathfrak g_{\ref{eq:LaxdNSystem}}$, we select those that are megaideals of the pseudoalgebra~$\mathfrak g_{\ref{eq:LaxSymNyzhnykSystem}'}$:
$\check{\mathfrak m}_1=\mathfrak g_{\ref{eq:LaxSymNyzhnykSystem}'}'$,
$\check{\mathfrak m}_2'=\mathfrak r_{\ref{eq:LaxSymNyzhnykSystem}'}$,
$\check{\mathfrak m}_3=\check{\mathfrak m}_1\cap\check{\mathfrak m}_2'$,
$\check{\mathfrak m}_4:=\mathfrak z(\mathfrak g_{\ref{eq:LaxSymNyzhnykSystem}'})$.
According to footnote~\ref{fnt:EmbeddingsOfMIAsOfLaxRepresentations}, the point-symmetry pseudogroup~$G_{\ref{eq:LaxSymNyzhnykSystem}'}$ of the (linear) Lax representation~(\ref{eq:LaxSymNyzhnykSystem}$'$) of the system~(\ref{eq:SymNyzhnykSystem}$'$) is a pseudosubgroup of the point-symmetry pseudogroup~$G_{\ref{eq:LaxdNSystem}}$ of the nonlinear Lax representation~\eqref{eq:LaxdNSystem} of the system~\eqref{eq:dNSystem} singled out by the condition $A=1$, where scale transformations of the variable~$\psi$ correspond to shifts in~$\partial_\vartheta$.
\end{proof}

\begin{remark}
Only certain discrete point symmetries of the dispersionless models remain discrete point symmetries of their dispersive counterparts.
Namely, such independent discrete symmetries are the transformations~$\mathscr J$ and~$\mathscr I^{\rm i}$ for the Nyzhnyk equation~(\ref{eq:SymPotNizhEq}$'$), the transformations~$\bar{\mathscr J}$ and~$\bar{\mathscr I}^{\rm i}$ for the Lax representation~(\ref{eq:LaxSymPotNizhEq}$'$) of this equation, the transformations~$\hat{\mathscr J}$ and~$\hat{\mathscr I}^{\rm i}$ for the Nyzhnyk system~(\ref{eq:SymNyzhnykSystem}$'$) and the transformations~$\check{\mathscr J}$ and~$\check{\mathscr I}^{\rm i}$ for the Lax representation~(\ref{eq:LaxSymNyzhnykSystem}$'$) of this system.
\end{remark}

\begin{remark}
The complexification of point-symmetry pseudogroups in the dispersive case is performed analogously to the dispersionless case according to Remark~\ref{rem:OnRealAndComplexCasesForPointSyms}, with the exception that only the parameter function~$T$ requires reparameterization.
Moreover, under the complexification of the dispersive models, only the permutation transformations~$\mathscr J$, $\bar{\mathscr J}$, $\hat{\mathscr J}$ and~$\check{\mathscr J}$ remain as their independent discrete point symmetries.
\end{remark}

\begin{remark}
When constructing the (pseudo)groups of point symmetries for a series of related but different types of models with similar Lie invariance (pseudo)algebras, the use of the algebraic method is much more convenient and appropriate than the application of the direct method, since the corresponding computations share a substantial common part.
Therefore, after finding the point-symmetry (pseudo)group of one of such models by the algebraic method, analogous computations for the other models can be obtained via minor modifications.
The foundational computations for the symmetric Nyzhnyk models are those of the point-symmetry pseudogroups $G_{\ref{eq:dN}}$ and~$G_{\ref{eq:dNSystem}}$ of the (real) dispersionless Nyzhnyk equation~\eqref{eq:dN} and the (real) dispersionless Nyzhnyk system~\eqref{eq:dNSystem}.
The construction of the point-symmetry pseudogroups~$G_{\ref{eq:dNLaxPair}}$ and~$G_{\ref{eq:LaxdNSystem}}$ of the nonlinear Lax representations~\eqref{eq:dNLaxPair} and~\eqref{eq:LaxdNSystem} of the equation~\eqref{eq:dN} and the system~\eqref{eq:dNSystem} relies on the computations for this equation and on all the specified computations, respectively.
Analogous computations for the symmetric dispersive Nyzhnyk models are obtained by modifying the computations for the corresponding dispersionless Nyzhnyk models.
If the direct method were applied, the necessary computations for each of the Nyzhnyk models would have to be carried out independently and in full scope.
\end{remark}

\begin{remark}
In contrast to the dispersionless Nyzhnyk equation~\eqref{eq:dN}, the maximal Lie invariance pseudoalgebra~$\mathfrak g_{\ref{eq:SymPotNizhEq}'}$ of the (dispersive) Nyzhnyk equation~(\ref{eq:SymPotNizhEq}$'$) does not completely determine its point-symmetry pseudogroup~$G_{\ref{eq:SymPotNizhEq}'}$.
This is due to the fact that the pseudoalgebra~$\mathfrak g_{\ref{eq:SymPotNizhEq}'}$ is a proper codimension-one megaideal of the pseudoalgebra~$\mathfrak g_{\ref{eq:dN}}$.
Namely, it does not contain the vector field~$D^{\rm s}$.
At the same time, as a megaideal of the pseudoalgebra~$\mathfrak g_{\ref{eq:dN}}$, it admits all automorphisms of this pseudoalgebra, including those induced by the scale transformations generated by the vector field~$D^{\rm s}$.
Therefore, within a purely algebraic approach, one can only show that the pseudogroup~$G_{\ref{eq:SymPotNizhEq}'}$ is contained in the pseudogroup~$G_{\ref{eq:dN}}$.
Although the Lie invariance pseudoalgebras of all other models, even the dispersionless Nyzhnyk models, do not completely determine the corresponding point-symmetry pseudogroups, the role of the direct method still increases for dispersive models compared to their dispersionless analogues.
\end{remark}

\section{Defining subalgebras}\label{sec:DefiningSubalgs}

In \cite[Sections~4 and~5]{boyk2024a}, it was studied whether the pseudosubalgebras of the maximal Lie invariance pseudoalgebra~$\mathfrak g_{\ref{eq:dN}}$ of the dispersionless Nyzhnyk equation~\eqref{eq:dN}, which naturally arise during the computation of the point-symmetry pseudogroup~$G_{\ref{eq:dN}}$ of this equation, determine the point or contact diffeomorphisms that respectively stabilize this pseudoalgebra or its first prolongation.
First, let us clarify the definition of this stabilization from~\cite{boyk2024a}, distinguishing between the case of Lie algebras and the case of Lie pseudoalgebras.

\begin{definition}\label{def:DefiningSubalg}
A proper subalgebra~$\mathfrak s$ of a Lie algebra~$\mathfrak g$ of vector fields
is called a \emph{subalgebra defining the diffeomorphisms that stabilize the algebra~$\mathfrak g$}
if the conditions $\Phi_*\mathfrak g\subseteq\mathfrak g$ and $\Phi_*\mathfrak s\subseteq\mathfrak g$
are equivalent for any diffeomorphism~$\Phi$ of a manifold~$M$.
\end{definition}

It is straightforward to formulate an analogous definition for pseudoalgebras, provided that the action of local diffeomorphisms on partial vector fields is correctly defined.
Let~$\Omega_1$ and $\Omega_2$ be open subsets of the base manifold~$M$, let~$\Phi$ be a local diffeomorphism of~$M$ with domain~$\Omega_1$ and let~$Q$ be a partial vector field on~$M$ with domain~$\Omega_2$.
Then, within the framework of the partial mapping approach, it is natural to call $\big(\Phi|_{\Omega_1\cap\Omega_2}^{}\big)_*^{}\big(Q|_{\Omega_1\cap\Omega_2}^{}\big)$ the \emph{pushforward of the partial vector field~$Q$ by the local diffeomorphism~$\Phi$}.

\begin{definition}\label{def:DefiningPseudosubalg}
A proper (pseudo)subalgebra~$\mathfrak s$ of a Lie pseudoalgebra~$\mathfrak g$ of partial vector fields is called
a \emph{(pseudo)subalgebra defining the local diffeomorphisms that stabilize the pseudoalgebra~$\mathfrak g$}
if the conditions $\Phi_*\mathfrak g\subseteq\mathfrak g$ and $\Phi_*\mathfrak s\subseteq\mathfrak g$
are equivalent for any local diffeomorphism~$\Phi$ of a manifold~$M$.
\end{definition}

The equivalence of the conditions $\Phi_*\mathfrak g\subseteq\mathfrak g$ and $\Phi_*\mathfrak s\subseteq\mathfrak g$ reduces to the implication $\Phi_*\mathfrak s\subseteq\mathfrak g\Rightarrow\Phi_*\mathfrak g\subseteq\mathfrak g$, since the reverse implication holds trivially.

To find the point-symmetry pseudogroup~$G_{\ref{eq:dN}}$ of the (real) dispersionless Nyzhnyk equation~\eqref{eq:dN} in~\cite{boyk2024a}, only 11 (linearly independent) vector fields from the infinite-dimensional pseudoalgebra~$\mathfrak g_{\ref{eq:dN}}$ were used:
$Z(1)$, $Z(t)$, $R^x(1)$, $R^y(1)$, $P^x(1)$, $P^y(1)$, $P^x(t)$, $P^y(t)$, $D^{\rm s}$, $D^t(1)$ and $D^t(t)$, whose linear span is an 11-dimensional subalgebra~$\mathfrak s_{\ref{eq:dN}}^0$ of the pseudoalgebra~$\mathfrak g_{\ref{eq:dN}}$.
More precisely, two subalgebras were considered simultaneously,
\begin{gather*}
\mathfrak s_{\ref{eq:dN}}^1:=\big\langle Z(1),\,Z(t),\,R^x(1),\,R^y(1),\,P^x(1),\,P^y(1),\,P^x(t),\,P^y(t),\,
D^{\rm s}\big\rangle,\\
\mathfrak s_{\ref{eq:dN}}^2:=\big\langle Z(1),\,Z(t),\,R^x(1),\,R^y(1),\,P^x(1),\,P^y(1),\,P^x(t),\,P^y(t),\,
D^t(1),\,D^t(t)\big\rangle,
\end{gather*}
for which $\mathfrak s_{\ref{eq:dN}}^1+\mathfrak s_{\ref{eq:dN}}^2=\mathfrak s_{\ref{eq:dN}}^0$. In \cite[Sections~4 and~5]{boyk2024a}, it was shown which of these subalgebras of the pseudoalgebra~$\mathfrak g_{\ref{eq:dN}}$ determine the diffeomorphisms stabilizing this pseudoalgebra.
Namely, the following statements were proved~\cite[Theorem~6]{boyk2024a}.

\begin{theorem}\label{thm:dNDefSubalgs}
(i) The subalgebra~$\mathfrak s_{\ref{eq:dN}}^2$ of the pseudoalgebra~$\mathfrak g_{\ref{eq:dN}}$ determines
the diffeomorphisms stabilizing this pseudoalgebra.

(ii) The subalgebra~$\mathfrak s_{\ref{eq:dN}}^1$ and even the subalgebra~$\mathfrak s_{\ref{eq:dN}}^1+\langle D^t(1)\rangle$ of the pseudoalgebra~$\mathfrak g_{\ref{eq:dN}}$ do not possess this property.
\end{theorem}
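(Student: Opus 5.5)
\emph{Part~(ii).} I would start with this part, because a single explicit local diffeomorphism suffices. It should preserve $\mathfrak g_{\ref{eq:dN}}$ on $\mathfrak s_{\ref{eq:dN}}^1+\langle D^t(1)\rangle$ and fail exactly on $D^t(t)$. The natural candidate is a ``time-dependent flow'' of $D^{\rm s}$ with $\Phi_*D^{\rm s}=D^{\rm s}$, namely
\[
\tilde t=t,\quad \tilde x=\lambda x,\quad \tilde y=\lambda y,\quad \tilde u=\lambda^3u+\mu(t)(x^3+y^3),\qquad \lambda=\lambda(t)\ne0.
\]
The $u$-correction is homogeneous of degree three, and under this choice $\Phi_*D^{\rm s}=D^{\rm s}$. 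The check then proceeds generator by generator.
\begin{itemize}
\item Clearly $\Phi_*Z(1)=\tilde Z(\lambda^3)$ and $\Phi_*Z(t)=\tilde Z(t\lambda^3)$, and $\Phi_*R^z(1)=\tilde R^z(\lambda^2)$.
\item Requiring $\Phi_*P^x(1)=\tilde P^x(\lambda)$ forces $\mu=-\frac16\lambda^2\lambda_t$. With this $\mu$, $\Phi_*P^x(t)=\tilde P^x(t\lambda)$ holds automatically, since the $\tilde x^2\p_{\tilde u}$-coefficient becomes $-\frac12(t\lambda)_t$; the same holds for $P^y$.
\item Computing $\Phi_*D^t(1)$ gives the terms $\p_{\tilde t}+\frac{\lambda_t}\lambda(\tilde x\p_{\tilde x}+\tilde y\p_{\tilde y})+3\frac{\lambda_t}\lambda\tilde u\p_{\tilde u}$ plus a $(\tilde x^3+\tilde y^3)\p_{\tilde u}$-term. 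This lies in $\mathfrak g_{\ref{eq:dN}}$ if $\lambda_t/\lambda=c$ is constant, i.e.\ $\lambda={\rm e}^{ct}$. Then the image equals $\tilde D^t(1)+c\tilde D^{\rm s}$, and the cubic term cancels because $\mu_t=3c\mu$.
\item Finally, $\Phi_*D^t(t)$ has $\p_{\tilde t}$-coefficient $t$, so it could only belong to $\mathfrak g_{\ref{eq:dN}}$ as $\tilde D^t(t)+\kappa\tilde D^{\rm s}+\dots$ with constant $\kappa$. Its $\tilde x\p_{\tilde x}$-coefficient is $\frac13+ct$, which is not of that form for $c\ne0$.
\end{itemize}
Hence $\Phi_*(\mathfrak s_{\ref{eq:dN}}^1+\langle D^t(1)\rangle)\subseteq\mathfrak g_{\ref{eq:dN}}$ but $\Phi_*\mathfrak g_{\ref{eq:dN}}\not\subseteq\mathfrak g_{\ref{eq:dN}}$. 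This proves~(ii), and a fortiori the claim for $\mathfrak s_{\ref{eq:dN}}^1$.

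\emph{Part~(i).} Let $\Phi$ satisfy $\Phi_*\mathfrak s_{\ref{eq:dN}}^2\subseteq\mathfrak g_{\ref{eq:dN}}$. I would aim to reproduce the system~\eqref{eq:NMainPushforwards} of the proof of Theorem~\ref{thm:SymPotNizhEqCompletePointSymGroup}, now with possible $D^{\rm s}$-components added. The difficulty is that, a priori, each image $\Phi_*Q$ is an arbitrary element $\tilde D^t(\tilde\tau)+a\tilde D^{\rm s}+\tilde P^x+\tilde P^y+\tilde R^x+\tilde R^y+\tilde Z$, because we cannot invoke megaideal invariance. I would therefore first use that $\Phi_*$ is a Lie algebra homomorphism on $\mathfrak s_{\ref{eq:dN}}^2$:
\begin{itemize}
\item $Z(1)$ is central in $\mathfrak s_{\ref{eq:dN}}^2$.
\item The ideal $\mathfrak k=\langle Z(1),Z(t),R^z(1),P^z(1),P^z(t)\rangle$ is nilpotent and equals $[\mathfrak s_{\ref{eq:dN}}^2,\mathfrak s_{\ref{eq:dN}}^2]$ modulo $\langle D^t\rangle$-free parts.
\item Identities such as $Z(1)=[D^t(1),Z(t)]$ and $P^z(1)=[D^t(1),P^z(t)]$ hold.
\end{itemize}
From these I would show successively that the images of elements of $\mathfrak k$ have vanishing $\tilde D^t$- and $\tilde D^{\rm s}$-components. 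The comparison would be of degrees: the $\tilde t$-, $\tilde x$-, $\tilde y$-components of brackets with $\Phi_*D^t(1)$ and $\Phi_*D^t(t)$ would otherwise fail to reproduce the known images. Once this is established, the equations become exactly those of~\eqref{eq:NMainPushforwards} with $\Phi_*D^t(1)$, $\Phi_*D^t(t)$ possibly shifted by multiples of $\tilde D^{\rm s}$.

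The componentwise solving then follows the proof of Theorem~2 in~\cite{boyk2024a}. The main new point is that $\Phi_*D^t(t)$ must have $\tilde x\p_{\tilde x}$-coefficient of the form $\frac13\tilde\tau_{\tilde t}\tilde x+\kappa\tilde x$, and the example in~(ii) shows that this is exactly what rules out exponentially time-dependent scalings. I would conclude that $\Phi$ is, up to~$\mathscr J$, of the form~\eqref{eq:dNPointSymForm}. A direct check then shows that every such transformation (and~$\mathscr J$) satisfies $\Phi_*\mathfrak g_{\ref{eq:dN}}\subseteq\mathfrak g_{\ref{eq:dN}}$, which gives the required implication.

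I expect the main obstacle to be the first step of~(i): excluding $\tilde D^{\rm s}$- and $\tilde D^t$-components in the images of the nilpotent part without megaideal arguments. If the bracket argument does not close, I would fall back on splitting the full componentwise equations directly with those components kept as unknown constants and functions.
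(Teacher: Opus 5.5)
Your part~(ii) is correct. With $\lambda={\rm e}^{ct}$, $c\ne0$, and $\mu=-\tfrac16c\,{\rm e}^{3ct}$, one gets $\Phi_*D^{\rm s}=\tilde D^{\rm s}$, $\Phi_*D^t(1)=\tilde D^t(1)+c\tilde D^{\rm s}$, $\Phi_*P^z(1)=\tilde P^z({\rm e}^{ct})$, $\Phi_*P^z(t)=\tilde P^z(t{\rm e}^{ct})$, $\Phi_*R^z(1)=\tilde R^z({\rm e}^{2ct})$, $\Phi_*Z(1)=\tilde Z({\rm e}^{3ct})$ and $\Phi_*Z(t)=\tilde Z(t{\rm e}^{3ct})$, while $\Phi_*D^t(t)\notin\mathfrak g_{\ref{eq:dN}}$.

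Part~(i), however, has a genuine gap, exactly at the step you flag. Your remark that $(\mathfrak s^2_{\ref{eq:dN}})'=\mathfrak k+\langle D^t(1)\rangle$ is mapped into $\mathfrak g_{\ref{eq:dN}}'=\mathfrak m_1$ is valid. It removes all $\tilde D^{\rm s}$-components except that of $\Phi_*D^t(t)$. But the vanishing of the $\tilde D^t$-components of $\Phi_*\mathfrak k$ is only asserted, and a comparison of $\tilde t$-components of brackets cannot give it. These components see only the images in $\mathfrak g_{\ref{eq:dN}}/\mathfrak m_3$. There, the assignment $P^x(t)\mapsto\tilde D^t(1)$, $D^t(t)\mapsto\tilde D^t(-\tfrac23\tilde t)$, all other basis elements $\mapsto0$, is a Lie algebra homomorphism $\mathfrak s^2_{\ref{eq:dN}}\to\mathfrak g_{\ref{eq:dN}}/\mathfrak m_3$; the only nontrivial relation to check is $[D^t(t),P^x(t)]=\tfrac23P^x(t)$. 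A bracket-only proof would therefore need a much finer argument using injectivity of $\Phi_*$ and the $\mathfrak m_3$-parts of the images, and none of this is supplied.

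The missing idea is not Lie-algebraic. It is the one used in the proof of Theorem~6 in~\cite{boyk2024a}, which is adapted in the proof of Theorem~\ref{thm:dNLaxPairDefSubalgs}. The subalgebra $\mathfrak s^2_{\ref{eq:dN}}$ contains the pairs $Z(1)$, $Z(t)=t\,Z(1)$ and $P^z(1)$, $P^z(t)=t\,P^z(1)$, and $\Phi_*(tQ)=(t\circ\Phi^{-1})\,\Phi_*Q$.

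Write the $\tilde D^t$-parts of $\Phi_*Z(1)$ and $\Phi_*Z(t)$ as $\tilde D^t(\tilde\tau^1)$ and $\tilde D^t(\tilde\tau^2)$. The $\tilde t$-components then give $T_u=\tilde\tau^1(T)$ and $tT_u=\tilde\tau^2(T)$. If $T_u\ne0$, then $t=(\tilde\tau^2/\tilde\tau^1)(T)$, so $T=T(t)$, a contradiction. In the same way one gets $T_z=0$. The remaining components of the same pairs give $X_u=Y_u=0$ and zero $\tilde D^{\rm s}$- and $\tilde P$-parts in $\Phi_*Z(1)$ and $\Phi_*Z(t)$.

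From there the argument needs no megaideal information at all. The $R^z(1)$- and $Z(1)$-equations express $x$ and $y$ as linear-fractional functions of $(X,Y)$ with coefficients depending on~$T$. The $P^z(1)$-equations then force the denominator to be independent of $(x,y)$. Only after this does your analysis of $D^t(1)$, $D^t(t)$, via the relation $\lambda^7t=\lambda^8$, complete the proof. Your fallback of splitting the componentwise equations directly is the right route, but as written it leaves out all of its substance.
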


Additionally, we introduce the notation
\begin{gather*}
\mathfrak s_{\ref{eq:dN}}^3:=\big\langle Z(1),\,Z(t),\,Z(t^2),\,R^x(1),\,R^y(1),\,R^x(t),\,R^y(t)\big\rangle.
\end{gather*}

\begin{corollary}\label{cor:dNContDefSubalgs}
The first prolongation of the subalgebra~$\mathfrak s_{\ref{eq:dN}}^2+\mathfrak s_{\ref{eq:dN}}^3$, which is a subalgebra of the pseudoalgebra~$\mathfrak g^{\rm c}_{\ref{eq:dN}}=\mathfrak g_{\ref{eq:dN}(1)}$, together with its restrictions, determines the diffeomorphisms of the corresponding first-order jet space that stabilize~$\mathfrak g^{\rm c}_{\ref{eq:dN}}$.
\end{corollary}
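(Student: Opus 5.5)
The plan is to reduce the contact problem to the point problem and then invoke Theorem~\ref{thm:dNDefSubalgs}(i).

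Let $\Psi\colon(\tilde t,\tilde x,\tilde y,\tilde u,\tilde u_{\tilde t},\tilde u_{\tilde x},\tilde u_{\tilde y})=(Z^t,Z^x,Z^y,U,U^t,U^x,U^y)$ be a local contact diffeomorphism of the first-order jet space satisfying the contact condition~\eqref{eq:dNGenContactTransB}. Assume $\Psi_*(\mathfrak s_{\ref{eq:dN}}^2+\mathfrak s_{\ref{eq:dN}}^3)_{(1)}\subseteq\mathfrak g_{\ref{eq:dN}(1)}$, also for restrictions to open subsets. Only the implication in this direction needs proof, since the converse is trivial. The goal is to show that $Z^\mu_u=0$ and $Z^\mu_{u_\nu}=0$, exactly as in the proof of Theorem~\ref{thm:SymPotNizhEqCompletePointSymGroup}(ii). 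There these constraints came from megaideal conditions. Here they must come only from membership in $\mathfrak g_{\ref{eq:dN}(1)}$ together with the relations among the elements of~$\mathfrak s^3_{\ref{eq:dN}}$.

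\emph{Step 1: record functional relations.} The prolonged elements of $\mathfrak s^3_{\ref{eq:dN}}$ satisfy linear relations whose coefficients are functions on the jet space:
\[
Z(t^2)_{(1)}-2t\,Z(t)_{(1)}+t^2Z(1)_{(1)}=0,\qquad
R^z(t)_{(1)}-t\,R^z(1)_{(1)}=z\bigl(Z(t)_{(1)}-t\,Z(1)_{(1)}\bigr),\quad z\in\{x,y\}.
\]
Moreover, $Z(1)_{(1)}$, $Z(t)_{(1)}-tZ(1)_{(1)}=\p_{u_t}$ and $R^z(1)_{(1)}-zZ(1)_{(1)}=\p_{u_z}$ span, over functions, the fiber directions $\p_u,\p_{u_t},\p_{u_x},\p_{u_y}$.

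Pushforward by~$\Psi$ preserves these relations, with coefficients replaced by their compositions with~$\Psi^{-1}$. The images are first prolongations of fields in $\mathfrak g_{\ref{eq:dN}}$. By~\eqref{eq:dNMIA}, the base components of any such field have the form $\bigl(\tilde\tau,\tfrac13\tilde\tau_{\tilde t}\tilde x+\tilde\chi,\tfrac13\tilde\tau_{\tilde t}\tilde y+\tilde\rho\bigr)$ with parameters depending on~$\tilde t$ only.

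\emph{Step 2: kill the base components.} Taking base components in the first relation gives
\[
\tilde\tau^3-2\theta\tilde\tau^2+\theta^2\tilde\tau^1=0,\qquad \theta:=t\circ\Psi^{-1},
\]
and similarly for the $\tilde\chi$- and $\tilde\rho$-parts. The same holds for the second relation, with the additional coefficients $x\circ\Psi^{-1}$ and $y\circ\Psi^{-1}$.

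I would argue that if any of $\tilde\tau^i,\tilde\chi^i,\tilde\rho^i$ were nonzero, these identities would force $\theta$, $x\circ\Psi^{-1}$ and $y\circ\Psi^{-1}$ to depend on~$\tilde t$ alone. This contradicts the nondegeneracy of~$\Psi$, because $(t,x,y)\circ\Psi^{-1}$ would then have rank one in the base directions. So the images of $Z(1)_{(1)}$, $\p_{u_t}$ and $\p_{u_z}$ all have vanishing base components. This step is the one I expect to be the main obstacle.

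The careful case analysis goes as follows. A nonzero $\tilde\tau$ gives a quadratic in~$\theta$ with coefficients depending on~$\tilde t$, so $\theta$ depends only on~$\tilde t$. A zero $\tilde\tau$ with nonzero $\tilde\chi$ gives the same conclusion. One then has to combine this with the $x$- and $y$-relations to reach a rank contradiction. If this turns out to be too delicate, the fallback is to add brackets with $D^t(1),D^t(t),P^z(1)\in\mathfrak s^2_{\ref{eq:dN}}$, which are also pushed forward into $\mathfrak g_{\ref{eq:dN}(1)}$, in order to separate the cases.

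\emph{Step 3: conclude with the point case.} Vanishing base components of the images of $\p_u,\p_{u_t},\p_{u_x},\p_{u_y}$ mean precisely $Z^\mu_u=Z^\mu_{u_\nu}=0$. The contact condition~\eqref{eq:dNGenContactTransB} then yields $U_{u_\nu}=0$. Hence $\Psi$ is the first prolongation of a point transformation~$\Phi$ of $(t,x,y,u)$-space.

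Since prolongation commutes with pushforward and is injective on vector fields, we get $\Phi_*\mathfrak s^2_{\ref{eq:dN}}\subseteq\mathfrak g_{\ref{eq:dN}}$. By Theorem~\ref{thm:dNDefSubalgs}(i), $\Phi_*\mathfrak g_{\ref{eq:dN}}\subseteq\mathfrak g_{\ref{eq:dN}}$, and prolonging gives $\Psi_*\mathfrak g^{\rm c}_{\ref{eq:dN}}\subseteq\mathfrak g^{\rm c}_{\ref{eq:dN}}$.

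All arguments are local, so they apply verbatim to restrictions to open subsets, which accounts for the phrase ``together with its restrictions.''
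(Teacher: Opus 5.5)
Your overall architecture is the right reduction, and Step~3 is fine. You use $\mathfrak s^3_{(1)}$ to force $Z^\mu_u=Z^\mu_{u_\nu}=0$, obtain $U_{u_\nu}=0$ from the contact condition, and then apply Theorem~\ref{thm:dNDefSubalgs}(i) to the underlying point transformation. The gap is Step~2, which carries the whole content. The mechanism you propose there cannot work, because the base components of the pushforwards of $\mathfrak s^3_{(1)}$ alone never force $Z^\mu_{u_\nu}=0$.

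A counterexample is the partial Legendre transformation $\tilde t=t$, $\tilde x=u_x$, $\tilde y=y$, $\tilde u=u-xu_x$, $\tilde u_{\tilde t}=u_t$, $\tilde u_{\tilde x}=-x$, $\tilde u_{\tilde y}=u_y$. It is contact and has $Z^x_{u_x}=1$.
\begin{itemize}
\item It maps $Z(1)_{(1)}$, $Z(t)_{(1)}$, $Z(t^2)_{(1)}$, $R^y(1)_{(1)}$ and $R^y(t)_{(1)}$ to their tilded counterparts, and $R^x(1)_{(1)}$ to $\p_{\tilde x}=\tilde P^x(1)_{(1)}$.
\item It maps $R^x(t)_{(1)}$ to $\tilde t\p_{\tilde x}-\tilde u_{\tilde x}\p_{\tilde u_{\tilde t}}$, whose base part is exactly that of $\tilde P^x(\tilde t)_{(1)}$. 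This field fails to lie in $\mathfrak g_{\ref{eq:dN}(1)}$ only because the terms $-\tfrac12\tilde x^2\p_{\tilde u}-\tilde x\p_{\tilde u_{\tilde x}}$ are missing.
\end{itemize}
Here $\tilde\chi^{4x}=1\ne0$, yet $x\circ\Psi^{-1}=-\tilde u_{\tilde x}$. So your claim that a nonzero $\tilde\chi$ forces $(t,x,y)\circ\Psi^{-1}$ to depend on $\tilde t$ alone is false. Moreover, $t\circ\Psi^{-1}$ depending only on $\tilde t$ is no contradiction, since it holds for every genuine symmetry. For the $t$-components the correct contradiction is direct: $Z^t=Z^t(t)$ gives $Z^t_u=Z^t_{u_t}=0$, i.e., all $\tilde\tau^i=0$. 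You also dropped the $D^{\rm s}$-part: the $x$-component of an element of $\mathfrak g_{\ref{eq:dN}}$ is $\tfrac13\tau_tx+\lambda x+\chi$. The coefficients $\lambda^i$ must first be removed by splitting with respect to $Z^x$.

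The missing ingredient is the $\tilde u$-components, where the $P^z$-parts enter through $-\tfrac12\tilde\chi_{\tilde t}\tilde x^2-\tfrac12\tilde\rho_{\tilde t}\tilde y^2$. This is exactly as in the point case (cf.\ the treatment of $U_u$ in the proof of Theorem~\ref{thm:dNLaxPairDefSubalgs}). Index the parameters of the images of $Z(1)_{(1)}$, $Z(t)_{(1)}$, $Z(t^2)_{(1)}$, $R^z(1)_{(1)}$, $R^z(t)_{(1)}$ by $1,2,3,4z,5z$, and suppose $Z^x$ depends on fiber variables.
\begin{itemize}
\item Splitting the base relations with respect to $Z^x$ gives $\lambda^i=0$ and, at $\tilde t=Z^t$, $Z^x_u=\tilde\chi^1$, $Z^x_{u_t}=\tilde\chi^2-t\tilde\chi^1$ and $Z^x_{u_z}=\tilde\chi^{4z}-z\tilde\chi^1$.
\item Splitting the $\tilde u$-components of your two relations with respect to $(Z^x)^2$ gives $\tilde\chi^3_{\tilde t}-2t\tilde\chi^2_{\tilde t}+t^2\tilde\chi^1_{\tilde t}=0$ and $\tilde\chi^{5z}_{\tilde t}-t\tilde\chi^{4z}_{\tilde t}=z(\tilde\chi^2_{\tilde t}-t\tilde\chi^1_{\tilde t})$.
\item Combining these with the $t$-derivatives of the base relations yields $\tilde\chi^2=t\tilde\chi^1$ and $\tilde\chi^{4z}=z\tilde\chi^1$ at $Z^t$.
\end{itemize}
Then either $\tilde\chi^1(Z^t)\ne0$, in which case $t,x,y$ are all functions of $Z^t$ (the genuine rank contradiction), or all fiber derivatives of $Z^x$ vanish.

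Your fallback does not substitute for this. Since $\mathfrak s^2_{\ref{eq:dN}}+\mathfrak s^3_{\ref{eq:dN}}$ is closed and $\Psi_*$ preserves brackets, brackets add nothing. Membership conditions for further elements such as $D^t(t)$ or $P^x(t)$ do rule out the Legendre example through base components. However, you neither single them out nor show that they suffice in general.
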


Based on the proof of Theorem~6 from~\cite{boyk2024a}, presented above as Theorem~\ref{thm:dNDefSubalgs}(i), one can analogously answer the question of whether the subalgebras of the maximal Lie invariance pseudoalgebra~$\mathfrak g_{\ref{eq:dNLaxPair}}$ of the (real) nonlinear Lax representation~\eqref{eq:dNLaxPair} of the equation~\eqref{eq:dN}, which naturally arise when constructing its point-symmetry pseudogroup~$G_{\ref{eq:dNLaxPair}}$ by the algebraic method, determine the point diffeomorphisms stabilizing the pseudoalgebra~$\mathfrak g_{\ref{eq:dNLaxPair}}$.
We denote
\begin{gather*}
\mathfrak s_{\ref{eq:dNLaxPair}}^1:=\big\langle
\bar Z(1),\,\bar Z(t),\,\bar R^x(1),\,\bar R^y(1),\,
\bar P^x(1),\,\bar P^y(1),\,\bar P^x(t),\,\bar P^y(t),\,\bar D^{\rm s},\,\bar P^\vartheta\big\rangle,\\
\mathfrak s_{\ref{eq:dNLaxPair}}^2:=\big\langle
\bar Z(1),\, \bar Z(t),\,\bar R^x(1),\,\bar R^y(1),\,
\bar P^x(1),\,\bar P^y(1),\,\bar P^x(t),\,\bar P^y(t),\,\bar D^t(1),\,\bar D^t(t),\,\bar P^\vartheta\big\rangle.
\end{gather*}

\begin{theorem}\label{thm:dNLaxPairDefSubalgs}
The subalgebra~$\mathfrak s_{\ref{eq:dNLaxPair}}^2$ of the pseudoalgebra~$\mathfrak g_{\ref{eq:dNLaxPair}}$ determines the diffeomorphisms stabilizing this pseudoalgebra.
At the same time, the subalgebra~$\mathfrak s_{\ref{eq:dNLaxPair}}^1$
and even the subalgebra ${\mathfrak s_{\ref{eq:dNLaxPair}}^1+\langle\bar D^t(1)\rangle}$ of the same pseudoalgebra do not possess this property.
\end{theorem}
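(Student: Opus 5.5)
The plan follows the proof of Theorem~6 from~\cite{boyk2024a} (Theorem~\ref{thm:dNDefSubalgs}(i) here), extended by one coordinate~$\vartheta$, in the same way that the proofs of Theorems~\ref{thm:LaxdNSystemPointSymPseudogroup} and~\ref{thm:LaxSymPotNizhEqCompletePointSymGroup} extend their base-case proofs.

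\textbf{Positive part.} Let $\bar\Phi\colon(\tilde t,\tilde x,\tilde y,\tilde u,\tilde\vartheta)=(T,X,Y,U,\Theta)$ be a local diffeomorphism with $\bar\Phi_*\mathfrak s^2_{\ref{eq:dNLaxPair}}\subseteq\mathfrak g_{\ref{eq:dNLaxPair}}$. We must show that $\bar\Phi_*\mathfrak g_{\ref{eq:dNLaxPair}}\subseteq\mathfrak g_{\ref{eq:dNLaxPair}}$.
\begin{enumerate}
\item First write $\bar\Phi_*\bar P^\vartheta$ as a general element of $\mathfrak g_{\ref{eq:dNLaxPair}}$. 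Then use the conditions coming from $\bar Z(1)$, $\bar Z(t)$, $\bar R^z(1)$, $\bar P^z(1)$, $\bar P^z(t)$, $\bar D^t(1)$ and $\bar D^t(t)$. The aim is to show that $T,X,Y,U$ do not depend on~$\vartheta$ and that $\Theta_\vartheta$ is a constant.
\item For this I expect to use commutation relations inside $\mathfrak s^2_{\ref{eq:dNLaxPair}}$. Since $\bar P^\vartheta$ commutes with all of $\mathfrak s^2_{\ref{eq:dNLaxPair}}$, its image must commute with the images of these elements.
\item The images of $\bar Z(1)$, $\bar P^z(1)$ and $\bar D^t(1)$ carry nonzero $\p_{\tilde u}$-, $\p_{\tilde z}$- and $\p_{\tilde t}$-parts, which the later steps will confirm. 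Commuting with them forces the image of $\bar P^\vartheta$ to have parameter functions that are constant or lie in the centre. The centre here is $\langle \bar Z(1),\bar P^\vartheta\rangle$.
\item This should yield $\bar\Phi_*\bar P^\vartheta=a\bar Z(1)+b\bar P^\vartheta$ with $b\ne0$. From this, $T,X,Y$ and $U-\frac ab\Theta$ are independent of~$\vartheta$.
\item The projection $\bar\varpi_*\mathfrak g_{\ref{eq:dNLaxPair}}=\mathfrak g_{\ref{eq:dN}}$ then sends the conditions for the remaining generators to the conditions of Theorem~6 from~\cite{boyk2024a} for the $(t,x,y,u)$-part. This holds up to the term $\frac ab\Theta$ in~$U$.
\item Collecting the $\p_{\tilde\vartheta}$-components of the images of $\bar Z(1)$, $\bar P^z(1)$, $\bar D^t(1)$ and $\bar Z(t)$ should give $\Theta_t=\Theta_x=\Theta_y=\Theta_u=0$.
\item Then $\Theta=b'\vartheta+B$ and $U=W(t,x,y,u)+a'\vartheta$.
\end{enumerate}

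Steps 1--7 reduce the problem to the $(t,x,y,u)$-part. By the base Theorem~6, this part has the form~\eqref{eq:dNPointSymForm} up to~$\mathscr J$, with an extra term $a'\vartheta$ in~$U$. One then checks directly that every such diffeomorphism maps each generator family of $\mathfrak g_{\ref{eq:dNLaxPair}}$ into $\mathfrak g_{\ref{eq:dNLaxPair}}$. The term $a'\vartheta$ produces only multiples of $\bar Z(1)$, except in the image of $\bar D^{\rm s}$. There, $[\bar D^{\rm s},\,\cdot\,]$ acting on $\vartheta$ gives $\frac32 a'\vartheta\p_{\tilde u}$ against $3a'\vartheta$. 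This is a combination of $\bar D^{\rm s}$ and $\bar Z(1)$, which is admissible, but it needs a check. The main obstacle I expect is exactly this bookkeeping: ensuring no admissible $\vartheta$-dependence survives that would break invariance of some generator (notably $\bar D^t(\tau)$ with $\tau_{tt}\ne0$). If it does break, I would additionally use $\bar D^t(t)$ to force $a'=0$.

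\textbf{Negative part.} Following part~(ii) of Theorem~\ref{thm:dNDefSubalgs}, I would exhibit a counterexample. Take the prolongation to $\vartheta$ (trivial, $\tilde\vartheta=\vartheta$) of the diffeomorphism used in~\cite{boyk2024a} to show that $\mathfrak s^1_{\ref{eq:dN}}+\langle D^t(1)\rangle$ is not defining. Since $\bar P^\vartheta$ is preserved and the $\frac32\vartheta\p_\vartheta$-part of $\bar D^{\rm s}$ is unchanged, this map pushes $\mathfrak s^1_{\ref{eq:dNLaxPair}}+\langle\bar D^t(1)\rangle$ into $\mathfrak g_{\ref{eq:dNLaxPair}}$. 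It nevertheless fails to preserve $\mathfrak g_{\ref{eq:dNLaxPair}}$, because its projection fails to preserve $\mathfrak g_{\ref{eq:dN}}$.
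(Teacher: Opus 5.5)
There is a genuine gap in each part of your proposal, and checking them shows that the statement itself cannot be proved as written. Only the claim about $\mathfrak s^1_{\ref{eq:dNLaxPair}}$ alone holds.

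\textbf{Positive part.} Your steps 1--7 are essentially the paper's componentwise splitting. (Your commutator argument for $\bar P^\vartheta$ only becomes usable once the images of the other generators are known, so in practice it is the same computation.) They correctly leave the residual freedom $\Theta=b'\vartheta+B$, $U=W(t,x,y,u)+a'\vartheta$, with $(T,X,Y,W)$ of the form~\eqref{eq:dNPointSymForm} up to~$\mathscr J$. The gap is your final check, and it cannot be closed.
\begin{itemize}
\item In the image of $\bar D^{\rm s}$, the $\p_{\tilde u}$-component contains $\frac32a'\vartheta$, while $\tilde{\bar D}^{\rm s}$ needs $3a'\vartheta$. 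The remainder $-\frac32a'\tilde\vartheta\p_{\tilde u}$ is not in $\mathfrak g_{\ref{eq:dNLaxPair}}$, since no element of this pseudoalgebra has a $\vartheta$-dependent $u$-component. So it is not an admissible combination of $\bar D^{\rm s}$ and $\bar Z(1)$.
\item Your fallback via $\bar D^t(t)$ cannot force $a'=0$. Take the map $\tilde u=u+a'\vartheta$, identity on $t,x,y,\vartheta$. It fixes every basis element of $\mathfrak s^2_{\ref{eq:dNLaxPair}}$ except $\bar P^\vartheta\mapsto\tilde P^\vartheta+a'\tilde Z(1)$. Hence $\bar\Phi_*\mathfrak s^2_{\ref{eq:dNLaxPair}}\subseteq\mathfrak g_{\ref{eq:dNLaxPair}}$, yet $\bar\Phi_*\bar D^{\rm s}\notin\mathfrak g_{\ref{eq:dNLaxPair}}$.
\end{itemize}
So the first assertion is false as stated. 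Only $\bar D^{\rm s}$ kills $a'$: its $u$-component gives $\frac32a'=3a'$. The paper's proof slips at the same point. Its condition for $\kappa=0$ yields only $U_\vartheta=\tilde\sigma^0(T)$, not $U_\vartheta=0$, and the conditions for $\kappa=7,8$ make this constant, not zero.

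\textbf{Negative part.} Your trivially prolonged witness works for $\mathfrak s^1_{\ref{eq:dNLaxPair}}$. It fails for $\mathfrak s^1_{\ref{eq:dNLaxPair}}+\langle\bar D^t(1)\rangle$, and $\bar D^t(1)$ is exactly the element your justification skips.
\begin{itemize}
\item In the base case, any witness must send $D^t(1)$ to an element with nonzero $D^{\rm s}$-coefficient~$\lambda$. Otherwise the $x$-component of that condition forces the $x$-scaling factor to be $CT_t^{1/3}$, and the map lies in $G_{\ref{eq:dN}}$.
\item A typical witness is $\tilde x=e^{\lambda t}x$, $\tilde y=e^{\lambda t}y$, $\tilde u=e^{3\lambda t}\big(u-\frac16\lambda(x^3+y^3)\big)$. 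After trivial prolongation it sends $\bar D^t(1)$ to $\tilde{\bar D}^t(1)+\lambda\tilde{\bar D}^{\rm s}-\frac32\lambda\tilde\vartheta\p_{\tilde\vartheta}$, which is not in $\mathfrak g_{\ref{eq:dNLaxPair}}$.
\item No other prolongation helps. The conditions for $\bar P^\vartheta$ and the common generators force $\Theta=\nu^0\vartheta+\Theta^0(t)$ with $\nu^0\ne0$. Then the $\vartheta$-component $\Theta_t=\nu^7+\frac32\lambda^7\Theta$ forces $\lambda^7=0$.
\end{itemize}
Combine this with $\bar D^{\rm s}$ killing $a'$, and run your reduction on $\mathfrak s^1_{\ref{eq:dNLaxPair}}+\langle\bar D^t(1)\rangle$. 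It returns exactly the stabilizer, so this subalgebra is in fact defining. Of the three claims, only ``$\mathfrak s^1_{\ref{eq:dNLaxPair}}$ is not defining'' is correct, and your prolonged witness proves it.
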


\begin{proof}
It suffices to modify the proof of Theorem~6 from~\cite{boyk2024a}.
Instead of the pseudoalgebra~$\mathfrak g_{\ref{eq:dN}}$ and a point transformation~$\Phi$ in the space with coordinates $(t,x,y,u)$, we consider the pseudoalgebra~$\mathfrak g_{\ref{eq:dNLaxPair}}$ and a point transformation
\[
\bar\Phi\colon\ (\tilde t,\tilde x,\tilde y,\tilde u,\tilde\vartheta)=(T,X,Y,U,\Theta)
\]
in the extended space with coordinates $(t,x,y,u,\vartheta)$, where $(T,X,Y,U,\Theta)$ is a tuple of smooth functions of $(t,x,y,u,\vartheta)$ with a nonvanishing Jacobian.
Suppose that $\bar\Phi_*\mathfrak s_{\ref{eq:dNLaxPair}}^1\subseteq\mathfrak g_{\ref{eq:dNLaxPair}}$ (or, correspondingly, $\bar\Phi_*\mathfrak s_{\ref{eq:dNLaxPair}}^2\subseteq\mathfrak g_{\ref{eq:dNLaxPair}}$). We denote the basis elements
$\bar Z(1)$, $\bar Z(t)$, $\bar R^z(1)$, $\bar P^z(1)$, $\bar P^z(t)$, $\bar D^{\rm s}$, $\bar D^t(1)$, $\bar D^t(t)$ and $\bar P^\vartheta$ with $z\in\{x,y\}$ of the subalgebras~$\mathfrak s_{\ref{eq:dNLaxPair}}^1$ and~$\mathfrak s_{\ref{eq:dNLaxPair}}^2$ as $Q^1$, $Q^2$, $Q^{3z}$, $Q^{4z}$, $Q^{5z}$, $Q^6$, $Q^7$, $Q^8$ and $Q^0$, respectively.
For each basis element~$Q$ of the subalgebra~$\mathfrak s_{\ref{eq:dNLaxPair}}^1$ (and then independently for the subalgebra~$\mathfrak s_{\ref{eq:dNLaxPair}}^2$), we apply the condition $\bar\Phi_*Q\in\mathfrak g_{\ref{eq:dNLaxPair}}$, which yields the equations
\begin{gather}\label{eq:dNLaxPairMainPushforwardsMod}
\bar\Phi_*Q^\kappa=
\nu^\kappa\tilde P^\vartheta+
\lambda^\kappa\tilde D^{\rm s}+\tilde D^t(\tilde\tau^\kappa)+\tilde P^x(\tilde\chi^\kappa)+\tilde P^y(\tilde\rho^\kappa)
+\tilde R^x(\tilde\alpha^\kappa)+\tilde R^y(\tilde\beta^\kappa)+\tilde Z(\tilde\sigma^\kappa),
\end{gather}
where $\nu^\kappa$ and $\lambda^\kappa$ are constants, $\tilde\tau^\kappa$, $\tilde\chi^\kappa$, $\tilde\rho^\kappa$, $\tilde\alpha^\kappa$, $\tilde\beta^\kappa$ and $\tilde\sigma^\kappa$ are arbitrary smooth functions of~$\tilde t$ and the index~$\kappa$ runs through the sets $\{1,2,3z,4z,5z,6,0\}$ and $\{1,2,3z,4z,5z,7,8,0\}$ for the subalgebras~$\mathfrak s_{\ref{eq:dNLaxPair}}^1$ and~$\mathfrak s_{\ref{eq:dNLaxPair}}^2$, respectively.
Here and below, $z$ runs through the set $\{x,y\}$.
In the notation of the vector fields on the right-hand side of the equations~\eqref{eq:dNLaxPairMainPushforwardsMod}, a tilde is used to emphasize that these vector fields are represented in the transformed coordinates, while a bar is omitted.

First, we consider the common basis elements of the subalgebras~$\mathfrak s_{\ref{eq:dNLaxPair}}^1$ and~$\mathfrak s_{\ref{eq:dNLaxPair}}^2$.
Let us analyze the equation~\eqref{eq:dNLaxPairMainPushforwardsMod} with $\kappa=1,2$.
Collecting the $t$-components, we obtain the equations $T_u=\tilde\tau^1(T)$ and $tT_u=\tilde\tau^2(T)$.
Suppose that $T_u\ne0$.
Then $\tilde\tau^1\tilde\tau^2\ne0$ and the combination of the last two equations reduces to the equation $t=f(T)$ with $f(\tilde t):=\tilde\tau^2(\tilde t)/\tilde\tau^1(\tilde t)$, whence we obtain $1=f_{\tilde t}(T)T_t$ by differentiation with respect to~$t$.
Therefore, the derivative $f_{\tilde t}$ is nonvanishing, and according to the inverse function theorem, we find that $T=T(t)$, which contradicts our assumption.
Hence, $T_u=0$ and $\tilde\tau^1=\tilde\tau^2=0$.

Suppose that $(X_u,Y_u)\ne(0,0)$.
We analyze the equations~\eqref{eq:dNLaxPairMainPushforwardsMod} for $\kappa=1,2$, splitting it componentwise.
Without loss of generality, we can assume that $X_u\ne0$, since the case $Y_u\ne0$ is completely analogous.
Collecting the $x$-components yields the equations $X_u=\lambda^1X+\tilde\chi^1(T)$ and $tX_u=\lambda^2X+\tilde\chi^2(T)$.
Their consequence, $(\lambda^1t-\lambda^2)X+t\tilde\chi^1(T)-\tilde\chi^2(T)=0$, can be split with respect to $X$, which gives $\lambda^1=\lambda^2=0$ and $t\tilde\chi^1=\tilde\chi^2$.
Then the $y$-components simplify to $Y_u=\tilde\rho^1(T)$ and $tY_u=\tilde\rho^2(T)$, and thus they can be combined into the equation $t\tilde\rho^1=\tilde\rho^2$.
Collecting the $u$-components, we obtain the equations
\begin{gather*}
 U_u=-\frac12\tilde\chi^1_{\tilde t}(T)X^2-\frac12\tilde\rho^1_{\tilde t}(T)Y^2
+\tilde\alpha^1(T)X+\tilde\beta^1(T)Y+\tilde\sigma^1(T),
\\
tU_u=-\frac12\tilde\chi^2_{\tilde t}(T)X^2-\frac12\tilde\rho^2_{\tilde t}(T)Y^2
+\tilde\alpha^2(T)X+\tilde\beta^2(T)Y+\tilde\sigma^2(T).
\end{gather*}
Subtracting the first equation multiplied by $t$ from the second one and splitting the resulting equation with respect to $X^2$ and $Y^2$ (which is possible since $T$, $X$ and $Y$ are functionally independent) gives $t\tilde\chi^1_{\tilde t}(T)=\tilde\chi^2_{\tilde t}(T)$ and $t\tilde\rho^1_{\tilde t}(T)=\tilde\rho^2_{\tilde t}(T)$.
The differential consequences of these equations pairwise with the equations $t\tilde\chi^1(T)=\tilde\chi^2(T)$ and $t\tilde\rho^1(T)=\tilde\rho^2(T)$ are the equations $\tilde\chi^1=\tilde\chi^2=0$ and $\tilde\rho^1=\tilde\rho^2=0$, respectively.
Thus, we have $X_u=Y_u=0$, which contradicts our assumption.
Therefore, in fact, $X_u=Y_u=0$, $\lambda^1=\lambda^2=0$, $\tilde\chi^1=\tilde\chi^2=0$ and $\tilde\rho^1=\tilde\rho^2=0$.
Collecting the $\vartheta$-components yields $\nu^1=\nu^2=0$ and $\Theta_u=0$, and therefore $U_u\ne0$ due to the nondegeneracy of the transformation~$\Phi$.

Subtracting the first equation multiplied by $t$ from the second one and splitting the resulting equation with respect to $X^2$ and $Y^2$ (which is possible since $T$, $X$ and $Y$ are functionally independent) gives $t\tilde\chi^1_{\tilde t}(T)=\tilde\chi^2_{\tilde t}(T)$ and $t\tilde\rho^1_{\tilde t}(T)=\tilde\rho^2_{\tilde t}(T)$.
The differential consequences of these equations pairwise with the equations $t\tilde\chi^1(T)=\tilde\chi^2(T)$ and $t\tilde\rho^1(T)=\tilde\rho^2(T)$ are the equations $\tilde\chi^1=\tilde\chi^2=0$ and $\tilde\rho^1=\tilde\rho^2=0$, respectively.
Thus, we have $X_u=Y_u=0$, which contradicts our assumption.
Therefore, in fact, $X_u=Y_u=0$, $\lambda^1=\lambda^2=0$, $\tilde\chi^1=\tilde\chi^2=0$ and $\tilde\rho^1=\tilde\rho^2=0$.
Collecting the $\vartheta$-components yields $\nu^1=\nu^2=0$ and $\Theta_u=0$, and therefore $U_u\ne0$ due to the nondegeneracy of the transformation~$\Phi$.

By collecting the $t$-components in the equations~\eqref{eq:dNLaxPairMainPushforwardsMod} for $\kappa=4z,5z$, we find the equations $T_z=\tilde\tau^{4z}(T)$ and $tT_z=\tilde\tau^{5z}(T)$, whence $t\tilde\tau^{4z}(T)=\tilde\tau^{5z}(T)$.
Suppose that $T_z\ne0$.
Then $\tilde\tau^{4z}(T)\ne0$ and $t=f(T)$ along with $f=\tilde\tau^{5z}/\tilde\tau^{4z}$.
Using similar arguments as at the beginning of the proof, we obtain $T=T(t)$, which contradicts our assumption.
Hence, $T_z=0$ and $\tilde\tau^{4z}=\tilde\tau^{5z}=0$.

Componentwise splitting of the equations~\eqref{eq:dNLaxPairMainPushforwardsMod} with $\kappa=3z$ yields $\tilde\tau^{3z}=\tilde\chi^{3z}=\tilde\rho^{3z}=0$, $\lambda^{3z}=\nu^{3z}=0$ and $zU_u=\tilde\alpha^{3z}(T)X+\tilde\beta^{3z}(T)Y+\tilde\sigma^{3z}(T)$.
In view of the single unstated essential consequence $U_u=\tilde\alpha^1(T)X+\tilde\beta^1(T)Y+\tilde\sigma^1(T)$ of the equations~\eqref{eq:dNLaxPairMainPushforwardsMod} with $\kappa=1,2$, this means that $x$ and~$y$ can be represented as linear fractional functions of $(X,Y)$ with coefficients depending on~$T$.
From the first-order differential consequences of these representations with respect to~$x$ and~$y$, the inequality $X_xY_y-X_yY_x\ne0$ follows.
Therefore, in particular, $(X_x, X_y)\ne(0,0)$.

Let us collect the $x$-components in the equations~\eqref{eq:dNLaxPairMainPushforwardsMod} for $\kappa=4z,5z$. Eliminating~$X_z$ from the obtained equations $X_z=\lambda^{4z}X+\tilde\chi^{4z}(T)$ and $tX_z=\lambda^{5z}X+\tilde\chi^{5z}(T)$ and splitting the result first with respect to~$X$ and then, where possible, with respect to~$t$, we get $\lambda^{4z}=\lambda^{5z}=0$ and $t\tilde\chi^{4z}=\tilde\chi^{5z}$.
We choose $z\in\{x,y\}$ such that $X_z=\tilde\chi^{4z}(T)\ne0$.
Then $t=f(T)$, where $f:=\tilde\chi^{5z}/\tilde\chi^{4z}$.
This implies that the function~$T$ depends only on $t$ with $T_t\ne0$.
Solving the specified representations for~$x$ and~$y$ as equations with respect to~$X$ and~$Y$, we obtain $X={\rm N}^X/{\rm D}$ and $Y={\rm N}^Y/{\rm D}$, where ${\rm N}^X=X^1(t)x+X^2(t)y+X^0(t)$, ${\rm N}^Y=Y^1(t)x+Y^2(t)y+Y^0(t)$ and ${\rm D}=K^1(t)x+K^2(t)y+K^0(t)$ for certain smooth functions $X^0$, $X^1$, $X^2$, $Y^0$, $Y^1$, $Y^2$, $K^0$, $K^1$ and $K^2$ of~$t$.
Under the obtained constraints, the only consequences of the $y$-components of the equations~\eqref{eq:dNLaxPairMainPushforwardsMod} with $\kappa=4z,5z$ are the equations $t\tilde\rho^{4z}=\tilde\rho^{5z}$.

The equations derived by collecting the $x$- and $y$-components in the equations~\eqref{eq:dNLaxPairMainPushforwardsMod} with $\kappa=4z$ reduce to
\begin{gather}\label{eq:Mu4z}
\begin{split}
&X^1{\rm D}-K^1{\rm N}^X=\tilde\chi^{4x}(T){\rm D}^2,\quad
 X^2{\rm D}-K^2{\rm N}^X=\tilde\chi^{4y}(T){\rm D}^2,\\
&Y^1{\rm D}-K^1{\rm N}^Y=\tilde\rho^{4x}(T){\rm D}^2,\quad
 Y^2{\rm D}-K^2{\rm N}^Y=\tilde\rho^{4y}(T){\rm D}^2.
\end{split}
\end{gather}
Suppose that $(K^1,K^2)\ne(0,0)$.
Then either $X^1K^2-X^2K^1\ne0$ or $Y^1K^2-Y^2K^1\ne0$, since otherwise the Jacobian of the functions~$T$, $X$ and~$Y$ vanishes.
Let $X^1K^2-X^2K^1\ne0$.
The case $Y^1K^2-Y^2K^1\ne0$ can be treated analogously up to the permutation of the variables~$x$ and~$y$.
The Jacobian of the functions~${\rm N}^X$ and~${\rm D}$ with respect to $(x,y)$ is nonvanishing and splitting the first two equations in~\eqref{eq:Mu4z} with respect to $({\rm N}^X,{\rm D})$ yields, in particular, the equations $X^1=X^2=0$, which contradict the inequality $X^1K^2-X^2K^1\ne0$.
Therefore, the assumption $(K^1,K^2)\ne(0,0)$ is false.
Thus, $K^1=K^2=0$, meaning that the functions~$X$ and~$Y$ are affine with respect to $(x,y)$ with coefficients depending on~$t$.
We redenote $X^k/K^0$ as~$X^k$ and $Y^k/K^0$ as~$Y^k$ for $k=0,1,2$.
Collecting the $\partial_\vartheta$-components in the equations~\eqref{eq:dNLaxPairMainPushforwardsMod} with $\kappa=4z,5z$, we find the conditions $\nu^{4z}=\nu^{5z}=0$ and $\Theta_z=0$.
The remaining analysis at this stage corresponds to the part with $i=4,5$ of the proof of Theorem~2(i) from~\cite{boyk2024a}.
The only difference is that the function $W^0=W^0(t)$ from that proof now depends also on~$\partial_\vartheta$.

Splitting componentwise the equation~\eqref{eq:dNLaxPairMainPushforwardsMod} with $\kappa=0$, we additionally find that $U_\vartheta=0$ %$\tilde\tau^0=\tilde\chi^0=\tilde\rho^0=\tilde\alpha^0=\tilde\beta^0=\tilde\sigma^0=0$, $\lambda^0=0$,
and $\Theta_\vartheta=\nu^0$, where $\nu^0$ is a constant that is nonvanishing due to the nondegeneracy of the transformation~$\Phi$.

The analysis of the equations~\eqref{eq:dNLaxPairMainPushforwardsMod} with $\kappa=6,7,8$ is also analogous to the corresponding part of the proof of Theorem~6 from~\cite{boyk2024a}.
The differences are the appearance of additional equations derived when collecting the $\partial_\vartheta$-component and the dependence of $W^0$ on~$\partial_\vartheta$.
In the equations~\eqref{eq:dNLaxPairMainPushforwardsMod} with $\kappa=7,8$, we collect the $\tilde t$-, $\tilde x$-, $\tilde u$- and $\partial_\vartheta$-components.
Combining the obtained equations yields $t\tilde\tau^7=\tilde\tau^8$, $\lambda^7=\lambda^8=0$, $\nu^7=\nu^8=0$, $W^0_\vartheta=0$ and $\Theta_t=0$.
Taking only the equation~\eqref{eq:dNLaxPairMainPushforwardsMod} with $\kappa=6$ instead of the equations~\eqref{eq:dNLaxPairMainPushforwardsMod} with $\kappa=7,8$, one cannot obtain the condition $W^0_\vartheta=0$, even if the equation~\eqref{eq:dNLaxPairMainPushforwardsMod} with $\kappa=7$ is additionally considered.
Thus, the condition $\Phi_*\mathfrak s_{\ref{eq:dNLaxPair}}^2\subseteq\mathfrak g_{\ref{eq:dNLaxPair}}$ implies $\Phi_*\mathfrak g_{\ref{eq:dNLaxPair}}\subseteq\mathfrak g_{\ref{eq:dNLaxPair}}$.
In other words, the subalgebra~$\mathfrak s_{\ref{eq:dNLaxPair}}^2$ determines the diffeomorphisms stabilizing the pseudoalgebra~$\mathfrak g_{\ref{eq:dNLaxPair}}$, which cannot be said about the subalgebra~$\mathfrak s_{\ref{eq:dNLaxPair}}^1$ or even about the subalgebra~$\mathfrak s_{\ref{eq:dNLaxPair}}^1+\langle\bar D^t(1)\rangle$.
\end{proof}

In view of the results obtained above, it is also worth analyzing which subalgebras of the maximal Lie invariance pseudoalgebras~$\mathfrak g_{\ref{eq:dNSystem}}$ and~$\mathfrak g_{\ref{eq:LaxdNSystem}}$ of the (real) dispersionless Nyzhnyk system~\eqref{eq:dNSystem} and its Lax representation~\eqref{eq:LaxdNSystem}, respectively, naturally arising when constructing their point-symmetry pseudogroups~$G_{\ref{eq:dNSystem}}$ and~$G_{\ref{eq:LaxdNSystem}}$ by the algebraic method, determine the point diffeomorphisms that stabilize the corresponding pseudoalgebras~$\mathfrak g_{\ref{eq:dNSystem}}$ and~$\mathfrak g_{\ref{eq:LaxdNSystem}}$.
We introduce the notation
\begin{gather*}
\mathfrak s_{\ref{eq:dNSystem}}^1:=\mathfrak s_{\ref{eq:dNSystem}}^0+\big\langle\hat D^t(t^2)\big\rangle,\quad
\mathfrak s_{\ref{eq:dNSystem}}^2:=\mathfrak s_{\ref{eq:dNSystem}}^0+\big\langle\hat D^{\rm s}\big\rangle\\\quad\mbox{with}\quad
\mathfrak s_{\ref{eq:dNSystem}}^0:=
\big\langle\hat P^x(1),\,\hat P^y(1),\,\hat P^x(t),\,\hat P^y(t),\,
\hat P^x(t^2),\,\hat P^y(t^2),\,\hat D^t(1),\,\hat D^t(t)\big\rangle,
\\[1ex]
\mathfrak s_{\ref{eq:LaxdNSystem}}^1:=\mathfrak s_{\ref{eq:LaxdNSystem}}^0+\big\langle\check D^t(t^2)\big\rangle,\quad
\mathfrak s_{\ref{eq:LaxdNSystem}}^2:=\mathfrak s_{\ref{eq:LaxdNSystem}}^0+\big\langle\check D^{\rm s}\big\rangle\\\quad\mbox{with}\quad
\mathfrak s_{\ref{eq:LaxdNSystem}}^0:=
\big\langle\check P^x(1),\,\check P^y(1),\,\check P^x(t),\,\check P^y(t),\,
\check P^x(t^2),\,\check P^y(t^2),\,\check D^t(1),\,\check D^t(t),\,\check P^\vartheta\big\rangle
\end{gather*}
for the specified subalgebras of the pseudoalgebras~$\mathfrak g_{\ref{eq:dNSystem}}$ and~$\mathfrak g_{\ref{eq:LaxdNSystem}}$, respectively.

\begin{theorem}\label{thm:dNSystemDefSubalgs}
The subalgebras~$\mathfrak s_{\ref{eq:dNSystem}}^1$ and~$\mathfrak s_{\ref{eq:dNSystem}}^2$ of the pseudoalgebra~$\mathfrak g_{\ref{eq:dNSystem}}$ determine the diffeomorphisms stabilizing this pseudoalgebra.
\end{theorem}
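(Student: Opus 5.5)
We follow the scheme of the proof of Theorem~\ref{thm:dNLaxPairDefSubalgs}, adapted to the space with coordinates $(t,x,y,w,v^1,v^2)$. Let $\hat\Phi\colon(\tilde t,\tilde x,\tilde y,\tilde w,\tilde v^1,\tilde v^2)=(T,X,Y,W,V^1,V^2)$ be a local diffeomorphism with $\hat\Phi_*\mathfrak s^k_{\ref{eq:dNSystem}}\subseteq\mathfrak g_{\ref{eq:dNSystem}}$, $k=1$ or $k=2$. For each basis element~$Q^\kappa$ of $\mathfrak s^k_{\ref{eq:dNSystem}}$ we write
\[
\hat\Phi_*Q^\kappa=\lambda^\kappa\tilde D^{\rm s}+\tilde D^t(\tilde\tau^\kappa)+\tilde P^x(\tilde\chi^\kappa)+\tilde P^y(\tilde\rho^\kappa)
\]
with constants $\lambda^\kappa$ and smooth functions of~$\tilde t$, and split componentwise. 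The goal is to show that $\hat\Phi$ has the form~\eqref{eq:dNSystemPointSymForm}, up to composition with~$\hat{\mathscr J}$, with $C$ arbitrary. This is enough, because every such transformation is a point symmetry of~\eqref{eq:dNSystem} by Theorem~\ref{thm:dNSystemPointSymPseudogroup} and hence stabilizes~$\mathfrak g_{\ref{eq:dNSystem}}$.

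\textbf{Step 1: the triples $\hat P^z(1),\hat P^z(t),\hat P^z(t^2)$, $z\in\{x,y\}$.} Since $\mathfrak g_{\ref{eq:dNSystem}}$ contains no fields analogous to $Z(\sigma)$ or $R^z(\alpha)$, we use these triples in their place. Collecting $t$-components gives $T_z=\tilde\tau(T)$, $tT_z=\tilde\tau'(T)$ and $t^2T_z=\tilde\tau''(T)$. As in the proof of Theorem~\ref{thm:dNLaxPairDefSubalgs}, the inverse-function argument gives $T_z=0$. Next we derive $T_w=T_{v^1}=T_{v^2}=0$. For this we first note that $\hat P^x(\chi)$ acts on $(x,v^1)$ as the field $\chi\p_x-\chi_t\p_{v^1}$, so the three fields with $\chi\in\{1,t,t^2\}$ span a rank-two distribution at each point. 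Comparing $x$- and $v$-components, and splitting with respect to $X$, $Y$, $V^1$, $V^2$, then forces $\lambda=0$ and $t\tilde\chi^{4z}=\tilde\chi^{5z}$. Hence $T=T(t)$, and $X$ and $Y$ are affine in $(x,y)$ and independent of $(w,v^1,v^2)$. The third field $\hat P^z(t^2)$ is needed to fix the $\p_{v^1}$-, $\p_{v^2}$-components: the $t^2$-term yields $V^z$ affine in $v^z$ and $x$ with the correct $X^0_t/T_t$ shift. This explains why $\hat P^z(t^2)$ is included in~$\mathfrak s^0_{\ref{eq:dNSystem}}$.

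\textbf{Step 2: $\hat D^t(1)$ and $\hat D^t(t)$.} These two fields give $t\tilde\tau^7=\tilde\tau^8$, $\lambda^7=\lambda^8=0$, and the relation $X_x=Y_y=CT_t^{1/3}$ with the mixed coefficients $X_y,Y_x$ vanishing up to~$\hat{\mathscr J}$. They also give the $w$-component $W=CT_t^{-2/3}w$ up to a factor $F(t,x,y)$, which still has to be fixed.

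\textbf{Step 3: the last generator, where the two subalgebras differ.} For~$\mathfrak s^2$ the generator is~$\hat D^{\rm s}$. Its pushforward has constant $\lambda^6$ and fixes the homogeneity of $(W,V^1,V^2)$ in $(w,v^1,v^2)$. Combined with $X$ and $Y$ being affine, it gives $W=c\,w\cdot(\text{function of }t)$, and the $\tilde D^{\rm s}$ relation ties $c$ to $C$. For~$\mathfrak s^1$ the generator is~$\hat D^t(t^2)$. Its $\p_{v^z}$-component contains $\tau_{tt}x$, which forces the $x$-coefficient of $V^1$ to be exactly $-CT_{tt}/(3T_t^{5/3})$, and the $w$-component forces $W$ to be linear in~$w$ with no additive part.

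\textbf{Main obstacle.} The main obstacle is Step 1. Because the pseudoalgebra has no center-like fields, nothing directly singles out the dependent variables $w,v^1,v^2$ as fibre coordinates. One must show from the pushforwards of the $\hat P$-fields alone that $(T,X,Y)$ does not depend on $(w,v^1,v^2)$. We would first try a rank argument: $[\hat P^x(t),\hat P^x(t^2)]$ vanishes, while the distributions spanned by these fields at fixed $t$ are integrable with leaves parametrized by $(x,v^1)$. Comparing with the image leaves, whose coordinates are $(\tilde x,\tilde v^1)$ and $(\tilde y,\tilde v^2)$, should give the splitting. If that fails, we would split the $x$-components with respect to the functionally independent $T,X,Y$, as was done for $\mathfrak g_{\ref{eq:dNLaxPair}}$.
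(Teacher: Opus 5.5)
There is a genuine gap, and it lies in the target you set. You aim to show that $\hat\Phi_*\mathfrak s^k_{\ref{eq:dNSystem}}\subseteq\mathfrak g_{\ref{eq:dNSystem}}$ forces $\hat\Phi$ into the form~\eqref{eq:dNSystemPointSymForm} up to $\hat{\mathscr J}$, i.e., into $G_{\ref{eq:dNSystem}}$. This cannot be done, because the stabilizer of $\mathfrak g_{\ref{eq:dNSystem}}$ is strictly larger than $G_{\ref{eq:dNSystem}}$, and every element of the stabilizer satisfies your hypothesis.

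For example, take constants with $C(m_{11}m_{22}-m_{12}m_{21})\ne0$ and set $\tilde t=t$, $\tilde x=m_{11}x+m_{12}y$, $\tilde y=m_{21}x+m_{22}y$, $\tilde w=Cw$, $\tilde v^1=m_{11}v^1+m_{12}v^2+E_1w$, $\tilde v^2=m_{21}v^1+m_{22}v^2+E_2w$. The generators $\hat D^t(\tau)$ and $\hat D^{\rm s}$ are combinations of $\p_t$, $x\p_x+y\p_y$, $w\p_w+v^1\p_{v^1}+v^2\p_{v^2}$ and $x\p_{v^1}+y\p_{v^2}$, and each of these is mapped to its tilded copy. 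Moreover, $\hat P^x(\chi)\mapsto\tilde P^x(m_{11}\chi)+\tilde P^y(m_{21}\chi)$ and $\hat P^y(\rho)\mapsto\tilde P^x(m_{12}\rho)+\tilde P^y(m_{22}\rho)$. So this map stabilizes $\mathfrak g_{\ref{eq:dNSystem}}$.

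Yet it is not a point symmetry of~\eqref{eq:dNSystem} unless $E_1=E_2=0$ and $(m_{ij})$ is $C$ times the identity or the swap matrix. Already $\tilde v^1=v^1+E_1w$ breaks $v^1_y=w_x$. Consequently, two of your claims are false and cannot be derived: in Step~2, that $X_y$ and $Y_x$ vanish up to $\hat{\mathscr J}$; in Step~3, that the $w$-scaling is tied to the $(x,y)$-scaling and that $V^1$ has exactly the coefficients of~\eqref{eq:dNSystemPointSymForm}. Those restrictions come only from the direct method, which is why $G_{\ref{eq:dNSystem}}$ is not determined algebraically.

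The correct endpoint, as in the paper, is the general form
$T=T(t)$, $X=T_t^{1/3}(A_1x+A_2y)+X^0(t)$, $Y=T_t^{1/3}(B_1x+B_2y)+Y^0(t)$, $W=CT_t^{-2/3}w$,
$V^1=T_t^{-2/3}(A_1v^1+A_2v^2+E_1w)-\frac{T_{tt}}{3T_t^{5/3}}(A_1x+A_2y)-\frac{X^0_t}{T_t}$, and analogously for $V^2$. This must be followed by a direct check that every such map stabilizes $\mathfrak g_{\ref{eq:dNSystem}}$. Theorem~\ref{thm:dNSystemPointSymPseudogroup} plays no role.

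Your Step~1 also misassigns what each generator can deliver. The $t$-components of $\hat\Phi_*\hat P^z(t)$ and $\hat\Phi_*\hat P^z(t^2)$ are $tT_z-T_{v^z}$ and $t^2T_z-2tT_{v^z}$, not $tT_z$ and $t^2T_z$, so the inverse-function argument does not apply directly. What works, and what the paper does, is to push forward the identity $\hat P^z(t^2)-2t\hat P^z(t)+t^2\hat P^z(1)=0$ behind your rank-two remark. This yields $t^2\tilde\tau^{1z}(T)-2t\tilde\tau^{2z}(T)+\tilde\tau^{3z}(T)=0$, $\lambda^{1z}=\lambda^{2z}=\lambda^{3z}=0$, $\tilde\chi^{2z}(T)=t\tilde\chi^{1z}(T)$ and $\tilde\rho^{2z}(T)=t\tilde\rho^{1z}(T)$, hence $T=T(t)$.

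However, no argument using the $\hat P$-fields can make $X$ and $Y$ independent of $w$: the map $\tilde x=x+f(w)$ sends each $\hat P^z(\chi)$ to $\tilde P^z(\chi)$. After this step one has only $X^0=X^0(t,w)$, $Y^0=Y^0(t,w)$ and $W=W(t,w)$. Then $\hat D^t(1)$ and $\hat D^t(t)$ already give $W=CT_t^{-2/3}w$, with no leftover factor $F(t,x,y)$. But they leave $-2wX^0_w=X^0+f(t)$, which admits a term proportional to $T_t^{1/3}|w|^{-1/2}$ in $X^0$, and likewise in $Y^0$.

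Killing exactly this term is the actual job of the extra generator. With $\hat D^{\rm s}$ one gets $wX^0_w=X^0+g(t)$. With $\hat D^t(t^2)$, the combination $\hat D^t(t^2)-2t\hat D^t(t)+t^2\hat D^t(1)=-\frac23(x\p_{v^1}+y\p_{v^2})$ yields, from its $\tilde v^1$- and $\tilde v^2$-components, $X^0$ and $Y^0$ as functions of $t$ alone. It does not serve, as you suggest, to fix the homogeneity of $W$ or to tie constants together.
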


\begin{proof}
We follow the proof of Theorem~14 from~\cite{boyk2024a}.
Consider a point transformation
\[
\hat\Phi\colon\ (\tilde t,\tilde x,\tilde y,\tilde w,\tilde v^1,\tilde v^2)=(T,X,Y,W,V^1,V^2)
\]
in the space with coordinates $(t,x,y,w,v^1,v^2)$ and denote the basis elements $\hat P^z(1)$, $\hat P^z(t)$, $\hat P^z(t^2)$, $\hat D^t(1)$, $\hat D^t(t)$, $\hat D^t(t^2)$ and $\hat D^{\rm s}$ with $z\in\{x,y\}$ of the subalgebra~$\mathfrak s_{\ref{eq:dNSystem}}^1+\mathfrak s_{\ref{eq:dNSystem}}^2$ as $Q^{1z}$, $Q^{2z}$, $Q^{3z}$, $Q^4$, $Q^5$, $Q^6$ and $Q^7$, respectively.
Also, let $v^x:=v^1$ and $v^y:=v^2$.
For each basis element~$Q$ of the subalgebra~$\mathfrak s_{\ref{eq:dNSystem}}^1$ or the subalgebra~$\mathfrak s_{\ref{eq:dNSystem}}^2$, we apply the condition $\hat\Phi_*Q\in\mathfrak g_{\ref{eq:dNSystem}}$.
In other words, we solve the equation
\begin{gather}\label{eq:dNSystemMainPushforwardsMod}
\hat\Phi_*Q^\kappa=\tilde D^t(\tilde\tau^\kappa)+\lambda^\kappa\tilde D^{\rm s}+\tilde P^x(\tilde\chi^\kappa)+\tilde P^y(\tilde\rho^\kappa),
\end{gather}
where
$\lambda^\kappa$ are constants,
$\tilde\tau^\kappa$, $\tilde\chi^\kappa$ and $\tilde\rho^\kappa$ are arbitrary smooth functions of~$\tilde t$ and the index~$\kappa$ runs through the set $\big\{1z,2z,3z,4,5,6\mid z\in\{x,y\}\big\}$ or the set $\big\{1z,2z,3z,4,5,7\mid z\in\{x,y\}\big\}$.
In the notation of the vector fields on the right-hand side of the equation~\eqref{eq:dNSystemMainPushforwardsMod}, a tilde is used to emphasize that these vector fields are represented in the transformed coordinates, while the hat is omitted.
Note that for each value of~$\kappa$, at least one of the parameters on the right-hand side of the corresponding equation~\eqref{eq:dNSystemMainPushforwardsMod} should be nonvanishing.

To begin with, let us consider the equations related to the common part
$\mathfrak s_{\ref{eq:dNSystem}}^0=\mathfrak s_{\ref{eq:dNSystem}}^1\cap\mathfrak s_{\ref{eq:dNSystem}}^2$
of the subalgebras~$\mathfrak s_{\ref{eq:dNSystem}}^1$ and~$\mathfrak s_{\ref{eq:dNSystem}}^2$.
We split componentwise the identities
\begin{gather}\label{eq:dNSystemMainPushforwardsComb1}
\Phi_*(Q^{3z}-2tQ^{2z}+t^2Q^{1z})=
\Phi_*Q^{3z}-2\Phi_*(t)\Phi_*Q^{2z}+\Phi_*(t^2)\Phi_*Q^{1z}=0,\quad z\in\{x,y\}.
\end{gather}
Collecting the $t$-components in these identities yields the equations
\[
T_z=\tau^{1z}(T),\quad
tT_z-T_{v^z}=\tau^{2z}(T),\quad
t^2\tilde\tau^{1z}(T)-2t\tilde\tau^{2z}(T)+\tilde\tau^{3z}(T)=0.
\]
If $(\tilde\tau^{1z},\tilde\tau^{2z})\ne(0,0)$ for some~$z$, then the last equality implies that~$t$ can be expressed in terms of~$T$, and thus~$T$ is a function of~$t$ alone, whence $T_z=T_{v^z}=0$, which means $\tilde\tau^{1z}=\tilde\tau^{2z}=0$, leading to a contradiction.
Consequently, $T_z=T_{v^z}=0$ and $\tilde\tau^{1z}=\tilde\tau^{2z}=\tilde\tau^{3z}=0$.
Similarly, from the $x$- and $y$-components in the identities, after splitting with respect to~$\tilde x$ and~$\tilde y$, we obtain the equations
\begin{gather*}
\lambda^{1z}=\lambda^{2z}=\lambda^{3z}=0,
\\
\tilde\chi^{3z}(T)-2t\tilde\chi^{2z}(T)+t^2\tilde\chi^{1z}(T)=0,\quad
\tilde\chi^{3z}_{\tilde t}(T)-2t\tilde\chi^{2z}_{\tilde t}(T)+t^2\tilde\chi^{1z}_{\tilde t}(T)=0,
\\
\tilde\rho^{3z}(T)-2t\tilde\rho^{2z}(T)+t^2\tilde\rho^{1z}(T)=0,\quad
\tilde\rho^{3z}_{\tilde t}(T)-2t\tilde\rho^{2z}_{\tilde t}(T)+t^2\tilde\rho^{1z}_{\tilde t}(T)=0,
\end{gather*}
whose differential consequences have the form
$\tilde\chi^{2z}(T)=t\tilde\chi^{1z}(T)$ and
$\tilde\rho^{2z}(T)=t\tilde\rho^{1z}(T)$.
From the last two equations, by analogy with the previous proofs, we find that $T=T(t)$ with $T_t\ne0$.
Then the equation~\eqref{eq:dNSystemMainPushforwardsMod}$_{1z}$ and the combination $\Phi_*(t)\eqref{eq:dNSystemMainPushforwardsMod}_{1z}-\eqref{eq:dNSystemMainPushforwardsMod}_{2z}$ split into
 \begin{gather*}
X_z=\tilde\chi^{1z}(T),\quad X_{v^z}=t\tilde\chi^{1z}(T)-\tilde\chi^{2z}(T)=0,\\
Y_z=\tilde\rho^{1z}(T),\quad Y_{v^z}=t\tilde\rho^{1z}(T)-\tilde\rho^{2z}(T)=0,\quad
W_z=W_{v^z}=0,\\
V^1_z=-\tilde\chi^{1z}_{\tilde t}(T)=-\frac{X_{zt}}{T_t},\quad
V^1_{v^z}=\tilde\chi^{2z}_{\tilde t}(T)-t\tilde\chi^{1z}_{\tilde t}(T)=\frac{\tilde\chi^{1z}(T)}{T_t}=\frac{X_z}{T_t},\\
V^2_z=-\tilde\rho^{1z}_{\tilde t}(T)=-\frac{Y_{zt}}{T_t},\quad
V^2_{v^z}=\tilde\rho^{2z}_{\tilde t}(T)-t\tilde\rho^{1z}_{\tilde t}(T)=\frac{\tilde\rho^{1z}(T)}{T_t}=\frac{Y_z}{T_t}.
\end{gather*}
Taking into account all the obtained equations, we derive the following form for the components of the transformation~$\Phi$:
\begin{gather}\label{eq:dNSystemPreliminaryFormOfPointSyms}
\begin{split}&
%T=T(t),\quad
X=X^1(t)x+X^2(t)y+X^0(t,w),\quad
Y=Y^1(t)x+Y^2(t)y+Y^0(t,w),\\&
V^1=\frac{X^1}{T_t}v^1+\frac{X^2}{T_t}v^2-\frac{X^1_t}{T_t}x-\frac{X^2_t}{T_t}y+\breve V^1(t,w),\\&
V^2=\frac{Y^1}{T_t}v^1+\frac{Y^2}{T_t}v^2-\frac{Y^1_t}{T_t}x-\frac{Y^2_t}{T_t}y+\breve V^2(t,w),\quad
W=W(t,w),
\end{split}
\end{gather}
where $X^1$, $X^2$, $X^0$, $Y^1$, $Y^2$, $Y^0$, $W$, $\breve V^1$ and $\breve V^2$ are arbitrary smooth functions of their arguments with $W_w(X^1Y^2-X^2Y^1)\ne0$; for comparison, see the system~(18) in~\cite{boyk2024a}.

The equation~\eqref{eq:dNSystemMainPushforwardsMod}$_4$ and $3\eqref{eq:dNSystemMainPushforwardsMod}_5-3\Phi_*(t)\eqref{eq:dNSystemMainPushforwardsMod}_4$ split into the system
\begin{gather}\label{eq:dNSystemMainPushforwardsComb2Split}
\begin{split}&
\lambda^4=\lambda^5=0,\quad\tilde\tau^4(T)=T_t,\\&
X_t= \tfrac13\tilde\tau^4_{\tilde t}(T)X+\tilde\chi^4(T),\quad
Y_t= \tfrac13\tilde\tau^4_{\tilde t}(T)Y+\tilde\rho^4(T),\quad
W_t=-\tfrac23\tilde\tau^4_{\tilde t}(T)W,\\&
V^1_t=-\tfrac23\tilde\tau^4_{\tilde t}(T)V^1-\tfrac13\tilde\tau^4_{\tilde t\tilde t}(T)X-\tilde\chi^4_{\tilde t}(T),\quad
V^2_t=-\tfrac23\tilde\tau^4_{\tilde t}(T)V^2-\tfrac13\tilde\tau^4_{\tilde t\tilde t}(T)Y-\tilde\rho^4_{\tilde t}(T),
\\[1ex]&
\tilde\tau^5(T)=tT_t,\quad
xX_x+yX_y-2wX_w=X+3\tilde\chi^5(T)-3t\tilde\chi^4(T),\\&
xY_x+yY_y-2wY_w=Y+3\tilde\rho^5(T)-3t\tilde\rho^4(T),\quad
wW_w=W,\\&
xV^1_x+yV^1_y-2wV^1_w-2v^1V^1_{v^1}-2v^2V^1_{v^2}=-2V^1+(T_t^{-1})_tX-3\tilde\chi^5_{\tilde t}(T)+3t\tilde\chi^4_{\tilde t}(T),\\&
xV^2_x+yV^2_y-2wV^2_w-2v^1V^2_{v^1}-2v^2V^2_{v^2}=-2V^2+(T_t^{-1})_tY-3\tilde\rho^5_{\tilde t}(T)+3t\tilde\rho^4_{\tilde t}(T).
\end{split}
\end{gather}
This system has already been partially simplified by taking into account its equations $\tilde\tau^4(T)=T_t$ and $\tilde\tau^5(T)=tT_t$, which imply
\[
\tilde\tau^4_{\tilde t}(T)=\frac{T_{tt}}{T_t},\quad
\tilde\tau^5_{\tilde t}(T)-t\tilde\tau^4_{\tilde t}(T)=1,\quad
\tilde\tau^4_{\tilde t\tilde t}(T)=\frac1{T_t}\left(\frac{T_{tt}}{T_t}\right)_{\!t},\quad
\tilde\tau^5_{\tilde t\tilde t}(T)-t\tilde\tau^4_{\tilde t\tilde t}(T)=-\left(\frac1{T_t}\right)_{\!t}.
\]
Substituting the form~\eqref{eq:dNSystemPreliminaryFormOfPointSyms} of the components of the transformation~$\Phi$ into this system, splitting the extended system with respect to~$(x,y,v^1,v^2)$, and solving the resulting equations, we obtain (here and below $j=1,2$)
\begin{gather*}
X^j_t=\frac{T_{tt}}{3T_t}X^j,\quad
X^0_t=\frac{T_{tt}}{3T_t}X^0+\tilde\chi^4(T),\quad
-2wX^0_w=X^0+3\tilde\chi^5(T)-3t\tilde\chi^4(T),
\\
Y^j_t=\frac{T_{tt}}{3T_t}Y^j,\quad
Y^0_t=\frac{T_{tt}}{3T_t}Y^0+\tilde\rho^4(T),\quad
-2wY^0_w=Y^0+3\tilde\rho^5(T)-3t\tilde\rho^4(T),
\\
\breve V^1_t=-\frac{2T_{tt}}{3T_t}\breve V^1-\left(\frac{T_{tt}}{T_t}\right)_{\!t}\frac{X^0}{3T_t}-\tilde\chi^4_{\tilde t}(T),\quad
w\breve V^1_w=\breve V^1+\frac{X^0_t}{T_t},
\\
\breve V^2_t=-\frac{2T_{tt}}{3T_t}\breve V^2-\left(\frac{T_{tt}}{T_t}\right)_{\!t}\frac{Y^0}{3T_t}-\tilde\rho^4_{\tilde t}(T),\quad
w\breve V^2_w=\breve V^2+\frac{Y^0_t}{T_t},
\\
W_t=-\frac{2T_{tt}}{3T_t}W,\quad
wW_w=W.
\end{gather*}

The completion of the proof is carried out separately for the subalgebras~$\mathfrak s_{\ref{eq:dNSystem}}^1$ and~$\mathfrak s_{\ref{eq:dNSystem}}^2$ by splitting the equation~\eqref{eq:dNSystemMainPushforwardsMod}$_6$ or~\eqref{eq:dNSystemMainPushforwardsMod}$_7$, respectively.
In both cases, using the results obtained above, we have $X^0=X^0(t)$ and $Y^0=Y^0(t)$.
Then the equations for the other parameter functions in the expressions for the components of the transformation~$\Phi$ integrate to
\begin{gather}\label{eq:dNSystemPreliminaryFormOfPointSyms2}
X^j=A_jT_t^{1/3},\ \
Y^j=B_jT_t^{1/3},\ \
W=\frac{Cw}{T_t^{2/3}},\ \
\breve V^1=\frac{E_1w}{T_t^{2/3}}-\frac{X^0_t}{T_t},\ \
\breve V^2=\frac{E_2w}{T_t^{2/3}}-\frac{Y^0_t}{T_t},
\end{gather}
where $A_j$, $B_j$, $C$ and $E_j$, $j=1,2$, are constants with $C(A_1B_2-A_2B_1)\ne0$.
All the obtained diffeomorphisms stabilize the pseudoalgebra~$\mathfrak g_{\ref{eq:dNSystem}}$.
\end{proof}

\begin{theorem}\label{thm:LaxdNSystemDefSubalgs}
The subalgebras~$\mathfrak s_{\ref{eq:LaxdNSystem}}^1$ and~$\mathfrak s_{\ref{eq:LaxdNSystem}}^2$ of the pseudoalgebra~$\mathfrak g_{\ref{eq:LaxdNSystem}}$ determine the diffeomorphisms stabilizing this pseudoalgebra.
\end{theorem}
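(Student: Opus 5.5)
The plan is to reduce the statement to Theorem~\ref{thm:dNSystemDefSubalgs}, extending the space by the single coordinate~$\vartheta$, in the same way that Theorem~\ref{thm:dNLaxPairDefSubalgs} was obtained from Theorem~\ref{thm:dNDefSubalgs}.

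\emph{Setup.} Consider a point transformation
\[
\check\Phi\colon\ (\tilde t,\tilde x,\tilde y,\tilde w,\tilde v^1,\tilde v^2,\tilde\vartheta)=(T,X,Y,W,V^1,V^2,\Theta)
\]
with nonvanishing Jacobian. Assume $\check\Phi_*\mathfrak s^i_{\ref{eq:LaxdNSystem}}\subseteq\mathfrak g_{\ref{eq:LaxdNSystem}}$ for $i=1$ or $i=2$. Keep the labels $Q^{1z},\dots,Q^7$ from the proof of Theorem~\ref{thm:dNSystemDefSubalgs}, now for the checked fields, and add $Q^0:=\check P^\vartheta$. Each condition is written as
\[
\check\Phi_*Q^\kappa=\nu^\kappa\tilde P^\vartheta+\lambda^\kappa\tilde D^{\rm s}+\tilde D^t(\tilde\tau^\kappa)+\tilde P^x(\tilde\chi^\kappa)+\tilde P^y(\tilde\rho^\kappa),
\]
with constants $\nu^\kappa$ and $\lambda^\kappa$. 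Here $\tilde D^{\rm s}$ carries the term $\frac32\tilde\vartheta\p_{\tilde\vartheta}$.

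\emph{Step 1: the $\vartheta$-independence of the projected components.} The first goal is to show $T_\vartheta=X_\vartheta=Y_\vartheta=W_\vartheta=V^1_\vartheta=V^2_\vartheta=0$.
\begin{itemize}
\item For $\kappa=0$: the pushforward $\check\Phi_*\p_\vartheta$ has components $(T_\vartheta,X_\vartheta,\dots,\Theta_\vartheta)$.
\item For $\kappa=1z,2z,3z$: the combination $Q^{3z}-2tQ^{2z}+t^2Q^{1z}=0$ is used as in~\eqref{eq:dNSystemMainPushforwardsComb1}.
\item The same arguments as in the proof of Theorem~\ref{thm:dNSystemDefSubalgs} then give $T=T(t)$. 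The only change is that all functions may a priori depend on~$\vartheta$; they are split with respect to $(x,y,v^1,v^2)$ exactly as before.
\item With $T=T(t)$, the $t$-component of the $\kappa=0$ equation gives $\tilde\tau^0=0$.
\item The $x$-, $y$-, $w$-, $v$-components of that equation express $X_\vartheta$, $W_\vartheta$, etc.\ through $\lambda^0$, $\tilde\chi^0(T)$ and $\tilde\rho^0(T)$.
\item Commuting $Q^0$ with the elements $Q^{1z}$, $Q^4$, $Q^5$, which lie in the center of $\mathfrak g_{\ref{eq:LaxdNSystem}}'$, forces $\lambda^0=0$ and $\tilde\chi^0=\tilde\rho^0=0$. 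Equivalently, comparing the $\vartheta$-derivative of the form~\eqref{eq:dNSystemPreliminaryFormOfPointSyms} obtained below forces these.
\end{itemize}
This step is the main obstacle. The natural approach is to first derive the preliminary form~\eqref{eq:dNSystemPreliminaryFormOfPointSyms} with $\vartheta$ as an extra parameter in all arbitrary functions and constants of integration. Then one substitutes into the $\kappa=0$ equation and splits with respect to $x$, $y$, $v^1$, $v^2$, $w$. Since $\tilde D^{\rm s}$ and $\tilde P^z$ cannot produce the required $\vartheta$-dependence consistently with the $t$-dependence fixed by $Q^4,Q^5$, this yields $\nu^0\ne0$ and $\lambda^0=\tilde\chi^0=\tilde\rho^0=0$.

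\emph{Step 2: projecting to Theorem~\ref{thm:dNSystemDefSubalgs}.} Once the projected components are $\vartheta$-independent, the pushforward by the projected transformation~$\hat\Phi$ of the projections of the $Q^\kappa$ lies in $\check\varpi_*\mathfrak g_{\ref{eq:LaxdNSystem}}=\mathfrak g_{\ref{eq:dNSystem}}$. Theorem~\ref{thm:dNSystemDefSubalgs} then yields the forms~\eqref{eq:dNSystemPreliminaryFormOfPointSyms} and~\eqref{eq:dNSystemPreliminaryFormOfPointSyms2} for $(T,X,Y,W,V^1,V^2)$.

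\emph{Step 3: the $\vartheta$-component.} It remains to collect the $\vartheta$-components of the equations for $\kappa=1z,2z,3z$. These give $\Theta_z-\nu^{1z}=0$ and $t\Theta_z-\Theta_{v^z}=\nu^{2z}$; together with the $Q^{3z}$ combination, they imply $\Theta_z=\Theta_{v^z}=0$. Next, the equations for $\kappa=4$ and $\kappa=5$ give
\[
\Theta_t=\nu^4+\tfrac32\lambda^4\Theta=\nu^4 \quad\text{and}\quad -2w\Theta_w=\nu^5-t\nu^4,
\]
and these must hold for all $(t,w)$. Splitting the second in~$t$ gives $\nu^4=0$, so $\Theta_t=0$. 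The $w$-dependence then has to be removed using the $\kappa=6$ (resp.\ $\kappa=7$) equation. I expect $\kappa=6$ to force $\Theta_w=0$ through its $t^2$ coefficient. For $\kappa=7$, the equation gives $-2w\Theta_w=\nu^7+\frac32\lambda^7\Theta$; combined with the $\kappa=5$ relation $-2w\Theta_w=\nu^5$, this yields $\Theta_w=0$ if $\lambda^7\ne0$, with the case $\lambda^7=0$ treated separately.

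\emph{Conclusion.} With $\Theta=A\vartheta+B$, every resulting transformation visibly stabilizes $\mathfrak g_{\ref{eq:LaxdNSystem}}$. Hence both $\mathfrak s^1_{\ref{eq:LaxdNSystem}}$ and $\mathfrak s^2_{\ref{eq:LaxdNSystem}}$ determine the diffeomorphisms stabilizing this pseudoalgebra.
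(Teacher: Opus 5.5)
Your Steps~1 and~2 are sound and reorganize the paper's argument. You first show that $(T,X,Y,W,V^1,V^2)$ do not depend on~$\vartheta$, and then apply Theorem~\ref{thm:dNSystemDefSubalgs} to the projected transformation as a black box. The paper instead reruns that computation with $\vartheta$ as a parameter and removes the $\vartheta$-dependence only at the end. Two small corrections to Step~1:
\begin{itemize}
\item It is $Q^0=\check P^\vartheta$ that spans $\mathfrak z(\mathfrak g_{\ref{eq:LaxdNSystem}}')$, not $Q^{1z}$, $Q^4$, $Q^5$.
\item The commutator argument for $\tilde\chi^0=\tilde\rho^0=0$ needs $\lambda^4=\lambda^5=0$, $\tilde\tau^4=T_t$ and $\tilde\tau^5=tT_t$ first. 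Alternatively, differentiating the $Q^5$-relation $-2wX^0_w=X^0+3\tilde\chi^5(T)-3t\tilde\chi^4(T)$ with respect to~$\vartheta$ gives $\tilde\chi^0=0$ at once.
\end{itemize}
Your derivation of $\nu^4=0$, hence $\Theta_t=0$, from $\kappa=4,5$ alone is valid (and simpler than the paper's), because $\Theta_t=\nu^4$ makes $\Theta_w$ independent of~$t$. The precise relation is $t\Theta_t-\frac23w\Theta_w=\nu^5$; your factor differs by~$3$, which is harmless.

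The gap is in the last step, where $\Theta_w$ must be eliminated. For $\mathfrak s^2_{\ref{eq:LaxdNSystem}}$ you write the $\vartheta$-component of the $\kappa=7$ equation as $-2w\Theta_w=\nu^7+\frac32\lambda^7\Theta$, and this is wrong. Since $Q^7=\check D^{\rm s}=x\p_x+y\p_y+w\p_w+v^1\p_{v^1}+v^2\p_{v^2}+\frac32\vartheta\p_\vartheta$ and $\Theta_z=\Theta_{v^z}=0$, the correct equation is $w\Theta_w+\frac32\vartheta\Theta_\vartheta=\frac32\lambda^7\Theta+\nu^7$. Your version fails even for the identity transformation, where it reads $0=\frac32\vartheta$. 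Combined with the $\kappa=5$ relation, it forces $\Theta$ to be constant whenever $\lambda^7\ne0$, which contradicts nondegeneracy rather than yielding $\Theta_w=0$. The case $\lambda^7=0$ that you postpone cannot occur: the $w$-component gives $wW_w=\lambda^7W$, and $W=Cw/T_t^{2/3}$ from Step~2 forces $\lambda^7=1$.

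The missing argument is a homogeneity one. From $\Theta_\vartheta=\nu^0$, $\Theta_t=0$ and $-\frac23w\Theta_w=\nu^5$ you get $\Theta=\nu^0\vartheta-\frac32\nu^5\ln|w|+B$. Substituting this into the corrected $\kappa=7$ equation gives $-\frac32\nu^5=-\frac94\nu^5\ln|w|+\frac32B+\nu^7$, whence $\nu^5=0$.

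For $\mathfrak s^1_{\ref{eq:LaxdNSystem}}$ your expectation is right, but the mechanism is misidentified. The $\vartheta$-component of $\kappa=6$ is $t^2\Theta_t-\frac43tw\Theta_w=\nu^6+\frac32\lambda^6\Theta$. With $\Theta_t=0$ and $\lambda^6=0$ (from the projected equations), it reads $2t\nu^5=\nu^6$. So it is the coefficient of~$t$, not of~$t^2$, that gives $\nu^5=0$, i.e., $\Theta_w=0$. With these corrections, $\Theta=\nu^0\vartheta+B$ with $\nu^0\ne0$, and your conclusion goes through.
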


\begin{proof}
The proof of this theorem can be obtained by modifying the proof of Theorem~\ref{thm:dNSystemDefSubalgs} in an analogous way to how the proof of Theorem~6 in~\cite{boyk2024a} was modified above into the proof of Theorem~\ref{thm:dNLaxPairDefSubalgs}.
One should take into account
the appearance of the additional dependent variable~$\vartheta$,
the prolongation of the vector fields from~$\mathfrak g_{\ref{eq:dNSystem}}$ to~$\vartheta$
and the appearance of the one more generating vector field~$\check P^\vartheta$ in~$\mathfrak g_{\ref{eq:LaxdNSystem}}$.

Consider a general point transformation~$\check\Phi$ as at the beginning of the proof of Theorem~\ref{thm:LaxdNSystemPointSymPseudogroup}.
We denote the basis elements $\check P^z(1)$, $\check P^z(t)$, $\check P^z(t^2)$, $\check D^t(1)$, $\check D^t(t)$, $\check D^t(t^2)$, $\check D^{\rm s}$ and $\check P^\vartheta$ with $z\in\{x,y\}$ of the subalgebra~$\mathfrak s_{\ref{eq:LaxdNSystem}}^1+\mathfrak s_{\ref{eq:LaxdNSystem}}^2$ as $Q^{1z}$, $Q^{2z}$, $Q^{3z}$, $Q^4$, $Q^5$, $Q^6$, $Q^7$ and $Q^0$, respectively.
For each basis element~$Q$ of the subalgebra~$\mathfrak s_{\ref{eq:LaxdNSystem}}^1$ or the subalgebra~$\mathfrak s_{\ref{eq:LaxdNSystem}}^2$, we apply the condition $\check\Phi_*Q\in\mathfrak g_{\ref{eq:LaxdNSystem}}$, which yields the equation
\begin{gather}\label{eq:eq:LaxdNSystemMainPushforwardsMod}
\check\Phi_*Q^\kappa=
\nu^\kappa\tilde P^\vartheta+
\lambda^\kappa\tilde D^{\rm s}+\tilde D^t(\tilde\tau^\kappa)+\tilde P^x(\tilde\chi^\kappa)+\tilde P^y(\tilde\rho^\kappa),
\end{gather}
where
$\nu^\kappa$ and $\lambda^\kappa$ are constants,
$\tilde\tau^\kappa$, $\tilde\chi^\kappa$ and $\tilde\rho^\kappa$ are arbitrary smooth functions of~$\tilde t$, and the index~$\kappa$ runs through the set $\big\{1z,2z,3z,4,5,6,0\mid z\in\{x,y\}\big\}$ or the set $\big\{1z,2z,3z,4,5,7,0\mid z\in\{x,y\}\big\}$.
As before, in the notation of the vector fields on the right-hand side of the equation~\eqref{eq:eq:LaxdNSystemMainPushforwardsMod}, a tilde is used to emphasize that these vector fields are represented in the transformed coordinates, while the check is omitted.

First, let us consider the equations associated with the common part $\mathfrak s_{\ref{eq:LaxdNSystem}}^0=\mathfrak s_{\ref{eq:LaxdNSystem}}^1\cap\mathfrak s_{\ref{eq:LaxdNSystem}}^2$ of the subalgebras~$\mathfrak s_{\ref{eq:LaxdNSystem}}^1$ and~$\mathfrak s_{\ref{eq:LaxdNSystem}}^2$.
Sequential splitting of an identity analogous to~\eqref{eq:dNSystemMainPushforwardsComb1} and the equations~\eqref{eq:eq:LaxdNSystemMainPushforwardsMod}$_{1z}$ and $\Phi_*(t)\eqref{eq:eq:LaxdNSystemMainPushforwardsMod}_{1z}-\eqref{eq:eq:LaxdNSystemMainPushforwardsMod}_{2z}$ additionally yields the equations $\nu^{1z}=\nu^{2z}=\nu^{3z}=0$ and $\Theta_z=\Theta_{v^z}=0$.
Hence, $\Theta=\Theta(t,w,\vartheta)$ and the $(t,x,y,w,v^1,v^2)$-components of the transformation~$\check\Phi$ have the form~\eqref{eq:dNSystemPreliminaryFormOfPointSyms}, with the sole difference that, in addition to $(t,w)$, the parameter functions $X^0$, $Y^0$, $W$, $\breve V^1$ and $\breve V^2$ also depend on~$\partial_\vartheta$, with $W_w\Theta_\vartheta-W_\vartheta\Theta_w\ne0$.
Splitting the equation~\eqref{eq:eq:LaxdNSystemMainPushforwardsMod}$_4$ and $3\eqref{eq:eq:LaxdNSystemMainPushforwardsMod}_5-3\Phi_*(t)\eqref{eq:eq:LaxdNSystemMainPushforwardsMod}_4$, we obtain the extension of the system~\eqref{eq:dNSystemMainPushforwardsComb2Split} by the equations $\Theta_t=\nu^4$ and $t\Theta_t-\tfrac23w\Theta_w=\nu^5$, while the equation~\eqref{eq:eq:LaxdNSystemMainPushforwardsMod}$_0$ splits into
$\tilde\tau^0=0$,
$X_\vartheta=\lambda^0X+\tilde\chi^0$,
$Y_\vartheta=\lambda^0Y+\tilde\rho^0$,
$W_\vartheta=\lambda^0W$,
\smash{$\breve V^1_\vartheta=\lambda^0\breve V^1-\tilde\chi^0_t$},
\smash{$\breve V^2_\vartheta=\lambda^0\breve V^2-\tilde\rho^0_t$},
$\Theta_\vartheta=\frac32\lambda^0\Theta+\nu^0$.
Splitting the latter equations with respect to~$X$ and~$Y$ along $(x,y)$ in view of the representations~\eqref{eq:dNSystemPreliminaryFormOfPointSyms}, we deduce that $\lambda^0=0$, and therefore $W_\vartheta=0$.

Let us now consider separately the equations corresponding to the complements of $\mathfrak s_{\ref{eq:LaxdNSystem}}^0$ in~$\mathfrak s_{\ref{eq:LaxdNSystem}}^1$ and in~$\mathfrak s_{\ref{eq:LaxdNSystem}}^2$.

Componentwise splitting of the combination of equations
$\Phi_*(t^2)\eqref{eq:eq:LaxdNSystemMainPushforwardsMod}_4
-2\Phi_*(t)\eqref{eq:eq:LaxdNSystemMainPushforwardsMod}_5
+\eqref{eq:eq:LaxdNSystemMainPushforwardsMod}_6$ yields
\begin{gather*}
\nu^4t^2-2\nu^5t+\nu^6=0,\\
t^2\tilde\tau^4(T)-2t\tilde\tau^5(T)+\tilde\tau^6(T)=0,\quad
t^2\tilde\tau^4_{\tilde t}(T)-2t\tilde\tau^5_{\tilde t}(T)+\tilde\tau^6_{\tilde t}(T)+3\lambda^6=0,\\
t^2\tilde\chi^4(T)-2t\tilde\chi^5(T)+\tilde\chi^6(T)=0,\quad
t^2\tilde\rho^4(T)-2t\tilde\rho^5(T)+\tilde\rho^6(T)=0,\\
2(xV^1_{v^1}+yV^1_{v^2})=
\big(t^2\tilde\tau^4_{\tilde t\tilde t}(T)-2t\tilde\tau^5_{\tilde t\tilde t}(T)+\tilde\tau^6_{\tilde t\tilde t}(T)\big)X
+t^2\tilde\chi^4_{\tilde t}(T)-2t\tilde\chi^5_{\tilde t}(T)+\tilde\chi^6_{\tilde t}(T),\\
2(xV^2_{v^1}+yV^2_{v^2})=
\big(t^2\tilde\tau^4_{\tilde t\tilde t}(T)-2t\tilde\tau^5_{\tilde t\tilde t}(T)+\tilde\tau^6_{\tilde t\tilde t}(T)\big)Y
+t^2\tilde\rho^4_{\tilde t}(T)-2t\tilde\rho^5_{\tilde t}(T)+\tilde\rho^6_{\tilde t}(T),
\end{gather*}
whence
\begin{gather*}
\nu^4=\nu^5=\nu^6=0,\quad \tau^6(T)=t^2T_t,\quad \lambda^6=0,\quad
t^2\tilde\tau^4_{\tilde t\tilde t}(T)-2t\tilde\tau^5_{\tilde t\tilde t}(T)+\tilde\tau^6_{\tilde t\tilde t}(T)=\frac2{T_t},\\
t^2\tilde\chi^4_{\tilde t}(T)-2t\tilde\chi^5_{\tilde t}(T)+\tilde\chi^6_{\tilde t}(T)=
\frac2{T_t}(\tilde\chi^5_{\tilde t}(T)-t\tilde\chi^4_{\tilde t}(T)),\\
t^2\tilde\rho^4_{\tilde t}(T)-2t\tilde\rho^5_{\tilde t}(T)+\tilde\rho^6_{\tilde t}(T)=
\frac2{T_t}(\tilde\rho^5_{\tilde t}(T)-t\tilde\rho^4_{\tilde t}(T)).
\end{gather*}
Consequently,
$X^0=t\tilde\chi^4_{\tilde t}(T)-\tilde\chi^5_{\tilde t}(T)$ and
$Y^0=t\tilde\rho^4_{\tilde t}(T)-\tilde\rho^5_{\tilde t}(T)$,
and therefore $\tilde\chi^0=\tilde\rho^0=0$.

From the equation~\eqref{eq:eq:LaxdNSystemMainPushforwardsMod}$_7$, it follows that
$\tilde\tau^7=0$, $wW_w=\lambda^7W$ (whence $\lambda^7=1$),
$wX^0_w+\tfrac32\vartheta X^0_\vartheta=X^0+\tilde\chi^7$,
$wY^0_w+\tfrac32\vartheta Y^0_\vartheta=Y^0+\tilde\rho^7$ and
$w\Theta_w+\tfrac32\vartheta\Theta_\vartheta=\tfrac32\Theta+\nu^7$,
which implies  $\tilde\chi^0=\tilde\rho^0=0$ and $\nu^4=0$.

Thus, taking into account all the derived equations, in both cases we also have
$X^0_w=Y^0_w=0$,
$X^0_\vartheta=Y^0_\vartheta=0$,
$\breve V^1_\vartheta=\breve V^2_\vartheta=0$,
$\Theta_t=\Theta_w=0$ and $\Theta_\vartheta=\const$.
All diffeomorphisms satisfying these and all the equations above
stabilize the pseudoalgebra~$\mathfrak g_{\ref{eq:LaxdNSystem}}$.
\end{proof}

Furthermore, since $\mathfrak s_{\ref{eq:dN}}^2$ and $\mathfrak s_{\ref{eq:dN}}^2+\mathfrak s_{\ref{eq:dN}}^3$ are subalgebras of the maximal Lie invariance pseudoalgebra~$\mathfrak g_{\ref{eq:SymPotNizhEq}'}$ of the (symmetric) potential Nyzhnyk equation~(\ref{eq:SymPotNizhEq}$'$), which coincides with the derived  pseudoalgebra~$\mathfrak g_{\ref{eq:dN}}'$ of the pseudoalgebra~$\mathfrak g_{\ref{eq:dN}}$ (i.e., it is its megaideal), \emph{Theorem~\ref{thm:dNDefSubalgs}(i) and its Corollary~\ref{cor:dNContDefSubalgs} also hold for the pseudoalgebra~$\mathfrak g_{\ref{eq:SymPotNizhEq}'}$}.
Similarly, up to replacing $\bar P^\vartheta$ by $\bar D^{\rm\psi}$, the maximal Lie invariance pseudoalgebra~$\mathfrak g_{\text{\ref{eq:LaxSymPotNizhEq'}}}$ of the linear Lax representation~\eqref{eq:LaxSymPotNizhEq'} of this equation embeds into the pseudoalgebra~$\mathfrak g_{\ref{eq:dNLaxPair}}$, coincides with its derived  pseudoalgebra (meaning it is its megaideal), and also contains $\mathfrak s_{\ref{eq:dNLaxPair}}^2$ as its subalgebra.
Therefore, \emph{the first statement of Theorem~\ref{thm:dNLaxPairDefSubalgs} holds for the pseudoalgebra~$\mathfrak g_{\text{\ref{eq:LaxSymPotNizhEq'}}}$}.

By analyzing the proofs of Theorems~\ref{thm:dNSystemDefSubalgs} and~\ref{thm:LaxdNSystemDefSubalgs}, similar assertions can be obtained for the maximal Lie invariance pseudoalgebras~$\mathfrak g_{\ref{eq:SymNyzhnykSystem}'}$ and~$\mathfrak g_{\ref{eq:LaxSymNyzhnykSystem}'}$ of the (real symmetric) Nyzhnyk system~(\ref{eq:SymNyzhnykSystem}$'$) and its (linear) Lax representation~(\ref{eq:LaxSymNyzhnykSystem}$'$), respectively.
The pseudoalgebra~$\mathfrak g_{\ref{eq:SymNyzhnykSystem}'}$ is a megaideal of the pseudoalgebra~$\mathfrak g_{\ref{eq:dNSystem}}$ as its derived algebra, and under the substitution $\psi={\rm e}^\vartheta$, the pseudoalgebra~$\mathfrak g_{\ref{eq:LaxSymNyzhnykSystem}'}$ coincides with the derived  pseudoalgebra of the pseudoalgebra~$\mathfrak g_{\ref{eq:LaxdNSystem}}$.
Therefore, given the results obtained above, it is straightforward to single out the respective subalgebras of the pseudoalgebras~$\mathfrak g_{\ref{eq:SymNyzhnykSystem}'}$ and~$\mathfrak g_{\ref{eq:LaxSymNyzhnykSystem}'}$:
\begin{gather*}
\mathfrak s_{\ref{eq:SymNyzhnykSystem}'}:=
\big\langle\hat P^x(1),\,\hat P^y(1),\,\hat P^x(t),\,\hat P^y(t),\,
\hat P^x(t^2),\,\hat P^y(t^2),\,\hat D^t(1),\,\hat D^t(t),\,\hat D^t(t^2)\big\rangle,\\
\mathfrak s_{\ref{eq:LaxSymNyzhnykSystem}'}:=
\big\langle\check P^x(1),\,\check P^y(1),\,\check P^x(t),\,\check P^y(t),\,
\check P^x(t^2),\,\check P^y(t^2),\,\hat D^t(1),\,\check D^t(t),\,\check D^t(t^2),\,\check D^{\rm\psi}\big\rangle,
\end{gather*}
which determine the point diffeomorphisms stabilizing these pseudoalgebras.
It is precisely the subalgebras~$\mathfrak s_{\ref{eq:SymNyzhnykSystem}'}$ and~$\mathfrak s_{\ref{eq:LaxSymNyzhnykSystem}'}$ that arise naturally when constructing the point-symmetry pseudogroups~$G_{\ref{eq:SymNyzhnykSystem}'}$ and~$G_{\ref{eq:LaxSymNyzhnykSystem}'}$ of the systems~(\ref{eq:SymNyzhnykSystem}$'$) and~(\ref{eq:LaxSymNyzhnykSystem}$'$) by the algebraic method.

\begin{corollary}\label{cor:SymNyzhnykSystemAndItsLaxDefSubalgs}
(i) The subalgebra~$\mathfrak s_{\ref{eq:SymNyzhnykSystem}'}$ of the pseudoalgebra~$\mathfrak g_{\ref{eq:SymNyzhnykSystem}'}$ determines the diffeomorphisms stabilizing this pseudoalgebra.

(ii) The subalgebra~$\mathfrak s_{\ref{eq:LaxSymNyzhnykSystem}'}$ of the pseudoalgebra~$\mathfrak g_{\ref{eq:LaxSymNyzhnykSystem}'}$ determines the diffeomorphisms stabilizing this pseudoalgebra.
\end{corollary}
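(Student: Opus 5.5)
The plan is to obtain both parts by restricting the computations in the proofs of Theorems~\ref{thm:dNSystemDefSubalgs} and~\ref{thm:LaxdNSystemDefSubalgs} to the subalgebras~$\mathfrak s^1_{\ref{eq:dNSystem}}$ and~$\mathfrak s^1_{\ref{eq:LaxdNSystem}}$. By the description in Section~\ref{subsec:PseudoalgebraDispersionSymm}, $\mathfrak s_{\ref{eq:SymNyzhnykSystem}'}=\mathfrak s^1_{\ref{eq:dNSystem}}$ as a set of vector fields, and the target pseudoalgebra~$\mathfrak g_{\ref{eq:SymNyzhnykSystem}'}$ is~$\mathfrak g_{\ref{eq:dNSystem}}$ with~$\hat D^{\rm s}$ removed. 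Part~(ii) is handled identically after the identification $\psi={\rm e}^\vartheta$ of footnote~\ref{fnt:EmbeddingsOfMIAsOfLaxRepresentations}, under which $\check D^\psi$ corresponds to~$\check P^\vartheta$.

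\textbf{Part~(i).} Take a local diffeomorphism~$\hat\Phi$ with $\hat\Phi_*\mathfrak s_{\ref{eq:SymNyzhnykSystem}'}\subseteq\mathfrak g_{\ref{eq:SymNyzhnykSystem}'}$. Since $\mathfrak g_{\ref{eq:SymNyzhnykSystem}'}\subset\mathfrak g_{\ref{eq:dNSystem}}$, this also gives $\hat\Phi_*\mathfrak s^1_{\ref{eq:dNSystem}}\subseteq\mathfrak g_{\ref{eq:dNSystem}}$. Theorem~\ref{thm:dNSystemDefSubalgs} then yields $\hat\Phi_*\mathfrak g_{\ref{eq:dNSystem}}\subseteq\mathfrak g_{\ref{eq:dNSystem}}$, so $\hat\Phi$ induces an automorphism of~$\mathfrak g_{\ref{eq:dNSystem}}$. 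As noted in the text, $\mathfrak g_{\ref{eq:SymNyzhnykSystem}'}=\mathfrak g_{\ref{eq:dNSystem}}'$ is a megaideal, so it is preserved by any automorphism. Here the pseudoalgebra setting requires care. The inclusion $\hat\Phi_*\mathfrak g_{\ref{eq:dNSystem}}\subseteq\mathfrak g_{\ref{eq:dNSystem}}$ is only an inclusion of partial vector fields, not obviously an automorphism. The robust route is therefore to argue on the explicit form~\eqref{eq:dNSystemPreliminaryFormOfPointSyms}--\eqref{eq:dNSystemPreliminaryFormOfPointSyms2} obtained in that proof. Each $\hat\Phi_*$ of a generator of~$\mathfrak g_{\ref{eq:SymNyzhnykSystem}'}$ is a bracket of pushforwards of elements of~$\mathfrak g_{\ref{eq:dNSystem}}$, since $\hat D^t(\tau)$, $\hat P^x(\chi)$ and $\hat P^y(\rho)$ lie in the derived pseudoalgebra. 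Pushforward commutes with brackets, so the image lands in $\mathfrak g_{\ref{eq:dNSystem}}'$. Alternatively, one can check directly from the explicit form that the $\hat D^{\rm s}$-coefficient of every pushforward vanishes.

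\textbf{Part~(ii).} Here I would repeat the argument with Theorem~\ref{thm:LaxdNSystemDefSubalgs} and the derived pseudoalgebra $\mathfrak g_{\ref{eq:LaxdNSystem}}'=\langle\check D^t(\tau),\check P^x(\chi),\check P^y(\rho),\check P^\vartheta\rangle$. This pseudoalgebra coincides with~$\mathfrak g_{\ref{eq:LaxSymNyzhnykSystem}'}$ under $\psi={\rm e}^\vartheta$. The issue is that $\psi={\rm e}^\vartheta$ is not a global diffeomorphism onto the $\psi$-space: it misses $\psi\le0$ and the set $\psi=0$. The plan is to work locally on the sets $\psi>0$ and $\psi<0$, where $\psi=\pm{\rm e}^\vartheta$ is a local diffeomorphism. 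Pushforwards of partial vector fields are defined via restrictions (Definition~\ref{def:DefiningPseudosubalg}), and the pseudoalgebra is a sheaf (condition~3 of Definition~\ref{def:Pseudoalgebra}), so the local conclusions glue. The conjugated transformation satisfies the hypothesis of Theorem~\ref{thm:LaxdNSystemDefSubalgs} for~$\mathfrak s^1_{\ref{eq:LaxdNSystem}}$. Note that the element listed as~$\hat D^t(1)$ in~$\mathfrak s_{\ref{eq:LaxSymNyzhnykSystem}'}$ is evidently a misprint for~$\check D^t(1)$. The conclusion then transfers back through the derived-pseudoalgebra argument.

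\textbf{Main obstacle.} I expect the hardest step to be justifying that stabilization of~$\mathfrak g_{\ref{eq:dNSystem}}$ forces stabilization of its derived pseudoalgebra at the level of partial vector fields. The same applies to the points $\psi=0$, which are not covered by the exponential chart. For these I would fall back on redoing the splitting of the proof of Theorem~\ref{thm:LaxdNSystemDefSubalgs} directly in~$\psi$. This means replacing $\Theta_\vartheta=\nu^0$ with $\psi\Psi_\psi=\nu^0\Psi$ and checking that every step survives with all $\lambda^\kappa=0$ imposed from the outset, which only simplifies it.
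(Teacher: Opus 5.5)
Your proposal is correct and follows essentially the paper's route. The paper likewise notes that $\mathfrak s_{\ref{eq:SymNyzhnykSystem}'}=\mathfrak s^1_{\ref{eq:dNSystem}}$ and, under $\psi={\rm e}^\vartheta$, $\mathfrak s_{\ref{eq:LaxSymNyzhnykSystem}'}\simeq\mathfrak s^1_{\ref{eq:LaxdNSystem}}$ lie in the derived pseudoalgebras $\mathfrak g_{\ref{eq:dNSystem}}'=\mathfrak g_{\ref{eq:SymNyzhnykSystem}'}$ and $\mathfrak g_{\ref{eq:LaxdNSystem}}'\simeq\mathfrak g_{\ref{eq:LaxSymNyzhnykSystem}'}$, and then reduces to Theorems~\ref{thm:dNSystemDefSubalgs} and~\ref{thm:LaxdNSystemDefSubalgs}; your explicit bracket argument for preserving the derived pseudoalgebra and your direct-in-$\psi$ treatment of the locus $\psi=0$ (where nondegeneracy indeed forces $\Psi=\bar B\psi$) only spell out details the paper leaves implicit.
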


\section{Defining geometric properties}\label{sec:DefiningGeomProperties}

All third-order partial differential equations with three independent variables that admit the same Lie invariance pseudoalgebra as the dispersionless Nyzhnyk equation~\eqref{eq:dN} were first constructed in~\cite[Section~8]{boyk2024a}.
A collection of geometric properties uniquely defining this equation was also found therein.

\begin{lemma}\label{lem:dNEqsWithSameIA}
A partial differential equation of order at most three with three independent variables is invariant with respect to the pseudoalgebra~$\mathfrak g_{\ref{eq:dN}}$ if and only if it is of the form
\begin{gather}\label{eq:dNEqsWithSameIA}
\!\!\!\!u_{txy}=(u_{xx}u_{xy})_x+(u_{xy}u_{yy})_y+u_{xy}u_{xyy}H(\omega_1,\omega_2),\quad
\omega_1:=\frac{u_{xxx}-u_{yyy}}{u_{xyy}},\quad
\omega_2:=\frac{u_{xxy}}{u_{xyy}},
\end{gather}
where $H:=H(\omega_1,\omega_2)$ is an arbitrary smooth function of its arguments.
\end{lemma}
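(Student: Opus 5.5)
We determine the joint differential invariants of the pseudoalgebra~$\mathfrak g_{\ref{eq:dN}}$ in the third-order jet space $J^3$ over $\mathbb R^3_{t,x,y}\times\mathbb R_u$. The equations invariant under~$\mathfrak g_{\ref{eq:dN}}$ are then exactly those expressible through these invariants, up to the usual nondegeneracy caveats. The structure of~\eqref{eq:dNEqsWithSameIA} suggests that we should first mod out by the ``large'' parts of the pseudoalgebra. These are $Z(\sigma)$, $R^x(\alpha)$, $R^y(\beta)$, $P^x(\chi)$ and $P^y(\rho)$, whose parameter functions are arbitrary. Only after that do we treat $D^t(\tau)$ and $D^{\rm s}$.

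\emph{Step 1.} Prolong $Z(\sigma)$, $R^x(\alpha)$ and $R^y(\beta)$ to third order. Since $\sigma$, $\alpha$, $\beta$ and all their derivatives are arbitrary at a point, infinitesimal invariance forces independence of $u$, $u_x$, $u_y$ and of every jet variable carrying only $t$-derivatives together with at most one $x$- or $y$-derivative. These are $u_t$, $u_{tt}$, $u_{tx}$, $u_{ty}$, $u_{ttt}$, $u_{ttx}$, $u_{tty}$. Next prolong $P^x(\chi)$ and $P^y(\rho)$. The arbitrariness of $\chi,\chi_t,\chi_{tt},\dots$ yields translation invariance in $x$ and $y$. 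It also yields independence of the variables that are shifted by $-\chi_tx$-type terms, namely $u_{txx}$ and $u_{tyy}$ (with further contributions involving $u_{xx}$ and $u_{xxx}$). We should keep only invariant combinations. For instance, $u_{txy}$ is not shifted by $P^x$ or $P^y$, because their $u$-components do not involve $xy$, so it survives. The remaining candidate invariants are $u_{xx}$, $u_{xy}$, $u_{yy}$, the four pure spatial third derivatives, and $u_{txy}$, together with $u_{tx}$-type terms that must be eliminated.

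\emph{Step 2.} Next use $P^x(\chi)$ at the level where $\chi_t$ acts. It shifts $u_{xx}$ by $-\chi_t$ and $u_{txy}$ by $-\chi_t u_{xxy}$-type terms. This forces $u_{xx}$ and $u_{yy}$ to appear only in combinations such as $u_{txy}-u_{xx}u_{xxy}-u_{yy}u_{xyy}$. That combination is exactly $u_{txy}$ minus the part of $(u_{xx}u_{xy})_x+(u_{xy}u_{yy})_y$ containing $u_{xx}$ and $u_{yy}$. We then check that the full expression $E:=u_{txy}-(u_{xx}u_{xy})_x-(u_{xy}u_{yy})_y$ is invariant under $P^x$, $P^y$, $R$ and $Z$. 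This yields the reduced invariants $E$, $u_{xy}$, $u_{xxx}$, $u_{xxy}$, $u_{xyy}$, $u_{yyy}$. Then apply $D^t(\tau)$. The $\tau_{tt}$ term shifts $u_{xxx}$ and $u_{yyy}$ by the same amount $-\tfrac13\tau_{tt}$, so $u_{xxx}-u_{yyy}$ is invariant under it. The $\tau_t$ weights are $u_{xy}\mapsto -\tfrac23$, third spatial derivatives $\mapsto -1$, and $E\mapsto -\tfrac53$ (the last coming from $u_{txy}$: $-1-\tfrac23$). The $\tau_{ttt}$ contribution to $u_{txy}$ vanishes because $\p_x\p_y(x^3+y^3)=0$. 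Finally $D^{\rm s}$ gives weights $u_{xy}\mapsto 1$, third derivatives $\mapsto 0$ and $E\mapsto 1$.

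\emph{Step 3.} Now form absolute invariants. The ratios $\omega_1$ and $\omega_2$ have zero weight under both scalings. The quantity $E/(u_{xy}u_{xyy})$ has $\tau_t$-weight $-\tfrac53+\tfrac23+1=0$ and $D^{\rm s}$-weight $1-1-0=0$. The one remaining weighted scalar is independent of these, so no further invariant is available. Counting orbit dimensions shows that $\omega_1$, $\omega_2$ and $E/(u_{xy}u_{xyy})$ form a complete set of functionally independent invariants on the generic stratum relevant for an equation resolvable in $u_{txy}$. Hence $E=u_{xy}u_{xyy}H(\omega_1,\omega_2)$.

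The main obstacle is the completeness and resolvability argument. One must show that no invariant equation avoids $u_{txy}$. Within third-order jets, $E$ is the only invariant containing $t$-derivatives, and an equation not involving $E$ would be a relation among $\omega_1,\omega_2$ together with weighted quantities; such relations are still invariant. These degenerate cases, including $u_{xyy}=0$, must be excluded or absorbed by the paper's conventions. I would handle this by rank computations of the prolonged generators on each stratum, checking that on the strata where the orbit dimension is maximal only $H$-type equations appear, and that the singular strata do not give a genuine equation of the required type.
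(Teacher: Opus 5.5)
Your invariant computation is correct. Modulo the ideal $\langle P^x(\chi),P^y(\rho),R^x(\alpha),R^y(\beta),Z(\sigma)\rangle$ one gets $16$ orbit directions in the $23$-dimensional space $J^3$. Then $\partial_t$, the $\tau_t$-scaling, the equal $\tau_{tt}$-shifts of $u_{xxx}$ and $u_{yyy}$, and $D^{\rm s}$ give $4$ more wherever $u_{xy}u_{xyy}\ne0$. So $\omega_1$, $\omega_2$ and $\omega_3:=E/(u_{xy}u_{xyy})$ form a complete set of invariants there.

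Two local statements need repair:
\begin{itemize}
\item In Step~1, $u_{txy}$ \emph{is} moved by $P^x(\chi)$ and $P^y(\rho)$, namely by $-\chi_tu_{xxy}$ and $-\rho_tu_{xyy}$. These come from the $-\chi u_x$ and $-\rho u_y$ parts of the characteristics. Your Step~2 then uses this correctly.
\item In Step~2, the relevant check for $D^t(\tau)$ is not the vanishing $\tau_{ttt}$ term. The prolongation coefficient $\eta^{txy}$ contains $-\tfrac13\tau_{tt}(2u_{xy}+xu_{xxy}+yu_{xyy})$. $E$ is a relative invariant of weight $-\tfrac53$ only because this cancels exactly against the shifts $-\tfrac13\tau_{tt}x$, $-\tfrac13\tau_{tt}y$, $-\tfrac13\tau_{tt}$ of $u_{xx}$, $u_{yy}$, $u_{xxx}$, $u_{yyy}$ inside the quadratic terms of $E$. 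You should display this check.
\end{itemize}

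The genuine gap is the last step, and the plan you sketch for it cannot succeed. It is false that only $H$-type equations occur on the stratum of maximal orbit dimension. As you concede yourself, on $u_{xy}u_{xyy}\ne0$ every equation $F(\omega_1,\omega_2,\omega_3)=0$ is invariant, e.g.\ $u_{xxy}=0$ or $u_{xxx}=u_{yyy}$. Off this stratum, $u_{xy}=0$ and $u_{xyy}=0$ are invariant as well, because the $u_{xy}$- and $u_{xyy}$-components of all prolonged generators are multiples of $u_{xy}$ and $u_{xyy}$, respectively. None of these equations has the form~\eqref{eq:dNEqsWithSameIA}. So the ``only if'' part holds only under an implicit assumption, which~\eqref{eq:dN} satisfies: the equation is resolvable with respect to $u_{txy}$ and is considered on the domain $u_{xy}u_{xyy}\ne0$. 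You need to state this assumption and use it, rather than try to exclude such equations.

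With that assumption the argument closes at once:
\begin{itemize}
\item The action is semi-regular on this domain (orbit dimension $20$ throughout), so the invariant hypersurface is locally $\{F(\omega_1,\omega_2,\omega_3)=0\}$ with ${\rm d}F\ne0$.
\item $u_{txy}$ enters only through $\omega_3$, and $\partial\omega_3/\partial u_{txy}=1/(u_{xy}u_{xyy})\ne0$. Hence transversality of $\partial_{u_{txy}}$ to a graph $u_{txy}=G$ forces $F_{\omega_3}\ne0$.
\item The implicit function theorem then gives $\omega_3=H(\omega_1,\omega_2)$, which is~\eqref{eq:dNEqsWithSameIA}.
\end{itemize}
The converse direction is immediate from the invariance of $\omega_1$, $\omega_2$, $\omega_3$.
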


\begin{lemma}\label{lem:dNEqsWithSameIAandCLChars}
(i) An equation of the form~\eqref{eq:dNEqsWithSameIA} admits the conservation-law characteristic~$1$ and thus it is in a conserved form if and only if $H$ is an affine function of~$(\omega_1,\omega_2)$, i.e., $H=a\omega_1+b\omega_2+c$ for some constants~$a$, $b$ and~$c$, and the equation takes the form
\begin{gather}\label{eq:dNEqsWithSameIAandCLChar1}
u_{txy}=(u_{xx}u_{xy})_x+(u_{xy}u_{yy})_y+u_{xy}\big(a(u_{xxx}-u_{yyy})+bu_{xxy}+cu_{xyy}\big).
\end{gather}

(ii) An equation of the form~\eqref{eq:dNEqsWithSameIAandCLChar1} admits the conservation-law characteristic~$u_{xx}$ or~$u_{yy}$ if and only if $a=b=0$ or $a=c=0$, respectively.
\end{lemma}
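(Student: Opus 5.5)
The plan is to use the standard characterization of conservation-law characteristics: a differential function $\lambda$ is a characteristic for an equation $\Delta=0$ written in the form $\Delta:=-u_{txy}+(u_{xx}u_{xy})_x+(u_{xy}u_{yy})_y+u_{xy}u_{xyy}H=0$ if and only if $\mathsf E(\lambda\Delta)=0$, where $\mathsf E$ is the Euler operator with respect to~$u$. For (i), with $\lambda=1$, this means that $\Delta$ itself must be a total divergence. The first three terms of $\Delta$ are manifestly divergences. Hence the condition reduces to $\mathsf E\big(u_{xy}u_{xyy}H(\omega_1,\omega_2)\big)=0$.

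Sufficiency is immediate. If $H=a\omega_1+b\omega_2+c$, the extra term equals $u_{xy}\big(a(u_{xxx}-u_{yyy})+bu_{xxy}+cu_{xyy}\big)$. Each product $u_{xy}u_{xyy}=\tfrac12(u_{xy}^2)_y$ and $u_{xy}u_{xxy}=\tfrac12(u_{xy}^2)_x$ is a divergence. For $u_{xy}u_{xxx}$, integrate by parts: $u_{xy}u_{xxx}=(u_{xy}u_{xx})_x-u_{xxy}u_{xx}=(u_{xy}u_{xx})_x-\tfrac12(u_{xx}^2)_y$, and similarly for $u_{xy}u_{yyy}$. This gives the form~\eqref{eq:dNEqsWithSameIAandCLChar1}.

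For necessity, set $F:=u_{xy}u_{xyy}H(\omega_1,\omega_2)$. This is a function of the third-order derivatives $u_{xxx},u_{xxy},u_{xyy},u_{yyy}$ and of $u_{xy}$ only. A divergence expression must be affine in the highest-order derivatives, with a quite restricted structure: its Euler–Lagrange expression would otherwise contain sixth-order derivatives $u_{J+K}$ multiplied by the Hessian $\partial^2F/\partial u_J\partial u_K$. I would therefore compute the top-order part of $\mathsf E(F)$, namely $-\sum_{J,K}D_JD_K$ applied through $F_{u_Ju_K}u_{J+K}$ over third-order multi-indices $J,K$. I would then collect coefficients of the independent sixth-order jet variables, for instance $u_{xxxxxx}$ (arising only from $J=K=xxx$) and $u_{yyyyyy}$. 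This should force the second derivatives of $F$ with respect to the relevant third-order variables to vanish or combine into null-Lagrangian patterns. Since $F$ is homogeneous of degree one in the third-order variables (degree zero in $H$ times $u_{xyy}$), these conditions translate into $H_{\omega_1\omega_1}=H_{\omega_1\omega_2}=H_{\omega_2\omega_2}=0$.

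The main obstacle is handling the possible null-Lagrangian combinations. The Hessian need not vanish outright, because combinations such as $u_{xxx}u_{xyy}-u_{xxy}^2$ are themselves divergences. I would avoid a full Hessian analysis by differentiating $\mathsf E(F)$ with respect to just a few sixth-order variables, such as $u_{xxxxxx}$, $u_{xxxxxy}$ and $u_{yyyyyy}$. These give $F_{u_{xxx}u_{xxx}}=0$, $F_{u_{yyy}u_{yyy}}=0$ and a mixed condition. Expressed through the chain rule with $\omega_{1,u_{xxx}}=1/u_{xyy}$, they yield $H_{\omega_1\omega_1}=0$. The remaining second derivatives then follow from the coefficient of $u_{xxxxyy}$, possibly after also using lower-order terms involving $u_{xy}$.

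For (ii), I would proceed in the same way. Write $\mathsf E\big(u_{xx}\Delta\big)$ for \eqref{eq:dNEqsWithSameIAandCLChar1}. The part coming from the original equation~\eqref{eq:dN} vanishes, since $u_{xx}$ is known to be a characteristic there. What remains is linear in $a,b,c$ and can be computed directly: $\mathsf E\big(u_{xx}u_{xy}(a(u_{xxx}-u_{yyy})+bu_{xxy}+cu_{xyy})\big)$. Then split with respect to the jet variables. The $c$-term is $u_{xx}u_{xy}u_{xyy}$. This should be a divergence via $u_{xx}(\tfrac12u_{xy}^2)_y=(\cdot)_y-\tfrac12u_{xxy}u_{xy}^2=(\cdot)_y-\tfrac16(u_{xy}^3)_x$, so $c$ is free. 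The $a$- and $b$-terms should give nonzero cubic Euler expressions, which vanish only when $a=b=0$. The case $u_{yy}$ follows by the symmetry $\mathscr J$, which swaps $x$ and $y$, maps $\omega_2\to$ the role of $c$, and leaves the $a$-term invariant up to sign.
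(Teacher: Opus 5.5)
Your proposal is correct and uses the standard Euler-operator argument for this lemma, which the present paper only quotes from~\cite{boyk2024a}. For~(i) you apply the criterion $\mathrm E(\lambda\Delta)=0$ and split with respect to the sixth-order jet variables; for~(ii) you compute directly that $u_{xx}$ times the $c$-term is a divergence, while the Euler expressions of $u_{xx}$ times the $a$- and $b$-terms are nonzero and linearly independent; and you get the characteristic~$u_{yy}$ from the permutation $x\leftrightarrow y$, which maps $(a,b,c)$ to $(-a,c,b)$.

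Two small remarks on~(i). The coefficients of $u_{xxxxxx}$, $u_{xxxxxy}$ and $u_{xxxxyy}$ in $\mathrm E(u_{xy}u_{xyy}H)$ are proportional to $H_{\omega_1\omega_1}$, $H_{\omega_1\omega_2}$ and $H_{\omega_2\omega_2}-2\omega_1H_{\omega_1\omega_1}-2\omega_2H_{\omega_1\omega_2}$, so no lower-order terms are needed. Your aside that a divergence must be affine in the top-order derivatives is false in general, as your own null-Lagrangian remark shows, but nothing in the argument depends on it.
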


\begin{theorem}\label{thm:dNDefiningGeometricProperties}
A partial differential equation of order $r\in\{1,2,3\}$ with three independent variables admits the pseudoalgebra~$\mathfrak g_{\ref{eq:dN}}$ as its Lie invariance algebra and the conservation-law characteristics~$1$, $u_{xx}$ and~$u_{yy}$ if and only if it coincides with the dispersionless Nyzhnyk equation~\eqref{eq:dN}.
\end{theorem}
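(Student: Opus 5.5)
The proof is essentially a chaining of Lemmas~\ref{lem:dNEqsWithSameIA} and~\ref{lem:dNEqsWithSameIAandCLChars}, preceded by one preliminary step that rules out equations of order one and two. The ``if'' direction is direct. The equation~\eqref{eq:dN} is $\mathfrak g_{\ref{eq:dN}}$-invariant, being \eqref{eq:dNEqsWithSameIA} with $H=0$. The characteristics $1$, $u_{xx}$ and $u_{yy}$ are exhibited by writing $u_{xx}$ times the left-hand side minus the right-hand side of~\eqref{eq:dN}, and likewise for $u_{yy}$, as total divergences. This is a routine check, and it is already contained in Lemma~\ref{lem:dNEqsWithSameIAandCLChars} with $a=b=c=0$.

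For the ``only if'' direction, take an equation $\mathcal E$ of order $r\le3$ that is invariant under $\mathfrak g_{\ref{eq:dN}}$. First I would show that $r=3$. Lemma~\ref{lem:dNEqsWithSameIA} says that every $\mathfrak g_{\ref{eq:dN}}$-invariant equation of order at most three has the form~\eqref{eq:dNEqsWithSameIA}, which is genuinely third order. So no first- or second-order equation can admit $\mathfrak g_{\ref{eq:dN}}$. If one prefers not to rely on how the lemma is phrased, one can argue directly. The vector fields $Z(\sigma)$, $R^x(\alpha)$, $R^y(\beta)$, $P^x(\chi)$ and $P^y(\rho)$, prolonged to second order, have characteristics whose arbitrary functions of $t$ and their derivatives move $u$, $u_t$, $u_x$, $u_y$, $u_{tt}$, $u_{tx}$, $u_{ty}$ independently. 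Then $D^t(\tau)$, via $\tau_{tt}$ and $\tau_{ttt}$, moves $u_{xx}$ and $u_{yy}$ while rescaling $u_{xy}$. Finally $D^{\rm s}$ fixes the weights. The orbits of the prolonged pseudoalgebra are therefore open in $J^2$, so no nondegenerate invariant manifold exists there. Hence $\mathcal E$ has the form~\eqref{eq:dNEqsWithSameIA} for some $H$.

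Next, admitting the characteristic $1$ forces $H=a\omega_1+b\omega_2+c$ by Lemma~\ref{lem:dNEqsWithSameIAandCLChars}(i), which gives~\eqref{eq:dNEqsWithSameIAandCLChar1}. Admitting $u_{xx}$ gives $a=b=0$ by Lemma~\ref{lem:dNEqsWithSameIAandCLChars}(ii), and admitting $u_{yy}$ gives $a=c=0$. Together these give $a=b=c=0$, so $\mathcal E$ is~\eqref{eq:dN}.

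One subtlety needs care: the characteristics must be understood up to the equation being written in the canonical solved form~\eqref{eq:dNEqsWithSameIA}. An equation $F=0$ with the same solution set may carry a nonvanishing multiplier, and the characteristics transform accordingly. I would fix the representation solved for $u_{txy}$, as Lemma~\ref{lem:dNEqsWithSameIAandCLChars} implicitly does, and note that the characteristics $1$, $u_{xx}$ and $u_{yy}$ are meant with respect to that form.

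The main obstacle is the exclusion of orders one and two. If Lemma~\ref{lem:dNEqsWithSameIA} is read as covering only third-order equations, the orbit-dimension argument on $J^1$ and $J^2$ has to be written out. I would do this by listing the prolonged coefficients of $Z(t^k)$, $R^z(t^k)$, $P^z(t^k)$ and $D^t(t^k)$ for $k=0,1,2,3$ and checking that their rank equals $\dim J^2=13$ (coordinates $t,x,y,u$ plus $3+6$ derivatives) at a generic point.
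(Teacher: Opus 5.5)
Your main line is the intended one. Lemma~\ref{lem:dNEqsWithSameIA} brings the equation to the form~\eqref{eq:dNEqsWithSameIA}, Lemma~\ref{lem:dNEqsWithSameIAandCLChars}(i) makes $H$ affine, and part~(ii) applied to $u_{xx}$ and $u_{yy}$ forces $a=b=c=0$. Your remark about fixing the representation solved for $u_{txy}$ is also apt. Taken as a derivation from the two lemmas as they are stated, this part is complete.

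The ``direct'' argument you offer for discarding orders one and two, which you also call the main obstacle, is wrong.
\begin{itemize}
\item On $J^1$ the count works: the prolonged pseudoalgebra has rank~$7$ everywhere.
\item On $J^2$ it fails. The $u_{xy}$-component of every prolonged generator is proportional to $u_{xy}$. It equals $-\tfrac23\tau_tu_{xy}$ for $D^t(\tau)$, $u_{xy}$ for $D^{\rm s}$, and $0$ for $P^x(\chi)$, $P^y(\rho)$, $R^x(\alpha)$, $R^y(\beta)$ and $Z(\sigma)$.
\item So rank~$13$ is reached only off the hypersurface $u_{xy}=0$, and that hypersurface is itself invariant. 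Thus $u_{xy}=0$ is a nondegenerate second-order equation admitting $\mathfrak g_{\ref{eq:dN}}$, and no rank count at generic points can exclude it.
\end{itemize}

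The same phenomenon occurs in order three, so the issue is not really the order. Every equation $F(\omega_1,\omega_2)=0$ is $\mathfrak g_{\ref{eq:dN}}$-invariant without being of the form~\eqref{eq:dNEqsWithSameIA}. The equation $u_{xxy}-u_{xyy}=0$ even admits all three characteristics. With $D_x$, $D_y$ denoting total derivatives,
\[
u_{xx}u_{xxy}=D_y\bigl(\tfrac12u_{xx}^2\bigr),\qquad
u_{xx}u_{xyy}=D_y(u_{xx}u_{xy})-D_x\bigl(\tfrac12u_{xy}^2\bigr),
\]
and symmetrically for $u_{yy}$.

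These degenerate equations are excluded only by the implicit scope of Lemma~\ref{lem:dNEqsWithSameIA}, namely equations that genuinely involve (and can be solved for) $u_{txy}$. Alternatively, one can read ``its Lie invariance algebra'' as the maximal one. You should state and use that restriction explicitly, because the orbit-dimension argument cannot supply it. For $u_{xy}=0$ alone, the characteristic $u_{xx}$ would already fail, but that is a separate check.
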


\begin{remark}
The Nyzhnyk equation~\eqref{eq:SymPotNizhEq} has a significantly higher order (five instead of three) and is therefore defined on a jet space of a much higher dimension than its dispersionless counterpart~\eqref{eq:dN}.
Moreover, $\mathfrak g_{\ref{eq:SymPotNizhEq}}\subsetneq\mathfrak g_{\ref{eq:dN}}$.
Thus, to single out the Nyzhnyk equation~\eqref{eq:SymPotNizhEq} from among partial differential equations of the same order, one requires significantly more geometric properties in addition to invariance under~$\mathfrak g_{\ref{eq:SymPotNizhEq}}$ compared to Theorem~\ref{thm:dNDefiningGeometricProperties}, and it is obvious that these properties are not exhausted by simple conservation laws.
In view of these arguments, extending Theorem~\ref{thm:dNDefiningGeometricProperties} to asymmetric models, including the dispersive case, appears more realistic and promising.
\end{remark}

\section{One- and two-dimensional subalgebras}\label{sec:OneAndTwoDimSubalg}

One- and two-dimensional subalgebras of the maximal Lie invariance pseudoalgebra of the dispersionless Nyzhnyk equation~\eqref{eq:dN} and one-dimensional subalgebras of the maximal Lie invariance pseudoalgebra of its nonlinear Lax representation~\eqref{eq:dNLaxPair} were exhaustively classified in~\cite{vinn2024a}.
This enabled a complete classification of the Lie reductions of this equation and the construction of large families of new exact solutions.
We present these subalgebra classification results here in order to subsequently extend them to the Nyzhnyk equation~\eqref{eq:SymPotNizhEq}.

Throughout this section, the parameter functions~$\rho$, $\alpha$, $\beta$, $\tilde\rho$, $\beta^1$, $\beta^2$ and $\sigma$ run through the set of smooth functions of~$t$ with $\rho\ne0$.

\begin{lemma}\label{lem:dN1DInequivSubalgs}
A complete list of $G_{\ref{eq:dN}}$-inequivalent one-dimensional subalgebras of the pseudoalgebra~$\mathfrak g_{\ref{eq:dN}}$
consists of the following families of subalgebras:
\begin{gather*}
\mathfrak s_{1.1}^\delta=\big\langle D^t(1)+\delta D^{\rm s}\big\rangle,\quad
\mathfrak s_{1.2}       =\big\langle D^{\rm s}              \big\rangle,\quad
\mathfrak s_{1.3}^\rho  =\big\langle P^x(1)+P^y(\rho)       \big\rangle,\quad
\mathfrak s_{1.4}^\beta =\big\langle P^x(1)+R^y(\beta)      \big\rangle,\\
\mathfrak s_{1.5}^\beta =\big\langle R^x(1)+R^y(\beta)      \big\rangle,\quad
\mathfrak s_{1.6}       =\big\langle Z(t)                   \big\rangle,\quad
\mathfrak s_{1.7}       =\big\langle Z(1)                   \big\rangle,
\end{gather*}
where $\delta\in\{0,1\}$ $(\!{}\bmod G_{\ref{eq:dN}})$.
%, and $\rho$ and $\beta$ run through the set of smooth functions of~$t$ with $\rho\ne0$.
\end{lemma}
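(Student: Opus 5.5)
The plan is to classify a general element $Q=D^t(\tau)+\delta D^{\rm s}+P^x(\chi)+P^y(\rho)+R^x(\alpha)+R^y(\beta)+Z(\sigma)$ of $\mathfrak g_{\ref{eq:dN}}$ up to nonzero multiples and the adjoint action of $G_{\ref{eq:dN}}$. We use the point transformations~\eqref{eq:dNPointSymForm}, the discrete symmetries $\mathscr J$, $\mathscr I^{\rm i}$, $\mathscr I^{\rm s}$ from Corollary~\ref{cor:dNDiscrSyms}, and the commutation relations of the generators~\eqref{eq:dNMIA}. First we compute the pushforwards of the generating vector fields under the elementary transformations: time reparameterizations $\tilde t=T(t)$ (with the induced $x,y,u$ components), shifts $\tilde x=x+X^0(t)$, $\tilde y=y+Y^0(t)$, the $u$-gauges $W^1x+W^2y+W^0$, and the scalings by~$C$. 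These action formulas, e.g.\ $D^t(\tau)\mapsto \tilde D^t(T_t\tau)+\dots$ and $P^x(\chi)\mapsto\tilde P^x(CT_t^{1/3}\chi)+\dots$, are the workhorse of the whole classification.

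The case splitting follows the filtration by the megaideals $\mathfrak m_1\supset\mathfrak m_3\supset\mathfrak m_4\supset\mathfrak m_5\supset\mathfrak m_6$ and~$\mathfrak m_2$. Since $\tau$ and $\delta$ are invariants up to the obvious rescaling, the cases are as follows.

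\emph{Case $\tau\ne0$.} Choose $T$ with $T_t\tau=1$ to set $\tau=1$. Then use the shifts $X^0$, $Y^0$ to kill $\chi$ and $\rho$, by solving linear first-order ODEs in $t$; when $\delta\ne0$ these include the $D^{\rm s}$ contribution $\delta X^0$. Next use $W^1$, $W^2$, $W^0$ to kill $\alpha$, $\beta$, $\sigma$ in the same way. This yields $D^t(1)+\delta D^{\rm s}$. Conjugating with $C$-scalings does not change $\delta$, but $\mathscr I^{\rm i}$ together with the multiplication by $-1$ lets us assume $\delta\ge0$. Time scaling $\tilde t=\lambda t$ combined with renormalizing $Q$ scales $\delta$, so $\delta\in\{0,1\}$.

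\emph{Case $\tau=0$, $\delta\ne0$.} Here $D^{\rm s}$ with nonzero coefficient allows us to kill all remaining components via $X^0$, $Y^0$, $W^j$ algebraically rather than differentially, which gives $\mathfrak s_{1.2}$.

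\emph{Case $\tau=\delta=0$, $(\chi,\rho)\ne(0,0)$.} Using $\mathscr J$, assume $\chi\ne0$. A time reparameterization with $CT_t^{1/3}\chi=1$ normalizes $\chi=1$. The shifts then remove $\alpha$ (via $Y^0$ if $\rho\ne0$, otherwise via $W^1$) and $\sigma$, and either $\beta$ (if $\rho\ne0$) or $\rho$ stays as a parameter function. This gives $\mathfrak s_{1.3}^\rho$ or $\mathfrak s_{1.4}^\beta$.

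\emph{Remaining cases.} These lie in $\mathfrak m_4$ and are handled similarly: normalize $\alpha=1$, or otherwise swap via $\mathscr J$, giving $\mathfrak s_{1.5}^\beta$. Within $\mathfrak m_5=\{Z(\sigma)\}$, a nonconstant $\sigma$ is reparameterized to $\sigma=t$ by choosing $T=\sigma$ up to the factor $C^3$, while a constant $\sigma$ gives $Z(1)$.

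Finally, inequivalence of the listed families is checked using invariants. These are membership in the megaideals, the value of $\delta$ (an invariant, since $\mathfrak m_1$ is a megaideal and $D^{\rm s}$ spans the complement, up to sign fixed as above), and constancy of $\sigma$ for $Z(\sigma)$.

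The main obstacle is the $\tau\ne0$ case with $\delta\ne0$, and more precisely the careful determination of the residual discrete equivalence. One has to verify that the signs, and the choice of representatives $\delta\in\{0,1\}$ modulo $G_{\ref{eq:dN}}$, are exactly right. One also has to check that no further identification occurs among the parameter functions $\rho$ and $\beta$ in $\mathfrak s_{1.3}$–$\mathfrak s_{1.5}$, beyond those stated implicitly via the ``$\bmod G_{\ref{eq:dN}}$'' convention.
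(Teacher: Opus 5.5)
Your strategy is the standard one for this kind of classification. You take a general element of $\mathfrak g_{\ref{eq:dN}}$, push it forward by the elementary transformations of $G_{\ref{eq:dN}}$, and split cases along $\mathfrak m_2\supset\mathfrak m_3\supset\mathfrak m_4\supset\mathfrak m_5$. The paper does not reprove this lemma; it takes it from \cite{vinn2024a}. Your cases $\tau\ne0$, $\tau=0$ with nonzero $D^{\rm s}$-coefficient, $Q\in\mathfrak m_4\setminus\mathfrak m_5$ and $Q\in\mathfrak m_5$ are correct. This includes the normalization $\delta\in\{0,1\}$ via time dilations and~$\mathscr I^{\rm i}$.

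The case $Q\in\mathfrak m_3\setminus\mathfrak m_4$ is wrong as written, and it is exactly where $\mathfrak s_{1.3}^\rho$ and $\mathfrak s_{1.4}^\beta$ are separated. With $\chi=1$, the roles of the gauges are as follows:
\begin{itemize}
\item The shift $\tilde x=x+X^0$, $\tilde u=u-\tfrac12X^0_tx^2$ sends $\alpha$ to $\alpha-X^0_t$. So $\alpha$ is always removed by~$X^0$, not by~$Y^0$ or~$W^1$.
\item The shift $\tilde y=y+Y^0$, $\tilde u=u-\tfrac12Y^0_ty^2$ sends $\beta$ to $\beta-\rho^2(Y^0/\rho)_t$. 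It removes~$\beta$ precisely when $\rho\ne0$.
\item The gauge $\tilde u=u+W^1x$ only shifts $\sigma$ by $W^1$; this is what removes~$\sigma$.
\item For $\rho=0$, none of $X^0$, $Y^0$, $W^1$, $W^2$, $W^0$ changes~$\beta$, so $\beta$ is a genuine parameter.
\end{itemize}
Hence the surviving parameter is $\rho$ when $\rho\ne0$ and $\beta$ when $\rho=0$, which is the reverse of what you wrote. Your final list is right, but this step must be corrected.

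Your inequivalence invariants (megaideal membership, vanishing of the $D^{\rm s}$-coefficient, constancy of~$\sigma$) also do not separate $\mathfrak s_{1.3}^\rho$ from $\mathfrak s_{1.4}^\beta$, since both lie in $\mathfrak m_3\setminus\mathfrak m_4$. You need one more observation. Modulo~$\mathfrak m_4$, every transformation of the form~\eqref{eq:dNPointSymForm} multiplies the pair $(\chi,\rho)$ by the common factor $CT_t^{1/3}$, composed with $t\mapsto T$, and $\mathscr J$ merely swaps the pair. So the vanishing of one component is $G_{\ref{eq:dN}}$-invariant.

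The ``main obstacle'' you identify is not one. The normalization of~$\delta$ is the short argument you already gave. Residual identifications among the parameter functions do occur:
\begin{itemize}
\item $\tilde t=C^{-3}t+c$ preserves $\chi=1$ and replaces $\rho(t)$ by $\rho(C^3(t-c))$.
\item $\mathscr J$ followed by renormalization replaces $\rho$ by $1/\rho$ in a new time variable.
\item $\beta$ in $\mathfrak s_{1.4}^\beta$ can be rescaled by~$C^3$.
\end{itemize}
The lemma asserts only completeness and pairwise inequivalence of the listed families, with $\delta$ normalized. There is nothing further to prove here, and trying to show that no such identification occurs would fail.
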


\begin{lemma}\label{lem:dN2DInequivSubalgs}
A complete list of $G_{\ref{eq:dN}}$-inequivalent two-dimensional subalgebras of the pseudoalgebra~$\mathfrak g_{\ref{eq:dN}}$
is constituted by the nonabelian algebras
\begin{gather*}
\mathfrak s_{2.1}^\lambda=\big\langle D^t(1),\,D^t(t)+\lambda D^{\rm s}\big\rangle,\quad
\mathfrak s_{2.2}^\nu    =\big\langle D^t(1),\,D^t(t)-\tfrac13 D^{\rm s}+P^x(1)+P^y(\nu)\big\rangle,\\
\mathfrak s_{2.3}^\nu    =\big\langle D^t(1),\,D^t(t)+\tfrac16 D^{\rm s}+R^x(1)+R^y(\nu)\big\rangle,\quad
\mathfrak s_{2.4}        =\big\langle D^t(1),\,D^t(t)+Z(1)\big\rangle,\\
\mathfrak s_{2.5}^{\lambda\mu}=\big\langle D^t(1)+\lambda D^{\rm s},\,P^x({\rm e}^{(\lambda-1)t})+\mu P^y({\rm e}^{(\lambda-1)t})\big\rangle,\\
\mathfrak s_{2.6}^{\lambda\delta}=\big\langle D^t(1)+\lambda D^{\rm s},\,P^x({\rm e}^{(\lambda-1)t})+\delta R^y({\rm e}^{(2\lambda-1)t})\big\rangle,\\
\mathfrak s_{2.7}^{\lambda\nu}=\big\langle D^t(1)+\lambda D^{\rm s},\,R^x({\rm e}^{(2\lambda-1)t})+\nu R^y({\rm e}^{(2\lambda-1)t})\big\rangle,\quad
\mathfrak s_{2.8}^\lambda     =\big\langle D^t(1)+\lambda D^{\rm s},\,Z({\rm e}^{(3\lambda-1)t})\big\rangle,\\
\mathfrak s_{2.9}^{\tilde\rho}=\big\langle D^{\rm s},\,P^x(1)+P^y(\tilde\rho)\big\rangle,\quad
\mathfrak s_{2.10}^\beta      =\big\langle D^{\rm s},\,R^x(1)+R^y(\beta)\big\rangle,\\
\mathfrak s_{2.11}            =\big\langle D^{\rm s},\,Z(t)\big\rangle,\quad
\mathfrak s_{2.12}            =\big\langle D^{\rm s},\,Z(1)\big\rangle,\quad
\end{gather*}
and the abelian algebras
\begin{gather*}
\mathfrak s_{2.13}       =\big\langle D^t(1),\,D^{\rm s}\big\rangle,\quad
\mathfrak s_{2.14}^{\delta\nu\delta'}=\big\langle D^t(1)+\delta D^{\rm s},\,
P^x({\rm e}^{\delta t})+\nu P^y({\rm e}^{\delta t})+\delta'R^y({\rm e}^{2\delta t})\big\rangle,
\\
\mathfrak s_{2.15}^{\delta\nu}=\big\langle D^t(1)+\delta D^{\rm s},\,R^x({\rm e}^{2\delta t})+\nu R^y({\rm e}^{2\delta t})\big\rangle,\quad
\mathfrak s_{2.16}^\delta     =\big\langle D^t(1)+\delta D^{\rm s},\,Z({\rm e}^{3\delta t}) \big\rangle,\\
\mathfrak s_{2.17}^{\rho\alpha\beta}=\big\langle P^x(1)+R^y(\beta),\,P^y(\rho)+R^x(\rho\beta)\big\rangle,\\
\mathfrak s_{2.18}^{\rho\beta\sigma}=\big\langle P^x(1)+P^y(\rho),\,-R^x(\rho\beta)+R^y(\beta)+Z(\sigma)\big\rangle_{(\beta,\sigma)\ne(0,0)},\\
\mathfrak s_{2.19}^{\beta^1\beta^2} =\big\langle P^x(1)+R^y(\beta^1),\,R^y(\beta^2)\big\rangle_{\beta^2\ne0},\quad
\mathfrak s_{2.20}^{\beta\sigma}    =\big\langle P^x(1)+R^y(\beta),\,Z(\sigma)\big\rangle_{\sigma\ne0},\\
\mathfrak s_{2.21}^{\alpha\beta^1\beta^2} =\big\langle R^x(1)+R^y(\beta^1),\,R^x(\alpha)+R^y(\beta^2)\big\rangle_{\beta^2\ne\alpha\beta^1},\\
\mathfrak s_{2.22}^{\alpha\beta\sigma}    =\big\langle R^x(1)+R^y(\beta),\,R^x(\alpha)+R^y(\alpha\beta)+Z(\sigma)\big\rangle_{\alpha_t\ne0},\\
\mathfrak s_{2.23}^{\beta\sigma}          =\big\langle R^x(1)+R^y(\beta),\,Z(\sigma)\big\rangle_{\sigma\ne0},\ \
\mathfrak s_{2.24}^\sigma =\big\langle Z(t),\,Z(\sigma)\big\rangle_{\sigma_{tt}\ne0},\ \
\mathfrak s_{2.25}^\sigma =\big\langle Z(1),\,Z(\sigma)\big\rangle_{\sigma_t\ne0},
\end{gather*}
where
%$\rho$, $\tilde\rho$, $\alpha$, $\beta$, $\beta^1$, $\beta^2$ and $\sigma$ run the set of smooth functions from~$t$ with $\rho\ne0$,
$\lambda\in\mathbb R$, $\mu\in[-1,1]\setminus\{0\}$ and $\nu\in[-1,1]$ $(\!{}\bmod G_{\ref{eq:dN}})$, $\delta,\delta'\in\{0,1\}$ $(\!{}\bmod G_{\ref{eq:dN}})$,
and in addition the conditions specified after the corresponding subalgebras hold.
\end{lemma}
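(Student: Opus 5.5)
The classification will follow the standard scheme for pseudoalgebras with a graded structure, taking Lemma~\ref{lem:dN1DInequivSubalgs} as the starting point. The idea is to take a two-dimensional subalgebra $\mathfrak s=\langle Q^1,Q^2\rangle$ of $\mathfrak g_{\ref{eq:dN}}$ and look at its images under two natural projections. The first is onto the quotient $\mathfrak g_{\ref{eq:dN}}/\mathfrak m_2=\mathfrak g_{\ref{eq:dN}}/\mathfrak r_{\ref{eq:dN}}$, which is the pseudoalgebra $\{D^t(\tau)\}$ of vector fields on the $t$-line. The second is onto $\langle D^{\rm s}\rangle$ modulo $\mathfrak m_1$.

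For the computations I would use the adjoint actions of $G_{\ref{eq:dN}}$ on the generators. These can be read off from the transformations~\eqref{eq:dNPointSymForm}: $T$ reparametrizes~$\tau$, $C$ scales the radical components, $X^0,Y^0,W^1,W^2,W^0$ shift along $P^x,P^y,R^x,R^y,Z$, and $\mathscr J$ swaps $x$ and $y$. I would also use the commutation relations $[D^t(\tau),P^x(\chi)]=P^x(\tau\chi_t-\tfrac13\tau_t\chi)$, $[P^x(\chi),P^x(\tilde\chi)]=R^x(\chi\tilde\chi_t-\chi_t\tilde\chi)$, and the analogous ones.

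The cases split according to the dimension $k$ of the projection of $\mathfrak s$ onto $\{D^t(\tau)\}$.

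\emph{Case $k=2$.} The projection is a two-dimensional subalgebra of the Lie pseudoalgebra of vector fields on a line. Via $T$ it can be brought to $\langle\p_t,t\p_t\rangle$. Then I pick $Q^1=D^t(1)+\dots$ and $Q^2=D^t(t)+\dots$. Using one-dimensional results (case $\mathfrak s_{1.1}$) together with the stabilizer of $D^t(1)$, I first set $Q^1=D^t(1)$; note that the $D^{\rm s}$-part of $Q^1$ must vanish because $[Q^1,Q^2]=Q^1$. Next I solve $[Q^1,Q^2]=Q^1$ for the radical part of $Q^2$: the components with $\tau$ independent of $t$ must have constant coefficients, and the weights of $D^t(t)$ on $P,R,Z$ are $-\tfrac13,-\tfrac23,-1$. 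Shifting by elements commuting with $D^t(1)$ removes nonresonant parts. Resonance occurs exactly when $\lambda=-\tfrac13,\tfrac16,0$ for $P,R,Z$ respectively, which gives $\mathfrak s_{2.1}$--$\mathfrak s_{2.4}$. The parameter~$\nu$ is normalized by $C$, by $\mathscr J$, and by the remaining scalings within $G_{\ref{eq:dN}}$.

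\emph{Case $k=1$.} Take $Q^1=D^t(1)+\lambda D^{\rm s}$ (after reducing $\delta$ mod $G_{\ref{eq:dN}}$), with $Q^2\in\mathfrak r_{\ref{eq:dN}}$ or $Q^2=D^{\rm s}+\dots$. The bracket gives either $[Q^1,Q^2]=0$ or $[Q^1,Q^2]=Q^2$ after rescaling. In each subcase I solve linear ODEs in $t$ for the parameter functions of $Q^2$, graded by the layers $\mathfrak m_3/\mathfrak m_4$, $\mathfrak m_4/\mathfrak m_5$, $\mathfrak m_5$; the solutions are exponentials, giving $\mathfrak s_{2.5}$--$\mathfrak s_{2.8}$ and $\mathfrak s_{2.13}$--$\mathfrak s_{2.16}$.

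\emph{Case $k=0$.} If $D^{\rm s}$ occurs, normalize to $D^{\rm s}$ and find its eigenvectors in $\mathfrak r_{\ref{eq:dN}}$, which gives $\mathfrak s_{2.9}$--$\mathfrak s_{2.12}$. Otherwise $\mathfrak s\subset\mathfrak m_3$, and I stratify by the depth of the layers, taking $Q^1$ from the list $\mathfrak s_{1.3}$--$\mathfrak s_{1.7}$. In this subcase the bracket $[P^x(\chi),P^x(\tilde\chi)]$ forces abelianness constraints, yielding the conditions attached to $\mathfrak s_{2.17}$--$\mathfrak s_{2.25}$.

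I expect the main obstacle to be the fine normalization and the proof of inequivalence within the families, especially within the abelian families in $\mathfrak m_3$. There the stabilizers of the first generator are themselves infinite-dimensional, and one must check carefully which functional parameters can be removed; for example, why $\alpha$ persists in $\mathfrak s_{2.17}$ but $\alpha_t\ne0$ is required in $\mathfrak s_{2.22}$. To handle this, I would compute for each candidate subalgebra the invariants under the pseudogroup that preserves its canonical form, namely its flag of intersections with the megaideals $\mathfrak m_j$, and compare them.
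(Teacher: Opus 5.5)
\textbf{Overall verdict.} Your overall scheme is the standard one for this pseudoalgebra: split by the dimension of the projection of $\mathfrak s$ to $\{D^t(\tau)\}$ and by the presence of $D^{\rm s}$, seed one generator from Lemma~\ref{lem:dN1DInequivSubalgs}, solve the bracket conditions layer by layer, and normalize by adjoint actions. The paper itself gives no proof; it only quotes the lemma. However, the proposal has two gaps: the computation you give for $k=2$ is wrong, and the inequivalence part is not covered.

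\textbf{The case $k=2$.} The weights of $D^t(t)$ on $P,R,Z$ are not $-\frac13,-\frac23,-1$; those numbers are the $D^{\rm s}$-weights $-1,-2,-3$ divided by~$3$. Since $[D^t(\tau),R^x(\alpha)]=R^x(\tau\alpha_t+\frac13\tau_t\alpha)$ and $[D^t(\tau),Z(\sigma)]=Z(\tau\sigma_t)$, one has
$[D^t(t),P^z(1)]=-\frac13P^z(1)$, $[D^t(t),R^z(1)]=\frac13R^z(1)$ and $[D^t(t),Z(1)]=0$.
Hence $\mathrm{ad}\,(D^t(t)+\lambda D^{\rm s})$ acts on the constant-coefficient $P$-, $R$- and $Z$-parts by $-\frac13-\lambda$, $\frac13-2\lambda$ and $-3\lambda$. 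This is what produces the resonances $\lambda=-\frac13,\frac16,0$ behind $\mathfrak s_{2.2}$--$\mathfrak s_{2.4}$. With your weights, all three eigenvalues vanish only at $\lambda=-\frac13$, so neither $\mathfrak s_{2.3}$ nor $\mathfrak s_{2.4}$ would arise. The resonance values you quote are correct, but they do not follow from what you wrote.

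Keep one bracket convention throughout. With the convention of your $[D^t,P^x]$ formula, $[P^x(\chi),P^x(\tilde\chi)]=R^x(\chi_t\tilde\chi-\chi\tilde\chi_t)$. The exponents ${\rm e}^{(\lambda-1)t}$ in $\mathfrak s_{2.5}$--$\mathfrak s_{2.8}$ correspond to $[Q^1,Q^2]=-Q^2$. Your normalization $[Q^1,Q^2]=Q^2$ gives ${\rm e}^{(\lambda+1)t}$, which matches the list only after applying $\mathscr I^{\rm i}$ and replacing $\lambda$ by $-\lambda$.

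\textbf{Inequivalence.} Flags of intersections with the megaideals separate the families, but not parameter values inside a family.
\begin{itemize}
\item For $\lambda$ in $\mathfrak s_{2.1}^\lambda$ and $\mathfrak s_{2.5}$--$\mathfrak s_{2.8}$, use that every element of $G_{\ref{eq:dN}}$ acts trivially on $\mathfrak g_{\ref{eq:dN}}/\mathfrak m_1\simeq\langle D^{\rm s}\rangle$. The normalized bracket relation singles out one generator modulo the derived algebra, and its $D^{\rm s}$-coefficient is then an invariant.
\item For $\mu$, $\nu$, $\delta$, $\delta'$ you have to compute the stabilizer of the normalized first generator and its residual action.
\item For functional parameters, inequivalence can hold only modulo residual transformations, and you must say exactly what is claimed. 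For example, $\tilde t=\sigma(t)$ fixes $Z(1)$ and maps every $\mathfrak s_{2.25}^\sigma$ to $\langle Z(1),Z(t)\rangle$.
\end{itemize}

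\textbf{The normalizations you single out.}
\begin{itemize}
\item $\alpha$ does not occur in $\mathfrak s_{2.17}^{\rho\alpha\beta}$ at all. Since $[P^x(1)+R^y(\beta),P^y(\rho)+R^x(\alpha)]=Z(\alpha-\rho\beta)$, abelianness forces $\alpha=\rho\beta$.
\item In $\mathfrak s_{2.22}$ one needs $\alpha_t\ne0$ because otherwise the algebra contains $Z(\sigma)$ and is $\mathfrak s_{2.23}$. The term $Z(\sigma)$ survives because removing the $Z$-parts by $\mathrm{ad}\,(P^x(\chi)+P^y(\rho))$ leads to a linear system with determinant $\beta^2-\alpha\beta^1$, which vanishes there.
\item In the abelian family $\mathfrak s_{2.14}$, the centralizer of $D^t(1)+\delta D^{\rm s}$ cannot remove the $R$-part of the second generator, because the Wronskian of ${\rm e}^{\delta t}$ with itself vanishes. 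You must instead use $P^x(kt{\rm e}^{\delta t})+P^y(k\nu t{\rm e}^{\delta t})$, compensated by $Q^1\to Q^1+kQ^2$. This removes only one combination of the $R^x$- and $R^y$-coefficients, which is where $\delta'R^y({\rm e}^{2\delta t})$ comes from.
\end{itemize}
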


\begin{lemma}\label{lem:dNLaxPair1DInequivSubalgs}
A complete list of $G_{\ref{eq:dNLaxPair}}$-inequivalent one-dimensional subalgebras of the pseudoalgebra~$\mathfrak g_{\ref{eq:dNLaxPair}}$
is exhausted by the following families of subalgebras:
\begin{gather*}
\bar{\mathfrak s}_{1.1}^{\delta\delta'}=\big\langle \bar D^t(1)+\delta\bar D^{\rm s}+\delta'\bar P^v\big\rangle_{\delta\delta'=0},\quad
\bar{\mathfrak s}_{1.2}                =\big\langle \bar D^{\rm s}                                  \big\rangle,\quad
\bar{\mathfrak s}_{1.3}^{\rho\delta}   =\big\langle \bar P^x(1)+\bar P^y(\rho)+\delta\bar P^v       \big\rangle,\\
\bar{\mathfrak s}_{1.4}^{\beta\delta}  =\big\langle \bar P^x(1)+\bar R^y(\beta)+\delta\bar P^v      \big\rangle,\quad
\bar{\mathfrak s}_{1.5}^{\beta\delta}  =\big\langle \bar R^x(1)+\bar R^y(\beta)+\delta\bar P^v      \big\rangle,\\
\bar{\mathfrak s}_{1.6}^{\delta}       =\big\langle \bar Z(t)+\delta\bar P^v                        \big\rangle,\quad
\bar{\mathfrak s}_{1.7}^{\delta}       =\big\langle \bar Z(1)+\delta\bar P^v                        \big\rangle,\quad
\bar{\mathfrak s}_{1.8}                =\big\langle \bar P^v                                        \big\rangle,
\end{gather*}
where $\delta,\delta'\in\{0,1\}$.
%, а $\rho$, $\beta$ run the set of smooth functions from~$t$ with $\rho\ne0$.
\end{lemma}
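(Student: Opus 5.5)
The plan is to reduce the classification to Lemma~\ref{lem:dN1DInequivSubalgs}, using the structure of $\mathfrak g_{\ref{eq:dNLaxPair}}$ as the direct sum of the prolongation of $\mathfrak g_{\ref{eq:dN}}$ and the one-dimensional central ideal $\langle\bar P^\vartheta\rangle$ (here denoted $\bar P^v$). Indeed, $\bar P^\vartheta$ commutes with every generating vector field except $\bar D^{\rm s}$, for which $[\bar D^{\rm s},\bar P^\vartheta]=-\tfrac32\bar P^\vartheta$. Hence $\langle\bar P^\vartheta\rangle$ is an ideal, and the natural projection $\bar\varpi_*$ maps $\mathfrak g_{\ref{eq:dNLaxPair}}$ onto $\mathfrak g_{\ref{eq:dN}}$ with kernel $\langle\bar P^\vartheta\rangle$. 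By Theorem~\ref{thm:dNCompletePointSymGroupOfLaxRepresentation}, the $(t,x,y,u)$-components of elements of $G_{\ref{eq:dNLaxPair}}$ coincide with those of $G_{\ref{eq:dN}}$ with $C=A^{2/3}$. These transformations act on $\bar P^\vartheta$ by $\bar P^\vartheta\mapsto A\bar P^\vartheta$, and they act on the remaining fields as the corresponding elements of $G_{\ref{eq:dN}}$ do, up to terms in $\bar P^\vartheta$.

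Take a general element $Q=Q_0+c\bar P^\vartheta$, where $Q_0$ is the prolongation of an element of $\mathfrak g_{\ref{eq:dN}}$. If $Q_0=0$, we obtain $\bar{\mathfrak s}_{1.8}$. Otherwise, we first use $G_{\ref{eq:dNLaxPair}}$ to bring $\bar\varpi_*\langle Q\rangle$ to one of the representatives $\mathfrak s_{1.1}$--$\mathfrak s_{1.7}$ of Lemma~\ref{lem:dN1DInequivSubalgs}. The only subtlety at this step is that $C$ must now be positive. So I will check that every normalization in that lemma which uses $C<0$ (equivalently $\mathscr I^{\rm s}$) can be replaced by $\bar{\mathscr I}^{\rm i}$, $\bar{\mathscr J}$ or by scalings of $t$ with $T_t<0$. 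Alternatively, $\bar{\mathscr I}^\vartheta$ with $A=-1$ combined with $C=1$ can be used. I expect this to hold because $\bar{\mathscr I}^{\rm s}$ changes the sign of $D^{\rm s}$-free generators in ways that are reproducible by $T=-t$ together with shifts.

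It then remains to normalize the coefficient $c$ under the stabilizer of each representative. For representatives not involving $\bar D^{\rm s}$, the scaling $\tilde\vartheta=A\vartheta$ combined with a rescaling of the basis element gives $c\mapsto c/A\cdot(\text{factor})$. Choosing $A$ appropriately sets $c\in\{0,1\}$, and this produces the $\delta$ in $\bar{\mathfrak s}_{1.3}$--$\bar{\mathfrak s}_{1.7}$ and $\delta'$ in $\bar{\mathfrak s}_{1.1}$ with $\delta=0$. The factor $(\text{factor})$ comes from the rescaling of $Q_0$, which is compensated where needed by the freedom of $T$ or of the constant scaling of the generator. I need to verify that the corresponding changes of $Q_0$ can be undone within the stabilizer, for example, for $\bar P^x(1)+\bar P^y(\rho)$ by a simultaneous time reparametrization.

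For representatives containing $\bar D^{\rm s}$ with nonzero coefficient ($\mathfrak s_{1.1}^{1}$, $\mathfrak s_{1.2}$), I remove $c\bar P^\vartheta$ via the pushforward by the shift $\tilde\vartheta=\vartheta+B$. This sends $\bar D^{\rm s}\mapsto\bar D^{\rm s}-\tfrac32B\bar P^\vartheta$, so choosing $B=\tfrac23c$ (scaled by the $\bar D^{\rm s}$-coefficient) kills $c$. This yields the constraint $\delta\delta'=0$ and the absence of a parameter in $\bar{\mathfrak s}_{1.2}$. Finally, inequivalence follows because $\bar\varpi_*$ maps inequivalent classes to inequivalent ones by Lemma~\ref{lem:dN1DInequivSubalgs}. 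Within a fixed projection, the vanishing or nonvanishing of the $\bar P^\vartheta$-component is invariant whenever $\bar D^{\rm s}$ is absent, since the stabilizers then only rescale $c$. The main obstacle is the first step, namely confirming that the $C>0$ restriction does not split any class of Lemma~\ref{lem:dN1DInequivSubalgs}, in particular the ranges of $\rho$ and $\beta$ in $\mathfrak s_{1.3}$--$\mathfrak s_{1.5}$.
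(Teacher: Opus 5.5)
You have the right architecture. $\langle\bar P^\vartheta\rangle$ is an ideal, though not a central one: your own bracket $[\bar D^{\rm s},\bar P^\vartheta]=-\tfrac32\bar P^\vartheta$ shows the sum is semidirect. $G_{\ref{eq:dNLaxPair}}$ projects onto the part of $G_{\ref{eq:dN}}$ with $C=A^{2/3}>0$, together with $\mathscr J$. For $\bar Q+c\bar P^\vartheta$ with $\bar D^{\rm s}$-coefficient $\lambda$, the new coefficient is $cA/k-\tfrac32\lambda B$, where $k$ is the constant multiplying the projected generator (so it is $cA/k$, not $c/A$). Your shift argument for $\lambda\ne0$ and your inequivalence argument are correct.

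Your justification of the first step is wrong, although the step itself holds. $\mathscr I^{\rm s}$ maps $\mathfrak s_{1.4}^\beta$ to $\mathfrak s_{1.4}^{-\beta}$. Any transformation with $C>0$ mapping $\langle P^x(1)+R^y(\beta)\rangle$ to some $\langle P^x(1)+R^y(\tilde\beta)\rangle$ must have $T_t=\const$ and gives $\tilde\beta(T)=C\beta/T_t^{2/3}$, so it preserves the sign of $\beta$. $\bar{\mathscr I}^{\rm i}$ only gives $\beta(t)\to\beta(-t)$, and $\bar{\mathscr I}^\vartheta$ has $C=1$. So $\mathscr I^{\rm s}$ is not reproducible, and the restriction $C>0$ does split classes of Lemma~\ref{lem:dN1DInequivSubalgs}, e.g.\ $\mathfrak s_{1.4}^{1}$ and $\mathfrak s_{1.4}^{-1}$. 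Completeness survives for a different reason. None of the normalizations producing those representatives uses $C<0$; they use only $T_t$ of either sign, shifts and rescaling of the generator. Moreover, the parameter functions are unrestricted, so both signs of $\beta$ are already listed.

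The genuine gap is the normalization of $c$ for $\bar{\mathfrak s}_{1.4}$, where your plan to undo the change of $Q_0$ within its stabilizer fails. Since $C=|A|^{2/3}$ is tied to $A$, rescaling $|c|$ forces $C\ne1$. Because $\bar P^x$ and $\bar R^y$ scale with different powers of $C$, the transformation $(\tilde t,\tilde x,\tilde y,\tilde u,\tilde\vartheta)=(t,Cx,Cy,C^3u,A\vartheta)$ maps $\bar P^x(1)+\bar R^y(\beta)+c\bar P^\vartheta$ to $C\bigl(\tilde P^x(1)+\tilde R^y(C\beta)+cAC^{-1}\tilde P^\vartheta\bigr)$. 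By the formula above, an element stabilizing $\langle\bar P^x(1)+\bar R^y(\beta)\rangle$ must satisfy $\beta(T)=(C^3/k^2)\beta$ with $T_t=(k/C)^3$. For generic $\beta$, and for every nonzero constant $\beta$, this forces $|cA/k|=|c|$. So with $\beta$ fixed, neither a time reparametrization nor rescaling the generator can set $c=1$. Following your plan literally would leave $|c|$ as a spurious essential parameter.

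The missing idea is to let the arbitrary function $\beta$ absorb the scaling. Taking $A=\mathrm{sgn}(c)\,|c|^{-3}$ in the above transformation gives
\[
\langle\bar P^x(1)+\bar R^y(\beta)+c\bar P^\vartheta\rangle\ \sim\ \langle\bar P^x(1)+\bar R^y(c^{-2}\beta)+\bar P^\vartheta\rangle ,
\]
which is exactly why this family carries only $\delta\in\{0,1\}$.

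For the remaining families no compensation is needed. With $T=t$, the generators $\bar D^t(1)$, $\bar P^x(1)+\bar P^y(\rho)$, $\bar R^x(1)+\bar R^y(\beta)$, $\bar Z(t)$ and $\bar Z(1)$ are merely multiplied by $1$, $C$, $C^2$, $C^3$ and $C^3$, respectively. Hence $cA/k$ runs through all nonzero reals as $A$ varies. In particular, the time reparametrization you propose for $\bar{\mathfrak s}_{1.3}$ is unnecessary and would in fact change $\rho$.
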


To classify the subalgebras of the pseudoalgebra~$\mathfrak g_{\ref{eq:SymPotNizhEq}'}$, we exploit the connection between the Lie symmetries of the equations~(\ref{eq:SymPotNizhEq}$'$) and~\eqref{eq:dN}.
Since the pseudoalgebra~$\mathfrak g_{\ref{eq:SymPotNizhEq}'}$ is a pseudosubalgebra of the pseudoalgebra~$\mathfrak g_{\ref{eq:dN}}$, classifying its one- and two-dimensional subalgebras is quite straightforward.
Analyzing the proofs of Lemmas~\ref{lem:dN1DInequivSubalgs} and~\ref{lem:dN2DInequivSubalgs} in~\cite[Lemmas~5 and~6]{vinn2024a} shows
that an exhaustive list of such inequivalent subalgebras is constituted precisely by the subalgebras of the pseudoalgebra~$\mathfrak g_{\ref{eq:dN}}$ collected in Lemmas~\ref{lem:dN1DInequivSubalgs} and~\ref{lem:dN2DInequivSubalgs}
whose natural projections to~$\langle D^{\rm s}\rangle$ are zero.
In addition, one should neglect the gauging of parameters in the basis elements of the subalgebras
by the adjoint action of the corresponding scaling transformation.
Analogously, based on Lemma~\ref{lem:dNLaxPair1DInequivSubalgs},
we can construct a complete list of inequivalent one-dimensional subalgebras of the pseudoalgebra~$\mathfrak g_{\ref{eq:LaxSymPotNizhEq}'}$.
To facilitate matching the classification lists of subalgebras, we preserve the numbering of the subalgebra families from Lemmas~\ref{lem:dN1DInequivSubalgs}, \ref{lem:dN2DInequivSubalgs} and~\ref{lem:dNLaxPair1DInequivSubalgs}.

\begin{lemma}\label{lem:SymPotNizhEq1DInequivSubalgs}
A complete list of $G_{\ref{eq:SymPotNizhEq}'}$-inequivalent one-dimensional subalgebras of the pseudoalgebra~$\mathfrak g_{\ref{eq:SymPotNizhEq}'}$
consists of the following families of subalgebras:
\begin{gather*}
\mathfrak s_{1.1}^0=\big\langle D^t(1)\big\rangle,\quad
\mathfrak s_{1.3}^\rho  =\big\langle P^x(1)+P^y(\rho)       \big\rangle,\quad
\mathfrak s_{1.4}^\beta =\big\langle P^x(1)+R^y(\beta)      \big\rangle,\\
\mathfrak s_{1.5}^\beta =\big\langle R^x(1)+R^y(\beta)      \big\rangle,\quad
\mathfrak s_{1.6}       =\big\langle Z(t)                   \big\rangle,\quad
\mathfrak s_{1.7}       =\big\langle Z(1)                   \big\rangle.
\end{gather*}
%where $\rho$ and $\beta$ run the set of smooth functions from~$t$ with $\rho\ne0$.
\end{lemma}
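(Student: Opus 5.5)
The plan is to transfer the classification of Lemma~\ref{lem:dN1DInequivSubalgs} to the pseudosubalgebra $\mathfrak g_{\ref{eq:SymPotNizhEq}'}=\mathfrak g_{\ref{eq:dN}}'$. The relevant structural fact is Theorem~\ref{thm:SymPotNizhEqCompletePointSymGroup}: $G_{\ref{eq:SymPotNizhEq}'}$ is the pseudosubgroup of $G_{\ref{eq:dN}}$ singled out by $C=1$. Its adjoint action on $\mathfrak g_{\ref{eq:SymPotNizhEq}'}$ is therefore the restriction of the $G_{\ref{eq:dN}}$-action with the scalings generated by $D^{\rm s}$ removed.

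First I would take an arbitrary nonzero element $Q=D^t(\tau)+P^x(\chi)+P^y(\rho)+R^x(\alpha)+R^y(\beta)+Z(\sigma)$ of $\mathfrak g_{\ref{eq:SymPotNizhEq}'}$. I would then rerun the proof of Lemma~5 from~\cite{vinn2024a} step by step, splitting into the same cases.

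\textbf{Case $\tau\ne0$.} I use $D^t(T)$-type transformations (the $T$-part of~\eqref{eq:dNPointSymForm} with $C=1$) to set $\tau=1$. Then I use $P^x$, $P^y$, $R^x$, $R^y$, $Z$ pushforwards to kill the remaining components, by solving linear ODEs in $t$ for $X^0,Y^0,W^1,W^2,W^0$. This gives $\langle D^t(1)\rangle$. The $D^{\rm s}$-coefficient $\delta$ is simply absent, since $D^{\rm s}\notin\mathfrak g_{\ref{eq:SymPotNizhEq}'}$. None of these steps uses the scaling with $C\ne1$, so each one survives.

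\textbf{Case $\tau=0$, $(\chi,\rho)\ne(0,0)$.} Up to $\mathscr J$ I take $\chi\ne0$. Then $T$-reparameterization together with the multiplier freedom of the span gives $\chi=1$. If $\rho\ne0$, shifts in $X^0,Y^0$ and $W$-terms remove the $R$- and $Z$-parts, leaving $\mathfrak s_{1.3}^\rho$. If $\rho=0$, only $R^y(\beta)$ remains, giving $\mathfrak s_{1.4}^\beta$.

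\textbf{Case $\tau=\chi=\rho=0$, $(\alpha,\beta)\ne(0,0)$.} The same procedure gives $\mathfrak s_{1.5}^\beta$.

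\textbf{Case only $\sigma\ne0$.} Time reparameterization, together with multiplying the basis element, maps $Z(\sigma)$ to $Z(t)$ or $Z(1)$, exactly as in the dispersionless case.

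The main obstacle is checking the gauges in detail. In~\cite{vinn2024a}, the normalization of the residual functions $\rho$, $\beta$ and the reduction of $Z(\sigma)$ might use the scaling with $C$; for example, $C$ could be used to rescale a constant. For each family I would verify that whatever $C$ was used for is either not needed or already achievable by time reparameterization ($T_t^{1/3}$ plays a similar role) or by rescaling the basis vector. The one place where $C$ genuinely matters is the parameter $\delta$ in $\mathfrak s_{1.1}^\delta$ and the family $\mathfrak s_{1.2}$. Both vanish on restriction, because every element of $\mathfrak g_{\ref{eq:SymPotNizhEq}'}$ has zero projection to $\langle D^{\rm s}\rangle$.

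Inequivalence follows from that of Lemma~\ref{lem:dN1DInequivSubalgs}: since $G_{\ref{eq:SymPotNizhEq}'}\subset G_{\ref{eq:dN}}$, subalgebras that are $G_{\ref{eq:dN}}$-inequivalent remain $G_{\ref{eq:SymPotNizhEq}'}$-inequivalent. The parameters $\rho$ and $\beta$ stay arbitrary functions, with no identification beyond the one already present.
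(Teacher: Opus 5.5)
Your proposal is correct and follows the paper's route: the paper inspects the proof of the dispersionless classification in~\cite{vinn2024a} and keeps the subalgebras from Lemma~\ref{lem:dN1DInequivSubalgs} with zero projection to $\langle D^{\rm s}\rangle$, discarding any gauging of parameters by the scaling transformations. One minor correction: $\delta$ in $\mathfrak s_{1.1}^\delta$ is normalized by time scaling rather than by~$C$, and among the surviving families the only nontrivial effect of $C\ne1$ is the constant rescaling $\beta\to C\beta$ in $\mathfrak s_{1.4}^\beta$; so in the dispersive case the residual identifications of~$\beta$ become strictly weaker instead of staying the same, although the list of families is unaffected.
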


\begin{lemma}\label{lem:SymPotNizhEq2DInequivSubalgs}
A complete list of $G_{\ref{eq:SymPotNizhEq}'}$-inequivalent two-dimensional subalgebras of the pseudoalgebra~$\mathfrak g_{\ref{eq:SymPotNizhEq}'}$
is constituted by the nonabelian algebras
\begin{gather*}
\mathfrak s_{2.1}=\big\langle D^t(1),\,D^t(t)\big\rangle,\quad
\mathfrak s_{2.4}        =\big\langle D^t(1),\,D^t(t)+Z(1)\big\rangle,\\
\mathfrak s_{2.5}^{0\mu} =\big\langle D^t(1),\,P^x({\rm e}^{-t})+\mu P^y({\rm e}^{-t})\big\rangle,\quad
\mathfrak s_{2.6}^{0\delta}=\big\langle D^t(1),\,P^x({\rm e}^{-t})+\delta R^y({\rm e}^{-t})\big\rangle,\\
\mathfrak s_{2.7}^{0\nu}   =\big\langle D^t(1),\,R^x({\rm e}^{-t})+\nu R^y({\rm e}^{-t})\big\rangle,\quad
\mathfrak s_{2.8}^0        =\big\langle D^t(1),\,Z({\rm e}^{-t})\big\rangle,
\end{gather*}
and the abelian algebras
\begin{gather*}
\mathfrak s_{2.14}^{0\nu\delta'}=\big\langle D^t(1),\,
P^x(1)+\nu P^y(1)+\delta'R^y(1)\big\rangle,
\\
\mathfrak s_{2.15}^{0\nu}=\big\langle D^t(1),\,R^x(1)+\nu R^y(1)\big\rangle,\quad
\mathfrak s_{2.16}^0=\big\langle D^t(1),\,Z(1) \big\rangle,\\
\mathfrak s_{2.17}^{\rho\alpha\beta}=\big\langle P^x(1)+R^y(\beta),\,P^y(\rho)+R^x(\rho\beta)\big\rangle,\\
\mathfrak s_{2.18}^{\rho\beta\sigma}=\big\langle P^x(1)+P^y(\rho),\,-R^x(\rho\beta)+R^y(\beta)+Z(\sigma)\big\rangle_{(\beta,\sigma)\ne(0,0)},\\
\mathfrak s_{2.19}^{\beta^1\beta^2} =\big\langle P^x(1)+R^y(\beta^1),\,R^y(\beta^2)\big\rangle_{\beta^2\ne0},\quad
\mathfrak s_{2.20}^{\beta\sigma}    =\big\langle P^x(1)+R^y(\beta),\,Z(\sigma)\big\rangle_{\sigma\ne0},\\
\mathfrak s_{2.21}^{\alpha\beta^1\beta^2} =\big\langle R^x(1)+R^y(\beta^1),\,R^x(\alpha)+R^y(\beta^2)\big\rangle_{\beta^2\ne\alpha\beta^1},\\
\mathfrak s_{2.22}^{\alpha\beta\sigma}    =\big\langle R^x(1)+R^y(\beta),\,R^x(\alpha)+R^y(\alpha\beta)+Z(\sigma)\big\rangle_{\alpha_t\ne0},\\
\mathfrak s_{2.23}^{\beta\sigma}          =\big\langle R^x(1)+R^y(\beta),\,Z(\sigma)\big\rangle_{\sigma\ne0},\ \
\mathfrak s_{2.24}^\sigma =\big\langle Z(t),\,Z(\sigma)\big\rangle_{\sigma_{tt}\ne0},\ \
\mathfrak s_{2.25}^\sigma =\big\langle Z(1),\,Z(\sigma)\big\rangle_{\sigma_t\ne0},
\end{gather*}
where
%$\rho$, $\tilde\rho$, $\alpha$, $\beta$, $\beta^1$, $\beta^2$ and $\sigma$ run the set of smooth functions from~$t$ with $\rho\ne0$,,
$\mu\in[-1,1]\setminus\{0\}$ and $\nu\in[-1,1]$ $(\!{}\bmod G_{\ref{eq:SymPotNizhEq}'})$,
$\delta,\delta'\in\{0,1\}$ $(\!{}\bmod G_{\ref{eq:SymPotNizhEq}'})$,
and in addition the conditions specified after the corresponding subalgebras hold.
\end{lemma}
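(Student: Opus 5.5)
The plan is to derive the two-dimensional classification for $\mathfrak g_{\ref{eq:SymPotNizhEq}'}$ from Lemma~\ref{lem:dN2DInequivSubalgs}, re-running its proof from~\cite[Lemma~6]{vinn2024a} with fewer symmetries available. By Theorem~\ref{thm:SymPotNizhEqCompletePointSymGroup}, $G_{\ref{eq:SymPotNizhEq}'}$ is the subgroup $C=1$ of $G_{\ref{eq:dN}}$, together with~$\mathscr J$. Thus the only adjoint actions lost are those of the scalings generated by $D^{\rm s}$ (and $\mathscr I^{\rm s}$). Moreover, $\mathfrak g_{\ref{eq:SymPotNizhEq}'}=\mathfrak g_{\ref{eq:dN}}'$ is a megaideal, so every two-dimensional subalgebra of $\mathfrak g_{\ref{eq:SymPotNizhEq}'}$ is a two-dimensional subalgebra of $\mathfrak g_{\ref{eq:dN}}$ whose projection to $\langle D^{\rm s}\rangle$ vanishes.

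First, I will fix a basis $(Q^1,Q^2)$ and split into cases according to the projections $\pi Q^1,\pi Q^2$ onto $\langle D^t(\tau)\rangle$ modulo $\mathfrak m_3$. In the case where the span of the $\tau$-parts is two-dimensional, I use the action of $\mathrm{Diff}(t)$ through $T$ to map it to $\langle D^t(1),D^t(t)\rangle$. This works because the $D^t$-part of any two-dimensional subalgebra of vector fields in $t$ is locally equivalent to $\langle\p_t,t\p_t\rangle$. I then use the adjoint actions of $P^z$, $R^z$ and $Z$, which have parameters $X^0$, $Y^0$, $W^j$ and $W^0$, to kill the $\mathfrak m_3$-tails. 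The commutation relations force the residual tail of $D^t(t)$ to lie in the part of $\mathfrak m_3$ that is invariant under $D^t(1)$ and has weight one under $D^t(t)$. In the dispersionless case, this was balanced by $\lambda D^{\rm s}$. Here $\lambda=0$, and a weight computation shows that only $Z(1)$ survives. This gives $\mathfrak s_{2.1}$ and $\mathfrak s_{2.4}$. The families $\mathfrak s_{2.2}^\nu$ and $\mathfrak s_{2.3}^\nu$ disappear because they need $\lambda=-\tfrac13$ and $\lambda=\tfrac16$.

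In the case where the $\tau$-span is one-dimensional, I reduce $Q^1$ to $D^t(1)$ by Lemma~\ref{lem:SymPotNizhEq1DInequivSubalgs} (or by $\mathfrak s_{1.1}^0$ directly). Then $Q^2\in\mathfrak m_3$ must satisfy $[D^t(1),Q^2]=\kappa Q^2$, so its parameter functions are exponentials ${\rm e}^{\kappa t}$. The weights are those of the dispersionless list with $\lambda=0$, namely ${\rm e}^{-t}$ for all families in the nonabelian case $\kappa\neq0$ (normalized to $\kappa=-1$ by rescaling $D^t$) and constants in the abelian case. The remaining normalizations of $\mu$, $\nu$, $\delta$ and $\delta'$ used only shifts, $\mathscr J$ and time-translation stabilizers, not $D^{\rm s}$. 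This gives $\mathfrak s_{2.5}^{0\mu}$ through $\mathfrak s_{2.8}^0$ and $\mathfrak s_{2.14}^{0\nu\delta'}$ through $\mathfrak s_{2.16}^0$. Families containing $D^{\rm s}$ ($\mathfrak s_{2.9}$–$\mathfrak s_{2.13}$) are discarded.

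Finally, the case with both elements in $\mathfrak m_3$ (families $\mathfrak s_{2.17}$–$\mathfrak s_{2.25}$) was treated in~\cite{vinn2024a} using only the transformations with parameter functions of~$t$, so it carries over verbatim. The main obstacle is checking this claim honestly. I have to audit each normalization step in~\cite{vinn2024a} to confirm that $C$ (that is, $D^{\rm s}$ or $\mathscr I^{\rm s}$) was never used to scale a parameter, for example to set a constant to $1$ or to restrict $\mu,\nu$ to $[-1,1]$. Wherever it was used, I must show that an alternative normalization exists via $T$, which acts on $x,y$ by $T_t^{1/3}$, or via $\mathscr J$. I expect, as the statement asserts, that all such steps can be redone with $T$ or are unnecessary.
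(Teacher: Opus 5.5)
Your route is the paper's route. You regard $\mathfrak g_{\ref{eq:SymPotNizhEq}'}$ as the pseudosubalgebra of $\mathfrak g_{\ref{eq:dN}}$ without $D^{\rm s}$, keep from Lemma~\ref{lem:dN2DInequivSubalgs} the subalgebras with zero projection onto $\langle D^{\rm s}\rangle$, and drop $\mathfrak s_{2.2}$, $\mathfrak s_{2.3}$ and $\mathfrak s_{2.9}$--$\mathfrak s_{2.13}$. The gap is the step you postpone and expect to go through.

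$G_{\ref{eq:SymPotNizhEq}'}$ is the normal pseudosubgroup $C=1$ of $G_{\ref{eq:dN}}$. Hence each $G_{\ref{eq:dN}}$-class of subalgebras of $\mathfrak g_{\ref{eq:SymPotNizhEq}'}$ splits into the $G_{\ref{eq:SymPotNizhEq}'}$-classes of the images of its representative under $(x,y,u)\mapsto(Cx,Cy,C^3u)$, $C\ne0$. Every constant gauged by these scalings (or by $\mathscr I^{\rm s}$) therefore becomes a genuine parameter. This is what the paper means by \emph{neglecting} the gauging by the adjoint action of the scaling. You instead assert that no such gauging occurs in the surviving families, and that assertion fails in three places.

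(i) Take $\langle D^t(1),D^t(t)+aZ(1)\rangle$. Any equivalence must preserve the derived algebra $\langle D^t(1)\rangle$, which forces $T=kt+c$ and constant $X^0,Y^0,W^0,W^1,W^2$. The pushforward of $D^t(t)$ is then $D^t(t)-cD^t(1)-\frac13P^x(X^0)-\frac13P^y(Y^0)+\frac13k^{-1/3}\bigl(R^x(W^1)+R^y(W^2)\bigr)$, up to a multiple of $Z(1)$. So $X^0=Y^0=W^1=W^2=0$. Since $\tilde u=u+W^0$ fixes $Z(1)$, the coefficient $a$ is invariant. In the dispersionless list it was set to $1$ only via $\tilde u=C^3u$.

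(ii) Take $\langle D^t(1),P^x({\rm e}^{-t})+bR^y({\rm e}^{-t})\rangle$. The relation $[D^t(1),Q]=-Q$ forces $T_t=1$. Time translations multiply both terms by the same factor ${\rm e}^c$. The other admissible parameters only produce $R^x$- and $Z$-terms, which must cancel. So $b\in\mathbb R$ is invariant; it was normalized using $P^x(\chi)\mapsto P^x(C\chi)$ and $R^y(\beta)\mapsto R^y(C^2\beta)$.

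(iii) In the abelian family $\mathfrak s_{2.14}^{0\nu\delta'}$, the time scaling $T=kt$ multiplies $\delta'$ by $k^{-2/3}>0$ after renormalizing the basis element. So the sign of $\delta'$ cannot be changed, except through $\mathscr J$ combined with the adjoint action of $P^x(t)+P^y(\nu t)$ when $\nu=1$. That sign was fixed by $\mathscr I^{\rm s}$.

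An honest audit therefore gives:
\begin{itemize}
\item $\langle D^t(1),D^t(t)+aZ(1)\rangle$ with $a\in\mathbb R$, which merges $\mathfrak s_{2.1}$ and $\mathfrak s_{2.4}$;
\item $\delta\in\mathbb R$ in $\mathfrak s_{2.6}^{0\delta}$;
\item $\delta'\in\{-1,0,1\}$ in $\mathfrak s_{2.14}^{0\nu\delta'}$.
\end{itemize}
The normal forms you set out to derive for these families cannot be reached with $T$, shifts and $\mathscr J$ alone, so the list has to carry these parameters. The ratios $\mu$ and $\nu$ and the arbitrary-function families $\mathfrak s_{2.17}$--$\mathfrak s_{2.25}$ are unaffected.

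A minor point: in the first case the tail that survives is the kernel of ${\rm ad}\,D^t(t)$ on constant-parameter elements, not a ``weight one'' part. Indeed $[D^t(t),P^z(1)]=-\frac13P^z(1)$, $[D^t(t),R^z(1)]=\frac13R^z(1)$ and $[D^t(t),Z(1)]=0$.
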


\begin{lemma}\label{lem:SymPotNizhEqLaxPair1DInequivSubalgs}
A complete list of $G_{\ref{eq:LaxSymPotNizhEq}'}$-inequivalent one-dimensional subalgebras of the pseudoalgebra~$\mathfrak g_{\ref{eq:LaxSymPotNizhEq}'}$ is exhausted by the following families of subalgebras:
\begin{gather*}
\bar{\mathfrak s}_{1.1}^{\delta\delta'}=\big\langle \bar D^t(1)+\delta'\bar D^{\rm\psi}\big\rangle,\quad
\bar{\mathfrak s}_{1.3}^{\rho\delta}   =\big\langle \bar P^x(1)+\bar P^y(\rho)+\delta\bar D^{\rm\psi}       \big\rangle,\\
\bar{\mathfrak s}_{1.4}^{\beta\delta}  =\big\langle \bar P^x(1)+\bar R^y(\beta)+\delta\bar D^{\rm\psi}      \big\rangle,\quad
\bar{\mathfrak s}_{1.5}^{\beta\delta}  =\big\langle \bar R^x(1)+\bar R^y(\beta)+\delta\bar D^{\rm\psi}      \big\rangle,\\
\bar{\mathfrak s}_{1.6}^{\delta}       =\big\langle \bar Z(t)+\delta\bar D^{\rm\psi}                        \big\rangle,\quad
\bar{\mathfrak s}_{1.7}^{\delta}       =\big\langle \bar Z(1)+\delta\bar D^{\rm\psi}                        \big\rangle,\quad
\bar{\mathfrak s}_{1.8}                =\big\langle \bar D^{\rm\psi}                                        \big\rangle,
\end{gather*}
where $\delta,\delta'\in\{0,1\}$.
\end{lemma}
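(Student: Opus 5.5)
We derive this classification from Lemma~\ref{lem:dNLaxPair1DInequivSubalgs} by restricting to the pseudosubalgebra and the pseudosubgroup. By footnote~\ref{fnt:EmbeddingsOfMIAsOfLaxRepresentations}, under $\psi=\pm{\rm e}^\vartheta$ the pseudoalgebra $\mathfrak g_{\ref{eq:LaxSymPotNizhEq}'}$ embeds into $\mathfrak g_{\ref{eq:dNLaxPair}}$ as the codimension-one pseudosubalgebra not containing $\bar D^{\rm s}$, with $\bar D^{\rm\psi}$ corresponding to $\bar P^\vartheta$. By Theorem~\ref{thm:LaxSymPotNizhEqCompletePointSymGroup}, $G_{\ref{eq:LaxSymPotNizhEq}'}$ is the pseudosubgroup of $G_{\ref{eq:dNLaxPair}}$ with $A=1$; under this identification, $\tilde\psi=\bar B\psi$ corresponds to shifts of $\vartheta$ (with the sign of $\bar B$ absorbing $\psi\to-\psi$). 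Both lists are therefore built from the same generating fields, and the only differences are that $\bar D^{\rm s}$ is absent and that the scaling transformations with $A\ne1$ are no longer available.

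First, I will revisit the proof of Lemma~\ref{lem:dNLaxPair1DInequivSubalgs} in~\cite{vinn2024a}. I expect it to follow the pattern of the proof for Lemma~\ref{lem:dN1DInequivSubalgs}: take a general element $Q=\bar D^t(\tau)+a\bar D^{\rm s}+\bar P^x(\chi)+\bar P^y(\rho)+\bar R^x(\alpha)+\bar R^y(\beta)+\bar Z(\sigma)+b\bar P^\vartheta$ and simplify it successively. One uses $\bar D^t$-type transformations to set $\tau=1$ when $\tau\ne0$. Next, $\bar P^z$-, $\bar R^z$- and $\bar Z$-type transformations remove the lower components, and they do this through ODEs in $t$ that never involve the scaling parameter $A$. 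The exchange $\bar{\mathscr J}$ and the $X^0,Y^0$ shifts handle the cases $\tau=0$. The coefficient of $\bar P^\vartheta$ is invariant under everything except the scaling $A$, which multiplies it by a constant, and $\bar D^{\rm s}$ itself. In the present setting we restrict to $a=0$ and apply only these same steps, since none of them uses $A$. The result is exactly the families of Lemma~\ref{lem:dNLaxPair1DInequivSubalgs} with $\delta=0$ in $\bar{\mathfrak s}_{1.1}$, with $\bar{\mathfrak s}_{1.2}$ dropped, and with $\bar P^\vartheta$ replaced by $\bar D^{\rm\psi}$.

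The main obstacle is the normalization of the $\bar D^{\rm\psi}$-coefficient. In the dispersionless case, this coefficient is scaled to $\delta\in\{0,1\}$ by the transformation with parameter $A$, since $\tilde\vartheta=A\vartheta$. That transformation is now unavailable, so I must find a different mechanism. I will try scaling by $\bar D^t$: for a subalgebra whose basis element has nonzero $\tau$, a reparameterization $T$ rescales the whole generator, so normalizing $\tau=1$ makes the coefficient $\delta'$ an invariant real number. For the families $\bar{\mathfrak s}_{1.3}$--$\bar{\mathfrak s}_{1.7}$, one may rescale the basis element itself and then renormalize the leading coefficient through a $T$- or $\bar{\mathscr J}$-action. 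For example, in $\bar Z(1)+\delta\bar D^{\rm\psi}$ one multiplies by a constant and compensates using $\tilde u$-scaling, if such a scaling remains available with $C=1$ via $T=ct$.

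I will check each family carefully. Whenever a nonzero coefficient cannot be set to $1$, the correct statement would be $\delta\in\mathbb R$; the lemma as stated asserts $\{0,1\}$, so I will verify case by case that $T=ct$ together with the factor $T_t^{1/3}$ in the $x$- and $y$-components supplies the needed scaling. I will also confirm that no two listed subalgebras are equivalent, by comparing the invariants used in~\cite{vinn2024a}, namely the projections onto $\langle\bar D^t\rangle$ and onto the radical together with the commutation structure.
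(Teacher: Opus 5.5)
Your route is the paper's own: take Lemma~\ref{lem:dNLaxPair1DInequivSubalgs}, keep only the subalgebras with zero projection onto $\langle\bar D^{\rm s}\rangle$, and discard every normalization that used the scaling with $A\ne1$. The gap is the step you flag and then expect to go through.

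For $\bar{\mathfrak s}_{1.1}$ and $\bar{\mathfrak s}_{1.3}$--$\bar{\mathfrak s}_{1.6}$ it does go through. The dilation $\tilde t=ct$, $\tilde x=c^{1/3}x$, $\tilde y=c^{1/3}y$ lies in $G_{\ref{eq:LaxSymPotNizhEq}'}$. After renormalizing the basis element, it multiplies the $\bar D^{\rm\psi}$-coefficient by $c^{-1}$, $c^{-1/3}$, $c^{-1/3}$, $c^{1/3}$ and $c$, respectively, at the price of reparameterizing $\rho$ or $\beta$. So, contrary to what you write, $\delta'$ is not an invariant.

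For $\bar{\mathfrak s}_{1.7}^\delta$, however, there is no $u$-scaling to compensate with. In \eqref{eq:LaxEqPointSymForm} with $A=1$ the $u$-component is $u$ plus a function of $(t,x,y)$ for every $T$, and $\tilde\psi=\bar B\psi$. Hence every transformation of Theorem~\ref{thm:LaxSymPotNizhEqCompletePointSymGroup} gives $\Phi_*\bar Z(1)=\tilde Z(1)$ and $\Phi_*\bar D^{\rm\psi}=\tilde D^{\rm\psi}$. Both fields are central, so inner automorphisms fix them as well. Thus $\delta$ in $\langle\bar Z(1)+\delta\bar D^{\rm\psi}\rangle$ is an invariant, and your claim that the dispersionless normalizations never use $A$ fails exactly here. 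What your argument, and the paper's own recipe of neglecting the gauging by the scaling transformation, actually yields is $\delta\in\mathbb R$ for this family.

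The assertion $\delta\in\{0,1\}$ cannot be proved, because it is false. The center $\mathfrak z(\mathfrak g_{\ref{eq:LaxSymPotNizhEq}'})=\langle\bar Z(1),\bar D^{\rm\psi}\rangle$ is a megaideal, so a central subalgebra can only be equivalent to $\bar{\mathfrak s}_{1.7}^0$, $\bar{\mathfrak s}_{1.7}^1$ or $\bar{\mathfrak s}_{1.8}$. Consequently $\langle\bar Z(1)+2\bar D^{\rm\psi}\rangle$ is equivalent to no listed subalgebra.

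Your inequivalence check also cannot assume that $G_{\ref{eq:LaxSymPotNizhEq}'}$ is exactly the $A=1$ part of $G_{\ref{eq:dNLaxPair}}$. Write $\psi=\pm{\rm e}^\phi$ and consider the Moutard-type involution $\tilde\psi=1/\psi$, $\tilde u=u+6\ln|\psi|$, i.e., $\tilde\phi=-\phi$, $\tilde u=u+6\phi$. A direct check shows it maps the first equation of (\ref{eq:LaxSymPotNizhEq}$'$) to itself and the second to its negative, so it is a point symmetry. It is a dispersive counterpart of $\bar{\mathscr I}^\vartheta$, and it is missing from Theorem~\ref{thm:LaxSymPotNizhEqCompletePointSymGroup}.

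This involution fixes all generating vector fields except $\bar D^{\rm\psi}\mapsto 6\bar Z(1)-\bar D^{\rm\psi}$. It therefore identifies $\bar{\mathfrak s}_{1.7}^\delta$ with $\bar{\mathfrak s}_{1.7}^{-\delta/(1+6\delta)}$, and $\bar{\mathfrak s}_{1.8}$ with $\bar{\mathfrak s}_{1.7}^{-1/6}$. On the other families it only flips the sign of $\delta$, once the created $\bar Z(1)$-term is removed by an inner automorphism. A continuum of inequivalent central subalgebras still survives, for instance $\bar{\mathfrak s}_{1.7}^\delta$ with $\delta\in(-\infty,-\frac13]\cup[0,+\infty)$ together with $\bar{\mathfrak s}_{1.8}$. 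A correct version of the lemma must carry a real parameter in this family and must account for this extra discrete symmetry.
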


\section{Conclusion}\label{sec:Conclusion}

The primary contribution of this paper is the comprehensive study of Nyzhnyk models across their full spectrum.
We have streamlined their nomenclature, identified relations between models previously treated as independent,
and carefully analyzed the connections between different kinds of Nyzhnyk models:
symmetric and asymmetric, dispersive and dispersionless, standard and modified ones,
real, complex, mixed, or specific models with complex conjugate variables, systems and single equations,
as well as their corresponding linear or nonlinear Lax representations.
This has allowed us to systematize the previously obtained results
on the symmetry properties of the (real symmetric potential) dispersionless Nyzhnyk equation,
its nonlinear Lax representation and the (real symmetric) dispersionless Nyzhnyk system
and to extend them fully or partially to other Nyzhnyk models.

We have computed the maximal Lie invariance pseudoalgebras for all the considered models
and identified patterns in the relations between these pseudoalgebras
according to the connections and transitions between the corresponding models.
All the identified patterns were expected,
and they have effectively served for additionally verifying and correcting the related results.
In other words, these pseudoalgebras are consistent with respect to different types of connections between the corresponding models.
In particular, the dispersionless limiting process leads to the extension of the initial maximal Lie invariance pseudoalgebra by one generating vector field associated with scaling transformations of the spatial and dependent variables.
(The same pattern can be observed, for example, in the dispersionless limiting process from the Kadomtsev--Petviashvili equation to the Khokhlov--Zabolotskaya equation.)
The transition from a dispersive or dispersionless model to its linear or nonlinear Lax representation reduces, in the context of Lie symmetries, to the prolongation of the pseudoalgebra elements to the pseudopotential as an additional dependent variable and the addition of one generating vector field associated with the scaling or translational gauge transformations of the pseudopotential, respectively.
Varying the base field or additionally constraining the independent or dependent variables as complex conjugates in the complex case merely requires a proper modification of the interpretation of the functional parameters in the generating elements of the corresponding pseudoalgebra via complexification or realification, while the form of these elements is formally preserved.
Thus, a consistent algebraic picture regarding the Lie symmetries of Nyzhnyk models has been obtained for the first time, reflecting the hierarchy of these models at the level of their symmetry structures. %\looseness=-1

Due to the similar structure of the maximal Lie invariance pseudoalgebras of standard symmetric Nyzhnyk models and the application of appropriate algebraic approaches, the results from~\cite{boyk2024a,vinn2024a} regarding some such dispersionless models, described in the introduction, were extended to other symmetric Nyzhnyk models after their analysis and deepening.
Namely, using a special modification of the megaideal-based version of the algebraic method from~\cite{malt2024a}, we have computed the point-symmetry pseudogroups of
the (symmetric potential dispersive) Nyzhnyk equation~(\ref{eq:SymPotNizhEq}$'$),
the (symmetric dispersive) Nyzhnyk system~(\ref{eq:SymNyzhnykSystem}$'$),
their Lax representations~(\ref{eq:LaxSymNyzhnykSystem}$'$) and~(\ref{eq:LaxSymPotNizhEq}$'$), as well as the nonlinear Lax representation~\eqref{eq:LaxdNSystem} of the (symmetric) dispersionless Nyzhnyk system~\eqref{eq:dNSystem}.
We have proved that the contact-symmetry pseudogroup of the (symmetric potential dispersive) Nyzhnyk equation~(\ref{eq:SymPotNizhEq}$'$), just as for its dispersionless counterpart, is the first prolongation of the point-symmetry pseudogroup of this equation.
In each obtained pseudogroup, we have singled out a complete set of independent discrete point symmetries, and
these sets exhibit more nontrivial behavior in the course of transitions between models than the structures formed by continuous symmetries.
The above series of computations demonstrated the particular efficiency of the algebraic method for finding the (pseudo)groups of point symmetries for a collection of models of different structure (e.g., simultaneously single equations and systems with different numbers of equations) whose maximal Lie invariance (pseudo)algebras, however, share a similar structure.
It is interesting that among all the symmetric Nyzhnyk models, only the (symmetric potential) dispersionless Nyzhnyk equation~\eqref{eq:dN} possesses the property that its point-symmetry pseudogroup coincides with the stabilizer of its maximal Lie invariance pseudoalgebra in the pseudogroup of local diffeomorphisms in the underlying space of independent and dependent variables.
One may conjecture that the asymmetric (potential) dispersionless Nyzhnyk equation~\eqref{eq:AsymdN} is equally unique with respect to this property among asymmetric Nyzhnyk models.
For the maximal Lie and contact invariance pseudoalgebras of the (symmetric potential) dispersionless Nyzhnyk equation~\eqref{eq:dN}, the following assertion was proven in~\cite{boyk2024a}: the (finite-dimensional) subalgebras of these pseudoalgebras that naturally arise when constructing the corresponding point-symmetry pseudogroups by the algebraic method determine the point diffeomorphisms stabilizing these pseudoalgebras.
In the present paper, we have extended this assertion to all other symmetric models.

The exhaustive classifications of one- and two-dimensional subalgebras of the maximal Lie invariance pseudoalgebra of the (symmetric) dispersive Nyzhnyk equation~(\ref{eq:SymPotNizhEq}$'$) and of one-dimen\-sional subalgebras of the maximal Lie invariance pseudoalgebra of its linear Lax representation~(\ref{eq:LaxSymPotNizhEq}$'$) have been performed.
These results serve as a natural complement to the known classifications for the dispersionless counterparts and establish the foundation for carrying out Lie reductions and constructing exact solutions, as was done in~\cite{vinn2024a} for the (real symmetric potential) dispersionless Nyzhnyk equation~\eqref{eq:dN}, following the optimized Lie reduction procedure described therein for the case of (1+2)-dimensional partial differential equations.\looseness=-1

A broader natural continuation of the symmetry analysis of Nyzhnyk models is, first and foremost, the further study of asymmetric models in a manner similar to the symmetric ones.
The asymmetric case differs substantially from the symmetric one both in the structure of the models themselves and in the structure of the associated maximal Lie invariance pseudoalgebras, and consequently, of the corresponding point-symmetry pseudogroups.
Therefore, the aforementioned results obtained for symmetric models cannot be directly extended to asymmetric models.
Such an extension requires separate consideration with all the necessary computations performed from scratch, and we have begun this study.
It will include the construction of point-symmetry pseudogroups of asymmetric models via the megaideal-based version of the algebraic method, the identification of defining subalgebras of their maximal Lie invariance pseudoalgebras, the determination of their defining geometric properties, an exhaustive classification of their Lie reductions, the study of their hidden Lie symmetries and other hidden symmetry-like objects, as well as the construction of exact invariant solutions.

In view of the integrability of the Nyzhnyk models, it is also important to further study generalized and nonlocal symmetries, cosymmetries, conservation laws, recursion operators and B\"acklund transformations both for the models themselves and (in the context of searching for hidden structures) for their submodels.
In addition to~\cite{vinn2026a}, in the literature there are other works on submodels of Nyzhnyk models within the framework of symmetry analysis of differential equations and integrability theory; however, they are restricted solely to stationary cases.
In particular, the recursion operator of the stationary Nyzhnyk system, that is, the submodel of the system~\eqref{eq:SymNyzhnykSystem} under the condition $u_t=v_t=w_t=0$, was found in~\cite{marv2003a} and used to generate nonlocal symmetries of this system.
In~\cite{fera1999c}, a B\"acklund transformation was constructed between the stationary Nyzhnyk system and its modified counterpart, which is a stationary submodel of the system~\eqref{eq:ModNyzhnykSystem2b}; see~\cite[Sections~9.7~and~9.8]{roge2002A}.
Therein, it was shown how these standard and modified stationary Nyzhnyk systems arise in the context of differential geometry.

In summary, this paper establishes the foundation and simultaneously outlines a clear plan of action for a more systematic study of Nyzhnyk models within the framework of the symmetry analysis of differential equations, where the structural similarity between models or the corresponding symmetry-like objects will play a significant role.

\section*{Acknowledgments}

%\noprint{

\begin{table}[!ht]
\begin{minipage}{20mm}
    \includegraphics[width=20mm]{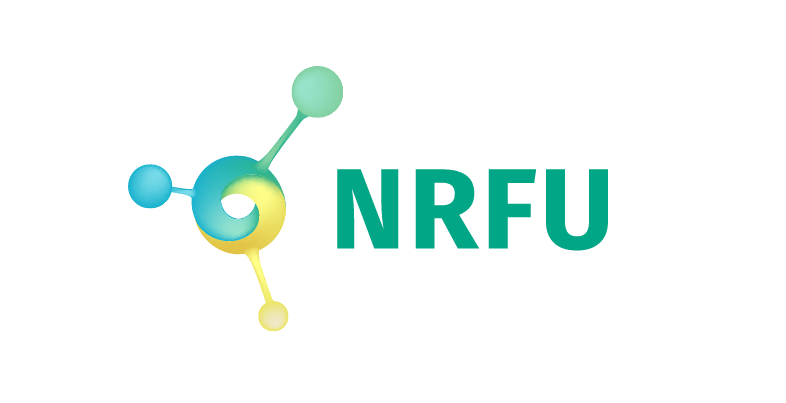}
\end{minipage}
\
\begin{minipage}{138mm}%}
V.M.B. expresses gratitude to the National Research Foundation of Ukraine, thanks to whose grant support the project 2025.07/0405 ``Algebraic methods for studying equations of mathematical physics'' is being implemented.
The contents of this work do not necessarily reflect the views of the National Research Foundation of Ukraine and are the sole responsibility of the Institute of Mathematics of NAS of Ukraine.\looseness=-1
\end{minipage}
\end{table}
\vspace{-4.2mm}
\noindent
This work was supported in part by a grant from the Simons Foundation (SFI-PD-Ukraine-00014586, O.O.V., V.M.B.) and in part by the Ministry of Education, Youth and Sports of the Czech Republic
(M\v SMT \v CR) under RVO funding for I\v C47813059 (R.O.P.).
The authors sincerely thank the anonymous reviewer for a number of helpful suggestions and comments,
which significantly improved the presentation of the results.
O.O.V. acknowledges the financial support
of the National Academy of Sciences of Ukraine within the framework of the project 0125U002856 for young scientists
and also thanks the Mathematical Institute of the Silesian University in Opava for the hospitality and support during scientific visits.
R.O.P.  expresses his gratitude for the hospitality shown by the University of Vienna during his long-term stay there.
%\noprint{
The authors express their deepest thanks to the Armed Forces of Ukraine and the civil Ukrainian people
for their bravery and courage in defense of peace and freedom in Europe and in the entire world from russism.
%}

\end{document}